\documentclass[acmsmall,screen]{acmart}
\pdfoutput=1

\usepackage{booktabs}   
\usepackage{subcaption} 

\usepackage{amssymb}
\usepackage{amsmath}
\usepackage{cmll} 
\usepackage{stmaryrd} 
\usepackage{enumitem} 
\usepackage{proof} 
\usepackage[disable]
{todonotes}
\newcommand\todom[2][]{\todo[color=blue!20,#1]{#2}{}} 
\newcommand\todot[2][]{\todo[color=red,#1]{#2}{}} 
\newcommand\todott[2][]{\todo[color=yellow,#1]{#2}{}} 
\newcommand\todoc[2][]{\todo[color=pink,#1]{#2}{}} 

\usepackage[normalem]{ ulem }
\usetikzlibrary{calc} %
\usetikzlibrary{arrows}

\usepackage[utf8]{inputenc}
\newtheorem{remark}{Remark} 

\newcommand\ass{::=}   

\newcommand\real{\ensuremath{\mathbb R}}   
\newcommand\preal{\ensuremath{\mathbb R^+}}   

\newcommand\ppcf{\ensuremath{\mathrm{PPCF}}}   
\newcommand\var{\mathcal{V}}   
\newcommand\Terms[1][]{\ensuremath{\Lambda^{#1}}}   
\newcommand\TermsFree[2][]{\ensuremath{\Lambda^{#1}_{#2}}}   
\newcommand\typea{A} 
\newcommand\typeb{B}
\newcommand\typec{C}
\newcommand\terma{M} 
\newcommand\termb{N}
\newcommand\termc{L}
\newcommand\fterma{S} 
\newcommand\ftermb{T}
\newcommand\ftermc{T'}

\newcommand\varreal{z}   
\newcommand\subst[1]{\mathrm{Subst}_{#1}} 

\newcommand\bernoulli{\mathtt{bernoulli}} 
\newcommand\normal{\mathtt{normal}} 
\newcommand\gaussian{\mathtt{gauss}} 
\newcommand\observe[1]{\ensuremath{\mathtt{observe}(#1)}}   

\newcommand\Const{\ensuremath{\mathcal C}} 
\newcommand\Red{\ensuremath{\mathrm{Red}}} 
\newcommand\realsub[1][]{\sigma_{#1}}   

\newcommand\treal{\ensuremath{\mathcal R}}   
\newcommand\num[1]{\underline{#1}}   
\newcommand\realfun[1]{\ensuremath{\underline{#1}}}   
\newcommand\fix{\ensuremath{\mathtt Y}}   
\newcommand\ifterm[4]{\ensuremath{\mathtt{if}(#1\in #2,#3,#4) }}   
\newcommand\ifz[3]{\ensuremath{\mathtt{ifz}(#1,#2,#3) }}   
\newcommand\oracle{\ensuremath{\mathtt{sample}}}   
\newcommand\letterm[3]{\ensuremath{\mathtt{let}(#1,#2,#3) }}   

\newcommand\semtype[1]{\llbracket #1 \rrbracket} 
\newcommand\semterm[2][]{\llbracket #2\rrbracket^{#1}} 

\newcommand\Kern{\ensuremath{\mathbf{Kern}}}   
\newcommand\Meas{\ensuremath{\mathbf{Meas}}}   


\newcommand{\Endproof}{
  \ifmmode 
  \else \leavevmode\unskip\penalty9999 \hbox{}\nobreak\hfill
  \fi
  \quad\hbox{$\Box$}
  \par\medskip}

\newcommand\Eqref[1]{(\ref{#1})}

\newcommand\IE{\textsl{i.e.}~}
\newcommand\Eg{\textsl{e.g.}~}


\renewcommand{\phi}{\varphi}
\renewcommand\epsilon{\varepsilon}

\newcommand{\Implies}{\Rightarrow}

\newcommand{\St}{\mid}

\renewcommand{\Bot}{{\mathord{\perp}}}

\newcommand\cF{\mathcal{F}}

\newcommand\cM{\mathcal{M}}

\newcommand\cP{\mathcal{P}}

\newcommand\cX{\mathcal{X}}

\newcommand\Myleft{}
\newcommand\Myright{}

\newcommand\Web[1]{\Myleft|{#1}\Myright|}
\newcommand\Orth[2][]{#2^{\Bot_{#1}}}

\newcommand\Pair[2]{\langle{#1},{#2}\rangle}

\newcommand\Biorth[1]{#1^{\Bot\Bot}}










\newcommand\Card[1]{\#{#1}}

\newcommand\Locun[1]{1^J}

\newcommand\Isom\simeq

\newcommand\Comp{\mathrel\circ}

\newcommand\Nat{{\mathbb{N}}}

\newcommand\Natnz{{\Nat^+}}

\newcommand\Biind[2]{\genfrac{}{}{0pt}{1}{#1}{#2}}

\newcommand\List[3]{#1_{#2},\dots,#1_{#3}}

\newcommand\Real{\mathbb{R}}
\newcommand\Realp{\mathbb{R}^+}
\newcommand\Realpto[1]{(\Realp)^{#1}}

\newcommand\Izu{[0,1]}

\newcommand\Dirac[1]{\delta_{#1}}

\newcommand\Ev{\operatorname{\mathsf{Ev}}}

\newcommand\Redst[1]{\mathop{\mathsf{Red}}}

\newcommand\Tuple[1]{\langle{#1}\rangle}

\newcommand\Msetofsubst[1]{\bar F}

\newcommand\Pcoh[1]{\mathsf P{#1}}

\newcommand\Retri\zeta
\newcommand\Retrp\rho

\newcommand\Impl[2]{{#1}\Rightarrow{#2}}

\newcommand\Tnat\iota

\newcommand\Loop\Omega

\newcommand\Tseq[3]{{#1}\vdash{#2}:{#3}}

\newcommand\Timpl\Impl
\newcommand\Simpl\Impl

\newcommand\Id{\operatorname{\mathrm{Id}}}

\newcommand\Proj[1]{{\mathrm{pr}}_{#1}}

\newcommand\Relincl\eta
\newcommand\Relrestr\rho





















\newcommand\Vect[1]{\vec{#1}}





\newcommand\Ccarrier[1]{#1}
\newcommand\Ccarrierb[1]{#1}
\newcommand\Cnorm[2]{\|#2\|_{#1}}
\newcommand\Fnorm[1]{\|#1\|}
\newcommand\Cuball[1]{\mathcal B#1}
\newcommand\Cuballp[1]{\mathcal B(#1)}
\newcommand\Cuballr[2]{\Cuball{\Crel{#1}{#2}}}

\newcommand\Fdiffp[4]{\Delta^{#1}#2(#3;#4)}
\newcommand\Fdiff[3]{\Delta#1(#2;#3)}

\newcommand\Cocard[2]{\cP_{#1}(#2)}

\newcommand\Scone[2]{\mathsf S^{#1}(#2)}

\newcommand\INFSCOTT{\mathbf{Cstab}}
\newcommand\INFSCOTTM{\INFSCOTT_{\mathsf m}}

\newcommand\Crel[2]{{#1}_{#2}}

\newcommand\Tribu[1]{\Sigma_{#1}}
\newcommand\Bmes[1]{\mathsf{Meas}(#1)}

\newcommand\Diracf{\delta}

\newcommand\Dual[1]{{#1}'}

\newcommand\Mtest[2]{\mathsf M^{#1}(#2)}

\newcommand\Mtestapp[2]{{#1}*{#2}}

\newcommand\Mpath[2]{\mathsf{Path}^{#1}(#2)}
\newcommand\Mpathu[2]{\mathsf{Path}^{#1}_1(#2)}

\newcommand\Mapp[1]{\epsilon_{#1}}

\newcommand\Implm[2]{{#1}\Rightarrow_{\mathsf m}{#2}}

\newcommand\Mfun[1]{\cM^{#1}}

\newcommand\Mtestfun[2]{{#1}\triangleright{#2}}

\newcommand\Mchar[1]{\chi_{#1}}
\newcommand\Mext[1]{{#1}^\dagger}

\newcommand\Cofpcoh[1]{\widehat{#1}}

\newcommand\Wpor{\mathsf{wpor}}




%

\bibliographystyle{ACM-Reference-Format}
\citestyle{acmauthoryear}  

\setcopyright{acmlicensed}
\acmPrice{}
\acmDOI{10.1145/3158147}
\acmYear{2018}
\copyrightyear{2018}
\acmJournal{PACMPL}
\acmVolume{2}
\acmNumber{POPL}
\acmArticle{59}
\acmMonth{1}

\begin{document}
\title[Measurable Cones and Stable, Measurable Functions]{Measurable Cones and Stable, Measurable Functions}         
\subtitle{A Model for Probabilistic Higher-Order Programming}                     


\author{Thomas Ehrhard}
\affiliation{
  \institution{IRIF UMR 8243, Universit\'e Paris Diderot, Sorbonne Paris Cit\'e, CNRS}            
  \city{F-75205 Paris}
  \country{France}
}
\email{ehrhard@irif.fr}
\author{Michele Pagani}
\affiliation{
  \institution{IRIF UMR 8243, Universit\'e Paris Diderot, Sorbonne Paris Cit\'e, CNRS}            
  \city{F-75205 Paris}
  \country{France}
}
\email{pagani@irif.fr}
\author{Christine Tasson}
\affiliation{
  \institution{IRIF UMR 8243, Universit\'e Paris Diderot, Sorbonne Paris Cit\'e, CNRS}            
  \city{F-75205 Paris}
  \country{France}
}
\email{tasson@irif.fr}


\begin{abstract}
We define a notion of stable and measurable map between cones endowed with measurability tests and show that it forms a cpo-enriched cartesian closed category. This category gives a
denotational model of an extension of PCF supporting the main primitives of probabilistic functional programming, like continuous and discrete probabilistic distributions, sampling, conditioning and full recursion. We prove the soundness and adequacy of this model with respect to a call-by-name operational semantics and give some examples of its denotations. 
\end{abstract}

\begin{CCSXML}
<ccs2012>
<concept>
<concept_id>10003752.10003753.10003754.10003733</concept_id>
<concept_desc>Theory of computation~Lambda calculus</concept_desc>
<concept_significance>500</concept_significance>
</concept>
<concept>
<concept_id>10003752.10003753.10003757</concept_id>
<concept_desc>Theory of computation~Probabilistic computation</concept_desc>
<concept_significance>500</concept_significance>
</concept>
<concept>
<concept_id>10003752.10010124.10010131</concept_id>
<concept_desc>Theory of computation~Program semantics</concept_desc>
<concept_significance>500</concept_significance>
</concept>
<concept>
<concept_id>10003752.10003790.10003801</concept_id>
<concept_desc>Theory of computation~Linear logic</concept_desc>
<concept_significance>300</concept_significance>
</concept>
</ccs2012>
\end{CCSXML}

\ccsdesc[500]{Theory of computation~Lambda calculus}
\ccsdesc[500]{Theory of computation~Program semantics}
\ccsdesc[500]{Theory of computation~Probabilistic computation}
\ccsdesc[300]{Theory of computation~Linear logic}

\setcopyright{acmlicensed}
\acmJournal{PACMPL}
\acmYear{2018} \acmVolume{2} \acmNumber{POPL} \acmArticle{59} \acmMonth{1} \acmPrice{}\acmDOI{10.1145/3158147}

\keywords{Denotational Semantics, PCF}  

\maketitle

\section{Introduction}

%

Around the 80's, people started to apply formal methods to the
analysis and design of probabilistic programming languages. In
particular, \citet{Kozen81} defined a denotational semantics for a first-order
\texttt{while}-language endowed with a random real number generator. In that setting, programs can be seen as stochastic
kernels between measurable spaces: the possible configurations of the
memory are described by measurable spaces, with the measurable sets
expressing the observables, while kernels define the probabilistic
transformation of the memory induced by program execution. For
example, a \texttt{while}-program using $n$ variables taking values in
the set $\Real$ of real numbers is a stochastic kernel $K$ over the
Lebesgue $\sigma$-algebra on $\Real^n$ (see 
%
Compendium of Measures and Kernels, 
Section~\ref{section:compendium}) --- \IE~$K$ is a function
taking a sequence $\vec r\in\Real^n$ and a measurable set $U\subseteq
\Real^n$ and giving a real number $K(\vec r, U)\in[0,1]$, which is the
probability of having the memory (\IE~the values of the $n$ variables)
within $U$ after having executed the program with the memory
initialized as $\vec r$.

Kozen's approach cannot be trivially extended to higher-order types,
because there is no clear notion of measurable subset for a functional
space, e.g. we do not know which measurable space can describe values
of type, say, $\Real\rightarrow\Real$ (see~\cite{aumann1961} for
details).

 \citet{PANANGADEN1999} reframed the work by Kozen in a categorical setting, using the category \Kern{} of stochastic kernels. This
 category has been presented as the Kleisli category of the so-called Giry's
 monad \citep{Giry1982} over the category \Meas{} of measurable spaces and
 measurable functions. 
 One can precisely state the
 issue for higher-order types in this framework --- both \Meas{}
 and \Kern{} are cartesian categories but not closed.

The quest for a formal syntactic semantics of higher-order
probabilistic programming had more success.
We mention in particular
\citet{Park:2008}, 
proposing a probabilistic functional language $\lambda_\bigcirc$ based
on sampling functions. This language
has a type $\treal$
of sub-probabilistic distributions over the set of real
numbers\footnote{In \citep{Park:2008} $\treal$ is written
$\bigcirc\texttt{real}$. One should consider \emph{sub}-probabilistic
distributions because program evaluation may diverge.}, \IE~ measures
over the Lebesgue $\sigma$-algebra on $\real$ with total mass at most
$1$. 
Using the usual functional primitives (in particular recursion)
together with the uniform distribution over $[0,1]$ and a sampling
construct, the authors 
encode various methods for generating
distributions (like the inverse transform method and rejection
sampling) and computing properties about them (such as approximations
for expectation values, variances, etc). The amazing feature of
$\lambda_\bigcirc$ is its rich expressiveness as witnessed by the
number of examples and applications detailed in \citep{Park:2008},
showing the relevance of the functional paradigm for probabilistic
programming.

Until now, $\lambda_\bigcirc$ lacked a denotation model, 
\citep{Park:2008} sketching only an operational semantics. In
particular, the correctness proof of the encodings follows 
a syntactic reasoning which is not compositional. Our paper fills this
gap, giving a denotational model to a variant of
$\lambda_\bigcirc$. As a byproduct, we can check program correctness in a
straight way by applying to program denotations the standard laws of calculus
(Example~\ref{ex:sem_distributions},\ref{ex:sem_expect}), even for recursive
programs (Example~\ref{ex:semantics_observe})
. This method is justified by the
Adequacy Theorem~\ref{th:adequacy}
stating the correspondence
between the operational and the denotational semantics. 

If we restrict the language to \emph{countable} data types (like booleans and
natural numbers, excluding the real numbers), then the situation is much simpler.
 Indeed, any distribution over a countable set is \emph{discrete}, i.e. it can be described as a linear combination of its possible outcomes and there is no need of a notion of measurable space. In previous papers \cite{EhrPagTas11,EhrPagTas14,EhrhardTasson16}, we have shown that the category $\mathbf{PCoh}_\oc$ of \emph{probabilistic coherence spaces} and entire functions  gives fully abstract denotational models of functional languages extended with a random \emph{natural number} generator. The main goal of this work is to generalize these models in order to account for \emph{continuous} data types also. 
 
 
The major difficulty for such a generalization is that a probabilistic coherence space is defined with respect to a kind of canonical basis (called \emph{web}) that, at the level of ground types, corresponds to the possible samples of a distribution. For continuous data types, these webs should be replaced by measurable spaces, and then one is stuck on the already mentioned impossibility of associating a measurable space with a  functional type --  both \Meas{} and \Kern{} being not cartesian closed.


Our solution 
is to replace 
probabilistic coherence spaces with 
cones~\cite{ando1962fundamental}, already used by \citet{Selinger2004}, allowing for an axiomatic presentation not
referring to a web.
A cone is similar to a normed vector space,
but with non-negative real scalars  (Definition~\ref{def:cone}).
 Any probabilistic coherence space can be seen as a cone (Example~\ref{ex:pcoh}) as well as the set $\Bmes{X}$ of all bounded measures
over a measurable space $X$ (Example~\ref{ex:measures-cone}). In particular, the cone $\Bmes\real$ associated with the Lebesgue $\sigma$-algebra on $\real$ 
will be our interpretation of the ground type $\treal$.

What about functional types, e.g.~$\treal\rightarrow\treal$? \citet{Selinger2004} studied the notion of Scott continuous maps
between cones, \IE~monotone non-decreasing bounded maps which commute with the lub of
non-decreasing sequences\footnote{Actually, Selinger considers lubs of directed
sets, but non-decreasing chains are enough for our purposes. Moreover, because
we need to use the monotone convergence theorem for guaranteeing the
measurability of these lubs in function spaces, completeness wrt.~arbitrary
directed sets would be a too strong requirement (see
Section~\ref{sec:pre-stable-fun-space}): a crucial feature of measurable sets is
that they are closed under \emph{countable} (and not arbitrary)
unions.}.
 The set of these functions also forms a cone with the algebraic operations defined pointwise. However, this
cone construction does not yield a cartesian closed category, namely 
the currying of a Scott
continuous map can fail to be monotone non-decreasing, hence Scott continuous (see
discussion in Section~\ref{sub:Scott_not_ccc}). The first relevant contribution of our paper is then to introduce a notion of \emph{stable} map, meaning Scott
continuous \emph{and} ``absolutely monotonic'' (Definition~\ref{def:Sn}), which solves the problem about currying and gives a
cartesian closed category.\todom{I moved all discussions about the order-completeness and the fact that we consider the algebraic order to the section on the cones, since it is more technical and linked to the wpor example}

We borrow the term of ``stable function'' from Berry's analysis of sequential computation \citep{stability}. In fact, our definition is deeply related with a notion of ``probabilistic'' sequentiality, as we briefly mention in Section~\ref{sub:Scott_not_ccc} showing that it rejects the ``parallel or'' (but not the ``Gustave function''). 

The notion of stability is however not enough to interpret all primitives of
probabilistic functional programming. One should be able to integrate at least
first-order functions in order to sample programs denoting probabilistic
distributions (e.g. see the denotation of the $\texttt{let}$ construct in
Figure~\ref{fig:interpretation}). The problem is that there are stable
functions which are not measurable, so not Lebesgue integrable
(Section~\ref{sect:measurability}). We therefore equip the cones with a
notion of measurability tests (Definition~\ref{def:meas-tests}), inducing a
notion of measurable paths (Definition~\ref{def:measurable_path}) in a cone. In
the case the cone is associated with a standard measurable space $X$, \IE~it
is of the form $\Bmes{X}$, then the measurability tests are the
measurable sets of $X$. However, at higher-order types the definition is less
immediate. The crucial point is that the measurable paths in $\Bmes{X}$ are
Lebesgue integrable, as expected (Section~\ref{subsect:integr_meas_paths}). We
then call \emph{measurable} a stable map preserving measurable paths and we
prove that it gives a cartesian closed category, denoted $\INFSCOTTM$
(Figure~\ref{fig:ccc} and Theorem~\ref{th:stab-CCC}).

To illustrate the expressiveness of $\INFSCOTTM$ we consider a variant of Scott and
Plotkin's PCF \citep{plotPCF} with numerals for real
numbers, a
constant \texttt{sample} denoting the uniform distribution over $[0,1]$ and a
\texttt{let} construct over the ground type. This language is as expressive as $\lambda_\bigcirc$ of \citet{Park:2008} (namely, the \texttt{let} construct corresponds to the sampling of $\lambda_\bigcirc$). The only notable difference lies in the  call-by-name operational semantics (Figure~\ref{fig:beta}) that we adopt,  while~\citep{Park:2008} 
follows a call-by-value strategy.\footnote{Let us underline that our \texttt{let} construct does not allow to encode the call-by-value strategy at higher-order types, since it is restricted to the ground type $\treal$. See Section~\ref{sect:ppcf} for more details.} Our choice is motivated by the fact that the call-by-name model is 
simpler to present than the call-by-value one. We plan to detail this latter in a forthcoming paper.  

We also decided not to consider the so-called soft-constraints, which are
implemented in e.g.~\citep{BorgstromLGS16,StatonYWHK16,Staton17} with a
construct called \texttt{score}. This can be added to our language by using a
kind of exception monad in order to account for the possible failure of
normalization, as detailed in \citep{Staton17} (see Remark~\ref{rk:score}). Also
in this case we prefer to omit this feature for focussing on the true novelties
of our approach --- the notions of stability and measurability.

Let us underline that although the definition of $\INFSCOTTM$ and the proof of its cartesian closeness are not trivial, the denotation of the programs (Figure~\ref{fig:interpretation}) is completely standard, extending the usual interpretation of PCF programs as Scott continuous functions \citep{plotPCF}. We prove the soundness (Proposition~\ref{prop:soundmess}) and the adequacy (Theorem~\ref{th:adequacy}) of $\INFSCOTTM$. A major byproduct of this result is then to make it possible to reason about higher-order programs as functions between cones, which is quite convenient when working with programs acting on measures.  

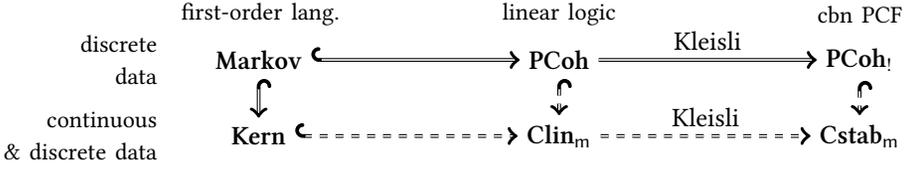
\begin{figure}
\centering
\begin{tikzpicture}
\node[text width=2.5cm, align=right] (d) at (-2.6,0) {\small discrete\\ data};
\node[text width=2.5cm, align=right] (d) at ($(d)-(0,1)$) {\small continuous\\ \& discrete data};
\node (markov) at (0,0) {$\mathbf{Markov}$};
\node (pcoh) at (4,0) {$\mathbf{PCoh}$};
\node (pcoh_oc) at (8,0) {$\mathbf{PCoh}_\oc$};
\node[text width=2.5cm, text centered] (fo) at ($(markov)+(0,.6)$) {\small first-order lang.};
\node[text width=2cm, text centered] (ll) at ($(pcoh)+(0,.6)$) {\small linear logic};
\node[text width=1.5cm, text centered] (lambda) at ($(pcoh_oc)+(0,.6)$) {\small cbn PCF};
\node (kernel) at ($(markov)-(0,1)$) {$\Kern{}$};
\node (clin) at ($(pcoh)-(0,1)$) {$\mathbf{Clin}_{\mathsf m}$};
\node (cstab) at ($(pcoh_oc)-(0,1)$) {$\INFSCOTTM$};
\draw[double, right hook->] (markov) -- (pcoh);
\draw[double, ->] (pcoh) --node[above] {Kleisli} (pcoh_oc);
\draw[double, right hook->] (markov) -- (kernel);
\draw[double, dashed, right hook->] (pcoh) -- (clin);
\draw[double, dashed, right hook->] (pcoh_oc) -- (cstab);
\draw[double, dashed, right hook->] (kernel) -- (clin);
\draw[double, dashed, ->] (clin) --node[above] {Kleisli} (cstab);
\end{tikzpicture}
\vspace{-.4cm}
\caption{Relationship between probabilistic coherence categories and  $\INFSCOTTM$. Dashed arrows are conjectures.}\label{fig:the_picture}
\end{figure}
To conclude, let us comment Figure~\ref{fig:the_picture}, sketching the relations 
between the category  $\INFSCOTTM$ achieved here and the category  
$\mathbf{PCoh}_\oc$ of probabilistic coherence spaces
and entire functions which has been the starting point of our approach. 
The two categories give models of the functional primitives (PCF-like languages), but
$\mathbf{PCoh}_\oc$ is restricted to discrete data types, while $\INFSCOTTM$ extends the model 
to continuous types. We guess this extension to be conservative, hence the arrow is hooked but just dashed. 
%
%
%
 We are even convinced that $\INFSCOTTM$ is the result of a Kleisli construction from a more fundamental model $\mathbf{Clin}_{\mathrm m}$ of (intuitionistic) linear logic, based on positive cones and measurable, Scott continuous and linear functions. We plan to study $\mathbf{Clin}_{\mathrm m}$ in an extended version of this paper as a category extending the category \Kern{} of measurable spaces and stochastic kernels. This would close the loop and further confirm the analogy with $\mathbf{PCoh}_\oc$, which is the Kleisli category associated with the exponential comonad of the model based on the category $\mathbf{PCoh}$ of Scott continuous and linear functions between probabilistic coherence spaces, this latter containing the category $\mathbf{Markov}$ of Markov chains as a full sub-category.

\paragraph{Contents.} This paper needs a basic knowledge of measure theory: we briefly recall in Section~\ref{section:compendium} the main notions and notations used. Section~\ref{sect:ppcf} presents the programming language \ppcf --- the probabilistic variant of PCF we use for showing the expressiveness of our model. Figure~\ref{fig_types} gives the grammar of terms and the typing rules, while Equation~\eqref{eq:red} and Figure~\ref{fig:beta} define the kernel $\Red$ describing the stochastic operational semantics. Our first main contribution is presented in Section~\ref{sec:cones}: after having recalled Selinger's definition of cone (Definition~\ref{def:cone}) we study our notion of absolutely monotonic map (Definition~\ref{def:Sn}), or equivalently pre-stable map (Definition~\ref{def:n-pre-stable} and Theorem~\ref{th:non-decreasing-class-equiv}) and we prove that it composes (Theorem~\ref{th:pre-stable_compose}). Stable maps are absolutely monotonic  and Scott-continuous (Definition~\ref{def:stable}). Section~\ref{sect:measurability} introduces our second main contribution, which is the notion of measurability test (Definition~\ref{def:meas-tests}) and measurable map (Definition~\ref{def:mes-pre-stable-function}), giving the category $\INFSCOTTM$ (Definition~\ref{def:mes-pre-stable-function}). Section~\ref{sect:ccc} presents the cartesian closed structure of $\INFSCOTTM$, summarized in Figure~\ref{fig:ccc}. Finally, Section~\ref{sect:sound_adeq} details the model of \ppcf{} given by $\INFSCOTTM$ (Figure~\ref{fig:interpretation}) and states soundness (Proposition~\ref{prop:soundmess}) and adequacy (Theorem~\ref{th:adequacy}). Section~\ref{sect:related_work} discusses the previous literature. Because of space limits, many proofs are omitted and 
postponed and in the technical appendix~\ref{appendix:technical}.

\paragraph{Notations} We use $\Card I$ for the cardinality of a set $I$. The
set of non-negative real numbers is $\Realp$ and its elements are denoted
$\alpha,\beta\dots$. General real numbers are denoted $r,s,t\dots$. The set of
non-zero natural numbers is $\Natnz$. The greek letter $\lambda$ will denote the Lebesgue
measure over $\Real$, $\lambda_{[0,1]}$ being its restriction to the unit
interval. Given a measurable space $X$ and an $x\in X$, we use $\Dirac x$ for
the Dirac measure over $X$: $\Dirac x(U)$ is equal to $1$ if $x\in U$ and to
$0$ otherwise. We also use $\chi_U$ to denote the characteristic function of
$U$ which is defined as $\chi_U(x)$ is equal to $1$ if $x\in U$ and to $0$
otherwise. We use $\Mfun n$ for the set of measurable functions
$\Real^n\to\Realp$. We use $F(\_)$ to denote the map $x\mapsto F(x)$.

%

\section{Compendium of measures and kernels} 
\label{section:compendium}
\newcommand\textdef[1]{\emph{#1}}

A \textdef{$\sigma$-algebra} $\Tribu X$ on a set $X$ is a  family of
subsets of $X$ that is nonempty and closed under complements and countable
unions, so that  $\emptyset,\ X\in\Tribu X$. A \textdef{measurable space} is a pair $(X, \Tribu X)$ of a set $X$ equipped with  a $\sigma$-algebra $\Tribu X$. A \textdef{measurable set} of $(X, \Tribu X)$ is an element of $\Tribu X$. From now on, we will denote a measurable space  $(X, \Tribu X)$ simply by its underlying set $X$, whenever the $\sigma$-algebra is clear or irrelevant. We consider $\Real$ and $\Realp$ as measurable spaces equipped with the Lebesgue $\sigma$-algebra,  
generated by the open intervals. 
 A \textdef{bounded measure} on a measurable space $X$ is a map $\mu : \Tribu X
 \to \Realp$ satisfying 
 $\mu(\biguplus_{i\in I} S_i ) = \sum_{i\in I}\mu(S_i)$ for any countable family $\{S_i\}_{i\in I}$ of
disjoint sets in $\Tribu X$. We call $\mu$ a \textdef{probability} (resp.~\textdef{subprobability}) \textdef{measure}, whenever $\mu(X) =1$ (resp.~$\mu(X)\le 1$).
When $\mu$ is a measure on $\Real^n$, we often call it a \textdef{distribution}.

A \textdef{measurable function} $f : (X,\Tribu X) \to (Y,\Tribu Y)$ is a function $f:X\to Y$ such that
$f^{ -1} (U )\in\Tribu X$ for every $U \in \Tribu Y$.
The \textdef{pushforward measure} $f_\ast \mu$ from a measure $\mu$ on $X$ along a measurable map $f$ is defined as $(f_\ast\mu)(U ) = \mu(f^{-1} (U ))$, for every $U\in\Tribu Y$.

These notions have been introduced in order to define the \textdef{Lebesgue integral} $\int_X f(x)\mu(dx)$ of a generic measurable function $f:X\to \Real$ with respect to a measure $\mu$ over $X$. 
This paper uses only basic facts about the Lebesgue integral which we do not detail here.

Measures are special cases of kernels. A \textdef{bounded kernel} $K$ from $X$ to $Y$ is a function 
$K : X \times\Tribu Y \to \Realp$ such that: (i) for every $x\in X$,  $K(x,\_)$ is a bounded measure over $Y$; (ii) for every $U\in\Tribu Y$, $K(\_,U)$ is a measurable map from $X$ to $\Realp$. A \textdef{stochastic kernel} $K$ is a kernel such that $K(x,\_)$ is a sub-probability measure for every $x\in X$. Notice that a bounded measure (resp.~sub-probability measure) $\mu$ over $X$ can be seen as a particular bounded kernel (resp.~stochastic kernel) from the singleton measurable space $(\{\star\}, \{\emptyset, \{\star\}\})$ to $X$.

\paragraph{Categorical approach.} We use two categories having measurable spaces as objects, denoted respectively \Meas{} and \Kern. 

The category \textdef{\Meas{}} has measurable functions as morphisms.
This category is cartesian (but not cartesian closed), the \textdef{cartesian product $(X, \Tribu X )\times (Y,\Tribu Y )$} of $(X, \Tribu X )$ and $(Y,\Tribu Y )$ is $(X\times Y,\Tribu X\otimes\Tribu Y)$, where $X\times Y$ is the set-theoretic product and $\Tribu X\otimes\Tribu Y$ is the $\sigma$-algebra generated by the rectangles $U\times V$, where $U\in\Tribu X$ and $V\in\Tribu Y$. It is easy to check that the usual projections are measurable maps, as well as that the set-theoretic pairing $\langle f,g\rangle$ of two functions $f: Z \to   X$, $g:Z\to Y$ is a measurable map from $Z$ to $X\times Y$, whenever $f$, $g$ are measurable. 

The  category \textdef{\Kern{}} has stochastic kernels as morphisms\footnote{One can well define the category of bounded kernels also, but this is not used in this paper.}. Given a stochastic kernel $H$ from $X$ to $Y$ and $K$ from $Y$ to $Z$, the \textdef{kernel composition} $K\Comp H$ is a stochastic kernel from $X$ to $Z$ defined as, for every $x\in X$ and $U\in\Tribu Z$:
\begin{equation}\label{eq:kern_composition}
(K\Comp H)(x,U) = \int_Y K(y, U)H(x,dy).
\end{equation}
Notice that the above integral is well-defined because $H(x,\_)$ is a stochastic measure from condition (i) on kernels and $K(\_, U)$ is a measurable function from condition (ii). A simple application of Fubini's theorem gives the associativity of the kernel composition. The \textdef{identity kernel} is the function mapping $(x,U)$ to $1$ if $x\in U$ and to $0$ otherwise.

Unlike \Meas, we consider a tensor product $\otimes$ in $\Kern$ which is a symmetric monoidal product but not the cartesian product\footnote{Indeed, \Kern{} has cartesian products, but we will not use them.}.  The action of $\otimes$ over the objects $X,X'$ is defined as the cartesian product in \Meas, so that we still denote it as $X\times X'$. The tensor of a kernel $K$ from $X$ to $Y$ and $K'$ from $X'$ to $Y'$ is the kernel $K\otimes K'$ given as follows, for $(x,x')\in X\times X'$ and $U\in\Tribu Y$, $U'\in\Tribu{Y'}$ :
\begin{equation}\label{eq:tensor_kernel}
K\otimes K' ((x,x'),U\times U') = K(x,U)K'(x',U')
\end{equation}
Notice that \Kern{} is not closed with respect to $\otimes$. Recall that a measure can be seen as a kernel from the singleton measurable space, so that Equation~\eqref{eq:tensor_kernel} defines also a \textdef{tensor product $\mu\otimes\mu'$} between measures over resp. $Y$ and $Y'$.


The category $\Kern$ has also \textdef{countable coproducts}. Given a countable family $\{(X_i , \Tribu i )\}_{i\in I}$ of measurable spaces, the coproduct $\coprod_{i\in I} (X_i , \Tribu i )$ has as underlining set the disjoint union $\cup_{i\in I} X_i\times\{i\}$ of the $X_i$'s, and as the $\sigma$-algebra the one generated by $\cup_{i\in I} U_i\times \{i\}$ disjoint union of $U_i\in\Tribu i$. The injections $\iota_j$ from $X_j$ to $\coprod_{i\in I} X_i$ are defined as $\iota_i(x,\cup_{j\in I}U_j\times\{j\})=\chi_{U_i}(x)$. 
 Given a family $K_i$ from $X_i$ to $Y$, the copairing $[K_i]_{i\in I}$ from $\coprod_{i\in I} X_i$ to $Y$ is defined by $[K_i]_{i\in I}((x,j),U)=K_j(x,U)$.

Actually, the categories $\Meas$ and $\Kern$ can be related in a very similar way  as the relation between the categories {\bf Set} (of sets and functions) and {\bf Rel} (of sets and relations). In fact, $\Kern$ corresponds to the Kleisli category of the so-called Giry's monad over \Meas{} \citep{Giry1982}, exactly has the category {\bf Rel} of relations is the Kleisli category of the powerset monad over {\bf Set} (see \citep{PANANGADEN1999}). Since this paper does not use this construction, we do not detail it. 

\section{The probabilistic language \ppcf}\label{sect:ppcf}

%
\subsection{Types and Terms}
\begin{figure}
\centering
\[
\infer{\Delta,x:\typea\vdash x:\typea }{}
\qquad
\infer{\Delta\vdash\lambda x^{\typea} .\terma :\typea \rightarrow \typeb }{\Delta,x:\typea \vdash \terma :\typeb }
\qquad
\infer{\Delta\vdash (\terma  \termb ) :\typeb }{
  \Delta\vdash \terma  :\typea \rightarrow \typeb 
  &
  \Delta\vdash \termb  :\typea 
}
\qquad
\infer{\Delta\vdash (\fix \terma ):\typea }{\Delta\vdash \terma :\typea \rightarrow \typea }
\]

\[
\infer{\Delta\vdash \num r:\treal}{r\in\mathbb R}
\qquad
\infer{\Delta\vdash \realfun{f}(\terma _1,\dots, \terma _n) :\treal}
	{f \text{ meas. map } \real^n\rightarrow \real
	&
	\Delta\vdash \terma _i:\treal,
		\forall i\leq n
	}
\qquad
\infer{\Delta\vdash \ifz \termc  \terma  \termb : \treal }{
	\Delta\vdash \termc  :\treal
	&
	\Delta\vdash \terma  :\treal
	&
	\Delta\vdash \termb :\treal 
}
\]

\[
\infer{\Delta\vdash \oracle:\treal}
	{}
\qquad
\infer{\Delta\vdash \letterm x \terma  \termb : \treal }{
	\Delta\vdash \terma :\treal
	&
	\Delta,x:\treal\vdash \termb :\treal
	}
\]
\caption{The grammar of terms of \ppcf{} and their typing rules. The variable $x$ belongs to a fixed countable set of variables $\var$. The metavariable $\num f$ ranges over a fixed countable set $\Const$ of functional identifiers, while the metavariable $\num r$ may range on the whole $\Real$.}
\label{fig_types}
\end{figure}
We give in Figure~\ref{fig_types} the grammar of our probabilistic extension of PCF, briefly \ppcf, together with the typing rules. The types are generated by $\typea ,\typeb \ass\treal\;\vert\; \typea \rightarrow \typeb $, where the constant $\treal$ is the ground type for the set of real numbers. We denote by $\Terms[\Gamma\vdash \typea ]$ the set of terms typeable within the sequent $\Gamma\vdash \typea $. We write simply $\Terms$ if the typing sequent is not important or clear from the context.

The first line of Figure~\ref{fig_types}  contains the usual constructs of the simply typed $\lambda$-calculus extended to the fix-point combinator $\fix$ for any type $\typea $. The second line describes the primitives dealing with the ground type $\treal$. Our goal is to show the expressiveness of the category $\INFSCOTTM$ introduced in the next section, therefore \ppcf{} is an ideal language and does not deal with the issues about a realistic implementation of computations over real numbers. We refer the interested reader to e.g.~\citep{Vuillemin:1988,ESCARDO1996}.
We will suppose that the meta-variable $\realfun{f}$ ranges over a fixed countable set $\Const$ of basic measurable functions over real numbers. Examples of these functions include addition $+$, comparison $>$, and equality $=$; they are often written in infix notation. When clear from the context, we sometimes write $f$ for $\realfun f$. To be concise, we consider only the ground type $\treal$, the boolean operators (like $>$ or $=$) then evaluate to $1$ or $0$, representing resp.~ true and false.

The third line of Figure~\ref{fig_types} gives the ``probabilistic core'' of \ppcf. The constant $\oracle$ stands for the uniform distribution over $[0,1]$, i.e.\ the Lebesgue measure $\lambda_{[0,1]}$ restricted to the unit interval. The fact that \ppcf{} has only this distribution as a primitive is not limiting, in fact many other probabilistic measures (like binomial, geometric, gaussian or exponential distribution) can be defined from $\lambda_{[0,1]}$ and the other constructs of the language, see e.g.\ \citep{Park:2008} and Example~\ref{ex:gaussian}. The {\tt let} construction allows a call-by-value discipline over the ground type $\treal$: the execution of $\letterm x\terma \termb $ will sample a value (i.e.\ a real number $r$) from a probabilistic distribution $\terma $ and will pass it to $\termb $ by replacing \emph{every} free occurrence of $x$ in $\termb $ with $\num r$. This primitive\footnote{Notice that this primitive corresponds to the {\tt sample} construction in~\cite{Park:2008}.} is essential for the expressiveness of \ppcf{} and will be discussed both operationally and semantically in the next sections. 

\ppcf{} has a limited number of constructs, but it is known that many probabilistic primitives can be introduced as syntactic sugar from the ones in \ppcf, as shown in the following examples. We will prove the correctness of these encodings using the denotational semantics (Section~\ref{sect:sound_adeq}), this latter corresponding to the program operational behavior by the adequacy property (Theorem~\ref{th:adequacy}).

\begin{example}[Extended branching]\label{ex:if}
Let $U$ be a measurable set of real numbers whose characteristic function $\chi_U $ is in $\Const$, let $\termc\in\Terms[\Gamma\vdash\treal]$  and $\terma,\termb\in\Terms[\Gamma\vdash\typea]$ for $\typea=\typeb_1\rightarrow\dots\typeb_n\rightarrow\treal$. Then the term $\Gamma\vdash \ifterm \termc U\terma \termb:\typea$, branching between $\terma$ and $\termb$ according to the outcome of $\termc$ being in $U$, is a syntactic sugar for $\lambda x^{\typeb_1}\dots\lambda x^{\typeb_n}.\ifz{\realfun{\chi_U}(\termc )}{\termb x_1\dots x_n }{\terma x_1\dots x_n}$.\footnote{The swap between $\terma $ and $\termb $ is due to fact that $\mathtt{ifz}$ is the test to zero.} 
\end{example}
\begin{example}[Extended let]\label{ex:let}
  Similarly, the $\mathtt{let}$ constructor can be extended to any output type
  $\typea=\typeb_1\rightarrow\dots\typeb_n\rightarrow\treal$. Given
  $\terma\in\Terms[\Gamma\vdash\treal]$ and
  $\termb\in\Terms[\Gamma,x:\treal\vdash\typea]$, we denote by
  $\letterm x\terma\termb$ the term
  $\lambda x^{\typeb_1}\dots\lambda x^{\typeb_n}.\letterm x\terma{\termb
    x_1\dots x_n}$
  which is in $\Terms[\Gamma\vdash\typea]$. However we do not know in general
  how to extend the type of the bound variable $x$ to higher types in this model. The issue
  is clear at the denotational level, where the $\mathtt{let}$ construction is
  expressed with an integral (see Figure~\ref{fig:interpretation}). With each
  ground type, we associate a positive cone $\Bmes X$ which is generated by a
  measurable space $X$. At higher types, the associated cones do not have to be
  generated by measurable
  spaces.
  
  Notice that, because of this restriction on the type of the bound variable $x$, our $\mathtt{let}$ constructor does not allow to embed into our language the full call-by-value 
  PCF.
\end{example}


\begin{example}[Distributions]\label{ex:gaussian}

The Bernoulli distribution takes the value $1$ with some probability $p$ and the value $0$ with probability $1-p$. It can be expressed as the term $\bernoulli$ of type $\treal\to\treal$, taking the parameter $p$ as argument and testing whether $\oracle$ draws a value within the interval $[0,p]$, i.e.~$\lambda p.\letterm x\oracle{x\realfun{\le} p}$. 

The exponential distribution at rate $1$ is specified by its density $\mathrm e^{-x}$. It can be implemented as the term  $\exp$ of type $\treal$ by the inversion sampling method: $\letterm x\oracle{{\realfun{-\log}}(x)}$. 

The standard normal distribution (gaussian with mean $0$ and variance $1$) is
defined by its density $\tfrac1{\sqrt{2\pi}}{e^{-\tfrac 12x^2}}$. We use the
Box Muller method  to encode the normal distribution $\normal=\letterm
x{\oracle}{\letterm y{\oracle}{\realfun{(-2\log} (x)\realfun{)^{\tfrac 12}\, \cos(2\pi} y\realfun{)}}}$.

We can encode the Gaussian distribution as a function of the expected value $x$ and standard deviation $\sigma$ by $\gaussian =\lambda x\lambda\sigma\,\letterm y\normal{(\sigma y)\realfun{+}x}$.
\end{example}

\begin{example}[Conditioning]\label{ex:observe}
Let $U$ be a measurable set of real numbers such that $\chi_U\in\Const$, we
define a term $\observe U {}$ of type $\treal\rightarrow\treal$, taking a term
$\terma $ and returning the renormalization of the distribution of $\terma $ on the only samples that
satisfy $U$
: $\observe U {} = \lambda m.\fix(\lambda y. \letterm x m
{\ifterm xUxy})$. This corresponds to the usual way of implementing
conditioning by rejection sampling: the evaluation of $\observe U \terma $ will
sample a real $r$ from $\terma $, if $r\in U$ holds then the program returns
$\num r$, otherwise it iterates the procedure. Notice that $\observe U \terma $
makes a crucial use of sampling. The program $\lambda m.\fix(\lambda y.\ifterm
mUmy)$
 has a different behavior, because the two occurrences of $m$ correspond
in this case to two independent random variables (see Example~\ref{rk:letVSabs}
below).

%
\end{example}

\begin{example}[Monte Carlo Simulation]\label{ex:expectation}
An example using the possibility of performing independent copies of a random variable is the encoding of the $n$-th estimate of an expectation query. The expected value of a measurable function $f$ with respect to distribution $\mu$ is defined as $\int_\Real f(x)\mu(dx)$. The Monte Carlo method relies on the laws of large number: if $\bf x_1,\dots,\bf x_n$ are independent and identically distributed random variables of equal probability distribution $\mu$, then the $n$-th estimate $\frac{f(\bf x_1)+\dots+f(\bf x_n)}{n}$ converges almost surely to $\int_\Real f(x)\mu(dx)$. For any integer $n$, we can then encode the $n$-th estimate combinator by $\mathtt{expectation}_n=\lambda f.\lambda m.(\overbrace{f(m)+\dots+f(m)}^{n\text{ times}}/\num n)$ of type $(\treal\rightarrow\treal)\rightarrow\treal\rightarrow\treal$. Notice that it is crucial here that the variable $m$ has $n$ occurrences representing $n$ independent random variables, this being in contrast with  Example~\ref{ex:observe} (see also Example~\ref{rk:letVSabs}). 

\end{example}
%
\todom{Here commented metropolis hasting}
\subsection{Operational Semantics}\label{subsect:operational_semantics}

\begin{figure}
\begin{subfigure}{\linewidth}
\begin{align*}
(\lambda x.\terma )\termb &\rightarrow \terma \{\termb /x\}
&
\realfun{f}(\num{r_1},\dots,\num{r_n})&\rightarrow \num{f(r_1,\dots,r_n)}
\\[5pt]
\ifz{\num r}\terma \termb &\rightarrow 
	\begin{cases}
	\terma  &\text{if $r=0$,}\\
	\termb & \text{otherwise.}
	\end{cases}
&
\letterm{x}{\num r}{\terma }&\rightarrow \terma \{\num r/x\}
\\[5pt]
\fix \terma &\rightarrow \terma (\fix \terma )
&
\oracle&\rightarrow \num r \text{\qquad for any $r\in[0,1]$.}
\end{align*}
\caption{Reduction of a \ppcf{} redex.}\label{subfig:redex}

\medskip
\end{subfigure}
\begin{subfigure}{\linewidth}
\[
	E[\;]\ass [\;] \;\vert\; E[\;] \terma  \;\vert\; \ifz{E[\;]}\terma \termb  \;\vert\; \letterm{x}{E[\;]}{\termb }
	\;\vert\;\realfun f(\num r_1,\dots,\num r_{i-1}, E[\;],\terma_{i+1},\dots,\terma_{n})
\]

\[
	E[\terma ]\rightarrow E[\termb ]\text{, whenever } \terma  \rightarrow \termb 
\]
\caption{Grammar of the evaluation contexts and context closure of the reduction.}\label{subfig:evaluation_context}
\end{subfigure}
\caption{One-step operational semantics of \ppcf.}\label{fig:beta}
\end{figure}
The operational semantics of \ppcf{} is a Markov process defined starting from
the rewriting rules of Figure~\ref{fig:beta}, extending the standard
call-by-name reduction of PCF \citep{plotPCF}. 
The probabilistic primitive $\oracle$ draws a possible value from $[0,1]$, like in~\citep{Park:2008}.
The fact that we are sampling from the uniform
distribution and not from other distributions with equal support appears in the definition of the stochastic kernel $\Red$ (Equation~\eqref{eq:red}). 
In order to define this kernel, 
we equip $\Terms$ with a structure of measurable space (Equation~\eqref{eq:measurable_set_terms}). This defines a $\sigma$-algebra $\Sigma_{\Terms}$ of sets of terms equivalent to the one given in e.g.~\citep{BorgstromLGS16,StatonYWHK16} for slightly different languages. Similarly to~\citep{StatonYWHK16}, our definition is explicitly given by a countable coproduct of copies of the Lebesgue $\sigma$-algebra over $\real^n$ (for $n\in\mathbb N$, see Equations~\eqref{eq:sigma_algebra_terms}), while in~\citep{BorgstromLGS16} the definition is based on a notion of distance between $\lambda$-terms. The two definitions give the same measurable space, but the one adopted here allows to take advantage of the categorical structure of $\Kern$.

\begin{remark}\label{rk:sampling}
The operational semantics associates with a program $M$ a probabilistic distribution $\mathcal D$ of values describing the possible outcomes of the evaluation of $M$. There are actually two different``styles'' for giving $\mathcal D$: one based on samplings and another one, adopted here, based on stochastic kernels. \citet{BorgstromLGS16} proved that the two semantics are equivalent, giving the same distribution $\mathcal D$. 

The ``sampling semantics'' associates with $M$ a function mapping a trace of random samples to a weight, expressing the likelihood of getting that trace of samples from $M$. The final distribution $\mathcal D$ is then calculated by integrating this function over the space of the possible traces, equipped with a suitable measure. This approach is usually adopted when one wants to underline an implementation of the probabilistic primitives of the language via a sampling algorithm, e.g.~\cite{Park:2008}. 

The ``kernel-based semantics'' instead describes program evaluation as a discrete-time Markov process over a measurable space of states given by the set of programs ($(\Terms,\Sigma_{\Terms})$ in our case). The transition of the process is given by a stochastic kernel (here $\Red$ defined in Equation~\eqref{eq:red}) and then the probabilistic distribution $\mathcal D$ of values associated with a term is given by the supremum of the family of all finite iterations of the kernel ($\Red^\infty$, Equation~\eqref{eq:red_infty}).  This latter approach is more suitable when comparing the operational semantics with a denotational model (in order to prove soundness and adequacy for example) and it is then the one adopted  in this paper. 
\end{remark}

A \emph{redex} is a term in one of the forms at left-hand side of the $\rightarrow$ defined in Figure~\ref{subfig:redex}. A \emph{normal form} is a term $\terma $ which is no more reducible under $\rightarrow$. Notice that the closed normal forms of ground type $\treal$ are the real numerals
. The definition of the evaluation context (Figure~\ref{subfig:evaluation_context}) is the usual one defining the deterministic lazy call-by-name strategy: we do not reduce under an abstraction and there is always at most one redex to reduce, as stated by the following standard lemma. 
\begin{lemma}\label{lemma:evaluation_decomposition}
For any term $\terma $, either $\terma $ is a normal form or there exists a unique redex $R$ and an evaluation context $E[\;]$ such that $\terma =E[R]$.
\end{lemma}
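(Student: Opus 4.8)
The plan is to prove both clauses at once --- existence of a decomposition and its uniqueness --- by structural induction on $\terma$, organised by the outermost constructor. I first record the elementary fact that every redex reduces under $\rightarrow$ (this includes $\oracle$, which reduces to $\num r$ for each $r\in[0,1]$), so by the context closure of $\rightarrow$ every term of the form $E[R]$ reduces; hence a term cannot be simultaneously a normal form and of the form $E[R]$, and it suffices in each case to produce either a normality argument or a unique pair $(E,R)$. For $\terma = x$, $\terma = \num r$ and $\terma = \lambda x^{\typea}.\termb$ the term is a normal form: no reduction rule fires at the root and no evaluation-context frame reaches inside an abstraction, a variable or a numeral. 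For $\terma = \oracle$ and $\terma = \fix\termb$ the term is itself a redex, so $\terma = E[R]$ with $E = [\;]$ and $R = \terma$, and this is the only decomposition because the nontrivial frames $E'[\;]\,\termc$, $\ifz{E'[\;]}{\termb_1}{\termb_2}$, $\letterm{x}{E'[\;]}{\termc}$ and $\realfun f(\dots,E'[\;],\dots)$ never have $\oracle$ or $\fix(\cdots)$ as their root constructor.

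The three single-hole inductive cases $\terma = (\termb\,\termc)$, $\terma = \ifz{\termc}{\terma_1}{\terma_2}$ and $\terma = \letterm{x}{\termb}{\termc}$ are treated uniformly by invoking the induction hypothesis on the unique immediate sub-term that the call-by-name strategy inspects first, namely the function $\termb$, the scrutinee $\termc$ and the bound expression $\termb$ respectively. If that sub-term is a normal form of the shape triggering a root reduction --- an abstraction in the application case, a numeral in the $\ifz$ and $\letterm$ cases --- then $\terma$ is a redex and $E = [\;]$. If it is a normal form not of that shape, then $\terma$ is a normal form, since the only frame that could expose a redex of $\terma$ would have to reduce strictly inside that sub-term, which is impossible. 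Otherwise the sub-term decomposes uniquely as $E'[R]$, and prepending the matching frame yields $\terma = E[R]$; uniqueness then follows from the induction hypothesis together with the structural observation that no redex and no nontrivial frame has an abstraction as its root, which is exactly what excludes the competing decomposition $E = [\;]$ when the inspected sub-term is reducible.

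The remaining case $\terma = \realfun f(\terma_1,\dots,\terma_n)$ requires a little more care, and I expect the bookkeeping here to be the only real obstacle, there being no conceptual difficulty. Let $i$ be the least index such that $\terma_i$ is not a numeral; if there is no such $i$, then $\terma$ is a redex with $E = [\;]$. Otherwise I apply the induction hypothesis to $\terma_i$, argue exactly as in the single-hole cases, and prepend the frame $\realfun f(\num{r_1},\dots,\num{r_{i-1}},E'[\;],\terma_{i+1},\dots,\terma_n)$. For uniqueness one checks that any competing decomposition uses this same argument position: a frame $\realfun f(\num{s_1},\dots,\num{s_{k-1}},E''[\;],\dots)$ forces $\terma_1,\dots,\terma_{k-1}$ to be numerals, hence $k\le i$ by minimality of $i$, while its hole occupies a reducible --- so non-numeral --- sub-term, forcing $k\ge i$; thus $k=i$ and the induction hypothesis on $\terma_i$ concludes the argument.
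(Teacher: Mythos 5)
Your proof is correct, and it is exactly the standard argument the paper appeals to when it states this lemma without proof: a structural induction on the outermost constructor, using that abstractions, variables and numerals are normal, that no frame enters an abstraction, and that the $\realfun f$-frame forces all arguments left of the hole to be numerals (your minimal-index bookkeeping for existence and uniqueness there is the only non-trivial point, and you handle it properly). Since the paper offers no proof of its own, there is nothing further to compare; your argument can stand as the omitted one.
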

It is standard to check that the property of subject reduction holds (if $\Gamma\vdash \terma :\typea $ and $\terma \rightarrow \termb $, then $\Gamma\vdash \termb :\typea $).

From now on, let us fix an enumeration without repetitions $(\varreal_i)_{i\in\mathbb N}$
of 
variables of type $\treal$.  Notice that any term
$\terma \in\Terms[\Gamma\vdash \typea ]$ with $n$ different occurrences of real
numerals, can be decomposed \emph{univocally} into a term
$\varreal_1:\treal, \dots, \varreal_n:\treal, \Gamma\vdash \fterma :\typea $ without real numerals
and a substitution
$\realsub=\{\num{r_1}/\varreal_1,\dots, \num{r_n}/\varreal_n\}$, such that: (i)
$\terma=\fterma\realsub$; (ii) each $\varreal_i$ occurs exactly once in
$\fterma$; (iii) $\varreal_i$ occurs before $\varreal_{i+1}$ reading the term
from left to right. Because of this latter condition, we can omit the name of
the substituted variables, writing simply $\fterma\vec r$ with
$\vec r=(r_1,\dots,r_n)$.
We denote by $\TermsFree[\Gamma\vdash \typea ]{n}$ the set of terms in
$\Terms[\varreal_1:\treal, \dots, \varreal_n:\treal,\Gamma\vdash \typea ]$
with no occurrence of numerals and respecting conditions (ii) and (iii) above. We let $\fterma, \ftermb$
vary over such real-numeral-free terms. 

Given $\fterma \in\TermsFree[\Gamma\vdash \typea ]{n}$ we then define the set $\Terms[\Gamma\vdash \typea ]_{\fterma } = \{\terma \in \Terms[\Gamma\vdash \typea ]\text{ s.t. }\exists \vec r\in\real^n, \terma =\fterma\vec r\}$. The bijection $s: \Terms[\Gamma\vdash \typea ]_{\fterma }\rightarrow \real^n$ given by $s(\fterma\vec r)=\vec r$ endows $\Terms[\Gamma\vdash \typea ]_{\fterma }$ with a $\sigma$-algebra  isomorphic to $\Sigma_{\real^n}$:
$U\in \Sigma_{\Terms[\Gamma\vdash \typea ]_{\fterma }}$  iff $s(U) \in \Sigma_{\real^n}$.
 The fact that $\TermsFree[\Gamma\vdash \typea ]{n}$ is countable and that $\Kern$ has countable coproducts (see  Section~\ref{section:compendium}), 
  allows us to define the measurable space of \ppcf{} terms of type $\Gamma\vdash \typea $ as the coproduct:
\begin{equation}\label{eq:sigma_algebra_terms}
(\Terms[\Gamma\vdash \typea ], \Sigma_{\Terms[\Gamma\vdash \typea ]}) 
= \coprod_{
	\substack{
		n\in\mathbb N, 
		\fterma \in \TermsFree[\Gamma\vdash \typea ]{n}				
	}
}
(\Terms[\Gamma\vdash \typea ]_{\fterma }, \Sigma_{\Terms[\Gamma\vdash \typea ]_{\fterma }})
\end{equation}
Spelling out the definition, a subset $U\subseteq \Terms[\Gamma\vdash \typea ]$ is measurable if and only if:
\begin{equation}\label{eq:measurable_set_terms}
	\forall n,\forall \fterma \in \TermsFree[\Gamma\vdash \typea ]{n}, \left\{\vec r\;\text{ s.t. }\; \fterma \vec r\in U\right\}\in\Sigma_{\real^n}
\end{equation}
Given a set $U\subseteq\Real$, we denote by $\num U$ the set of numerals associated with the real numbers in $U$. Of course $U$ is measurable iff $\num U$ is measurable.  The following lemma allows us to define $\Red$ and $\Red^\infty$.
\begin{lemma}\label{lemma:subst_measurable}
Given $\Gamma,x:\typeb\vdash\terma:\typea$ the function $\subst{x,\terma}$ mapping $\termb\in\Terms[\Gamma\vdash\typeb]$ to $\terma\{\termb/x\}\in\Terms[\Gamma\vdash\typea]$ is measurable. 
\end{lemma}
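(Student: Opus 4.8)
The plan is to work directly with the explicit description of the $\sigma$-algebra $\Sigma_{\Terms[\Gamma\vdash\typea]}$ given by \eqref{eq:measurable_set_terms}. By that characterization (which is an ``iff''), measurability of $\subst{x,\terma}$ amounts to showing that for every measurable $U\subseteq\Terms[\Gamma\vdash\typea]$, every $m\in\mathbb N$ and every real-numeral-free $\ftermb\in\TermsFree[\Gamma\vdash\typeb]{m}$, the set $\{\vec r\in\Real^m : \terma\{\ftermb\,\vec r/x\}\in U\}$ belongs to $\Sigma_{\Real^m}$; this is exactly \eqref{eq:measurable_set_terms} applied to $\subst{x,\terma}^{-1}(U)$. (Since $\Gamma$ does not declare $x$, no term of $\Terms[\Gamma\vdash\typeb]$ has $x$ free, so the substitution is capture-free and really does land in $\Terms[\Gamma\vdash\typea]$; equivalently, one could phrase the argument as ``a map out of a countable coproduct in $\Meas$ is measurable iff each of its restrictions to a summand is'', but the direct route is self-contained.)

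So I would fix such an $m$ and $\ftermb$, and record two combinatorial invariants of the \emph{fixed} term $\terma$: it contains some number $k\ge 0$ of real-numeral occurrences, with values $s_1,\dots,s_k$ read from left to right, and the variable $x$ occurs in it some fixed number $\ell\ge 0$ of times. The key point — which I would establish by a routine structural induction on $\terma$ — is that the real-numeral-free skeleton of $\terma\{\ftermb\,\vec r/x\}$ is \emph{independent of} $\vec r$: substituting $\ftermb\,\vec r$ for $x$ merely plugs $\ell$ copies of the fixed skeleton of $\ftermb$ into the $\ell$ occurrences of $x$, and only the numeral leaves carry the values. Hence there is a single real-numeral-free term $\ftermc\in\TermsFree[\Gamma\vdash\typea]{k+\ell m}$ such that
\[
\terma\{\ftermb\,\vec r/x\}=\ftermc\,\theta(\vec r)\qquad\text{for all }\vec r\in\Real^m,
\]
where $\theta\colon\Real^m\to\Real^{k+\ell m}$ is the map whose coordinates, listed in the left-to-right reading order of the $k+\ell m$ numeral positions of $\ftermc$, are either one of the constants $s_1,\dots,s_k$ or one of the inputs $r_1,\dots,r_m$ (each input repeated $\ell$ times, once per occurrence of $x$). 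This $\theta$ is a composite of constant maps and coordinate projections, hence continuous, hence measurable.

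With this in hand the conclusion is immediate, since
\[
\{\vec r\in\Real^m : \terma\{\ftermb\,\vec r/x\}\in U\}=\theta^{-1}\big(\{\vec t\in\Real^{k+\ell m} : \ftermc\,\vec t\in U\}\big),
\]
and the inner set lies in $\Sigma_{\Real^{k+\ell m}}$ precisely because $U$ is measurable — this is \eqref{eq:measurable_set_terms} instantiated at $n=k+\ell m$ and the shape $\ftermc$ — so its preimage under the measurable map $\theta$ is in $\Sigma_{\Real^m}$, as required.

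The only genuine content is the shape-stability claim of the second step, and even that is a short induction on $\terma$; everything else reduces to the remark that a map $\Real^m\to\Real^{k+\ell m}$ assembled from constants and projections is continuous, together with the unwinding of the coproduct $\sigma$-algebra of \eqref{eq:sigma_algebra_terms}. I expect the one point demanding care to be the bookkeeping around condition~(iii) (the fixed left-to-right ordering of the placeholder variables $\varreal_i$): one must choose the reading order of the interleaved numeral positions consistently when defining both $\ftermc$ and $\theta$, but this is cosmetic and does not affect measurability.
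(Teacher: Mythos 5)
Your proof is correct and follows essentially the same route as the paper's: fix the skeleton $\ftermb$ of the argument, observe that substitution leaves the real-numeral-free skeleton of $\terma\{\ftermb\,\vec r/x\}$ fixed and only rearranges the numeral values via a map built from constants and (repeated) projections, which is measurable, and then pull back through \eqref{eq:measurable_set_terms}. The paper writes this same parameter map explicitly as the interleaving $\vec{r'}\mapsto r_1\dots r_{i_1}\vec{r'}\,r_{i_1+1}\dots r_{i_k}\vec{r'}\,r_{i_k+1}\dots r_m$, so the two arguments coincide up to notation.
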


Given a term $\terma \in\Terms$ and a measurable set $U\subseteq\Terms$ we define $\Red(\terma ,U)\in[0,1]$ depending on the form of $\terma $, according to Lemma~\ref{lemma:evaluation_decomposition}:
\begin{equation}\label{eq:red}
\Red (\terma , U) = 
	\begin{cases}
	\delta_{E[\termb ]}(U)&\text{if $\terma =E[R]$, $R\rightarrow \termb $ and $R\neq\oracle$,}\\
	\lambda\{r\in[0,1]\text{ s.t. } E[\num r]\in U\}&\text{if $\terma =E[\oracle]$,}\\
	\delta_{\terma }(U)&\text{if $\terma $ normal form.}
	\end{cases}
\end{equation}
The last case sets the normal forms as accumulation points of $\Red$, so that $\Red(\terma ,U)$ gives the probability that we observe $U$ after \emph{at most one} reduction step applied to $\terma $. The definition in the case of $E[\oracle]$ specifies that $\oracle$ is drawing from the uniform distribution over $[0,1]$. Notice that, if $U\subseteq\real$ is measurable, then the set $\{r\in[0,1]\text{ s.t. } E[\num r]\in U\}$ is measurable by Lemma~\ref{lemma:subst_measurable}.
The definition of $\Red$ extends to a continuous setting the operational semantics Markov chain of~\citep{danosehrhard,EhrPagTas14}.

\begin{proposition}\label{prop:red_kernel}
For any sequent $\Gamma\vdash A$, the map $\Red$ is a stochastic kernel from $\Terms[\Gamma\vdash A]$ to $\Terms[\Gamma\vdash A]$.
\end{proposition}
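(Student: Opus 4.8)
The plan is to verify, separately, the two defining clauses of a stochastic kernel: (i) for every $\terma\in\Terms[\Gamma\vdash A]$ the set function $\Red(\terma,\_)$ is a subprobability measure on $(\Terms[\Gamma\vdash A],\Sigma_{\Terms[\Gamma\vdash A]})$, and (ii) for every $U\in\Sigma_{\Terms[\Gamma\vdash A]}$ the map $\Red(\_,U)\colon\Terms[\Gamma\vdash A]\to\Realp$ is measurable. Subject reduction guarantees that the terms appearing on the right-hand side of \eqref{eq:red} again belong to $\Terms[\Gamma\vdash A]$, so the statement is well-posed. Clause (i) is the easy half: by Lemma~\ref{lemma:evaluation_decomposition}, if $\terma$ is a normal form or $\terma=E[R]$ with $R\to\termb$ and $R\neq\oracle$, then $\Red(\terma,\_)$ is the Dirac measure $\Dirac\terma$ or $\Dirac{E[\termb]}$, hence a probability measure; if $\terma=E[\oracle]$, then $\Red(\terma,\_)$ is the image of $\lambda_{[0,1]}$ under the map $r\mapsto E[\num r]$, which is measurable by \eqref{eq:measurable_set_terms} (equivalently by Lemma~\ref{lemma:subst_measurable}), so $\Red(\terma,\_)$ is a well-defined measure of total mass $\lambda_{[0,1]}([0,1])=1$. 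In every case the total mass is $\le 1$.

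For clause (ii), fix $U\in\Sigma_{\Terms[\Gamma\vdash A]}$. By the coproduct description \eqref{eq:sigma_algebra_terms} of the $\sigma$-algebra on terms, $\Red(\_,U)$ is measurable iff for every $n\in\nat$ and every $\fterma\in\TermsFree[\Gamma\vdash A]{n}$ the function $\vec r\mapsto\Red(\fterma\vec r,U)$ is $\Sigma_{\real^n}$-measurable. The key observation is that the clause of \eqref{eq:red} selected by $\fterma\vec r$, the occurrence of the redex, and the common shape of the contractum are all determined by $\fterma$ alone, independently of $\vec r$: the call-by-name search for the redex descends through applications, $\mathtt{ifz}$-scrutinees, $\mathtt{let}$-definientia and $\realfun f$-arguments, treating a numeral as a fully evaluated ground-type value, and since $\fterma$ is real-numeral-free a subterm of $\fterma\vec r$ is a numeral exactly when the matching subterm of $\fterma$ is one of $\varreal_1,\dots,\varreal_n$. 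Hence, for a fixed $\fterma$, we are in one of three situations for all $\vec r$ at once: (a) $\fterma\vec r$ is a normal form; (b) $\fterma\vec r=E[\oracle]$; (c) $\fterma\vec r=E[R]$ with $R$ a redex $\neq\oracle$.

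In situation (a), $\Red(\fterma\vec r,U)=\Dirac{\fterma\vec r}(U)=\chi_V(\vec r)$ with $V=\{\vec s\mid\fterma\vec s\in U\}\in\Sigma_{\real^n}$ by \eqref{eq:measurable_set_terms}, so it is measurable. In situation (c), $\Red(\fterma\vec r,U)=\chi_U(\Phi(\vec r))$ where $\Phi(\vec r)=E[\termb]$ is the one-step contractum; reading off the rules of Figure~\ref{subfig:redex}, $\Phi$ is a measurable map $\real^n\to\Terms[\Gamma\vdash A]$, since the numerals of $\Phi(\vec r)$ are produced from $\vec r$ by coordinate projections and duplications (for the $\beta$-, $\mathtt{let}$- and $\fix$-rules), by one application of an $\realfun f\in\Const$, which is measurable by hypothesis (for the $\realfun f$-rule), or by splitting $\real^n$ along the measurable sets $\{r_j=0\}$ and $\{r_j\neq0\}$ into finitely many pieces, each landing in a single $\Terms[\Gamma\vdash A]_\ftermb$ through its measurable identification with $\real^m$ (for the $\mathtt{ifz}$-rule); hence $\chi_U\circ\Phi$ is measurable. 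In situation (b), write $E[\num s]=\ftermb\,\vec t(\vec r,s)$ for the common skeleton $\ftermb\in\TermsFree[\Gamma\vdash A]{n+1}$ and the coordinate-rearrangement map $\vec t\colon\real^n\times[0,1]\to\real^{n+1}$, which is measurable; then $\Red(\fterma\vec r,U)=\lambda_{[0,1]}(W_{\vec r})$, where $W=\vec t^{-1}(\{\vec u\mid\ftermb\vec u\in U\})\subseteq\real^n\times[0,1]$ is measurable by \eqref{eq:measurable_set_terms} and measurability of $\vec t$, and $W_{\vec r}$ is its $\vec r$-slice. Thus $\vec r\mapsto\lambda_{[0,1]}(W_{\vec r})=\int_{[0,1]}\chi_W(\vec r,s)\,ds$ is measurable by Tonelli's theorem --- precisely the fact that makes the kernel composition \eqref{eq:kern_composition} well defined. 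Collecting the three situations gives (ii), and together with (i) this proves the proposition.

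I expect the only genuine difficulty to be the bookkeeping in situations (b) and (c): showing that one-step reduction, viewed as a family of maps parametrised by the real numerals occurring in a term, is measurable. This rests on three ingredients that are present by design in \ppcf{}: the rigidity of the deterministic call-by-name strategy (so the redex occurrence and the contractum skeleton do not move with the numerals), the assumed measurability of every $\realfun f\in\Const$, and the measurability of $\{0\}\subseteq\Real$ needed for $\mathtt{ifz}$-branching. Everything else --- the coproduct $\sigma$-algebra on terms, the numeral/$\real^n$ identifications, and Tonelli's theorem --- is already in place from Section~\ref{section:compendium} and Lemmas~\ref{lemma:evaluation_decomposition} and~\ref{lemma:subst_measurable}.
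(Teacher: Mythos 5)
Your proof is correct and follows essentially the same route as the paper's: both clauses of the kernel definition are checked separately, measurability of $\Red(\_,U)$ is reduced through the coproduct \eqref{eq:sigma_algebra_terms} to each skeleton $\fterma$, the applicable case (normal form, $\oracle$-redex, other redex) is observed to be uniform in $\vec r$, and the non-sample redex case rests on \eqref{eq:measurable_set_terms} and Lemma~\ref{lemma:subst_measurable}, with the same splitting along $\{r_j=0\}$ for the $\mathtt{ifz}$ rule --- your formulation via a measurable contractum map $\Phi$ and $\chi_U\circ\Phi$ is just a repackaging of the paper's argument that the preimage of $\{1\}$ under the $\{0,1\}$-valued map $\Red_\fterma(\_,U)$ is measurable. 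The one place you go beyond the paper's sketch is the case $\fterma=E[\oracle]$ when numerals occur in $E$: the paper treats $\Red_\fterma(\_,U)$ there as a constant map (literally accurate only when the context is numeral-free), whereas your Tonelli argument for $\vec r\mapsto\lambda_{[0,1]}(W_{\vec r})$ handles the parametrised context explicitly; this is a welcome refinement of the same method rather than a different approach.
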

\begin{proof}[Proof (Sketch).]
The fact that $\Red(\terma ,\_)$ is a measure is an immediate consequence of the definition of $\Red$ and the fact that any evaluation context $E[\;]$ defines a measurable map (Lemma~\ref{lemma:subst_measurable}).

Given a measurable set $U\subseteq\Terms[\Gamma\vdash A]$, we must prove that $\Red(\_,U)$ is a measurable function from $\Terms[\Gamma\vdash A]$ to $[0,1]$. Since $\Terms[\Gamma\vdash A]$ can be written as the coproduct in Equation~\eqref{eq:sigma_algebra_terms}, it is sufficient to prove that for any $n$ and $\fterma\in\TermsFree[\Gamma\vdash\typea]n$, $\Red_\fterma(\_,U):\Terms[\Gamma\vdash A]_\fterma\rightarrow[0,1]$ is a measurable function. One reasons by case study on the shape of $\fterma$, using Lemma~\ref{lemma:evaluation_decomposition} and the definition of a redex.
\end{proof}

We can then iterate $\Red$ using the composition of stochastic kernels (Equation~\eqref{eq:kern_composition}): 
\[
\Red^{n+1}(\terma ,U) = (\Red\circ\Red^{n})(\terma ,U) =\int_{\Terms} \Red(t,U)\,\Red^n(\terma , d t),
\]
this giving the probability that we observe $U$ after at most $n+1$ reduction
steps from $\terma $. Because  the normal forms are accumulation
points, one can prove by induction on $n$ that:
\begin{lemma}
Let $\Gamma\vdash\terma:\typea$ and let $U$ be a measurable set of normal forms in $\Terms[\Gamma\vdash\typea]$. The sequence $(\Red^n(\terma ,U))_n$ is monotone non-decreasing.
\end{lemma}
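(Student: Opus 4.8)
The plan is to prove the slightly stronger statement that $\Red^n(\terma,U)\le\Red^{n+1}(\terma,U)$ holds for \emph{every} term $\terma\in\Terms[\Gamma\vdash\typea]$ (not only the given one) and every $n$, by induction on $n$, keeping $U$ fixed. The only feature of $U$ that will be used is that all of its elements are normal forms; recall also that each $\Red^n$ is a stochastic kernel by Proposition~\ref{prop:red_kernel} together with the fact that $\Kern$ is closed under composition, so the integrals below are well defined and $\Red^n(\_,U)$ is measurable (condition (ii) on kernels).

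For the base case $n=0$, $\Red^0$ is the identity kernel, so $\Red^0(\terma,U)=\chi_U(\terma)$, whereas $\Red^1(\terma,U)=\Red(\terma,U)$. If $\terma\notin U$ the inequality $0\le\Red(\terma,U)$ is trivial. If $\terma\in U$ then $\terma$ is a normal form by hypothesis on $U$, so the last clause of Equation~\eqref{eq:red} gives $\Red(\terma,U)=\delta_{\terma}(U)=1=\chi_U(\terma)$. Hence $\Red^0(\terma,U)\le\Red^1(\terma,U)$ for all $\terma$. This is exactly where the assumption on $U$ is essential: for a general measurable $U$ the inequality fails, e.g. when $\terma$ is a redex with $\terma\in U$ but its reduct outside $U$.

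For the inductive step, assume $\Red^n(t,U)\le\Red^{n+1}(t,U)$ for all terms $t$. Using the definition $\Red^{k+1}=\Red\circ\Red^k$ and associativity of kernel composition (a consequence of Fubini, as noted in Section~\ref{section:compendium}), one gets $\Red^{n+2}=\Red^{n+1}\circ\Red$, hence
\[
\Red^{n+2}(\terma,U)=\int_{\Terms}\Red^{n+1}(t,U)\,\Red(\terma,dt)\ \ge\ \int_{\Terms}\Red^{n}(t,U)\,\Red(\terma,dt)=\Red^{n+1}(\terma,U),
\]
the inequality being monotonicity of the Lebesgue integral applied to the pointwise estimate supplied by the induction hypothesis (both integrands are non-negative measurable functions of $t$).

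The argument is essentially routine; the one point needing care is the grouping of the iterated kernel in the inductive step. Writing $\Red^{n+2}=\Red^{n+1}\circ\Red$ (rather than $\Red\circ\Red^{n+1}$) is what makes the induction close, since the induction hypothesis is a statement about the \emph{fixed} set $U$ that can be plugged into the integral at each term $t$; with the other grouping one would instead need $\Red^{n}(\terma,\cdot)\le\Red^{n+1}(\terma,\cdot)$ as measures on all of $\Sigma_{\Terms}$, which is false. So the real content is the base-case observation that normal forms are accumulation points of $\Red$, i.e. $\Red(\terma,\cdot)=\delta_{\terma}$ whenever $\terma$ is a normal form.
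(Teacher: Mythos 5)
Your proof is correct and follows exactly the route the paper indicates for this (only sketched) lemma: induction on $n$, with the hypothesis that $U$ contains only normal forms entering precisely through the clause $\Red(\terma,\cdot)=\delta_{\terma}$ for $\terma$ a normal form. Your additional care about the measurability of $\Red^{n}(\_,U)$ and the regrouping $\Red^{n+2}=\Red^{n+1}\Comp\Red$ via associativity of kernel composition correctly fills in the details the paper leaves implicit.
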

We can then define, for $\terma \in \Terms$ and $U$ a measurable set of normal forms, the limit
\begin{equation}\label{eq:red_infty}
	\Red^\infty(\terma ,U) = \sup_n\left( \Red^n(\terma ,U)\right).
\end{equation}
In particular, if $\terma $ is a closed term of ground type $\treal$, the only normal forms that $\terma $ can reach are numerals, in this case $\Red^\infty(\terma ,\_)$ corresponds to the probabilistic distribution over $\real$ which is computed by $\terma $ according to the operational semantics of \ppcf{} (Remark~\ref{rk:sampling}).

\begin{example}\label{rk:letVSabs}
In order to make clear the difference between a call-by-value and a call-by-name reduction in a probabilistic setting, let us consider the following two terms:
\begin{align*}
    \terma&= (\lambda x. (x = x))\oracle,&
    \termb&= \letterm x\oracle{x = x}.
\end{align*}
Both are closed terms of type $\treal$, ``applying'' the uniform distribution to the diagonal function $x \mapsto x = x$. However,  $\terma$ implements a call-by-name application, whose reduction duplicates the probabilistic primitive before sampling the distribution, while the evaluation of $\termb$ first samples a real number $\num r$ and then duplicates it:
  \begin{align*}
    \terma &\rightarrow \oracle =\oracle \rightarrow \num r=\num s&&\text{for any $r$ and $s$,}\\
    \termb &\rightarrow \letterm x {\num r} {x=x}\rightarrow \num{r}=\num r&& \text{for any $r$.}
  \end{align*}
  
The distribution associated with $\terma$ by $\Red^{\infty}$ is the Dirac $\delta_0$, because $\Red^{\infty}(\terma,\num U)=\Red^3(\terma,\num U)=\lambda(\{(r,r)\text{ s.t. } r\in [0,1]\})\times\delta_1(U)+  
\lambda(\{(r,s)\text{ s.t. } r\neq s, r,s\in
[0,1]\})\times\delta_0(U)=\delta_0(U)$, the last equality is because the
diagonal set $\{(r,r)\text{ s.t. } r\in [0,1]\}$ has Lebesgue measure
zero. This expresses that $\terma$ evaluates to $\num 0$ (i.e.\ ``false'') with
probability $1$, although there are an uncountable 
number of reduction paths reaching $\num 1$.  On the contrary, the distribution associated with $\termb$ is $\delta_1$: $\Red^{\infty}(\termb,\num U)=\Red^3(\terma,\num U)=\lambda([0,1])\times\delta_1(U)=\delta_1(U)$, expressing that $\termb$ always evaluates to $\num 1$ (i.e.\ true). 
\end{example}

\begin{remark}[Score]\label{rk:score}
  Some probabilistic programming languages have a primitive 
  \texttt{score} (e.g. \citep{BorgstromLGS16,StatonYWHK16}) or \texttt{factor}
  (e.g.~\cite{GoodmanT2014}), allowing 
  to express a probabilistic distribution from 
  a density function. A map $f$ is the probabilistic density function of a
  distribution $\mu$ with respect to another measure, say the Lebesgue measure
  $\lambda$, whenever $\mu(U)=\int_U f(x)\,\lambda(dx)$, for every measurable
  $U$. Intuitively, $f(x)$ gives a ``score'' expressing the likelihood of
  sampling the value $x$ from $\mu$. 

In our setting, the primitive $\mathtt{score}_x$ would be a term like
$\Gamma\vdash \mathtt{score}_x(\terma):\treal$, with
$\Gamma,x:\treal\vdash\terma:\treal$ defining $f$. 
The reduction of
$\mathtt{score}_x(\terma)$ outputs any numeral $\num r$ (a possible sample
from the distribution $\mu$), while the value $f(r)$ is used to multiply
$\Red$, like:
\begin{equation}\label{eq:score}
	\Red(\mathtt{score}_x(\terma), U)=\int_{\Real}\chi_{U}(r)f(r)\,\lambda(dr).
\end{equation}

This primitive allows to implement a distribution 
 in a more efficient way than rejection sampling, this latter based on a loop (Example~\ref{ex:gaussian}). However, $\mathtt{score}_x(\terma)$ suffers a major drawback: there is no static way of characterizing whether a term $\terma$ is implementing a probabilistic density function or rather a generic measurable map. The integral in  \eqref{eq:score} can have a value greater than one or even to be infinite or undefined for general $f$, in particular $\Red$ would fail to be a stochastic kernel for all terms.

This problem can be overcome by modifying the output type of a program, see
e.g.~\citep{StatonYWHK16}. We decided however to avoid these issues, convinced
that \ppcf{} is already expressive enough to test the
category $\INFSCOTTM$, which is the true object of study of this article. 
\end{remark}

%
\section{Cones}\label{sec:cones}
We now study the central semantical concept of this paper: cones and stable functions between cones. 
Before entering into technicalities, let us provide some intuitions and motivations.  A \emph{complete cone} $P$ is an $\Realp$-semimodule together with a norm $\Cnorm P{\underline\;}$ satisfying some natural axioms (Definition~\ref{def:cone}) and such that the unit ball $\Cuball P$ defined by the norm is complete with respect to the cone order $\leq_P$ (Definition~\ref{def:cone_order}). A type $\typea$ of $\ppcf$ will be associated with a cone $\semtype\typea$ and a closed program of type $\typea$ will be denoted as an element in the unit ball $\Cuball{\semtype\typea}$. The order completeness of $\Cuball{\semtype\typea}$ is crucial for defining the interpretation of the recursive programs (Section~\ref{sect:interp}), as usual.

There are various notions of cone in the literature and we are
following \citet{Selinger2004}, who uses cones similar to the
ones already presented in e.g. ~\cite{ando1962fundamental}. Let us
stress two of its crucial features.  (1) The cone order $\leq_P$ is
\emph{defined by} the algebraic structure and not given independently
from it --- this is in accordance with what happens in the category of
probabilistic coherence spaces \cite{danosehrhard}. 
(2) The
completeness of $\Cuball P$ is defined with respect to the cone order,
in the spirit of domain theory, rather than with respect to the norm,
as it is usual in the theory of Banach spaces.

A program taking inputs of type $\typea$ and giving outputs of type $\typeb$ will be denoted as a map from $\Cuball{\semtype\typea}$ to $\Cuball{\semtype\typeb}$. The goal of Section~\ref{sect:pre-stable} is to find the right properties enjoyed by such functions in order to get a cartesian closed category, namely that the set of these functions generates a complete cone compatible with the cartesian structure (which will be the denotation of the type $\typea\rightarrow\typeb$). It turns out that the usual notion of Scott continuity (Definition~\ref{def:linear_scott}) is too weak a condition for ensuring cartesian closeness (Section~\ref{sub:Scott_not_ccc}). A precise analysis of this point led us to the conclusion that these functions
have also to be \emph{absolutely monotonic} (Definition~\ref{def:Sn}). This latter condition is usually expressed by saying that all derivatives 
are everywhere non-negative, however we define it here as the non-negativity of iterated differences. Such non-differential definitions of absolute
monotonicity have already been considered various times in classical analysis, see for instance~\citep{McMillan54}. 

We call \emph{stable functions} the Scott continuous and absolutely monotonic functions (Definition~\ref{def:stable}), allowing for a cpo-enriched cartesian closed structure over the category of cones. The model of \ppcf{} needs however a further notion, that of measurability, which will be discussed in Section~\ref{sect:measurability}.

\begin{definition}\label{def:cone}
  A \emph{cone} $P$ is an $\Realp$-semimodule given together with an
  $\Realp$-valued function $\Cnorm P\_$ such that
  the following conditions hold for all $x,x',y,y'\in P$ and $\alpha\in\Realp$
\begin{center}
  \begin{minipage}{0.3\textwidth}
    \begin{itemize}
    \item $x+y=x+y'\Implies y=y'$
    \end{itemize}
  \end{minipage}
  \begin{minipage}{0.3\textwidth}
  \begin{itemize}
\item $\Cnorm P{\alpha x}=\alpha\Cnorm Px$
  \item $\Cnorm Px=0\Implies x=0$
  \end{itemize}
\end{minipage}
\begin{minipage}{0.38\textwidth}
  \begin{itemize}
\item $\Cnorm P{x+x'}\leq\Cnorm Px+\Cnorm P{x'}$
  \item $\Cnorm Px\leq\Cnorm P{x+x'}$. 
  \end{itemize}
\end{minipage}
\end{center}
  For $\alpha\in\Realp$ the set
  $\Cuball P(\alpha)=\{x\in\Ccarrier P\St\Cnorm Px\leq \alpha\}$ is called
  \emph{the ball of $P$ of radius $\alpha$}. The \emph{unit ball} is
  $\Cuball P=\Cuball P(1)$. A subset $S$ of $\Ccarrier P$ is \emph{bounded} if
  $S\subseteq\Cuball P(\alpha)$ for some $\alpha\in\Realp$.
\end{definition}
Observe that $\Cnorm P0=0$ by the second condition (homogeneity of the
norm) and that if $x+x'=0$ then $x'=x=0$ by the last condition
(monotonicity of the norm).

\begin{definition}\label{def:cone_order}
  Let $x,x'\in\Ccarrier P$, one writes $x\leq_P x'$ if there is a
  $y\in\Ccarrier P$ such that $x'=x+y$. This $y$ is then unique, and we set
  $x'-x=y$. The relation $\leq_P$ is easily seen to be an order relation on
  $\Ccarrier P$ and will be called \emph{the cone order relation} of $P$.
  
   A cone $P$ is \emph{complete} if any non-decreasing sequence
  $(x_n)_{n\in\Nat}$ of elements of $\Cuball P$ has a least upper bound
  $\sup_{n\in\Nat}x_n\in\Cuball P$.\todom{I moved def of completeness here, to save space and to make closer all definitions about the cones}
\end{definition}

The usual laws of calculus using subtraction hold (under the restriction that
all usages of subtraction must be well-defined). For instance, if
$x,y,z\in\Ccarrier P$ satisfy $z\leq_P y\leq_P x$ then we have
$x-z=(x-y)+(y-z)$. Indeed, it suffices to observe that $(x-y)+(y-z)+z=x$.



There are many examples of cones.
\begin{example}\label{ex:cone_ex}
  The prototypical example is $\Realp$ with the usual algebraic operations and  the norm given by $\Cnorm\Realp x=x$.
 %
  The cone $\ell_\infty$ is defined by taking as carrier set the set of all bounded elements of $\Realpto\Nat$, defining the algebraic laws
  pointwise, and equipping it with the norm
  $\Cnorm{} u=\sup_{n\in\Nat}u_n$.
  The cone $\ell_1$ instead is given by taking as carrier set the set of all
  elements $u$ of $\Realpto\Nat$ such that
  $\sum_{n=0}^\infty u_n<\infty$, defining the algebraic laws
  pointwise, and equipping it with the norm
  $\Cnorm{} u=\sum_{n=0}^\infty u_n<\infty$.
\end{example}

\begin{example}\label{ex:pcoh}
  Let $\cX=(\Web\cX,\Pcoh\cX)$ be a probablistic coherence space
  (see~\cite{danosehrhard}). Remember that this means that $\Web\cX$
  is a countable set (called web) and $\Pcoh\cX\subseteq\Realpto{\Web\cX}$ satisfies
  $\Pcoh\cX=\Biorth{\Pcoh\cX}$ (where, given
  $\cF\subseteq\Realpto{\Web\cX}$, the set
  $\Orth\cF\subseteq\Realpto{\Web\cX}$ is
  $\Orth{\cF}=\{u'\in\Realpto{\Web\cX}\St\forall u\in\cF\
  \sum_{a\in\Web\cX}u_au'_a\leq 1\}$)\footnote{There
    are actually two additional conditions which are not essential
    here.}.  Then we define a cone $\Cofpcoh\cX$ by setting
  $\Ccarrier{\Cofpcoh\cX}=\{u\in\Realpto{\Web\cX}\St\exists\epsilon>0\
  \epsilon u\in\Pcoh\cX\}$,
  defining algebraic operations in the usual componentwise way and
  setting
  $\Cnorm{\Cofpcoh\cX}u=\inf\{\alpha>0\St\frac 1\alpha u\in\Pcoh
  \cX\}=\sup\{\sum_{a\in\Web\cX}u_au'_a\St
  u'\in\Orth{\Pcoh\cX}\}$.
  
  The cones in Example~\ref{ex:cone_ex} are instances of this one.
\end{example}
\begin{example}\label{ex:non-complete-cone}
The set of all $u\in\Realpto\Nat$ such that $u_n=0$ for all but a finite
  number of indices $n$, is a cone $P_0$ when setting
  $\Cnorm{P_0}u=\sum_{n\in\Nat}u_n$.
\end{example}
\begin{example}\label{ex:measures-cone}
  Let $X$ be a measurable space. The set of all $\Realp$-valued
  measures\footnote{So we consider only ``bounded'' measures,
    which satisfy that the measure of the total space is finite, which
    is not the case of the Lebesgue measure on the whole $\Real$.} on $X$ is a
  cone $\Bmes X$, algebraic operations being defined in the
  usual ``pointwise'' way (\Eg~$(\mu+\nu)(U)=\mu(U)+\nu(U)$) and norm
  given by $\Cnorm{\Bmes X}\mu=\mu(X)$. This is the main motivating
  example for the present paper. Observe that such a cone is not of
  the shape $\Cofpcoh\cX$ in general.  
\end{example}
In all these examples, the cone order can be described in a pointwise
way. For instance, when $\cX$ is a probabilistic coherence space, one
has $u\leq_{\Cofpcoh\cX}v$ iff $\forall a\in\Web\cX\ u_a\leq
v_a$.
Similarly when $X$ is a measurable space, one has
$\mu\leq_{\Bmes\cX}\nu$ iff $\forall U\in\Tribu X\
\mu(U)\leq\nu(U)$.
This is due to the fact that when this condition holds, the function
$U\mapsto\nu(U)-\mu(U)$ is easily seen to be an $\Realp$-valued measure. 


All the examples above, but Example~\ref{ex:non-complete-cone}, are
examples of complete cones.

\begin{lemma}
  $P$ is complete iff any bounded non-decreasing sequence $(x_n)_{n\in\Nat}$
  has a least upper bound $\sup_{n\in\Nat}x_n$ which satisfies
  $\Cnorm P{\sup_{n\in\Nat}x_n}=\sup_{n\in\Nat}\Cnorm P{x_n}$.
\end{lemma}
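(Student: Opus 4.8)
The statement is an equivalence, and the implication from the right-hand property to completeness is immediate: a non‑decreasing sequence $(x_n)$ in $\Cuball P$ is bounded (by $1$), so by hypothesis it has a least upper bound $x$ with $\Cnorm P x=\sup_n\Cnorm P{x_n}\le 1$; hence $x\in\Cuball P$, and $P$ is complete. All the work lies in the converse, so assume $P$ complete and let $(x_n)$ be bounded and non‑decreasing, say $\Cnorm P{x_n}\le\alpha$ for all $n$ with $\alpha>0$ (if the only bound available is $0$ then all $x_n$ vanish and there is nothing to prove). The natural move is to rescale into the unit ball: $y_n:=x_n/\alpha$ lies in $\Cuball P$, is non‑decreasing, hence by completeness has a lub $y=\sup_n y_n\in\Cuball P$, and I will set $\sup_n x_n:=\alpha y$; this is forced since a least upper bound, once shown to exist, is unique and so independent of $\alpha$. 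That $\alpha y$ is an upper bound of $(x_n)$ follows from $y_n\le_P y$ by multiplying by $\alpha$.

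The key tool, which I would prove first as an auxiliary fact, is: \emph{if $(u_n)\subseteq\Cuball P$ is non‑decreasing with lub $u$, then $\Cnorm P u=\sup_n\Cnorm P{u_n}$}. Here ``$\ge$'' is just monotonicity of the norm. For ``$\le$'', set $s:=\sup_n\Cnorm P{u_n}\le 1$ (if $s=0$ all $u_n=0$ and we are done); then $u_n/s$ still lies in $\Cuball P$ since $\Cnorm P{u_n}\le s$, so it has a lub $u'\in\Cuball P$, and $s\,u'$ is an upper bound of $(u_n)$ lying in $\Cuball P$ (its norm is $\le s$), so leastness of $u$ gives $u\le_P s\,u'$, whence $\Cnorm P u\le s\Cnorm P{u'}\le s$. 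Applying this to $(y_n)$ yields $\Cnorm P y=s/\alpha$ where $s:=\sup_n\Cnorm P{x_n}$, hence $\Cnorm P{\alpha y}=s=\sup_n\Cnorm P{x_n}$, which is the norm identity we want.

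It remains to prove that $\alpha y$ is the \emph{least} upper bound in $P$. Let $v\in P$ with $x_n\le_P v$ for all $n$; if $v=0$ then every $x_n=0$ and there is nothing to do, so assume $\beta:=\Cnorm P v>0$ (note $\beta\ge\Cnorm P{x_n}$ by monotonicity, so $\beta\ge s$). The point is that $v$ need not lie in $\Cuball P$, so leastness of $y$ inside $\Cuball P$ cannot be invoked directly. Rescale instead by $\beta$: $w_n:=x_n/\beta$ is non‑decreasing in $\Cuball P$, with lub $w$ of norm $s/\beta$ by the auxiliary fact, and $w\le_P v/\beta$ since $v/\beta\in\Cuball P$ is an upper bound of $(w_n)$; thus $\beta w\le_P v$. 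Now I identify $\beta w$ with $\alpha y$: because $\Cnorm P w=s/\beta$ and $\Cnorm P y=s/\alpha$, the elements $\tfrac\beta\alpha w$ and $\tfrac\alpha\beta y$ have norms $s/\alpha\le 1$ and $s/\beta\le 1$, so they lie back in $\Cuball P$; since $\tfrac\beta\alpha w$ is an upper bound of $(y_n)$ in $P$ and lies in $\Cuball P$, leastness of $y$ gives $y\le_P\tfrac\beta\alpha w$, i.e. $\alpha y\le_P\beta w$, and symmetrically $\beta w\le_P\alpha y$; hence $\alpha y=\beta w\le_P v$.

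The main obstacle is exactly this last step: comparing the unit‑ball lub with an arbitrary upper bound, whose norm may exceed $\alpha$, threatens a circularity, and the role of the norm‑preservation fact is precisely to break it — it is what guarantees that the rescaled candidates $\tfrac\beta\alpha w$, $\tfrac\alpha\beta y$ stay inside $\Cuball P$, so that the $\Cuball P$‑leastness supplied by completeness can be applied. I expect the rest (the algebra of $\le_P$ under positive scalar multiplication and subtraction, and the two ``$\le$''/``$\ge$'' halves being symmetric) to be routine given the laws of calculus for cones recorded after Definition~\ref{def:cone_order}.
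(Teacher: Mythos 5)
Your proof is correct. Note that the paper states this lemma without proof (it is treated as routine, in the spirit of the remark ``Proofs are easy, see Selinger''), so there is no official argument to compare against; your rescaling argument is exactly the natural way to fill it in. The one genuinely delicate point — that completeness only hands you a least upper bound among upper bounds lying in $\Cuball P$, whereas the lemma asks for leastness among arbitrary upper bounds $v\in P$, whose norm may exceed the bound $\alpha$ — is identified and handled correctly: the auxiliary norm-preservation fact guarantees that the cross-rescaled elements $\tfrac\beta\alpha w$ and $\tfrac\alpha\beta y$ land back in $\Cuball P$, so the unit-ball leastness can be applied in both directions and antisymmetry (via cancellation) gives $\alpha y=\beta w\leq_P v$. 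Two micro-points you wave away deserve one extra line each: in the degenerate cases ($s=0$, or $v=0$) you still need leastness plus the observation $x+x'=0\Implies x=x'=0$ to conclude that the lub itself is $0$; and the independence of $\alpha y$ from the chosen bound $\alpha$, which you assert ``is forced'', is in fact a consequence of the leastness you prove afterwards, so it is fine but logically comes last. Neither affects the validity of the argument.
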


\begin{definition}\label{def:bounded}
  Let $P$ and $Q$ be cones. A \emph{bounded map} from $P$ to $Q$ is a
  function $f:\Cuball P\to\Ccarrier Q$ such that $f(\Cuball P)\subseteq\Cuball
  Q(\alpha)$ for some $\alpha\in\Realp$; the greatest lower bound of these
  $\alpha$'s is called \emph{the norm of $f$} and is denoted as $\Fnorm f$.
\end{definition}

\begin{lemma}
  Let $f$ be a bounded map from $P$ to $Q$, then $\Fnorm f=\sup_{x\in\Cuball
    P}\Cnorm Q{f(x)}$ and $f(\Cuball P)\subseteq\Cuball Q(\Fnorm f)$.
\end{lemma}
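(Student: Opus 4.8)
The plan is to unwind the definition of $\Fnorm f$ as a greatest lower bound and recognise it as a least upper bound of the set of norms of values of $f$. Write $A=\{\Cnorm Q{f(x)}\St x\in\Cuball P\}\subseteq\Realp$. The first step is the elementary observation that, for $\alpha\in\Realp$, the inclusion $f(\Cuball P)\subseteq\Cuball Q(\alpha)$ holds if and only if $\Cnorm Q{f(x)}\le\alpha$ for every $x\in\Cuball P$, which is precisely to say that $\alpha$ is an upper bound of $A$. Since $f$ is by hypothesis a bounded map, at least one such $\alpha$ exists, so $A$ is bounded above in $\Realp$ and $s:=\sup A$ is well defined.

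The second step is the identification $\Fnorm f=s$. By definition $\Fnorm f$ is the infimum, taken in $\Realp$, of the set $\{\alpha\in\Realp\St f(\Cuball P)\subseteq\Cuball Q(\alpha)\}$, which by the first step equals the set of upper bounds of $A$, i.e. $[s,+\infty)$. The infimum of this set is $s$, hence $\Fnorm f=s=\sup_{x\in\Cuball P}\Cnorm Q{f(x)}$, which is the first assertion. For the second assertion, note that $s=\Fnorm f$ is itself an upper bound of $A$, so $\Cnorm Q{f(x)}\le\Fnorm f$ for every $x\in\Cuball P$; by definition of $\Cuball Q(\Fnorm f)$ this gives $f(x)\in\Cuball Q(\Fnorm f)$, i.e. $f(\Cuball P)\subseteq\Cuball Q(\Fnorm f)$.

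There is no genuine obstacle in this argument: it is purely a matter of matching a $\inf$ of upper bounds against a $\sup$. The only point deserving a moment's care is that the infimum defining $\Fnorm f$ is actually attained — equivalently, that the ``minimal radius'' ball still contains $f(\Cuball P)$ — and this is exactly what the supremum characterisation delivers, since $\sup A$ is an upper bound of $A$.
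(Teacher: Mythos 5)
Your argument is correct and is exactly the routine sup/inf matching the paper has in mind (the paper omits the proof as immediate from Definition~\ref{def:bounded}). The key observation that $f(\Cuball P)\subseteq\Cuball Q(\alpha)$ iff $\alpha$ is an upper bound of $\{\Cnorm Q{f(x)}\St x\in\Cuball P\}$, so that the defining infimum is the infimum of $[\sup A,+\infty)$ and is attained, is precisely what makes both assertions hold.
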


\begin{definition}\label{def:linear_scott}
  A function $f:\Ccarrier P\to\Ccarrier Q$ is \emph{linear} if it commutes with sums
  and scalar multiplication.
  A \emph{Scott-continuous function} from
  a complete cone $P$ to a complete cone $Q$ is a bounded map\footnote{Remember that then $f:\Cuball P\to Q$, according with Definition~\ref{def:bounded}.} from $P$ to $Q$ which is non-decreasing and
  commutes with the lubs of non-decreasing sequences. 
\end{definition}

%

\begin{lemma}\label{lemma:add-Scott}
  Let $P$ be a complete cone. Addition is Scott-continuous
  $P\times P\to P$ and scalar multiplication is Scott-continuous
  $\Realp\times P\to P$.
\end{lemma}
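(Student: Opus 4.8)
The plan is to check, for each of the two operations, the three clauses of Scott-continuity — boundedness, monotonicity, and commutation with lubs of non-decreasing sequences — working with the cartesian product of cones ($R\times S$ has carrier $\Ccarrier R\times\Ccarrier S$, componentwise algebraic operations and order, and norm $\Cnorm{R\times S}{(x,y)}=\max(\Cnorm Rx,\Cnorm Sy)$, so that $\Cuballp{R\times S}=\Cuball R\times\Cuball S$, $R\times S$ is complete when $R,S$ are, and lubs of bounded non-decreasing sequences are computed componentwise; also $\Realp$ is a complete cone with the usual suprema). Boundedness is immediate: $\Cnorm P{x+y}\le\Cnorm Px+\Cnorm Py\le 2$ on $\Cuballp{P\times P}$, and $\Cnorm P{\alpha x}=\alpha\Cnorm Px\le 1$ on $\Cuballp{\Realp\times P}$. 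Monotonicity follows from the witness description of $\leq_P$ and distributivity: if $x\leq_P x'$ and $y\leq_P y'$ then $x'+y'=(x+y)+\bigl((x'-x)+(y'-y)\bigr)$, and if $\alpha\le\beta$, $x\leq_P y$ then $\beta y=\alpha x+(\beta-\alpha)x+\beta(y-x)$; in passing this also gives monotonicity of the partial operations $x\mapsto\alpha x$ and $\alpha\mapsto\alpha x$.

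The heart of the argument will be a lemma on \emph{partial} commutation with lubs, in three instances (in each, the left-hand lub exists because the sequence is bounded and non-decreasing, using the characterisation of completeness by bounded non-decreasing sequences): (a) $\sup_n(a_n+d)=(\sup_na_n)+d$ for non-decreasing $(a_n)$ in $\Cuball P$ and $d\in\Cuball P$; (b) $\sup_n(\beta x_n)=\beta\,\sup_nx_n$ for $\beta\in\Realp$; (c) $\sup_n(\alpha_nx)=(\sup_n\alpha_n)\,x$ for non-decreasing $(\alpha_n)$ in $[0,1]$ and $x\in\Cuball P$. For (a): put $a=\sup_na_n$, $c=\sup_n(a_n+d)$; monotonicity gives $c\leq_Pa+d$, so write $a+d=c+d'$ and, for each $n$, $c=(a_n+d)+e_n$; cancelling $d$ in $a+d=a_n+d+e_n+d'$ yields $a=a_n+e_n+d'$. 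Hence $d'\leq_Pa$, so $a-d'$ is defined and $a_n\leq_Pa-d'$ for every $n$, whence $a=\sup_na_n\leq_Pa-d'$; writing $a-d'=a+h$ and cancelling $a$ gives $d'+h=0$, so $d'=0$ and $c=a+d$. For (b) with $\beta>0$ (the case $\beta=0$ being trivial), from $c:=\sup_n(\beta x_n)\leq_P\beta x$ and the fact that $\beta\cdot$ is monotone with inverse $\tfrac1\beta\cdot$ one gets $x_n=\tfrac1\beta(\beta x_n)\leq_P\tfrac1\beta c$, hence $\beta x\leq_Pc$, so $c=\beta x$. For (c), from $\alpha x=\alpha_nx+(\alpha-\alpha_n)x$ the computation of (a) gives that $d':=(\alpha x)-\sup_n(\alpha_nx)$ satisfies $d'\leq_P(\alpha-\alpha_n)x$ for all $n$, so $\Cnorm P{d'}\le(\alpha-\alpha_n)\Cnorm Px$, which tends to $0$ since $\alpha_n\to\alpha$; thus $\Cnorm P{d'}=0$ and $d'=0$ by the norm axiom.

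Finally I would deduce the joint case by a diagonal argument. Given non-decreasing $(x_n,y_n)$ in $\Cuballp{P\times P}$ with lub $(x,y)$ (so $x=\sup_nx_n$, $y=\sup_ny_n$), set $c=\sup_n(x_n+y_n)$; then $c\leq_Px+y$. Conversely, fix $m$: for every $n$, $x_m+y_n\leq_P x_{\max(m,n)}+y_{\max(m,n)}\leq_P c$, so $c$ is an upper bound of the non-decreasing sequence $(x_m+y_n)_n$, whence $c\geq_P\sup_n(x_m+y_n)=x_m+y$ by (a); as $m$ varies this makes $c$ an upper bound of $(x_m+y)_m$, so $c\geq_P\sup_m(x_m+y)=x+y$ by (a) again, giving $c=x+y$. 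The identical pattern handles scalar multiplication: with $(\alpha_n,x_n)$ non-decreasing in $\Cuballp{\Realp\times P}$, lub $(\alpha,x)$, and $c=\sup_n(\alpha_nx_n)$, monotonicity of $(\alpha,x)\mapsto\alpha x$ gives $\alpha_mx_n\leq_P\alpha_{\max(m,n)}x_{\max(m,n)}\leq_P c$, so $c\geq_P\sup_n(\alpha_mx_n)=\alpha_mx$ by (b), hence $c\geq_P\sup_m(\alpha_mx)=\alpha x$ by (c), and $c=\alpha x$.

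The one genuinely non-formal point will be (a): one must see that the defect $d'$ separating $\sup_n(a_n+d)$ from $a+d$ can be transferred back onto each approximant — i.e. $a_n+d'\leq_Pa$, equivalently $a_n\leq_Pa-d'$ — and is then absorbed upon passing to the least upper bound $a$, forcing $d'=0$. Everything else is bookkeeping with the cone axioms (cancellation and $u+w=0\Rightarrow u=w=0$) and the componentwise description of lubs in product cones.
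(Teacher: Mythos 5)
Your proof is correct. Note that the paper itself gives no argument for this lemma (it just says the proofs are easy and points to Selinger's paper on cones), so your write-up is a full elementary verification of what the paper leaves to the reader, and it follows essentially the route one would expect there: boundedness and monotonicity directly from the axioms, then commutation with lubs. The two points that genuinely need care are both handled properly. First, your step (a) (translation invariance of lubs, $\sup_n(a_n+d)=(\sup_n a_n)+d$) is the crux: you transfer the defect $d'=(a+d)-c$ back onto each approximant to get $a_n\leq_P a-d'$, invoke leastness of $a$, and kill $d'$ with cancellation and the fact that $u+w=0$ forces $u=w=0$; for the scalar-parameter case (c) you instead bound $\Cnorm P{d'}$ by $(\alpha-\alpha_n)\Cnorm Px$ and use the norm axiom, which is fine. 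Second, the reduction of joint Scott-continuity to these separate cases via the $\max(m,n)$ diagonal argument is valid because the maps are monotone and lubs in the product ball are computed componentwise (as the paper records for its cartesian product). Two tiny housekeeping points, neither a gap: the sequences $a_n+d$, $\beta x_n$, etc.\ live in the ball of radius $2$ rather than the unit ball, so their lubs come from the paper's characterisation of completeness via bounded non-decreasing sequences (which you cite), or equivalently from rescaling by $1/2$; and in the scalar-multiplication conclusion you should state explicitly, as you did for addition, that $c\leq_P\alpha x$ follows from joint monotonicity before concluding $c=\alpha x$.
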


Proofs are easy, see~\cite{Selinger2004}. The cartesian product of cones is 
defined in the obvious way (see Figure~\ref{fig:cartesian_product}).


\begin{definition}
Let $P$ be a cone and let $u\in\Cuball P$. We define a new cone
$\Crel Pu$ (\emph{the local cone of $P$ at $u$}) as follows. We set
$\Ccarrier{\Crel Pu}=\{x\in\Ccarrier P\St\exists\epsilon>0\ \epsilon
                       x+u\in\Cuball P\}$ and
\begin{align*}
  \Cnorm{\Crel Pu}x&=\inf\{1/\epsilon\St\epsilon >0\text{ and }\epsilon
                     x+u\in\Cuball P\}
                     =(\sup\{\epsilon\St\epsilon >0\text{ and }\epsilon
                     x+u\in\Cuball P\})^{-1}\,.
\end{align*}  
  
Given a sequence $\Vect u=(\List u1n)$ of elements of a cone
$\Ccarrier P$ s.t.~$u=\sum_{i=1}^nu_i\in\Cuball P$, we set
$\Crel P{\Vect u}=\Crel
Pu$.

\end{definition}

\begin{lemma}\label{lemma:local-cone}
  For any cone $P$ and any $u\in\Cuball P$, $\Crel Pu$ is a
  cone. Moreover
  $\Cuball{\Crel Pu}=\{x\in\Ccarrier P\St x+u\in\Cuball P\}$ and, for
  any $x\in\Ccarrier{\Crel Pu}$, one has
  $\Cnorm Px\leq\Cnorm{\Crel Pu}x$. If $P$ is complete then
  $\Crel Pu$ is complete.
\end{lemma}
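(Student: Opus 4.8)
The plan is to verify, one after another: that $\Ccarrier{\Crel Pu}$ is a sub-$\Realp$-semimodule of $P$ on which $\Cnorm{\Crel Pu}\_$ satisfies the five axioms of Definition~\ref{def:cone}; that $\Cuball{\Crel Pu}=\{x\in\Ccarrier P\St x+u\in\Cuball P\}$ together with $\Cnorm Px\leq\Cnorm{\Crel Pu}x$; and, assuming $P$ complete, that $\Crel Pu$ is complete. Two elementary facts about an arbitrary cone will be used throughout. First, $\Cuball P$ is \emph{downward closed} for $\leq_P$ (monotonicity of the norm) and \emph{convex} (homogeneity plus subadditivity of the norm). Second, for $y,y'\in\Ccarrier{\Crel Pu}$ one has $y\leq_{\Crel Pu}y'$ iff $y\leq_P y'$: if $y\leq_P y'$ then $y'-y\leq_P y'$, so $\epsilon(y'-y)+u\leq_P\epsilon y'+u\in\Cuball P$ for $\epsilon$ small enough, whence $y'-y\in\Ccarrier{\Crel Pu}$. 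Thus $\Crel Pu$ and $P$ induce the same order on $\Ccarrier{\Crel Pu}$.

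First I would settle the algebraic core. The set $\Ccarrier{\Crel Pu}$ contains $0$ (as $u\in\Cuball P$) and is closed under scalar multiplication (rescale the witness $\epsilon$). For closure under addition, if $\epsilon x+u$ and $\epsilon'x'+u$ lie in $\Cuball P$, put $\epsilon''=(1/\epsilon+1/\epsilon')^{-1}$ and $t=\epsilon''/\epsilon\in(0,1)$; a direct check gives $\epsilon''(x+x')+u=t(\epsilon x+u)+(1-t)(\epsilon'x'+u)$, which lies in $\Cuball P$ by convexity, so $x+x'\in\Ccarrier{\Crel Pu}$ with a witness $\epsilon''$ satisfying $1/\epsilon''=1/\epsilon+1/\epsilon'$. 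Cancellation is inherited from $P$. This convex-combination identity is exactly what drives subadditivity of $\Cnorm{\Crel Pu}\_$, so it is worth isolating.

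Then the norm axioms. Homogeneity is the substitution $\epsilon\mapsto\epsilon/\alpha$ for $\alpha>0$, both sides being $0$ when $\alpha=0$. For any witness $\epsilon$ of $x$, monotonicity of $\Cnorm P\_$ gives $\epsilon\Cnorm Px=\Cnorm P{\epsilon x}\leq\Cnorm P{\epsilon x+u}\leq 1$, so $\Cnorm Px\leq 1/\epsilon$; taking the infimum over witnesses yields both the comparison $\Cnorm Px\leq\Cnorm{\Crel Pu}x$ and (since then $\Cnorm{\Crel Pu}x=0$ forces $\Cnorm Px=0$, hence $x=0$) the definiteness axiom. Monotonicity of $\Cnorm{\Crel Pu}\_$ holds because $\epsilon x+u\leq_P\epsilon(x+x')+u$, so any witness of $x+x'$ is a witness of $x$. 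Subadditivity is the only delicate axiom: given $\eta>0$, pick witnesses $\epsilon_1,\epsilon_2$ of $x,x'$ with $1/\epsilon_1\leq\Cnorm{\Crel Pu}x+\eta$ and $1/\epsilon_2\leq\Cnorm{\Crel Pu}{x'}+\eta$, apply the identity above to get a witness $\epsilon''$ of $x+x'$ with $1/\epsilon''=1/\epsilon_1+1/\epsilon_2$, so $\Cnorm{\Crel Pu}{x+x'}\leq 1/\epsilon_1+1/\epsilon_2\leq\Cnorm{\Crel Pu}x+\Cnorm{\Crel Pu}{x'}+2\eta$, and let $\eta\to 0$. For the unit ball: if $x+u\in\Cuball P$ then $\epsilon=1$ is a witness, so $\Cnorm{\Crel Pu}x\leq 1$; conversely if $\Cnorm{\Crel Pu}x\leq 1$ then for each $\epsilon\in(0,1)$ there is a witness larger than $\epsilon$, so by downward closure $\epsilon x+u\in\Cuball P$, whence $\Cnorm P{x+u}\leq\Cnorm P{\epsilon x+u}+(1-\epsilon)\Cnorm Px\leq 1+(1-\epsilon)\Cnorm Px$, and $\epsilon\to 1$ gives $x+u\in\Cuball P$.

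Finally, completeness. Let $(x_n)_{n\in\Nat}$ be non-decreasing in $\Cuball{\Crel Pu}$. By the comparison just proved, $(x_n)$ lies in $\Cuball P$, and since $\leq_{\Crel Pu}$ restricts $\leq_P$ it is non-decreasing there, so $P$ complete provides a lub $x\in\Cuball P$. By Scott-continuity of addition (Lemma~\ref{lemma:add-Scott}) applied to the non-decreasing sequence $(x_n,u)$ in $P\times P$, one has $x+u=\sup_n(x_n+u)$, and since each $x_n+u\in\Cuball P$ (by the unit-ball characterization) and $P$ is complete, $x+u\in\Cuball P$; hence $x\in\Cuball{\Crel Pu}$. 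Using once more that $\Crel Pu$ and $P$ induce the same order on $\Ccarrier{\Crel Pu}$, $x$ is an upper bound of $(x_n)$ in $\Crel Pu$ and lies below every $\Crel Pu$-upper bound, so $x=\sup_n x_n$ in $\Crel Pu$. The points that need genuine care are the convex-combination rescaling behind subadditivity (equivalently, the additivity $1/\epsilon''=1/\epsilon+1/\epsilon'$ of witnesses) and the identity $x+u=\sup_n(x_n+u)$ in the completeness step; all the other verifications are routine unwinding of Definitions~\ref{def:cone} and~\ref{def:cone_order}.
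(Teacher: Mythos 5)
Your proof is correct and follows the same overall plan as the paper's: direct verification of the semimodule and norm axioms, with closure under addition and subadditivity both driven by the same witness-rescaling identity (your $1/\epsilon''=1/\epsilon+1/\epsilon'$ is exactly the paper's computation with $\Cnorm P{x+x'+(\tfrac 1\epsilon+\tfrac 1{\epsilon'})u}\leq\tfrac 1\epsilon+\tfrac 1{\epsilon'}$), the same derivation of $\Cnorm Px\leq\Cnorm{\Crel Pu}x$ from monotonicity of $\Cnorm P\_$, and the same use of Scott-continuity of addition together with completeness of $P$ for the final claim. The one genuine divergence is the inclusion $\Cuball{\Crel Pu}\subseteq\{x\in\Ccarrier P\St x+u\in\Cuball P\}$: the paper chooses witnesses $\epsilon_n$ with $\sup_n\epsilon_n=1$ and concludes $x+u\in\Cuball P$ ``by closeness of $P$'', i.e.\ it tacitly appeals to order-completeness of the unit ball (plus continuity of the operations), even though the lemma asserts this characterization for an arbitrary cone; your estimate $\Cnorm P{x+u}\leq\Cnorm P{\epsilon x+u}+(1-\epsilon)\Cnorm Px\leq 1+(1-\epsilon)\Cnorm Px$ with $\epsilon\to 1$ is purely norm-theoretic and establishes the claim at the stated level of generality, which is an improvement. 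You also isolate explicitly that $\leq_{\Crel Pu}$ is the restriction of $\leq_P$ to $\Ccarrier{\Crel Pu}$; the paper uses this tacitly when it says the lub computed in $P$ is ``clearly'' the lub in $\Crel Pu$, so making it a preliminary fact tightens the completeness step without changing its substance.
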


\subsection{Pre-stable, aka.~Absolutely Monotonic, Functions}\label{sect:pre-stable}

We want now to introduce a notion of morphisms between cones such that
the resulting category will be cartesian closed. Given two cones $P$
and $Q$, a morphism from $P$ to $Q$ will be a Scott-continuous
function from $\Cuball P$ to $Q$ (because we need our morphisms to have least
fix-points in order to interpret general recursion) such that
$f(\Cuball P)\subseteq\Cuball Q$.  

\subsubsection{Failure of the straightforward attempt}\label{sub:Scott_not_ccc}
Is this Scott-continuity condition sufficient for guaranteeing cartesian
closeness? 
We argue that this not the case. Assume the opposite. 
Then it is easy to check that the cartesian product in our category is defined in the obvious way (algebraic operations defined pointwise, and supremum norm).

Given two cones $P$ and $Q$, we will need to define a new cone $R=(\Impl PQ)$
such that $\Cuball R$ will coincide with the set of morphisms from $P$ to $Q$
that is, under our assumption, of all Scott-continuous functions
$\Cuball P\to\Cuball Q$. In this 
cone $R$ (whose elements are all the
Scott-continuous functions $f:\Cuball P\to\Ccarrier Q$), the algebraic
operations are defined pointwise\footnote{Because the evaluation function
  should be linear in its functional argument, in accordance with the
  call-by-name evaluation strategy of our target programming language, see
  Lemma~\ref{lemma:linearity_eval}.}
and so the addition of $R$ induces the following order relation on
Scott-continuous functions: $f\leq g$ if $\forall x\in\Cuball P\ f(x)\leq g(x)$
and, moreover, the function $x\mapsto g(x)-f(x)$ is Scott-continuous.

Consider now the function $\Wpor:\Izu\times\Izu\to\Izu$ defined by
$\Wpor(s,t)=s+t-st=(1-s)t+s=(1-t)s+t$ and considered for instance
in~\cite{escardo_hofmann_streicher_2004}. It is clearly a Scott-continuous
function, so it is a morphism $\Realp\times\Realp\to\Realp$ in our category of
 cones and Scott-continuous functions. Let
$\Wpor':\Izu\to\Cuball{(\Impl\Realp\Realp)}$ be the
curryfied version of $\Wpor$ defined by
$\Wpor'(s)=\Wpor_s$ where $\Wpor_s$ is the Scott-continuous function
$\Izu\to\Izu$ defined by $\Wpor_s(t)=\Wpor(s,t)$. Then, by cartesian closeness,
$\Wpor'$ should be Scott-continuous and hence non-decreasing.
  But $\Wpor_0(t)=t$ and $\Wpor_1(t)=1$ and the function
$t\mapsto \Wpor_1(t)-\Wpor_0(t)=1-t$ is not non-decreasing. So we do not
have $\Wpor'(0)\leq \Wpor'(1)$ in the cone
$\Impl\Realp\Realp$ and our category is not cartesian closed.
\begin{quote}
  Our methodological principle is to stick to the cone order,
  natural wrt.~the algebraic structure, and adapt the notion
  of morphism so as to obtain a cartesian closed category.
\end{quote}
It turns out that this is perfectly possible and leads to an
interesting new notion of morphisms between cones, deeply
related with stability~\citep{stability} and the category $\mathbf{PCoh}_\oc$ of probabilistic coherence spaces already mentioned in the Introduction, as we will show in a further
paper. This connection with stability is already suggested by the
$\Wpor$ example: this function is a ``probabilistic version'' of the
well known parallel-or function~\citep{plotPCF}: we have
$\Wpor(1,0)=\Wpor(0,1)=1$ and $\Wpor(0,0)=0$. Stability has been
introduced for rejecting such functions.

\subsubsection{Absolutely monotonic functions} 
Pushing further the above analysis of the constraints imposed by cartesian
closeness on the monotonicity of morphisms, one arrives naturally to the
following definition.  Given a function $f:\Cuball P\to\Ccarrier Q$ which is
non-decreasing, we use the notation
\begin{align*}
  \Fdiff fxu=f(x+u)-f(x)
\end{align*}
for all $x\in\Cuball P$ and $u\in\Ccarrier P$ such that $x+u\in\Cuball P$. It
is clear that $\Fdiff fxu\in\Cuball Q$.
\begin{definition}\label{def:Sn}
  An \emph{$n$-non-decreasing} function from $P$ to $Q$ is a function
  $f:\Cuball P\to\Ccarrier Q$ such that
  \begin{itemize}
  \item $n=0$ and $f$ is non-decreasing
  \item or $n>0$, $f$ is non-decreasing and, for all $u\in\Cuball P$,
    the function $\Fdiff f\_u$ is $n-1$-non-decreasing from
    $\Crel Pu$ to $Q$.
  \end{itemize}
  One says that $f$ is \emph{absolutely monotonic} if it is $n$-non-decreasing
  for all $n\in\Nat$.
\end{definition}

\begin{example}\label{ex:pres_stable}
  Take $P=Q=\Realp$. A function $f:\Cuball P=\Izu\to Q=\Realp$ is
  $0$-non-decreasing if it is non-decreasing. It is $1$-non-decreasing
  if, moreover, for all $u\in\Izu$, the function
  $\Fdiff f\_u:[0,1-u]\to\Realp$ defined by $\Fdiff fxu=f(x+u)-f(x)$
  is non-decreasing. It is $2$-non-decreasing if moreover, for all
  $u_1,u_2\in\Izu$ such that $u_1+u_2\in\Izu$, the function
  $\Fdiff f\_{u_1,u_2}:[0,1-u_1-u_2]\to\Realp$ defined by
  $\Fdiff fx{u_1,u_2}=\Fdiff f{x+u_2}{u_1}-\Fdiff
  fx{u_1}=f(x+u_2+u_1)-f(x+u_2)-(f(x+u_1)-f(x))
  =f(x+u_1+u_2)-f(x+u_1)-f(x+u_2)+f(x)$
  is non-decreasing, etc. Typical examples of such
  $n$-non-decreasing functions for all $n$ are the polynomial functions
  with non-negative coefficients. 
\end{example}

\begin{example}
  To illustrate this definition further, consider the $\Wpor$ function
  introduced in Section~\ref{sub:Scott_not_ccc}. It is clearly
  $0$-non-decreasing. For $s,t,u,v\in\Realp$ such that $s+u,t+v\in\Izu$ we have
  $\Fdiff\Wpor{(s,t)}{(u,v)}=(s+u)+(t+v)-(s+u)(t+v)-(s+t-st)=u+v-sv-tu+st=(1-t)u+(1-s)v+st$. This
  function is not non-decreasing in $s$ and $t$, so $\Wpor$ is not
  $1$-non-decreasing.
\end{example}
\subsubsection{Pre-stable functions}
In most cases, Definition~\ref{def:Sn} is hard to manipulate because it is 
inductive and uses explicitly subtraction, an operation which is only
partially defined. We thus give an equivalent notion as follows. 

Let $n\in\Nat$, we use $\Cocard+n$ (resp.~$\Cocard-n$) for the set of
all subsets $I$ of $\{1,\dots,n\}$ such that $n-\Card I$ is even
(resp~odd). Given a map $f:\Cuball P\to\Ccarrier Q$,
$\Vect u\in\Ccarrier P^n$ such that $\sum_{i=1}^nu_i\in\Cuball P$ and
$x\in\Cuballr P{\Vect u}$, we define
\begin{align*}
  \Fdiffp\epsilon fx{\Vect u} &=
             \sum_{I\in\Cocard\epsilon n}f(x+\sum_{i\in I}u_i)\in\Ccarrier Q
\end{align*}
for $\epsilon\in\{+,-\}$. For instance
$\Fdiffp-fx{u_1,u_2,u_3}=f(x+u_1+u_2)+f(x+u_2+u_3)+f(x+u_1+u_3)+f(x)$ and
$\Fdiffp+fx{u_1,u_2,u_3}=f(x+u_1+u_2+u_3)+f(x+u_1)+f(x+u_2)+f(x+u_3)$.

Observe that when $n=0$, we have $\Fdiffp+ fx{}=f(x)$ and $\Fdiffp- fx{}=0$.

\begin{definition}\label{def:n-pre-stable}
  An \emph{$n$-pre-stable} function from $P$ to $Q$ is a function
  $f:\Cuball P\to\Ccarrier Q$ such that, for all
  $k\in\{1,\dots,n+1\}$, all $\Vect u\in\Ccarrier P^{k}$ such that
  $\sum_{i=1}^nu_i\in\Cuball P$, and all $x\in\Cuballr P{\Vect u}$, one
  has
    $\Fdiffp-fx{\Vect u}\leq\Fdiffp+fx{\Vect u}$.
  One says that $f$ is \emph{pre-stable} if it is $n$-pre-stable for all $n$.
\end{definition}
Observe $f$ is $0$-pre-stable iff for all $x\in\Cuball P$
and all $u\in\Ccarrier P$ such that $x+u\in\Cuball P$, one has $f(x)\leq
f(x+u)$, that is, $f$ is non-decreasing. 

By generalizing the computation in Example~\ref{ex:pres_stable}, one can prove by induction the following.
\begin{theorem}\label{th:non-decreasing-class-equiv}
  For all $n\in\Nat$, a function $f:\Cuball P\to\Ccarrier Q$ is
  $n$-non-decreasing iff it is $n$-pre-stable. Therefore, $f$ is absolutely
  monotonic iff it is pre-stable.
\end{theorem}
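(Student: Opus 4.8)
\noindent\textit{Proof plan.}\quad The plan is to prove the first statement by induction on $n$, with an induction hypothesis quantified over \emph{all} pairs of cones $P,Q$, since the inductive step will instantiate it at the local cones $\Crel Pv$ for $v\in\Cuball P$; the second statement then follows at once, because $f$ is absolutely monotonic iff it is $n$-non-decreasing for all $n$, iff (by the first statement) it is $n$-pre-stable for all $n$, iff it is pre-stable. For $n=0$ both conditions say exactly that $f$ is non-decreasing (for pre-stability this is the observation just after Definition~\ref{def:n-pre-stable}). For the inductive step $n\geq1$, write $g_v=\Fdiff f\_v:\Cuball{\Crel Pv}\to\Ccarrier Q$ for the first difference of $f$ in direction $v$, which is well defined once $f$ is non-decreasing. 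Unfolding Definition~\ref{def:Sn}, ``$f$ is $n$-non-decreasing'' means ``$f$ non-decreasing and $g_v$ is $(n-1)$-non-decreasing from $\Crel Pv$ to $Q$ for every $v\in\Cuball P$''; applying the induction hypothesis at the pair $(\Crel Pv,Q)$, this is equivalent to ``$f$ non-decreasing and $g_v$ is $(n-1)$-pre-stable for every $v\in\Cuball P$''. So it suffices to prove that this last property is equivalent to ``$f$ is $n$-pre-stable''.

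The computational core, generalizing Example~\ref{ex:pres_stable}, is the following identity: for $v\in\Cuball P$, a tuple $\Vect w=(\List w1m)$, and any admissible base point $x$,
\begin{align*}
  \Fdiffp+ f x{w_1,\dots,w_m,v}&=\Fdiffp+{g_v}x{\Vect w}+E,&
  \Fdiffp- f x{w_1,\dots,w_m,v}&=\Fdiffp-{g_v}x{\Vect w}+E,
\end{align*}
where $E=\sum_{I\subseteq\{1,\dots,m\}}f(x+\sum_{i\in I}w_i)$ is a single $\Realp$-sum of values of $f$, common to both lines. This is checked by partitioning the subsets of $\{1,\dots,m,v\}$ according to whether they contain the last index and using $g_v(y)+f(y)=f(y+v)$; since cones have only partially defined subtraction, it is important to state the identity in this additive ``common remainder $E$'' form rather than as an equality of iterated differences. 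Since cone addition is cancellative and hence $p+r\leq q+r\Rightarrow p\leq q$, adding $E$ to (resp.\ removing $E$ from) both sides shows that $\Fdiffp-{g_v}x{\Vect w}\leq\Fdiffp+{g_v}x{\Vect w}$ holds exactly when $\Fdiffp- f x{w_1,\dots,w_m,v}\leq\Fdiffp+ f x{w_1,\dots,w_m,v}$ does.

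It remains to check that the side conditions match on the two sides of this equivalence. In the definition of $n$-pre-stability of $f$, the clause $k=1$ is exactly ``$f$ non-decreasing''; for $k\in\{2,\dots,n+1\}$, setting $m=k-1$ and $v=u_k$, a tuple $\Vect u=(u_1,\dots,u_k)\in\Ccarrier P^k$ with $\sum_iu_i\in\Cuball P$ and $x\in\Cuballr P{\Vect u}$ corresponds bijectively to data $v\in\Cuball P$, $(u_1,\dots,u_{k-1})\in\Ccarrier{\Crel Pv}^{k-1}$ with $\sum_{i<k}u_i\in\Cuball{\Crel Pv}$, and $x\in\Cuballr{\Crel Pv}{u_1,\dots,u_{k-1}}$; each of these facts is an elementary consequence of $u_j+u_k\leq_P\sum_iu_i\in\Cuball P$ (resp.\ $x+\sum_iu_i\in\Cuball P$), monotonicity of the norm, and Lemma~\ref{lemma:local-cone}. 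Hence the $k$-clauses of ``$f$ is $n$-pre-stable'' for $k\in\{2,\dots,n+1\}$ are precisely the $m$-clauses, $m\in\{1,\dots,n\}$, of ``$g_v$ is $(n-1)$-pre-stable'', ranging over all $v\in\Cuball P$, which closes the induction. I expect the main obstacle to be getting the combinatorial identity exactly right while staying within the partial-subtraction discipline of cones (which forces the shared remainder $E$), together with the tedious but routine verification that the local-cone side conditions on the two sides genuinely coincide.
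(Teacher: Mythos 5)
Your proof is correct and takes essentially the same route as the paper's: an induction on $n$ that peels off the last direction $v$ and relates the signed sums of $\Fdiff f\_v$ over a tuple $\Vect w$ to those of $f$ over the extended tuple $(\Vect w,v)$ — your common-remainder identity with $E$ is exactly the additive rearrangement performed in the paper's computation, and your appeal to cancellativity merely packages the paper's two directional inductions into a single biconditional one. The explicit bookkeeping you do on the local-cone side conditions (via Lemma~\ref{lemma:local-cone}) is the same correspondence the paper uses implicitly.
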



\begin{lemma}\label{lemma:fdiff-iter}
  Let $f$ be an absolutely monotonic function from $P$ to $Q$ (so
  that $f:\Cuball P\to\Ccarrier Q$). Let $n\in\Nat$,
  $\Vect u\in\Cuball P^n$
  with $\sum_{i=1}^nu_i\in\Cuball P $and $x\in\Cuballr P{\Vect
    u}$.
  Let $\List f0n$ be the functions defined by $f_0(x)=f(x)$ and
  $f_{i+1}(x)=\Fdiff{f_i}{x}{u_{i+1}}$. Then
  \begin{align*}
    f_n(x)=\Fdiffp+fx{\Vect u}-\Fdiffp-fx{\Vect u}\,.
  \end{align*}
  We set $\Fdiff fx{\Vect u}=f_n(x)$. The operation $\Delta$ is linear
  in the function:
  $\Fdiff{(\sum_{j=1}^p\alpha_jg_j)}{x}{\Vect
    u}=\sum_{j=1}^p\alpha_j\Fdiff{g_j}{x}{\Vect u}$
  for $\List g1p$ absolutely monotonic from $P$ to  $Q$.
\end{lemma}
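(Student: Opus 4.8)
The plan is to first check that the iterated differences $\List f0n$ are all well defined, then to establish the closed formula by induction on $n$, and finally to read off linearity as an immediate corollary. For the well-definedness, I would show by induction on $i\in\{0,\dots,n\}$ that $f_i$ is absolutely monotonic and defined on $\Cuball{\Crel P{u_1+\cdots+u_i}}$, where the empty sum is read as $0$ so that the $i=0$ case is just the hypothesis on $f$ (the local cone of $P$ at $0$ being $P$ itself). For the inductive step, $f_{i+1}$ is the difference $\Fdiff{f_i}{\cdot}{u_{i+1}}$ of $f_i$ along $u_{i+1}$; since $u_1+\cdots+u_{i+1}\leq_P\sum_{j=1}^n u_j\in\Cuball P$ and the norm is monotone, one has $u_{i+1}\in\Cuball{\Crel P{u_1+\cdots+u_i}}$, so Definition~\ref{def:Sn} applies and $f_{i+1}$ is absolutely monotonic from $\Crel{(\Crel P{u_1+\cdots+u_i})}{u_{i+1}}$ to $Q$; by Lemma~\ref{lemma:local-cone} the unit ball of this iterated local cone is $\{y\St y+u_{i+1}+(u_1+\cdots+u_i)\in\Cuball P\}=\Cuball{\Crel P{u_1+\cdots+u_{i+1}}}$, which closes the induction. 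Monotonicity of the norm also makes $\Cuball P$ downward closed for $\leq_P$, so from $x\in\Cuballr P{\Vect u}$ one gets that $x$ and $x+u_{i+1}$ lie in $\Cuball{\Crel P{u_1+\cdots+u_i}}$ for every $i<n$, and that $x+\sum_{j\in I}u_j\in\Cuball P$ for every $I\subseteq\{1,\dots,n\}$; as each $f_i$ is in particular non-decreasing, all the subtractions occurring in $f_{i+1}(x)=f_i(x+u_{i+1})-f_i(x)$ and in $\Fdiffp+fx{\Vect u}$, $\Fdiffp-fx{\Vect u}$ are then legitimate.

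For the formula itself I would induct on $n$. The case $n=0$ reads $f_0(x)=f(x)=\Fdiffp+fx{}-\Fdiffp-fx{}$. For the step, write $\Vect u'=(u_1,\dots,u_n)$ and $v=u_{n+1}$, so that $f_{n+1}(x)=f_n(x+v)-f_n(x)$; the induction hypothesis applied at $x+v$ and at $x$ with the tuple $\Vect u'$ expresses $f_n(x+v)$ and $f_n(x)$ as differences of $\Delta^{+}$ and $\Delta^{-}$ sums. The combinatorial heart of the matter is the pair of identities obtained by partitioning a subset $I\subseteq\{1,\dots,n+1\}$ according to whether it contains $n+1$ — which toggles the parity of $(n+1)-\Card I$ —, namely $\Fdiffp+fx{\Vect u}=\Fdiffp-fx{\Vect u'}+\Fdiffp+f{x+v}{\Vect u'}$ and $\Fdiffp-fx{\Vect u}=\Fdiffp+fx{\Vect u'}+\Fdiffp-f{x+v}{\Vect u'}$. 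Subtracting the two induction-hypothesis identities and rearranging the outcome — legitimate by the usual laws of calculus with subtraction, all the quantities involved being comparable because $f$ is pre-stable (Theorem~\ref{th:non-decreasing-class-equiv}) and $f_n$ is non-decreasing — reproduces exactly $f_{n+1}(x)$, which completes the induction; one then sets $\Fdiff fx{\Vect u}=f_n(x)$ as in the statement.

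Linearity is then immediate: $\Fdiffp+gx{\Vect u}$ and $\Fdiffp-gx{\Vect u}$ are finite $\Realp$-linear combinations of values of $g$, so the maps $g\mapsto\Fdiffp+gx{\Vect u}$ and $g\mapsto\Fdiffp-gx{\Vect u}$ are $\Realp$-linear; applying this to $g=\sum_{j=1}^p\alpha_j g_j$ — which is again absolutely monotonic, so that $\Fdiffp+gx{\Vect u}-\Fdiffp-gx{\Vect u}$ is defined and equals $\sum_{j=1}^p\alpha_j\bigl(\Fdiffp+{g_j}x{\Vect u}-\Fdiffp-{g_j}x{\Vect u}\bigr)$ — yields $\Fdiff{(\sum_{j=1}^p\alpha_j g_j)}{x}{\Vect u}=\sum_{j=1}^p\alpha_j\Fdiff{g_j}{x}{\Vect u}$. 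The one genuinely delicate part of the whole argument is the bookkeeping of the first step, namely keeping track of which local cone each $f_i$ inhabits and checking that every partially-defined subtraction used along the way is legitimate; everything else is the routine parity computation, which I would present as a separate sublemma to avoid cluttering the main induction.
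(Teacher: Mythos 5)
Your proof is correct and follows essentially the same route as the paper's: induction on $n$, with the key step being the parity identity obtained by splitting subsets of $\{1,\dots,n+1\}$ according to whether they contain $n+1$ (this is exactly the paper's auxiliary Lemma~\ref{lemma:fdiffp-induction}), followed by the same rearrangement of partially defined subtractions and the same observation that linearity is immediate. The only difference is that you spell out the local-cone bookkeeping guaranteeing that each iterated difference $f_i$ is well defined, which the paper leaves implicit since it follows directly from Definition~\ref{def:Sn} and Lemma~\ref{lemma:local-cone}.
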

As an immediate consequence we have that $\Fdiff fx{\List u1n}$ is symmetric in $\List u1n$, that is:   $\Fdiff fx{\List u1n}=\Fdiff fx{u_{\sigma(1)},\dots,u_{\sigma(n)}}$ for all
  permutation $\sigma$ on $\{1,\dots,n\}$.

\subsection{Composing Pre-stable Functions}
It is not completely obvious that pre-stable functions are closed under
composition (Theorem~\ref{th:pre-stable_compose}). The situation is a bit similar in categories of smooth
functions where composability derives from the chain rule.
Theorem~\ref{th:pre-stable_compose} is an immediate consequence of Lemma~\ref{lemma:fdiff-comp}, the proof of this latter needing the following auxiliary lemmas.

\begin{lemma}\label{lemma:fdiff-iter-pre-stable}
  Let $f:\Cuball P\to\Ccarrier Q$ be a pre-stable function from $P$ to $Q$.
  For all $\Vect u\in\Cuball P^n$, the functions
  $\Fdiffp- f\_{\Vect u}$, $\Fdiffp+ f\_{\Vect u}$ and
  $\Fdiff f\_{\Vect u}$ are pre-stable from $\Crel P{\Vect u}$ to $Q$.
\end{lemma}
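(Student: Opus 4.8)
The plan is to reduce all three claims to the pre-stability of $f$ itself (equivalently, by Theorem~\ref{th:non-decreasing-class-equiv}, its absolute monotonicity). Write $u=\sum_{i=1}^nu_i$, so $\Crel P{\Vect u}=\Crel Pu$ and, by Lemma~\ref{lemma:local-cone}, $\Cuball{\Crel P{\Vect u}}=\{x\in\Ccarrier P\St x+u\in\Cuball P\}$. Two facts about cones will be used throughout: every element is $\geq 0$ for the cone order, so the balls involved are downward closed; and the cone arithmetic $a-b\leq_Qc-d\iff a+d\leq_Qc+b$ (when $b\leq_Qa$ and $d\leq_Qc$) together with $\sum_j(a_j-b_j)=\sum_ja_j-\sum_jb_j$ (when each $b_j\leq_Qa_j$). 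Since $\sum_{i\in I}u_i\leq_Pu$ and $\sum_{j\in J}v_j\leq_P\sum_jv_j$, downward closure keeps every translate below inside the relevant ball and discharges all the side conditions of Definition~\ref{def:n-pre-stable} that occur when pre-stability of $f$ is invoked on a shifted base point. Concretely, a pair $(\Vect v,x)$ satisfying the hypotheses of Definition~\ref{def:n-pre-stable} relative to $\Crel P{\Vect u}$ means (by Lemma~\ref{lemma:local-cone}) that $v_j\in\Ccarrier{\Crel P{\Vect u}}\subseteq\Ccarrier P$, $\sum_jv_j+u\in\Cuball P$, and $x+\sum_jv_j+u\in\Cuball P$.

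For $\epsilon\in\{+,-\}$ let $A=\Fdiffp\epsilon f\_{\Vect u}$, the map $x\mapsto\sum_{I\in\Cocard\epsilon n}f(x+\sum_{i\in I}u_i)$, which is well defined with values in $\Ccarrier Q$ on $\Cuball{\Crel P{\Vect u}}$. For $k\geq 1$ and $(\Vect v,x)$ as above, swapping the order of summation gives, for $\epsilon'\in\{+,-\}$, $\Fdiffp{\epsilon'}Ax{\Vect v}=\sum_{I\in\Cocard\epsilon n}\Fdiffp{\epsilon'}f{x+\sum_{i\in I}u_i}{\Vect v}$. Applying pre-stability of $f$ at each base point $x+\sum_{i\in I}u_i$ (side conditions as noted) and summing over $I$ (monotonicity of $+$ in $Q$) yields $\Fdiffp-Ax{\Vect v}\leq_Q\Fdiffp+Ax{\Vect v}$; as $k$ is arbitrary, $A$ is pre-stable from $\Crel P{\Vect u}$ to $Q$.

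The substantive case is $g:=\Fdiff f\_{\Vect u}$. By Lemma~\ref{lemma:fdiff-iter}, $g=A_+-A_-$ pointwise on $\Cuball{\Crel P{\Vect u}}$, where $A_\pm:=\Fdiffp\pm f\_{\Vect u}$ and $A_-\leq_QA_+$ pointwise. Fix $k\geq 1$ and $(\Vect v,x)$ as above, write $y_J=x+\sum_{j\in J}v_j$ for $J\subseteq\{1,\dots,k\}$ (all in $\Cuball{\Crel P{\Vect u}}$), and set $\Vect w=(u_1,\dots,u_n,v_1,\dots,v_k)$. Expanding $A_\pm(y_J)=\sum_{I\in\Cocard\pm n}f(y_J+\sum_{i\in I}u_i)$ and reindexing each pair $(I,J)$ with $I\subseteq\{1,\dots,n\}$ by $K=I\cup\{n+j\St j\in J\}\subseteq\{1,\dots,n+k\}$ — so $|K|=|I|+|J|$ and $y_J+\sum_{i\in I}u_i=x+\sum_{\ell\in K}w_\ell$, and $(n+k)-|K|$ equals $(n-|I|)+(k-|J|)$ — one gets the pure identities (grouping the $K$-sums by the parity of $(n+k)-|K|$)
\[
\textstyle\sum_{J\in\Cocard+k}A_+(y_J)+\sum_{J\in\Cocard-k}A_-(y_J)=\Fdiffp+fx{\Vect w},\qquad \sum_{J\in\Cocard-k}A_+(y_J)+\sum_{J\in\Cocard+k}A_-(y_J)=\Fdiffp-fx{\Vect w}.
\]
On the other hand $\Fdiffp{\epsilon'}gx{\Vect v}=\sum_{J\in\Cocard{\epsilon'}k}(A_+(y_J)-A_-(y_J))=\sum_{J\in\Cocard{\epsilon'}k}A_+(y_J)-\sum_{J\in\Cocard{\epsilon'}k}A_-(y_J)$ (legitimate since $A_-\leq_QA_+$ termwise). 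Hence, by the cone arithmetic, $\Fdiffp-gx{\Vect v}\leq_Q\Fdiffp+gx{\Vect v}$ is equivalent to $\sum_{J\in\Cocard-k}A_+(y_J)+\sum_{J\in\Cocard+k}A_-(y_J)\leq_Q\sum_{J\in\Cocard+k}A_+(y_J)+\sum_{J\in\Cocard-k}A_-(y_J)$, i.e. to $\Fdiffp-fx{\Vect w}\leq_Q\Fdiffp+fx{\Vect w}$ — which is pre-stability of $f$ for the $(n+k)$-tuple $\Vect w$, whose side conditions ($\sum_\ell w_\ell=u+\sum_jv_j\in\Cuball P$ and $x+\sum_\ell w_\ell\in\Cuball P$) hold by the hypotheses on $(\Vect v,x)$ and Lemma~\ref{lemma:local-cone}. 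As $k$ is arbitrary, $g$ is pre-stable from $\Crel P{\Vect u}$ to $Q$.

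The main obstacle is confined to the $\Fdiff$ case: because $\Delta$ is assembled from partially defined subtractions, one must keep $A_+=\Fdiffp+f\_{\Vect u}$ and $A_-=\Fdiffp-f\_{\Vect u}$ separate and rearrange cone (in)equalities only after they are known to be legitimate, and the parity bookkeeping in the $(I,J)\leftrightarrow K$ identification is the place where a sign slip would hide. Everything else — well-definedness of the translates, and the boundedness/domain side conditions — is routine downward-closure arithmetic. An alternative route that avoids the global combinatorial identity is to prove the case $n=1$ by the same manipulation applied to the $(k+1)$-tuple $(v_1,\dots,v_k,u_1)$ and then induct on $n$, using $\Fdiff f\_{\Vect u}=\Fdiff{(\Fdiff f\_{(u_1,\dots,u_{n-1})})}{\_}{u_n}$ together with the identity $\Crel{(\Crel Pa)}b=\Crel P{a+b}$.
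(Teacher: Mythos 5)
Your proof is correct; for $\Fdiffp+f\_{\Vect u}$ and $\Fdiffp-f\_{\Vect u}$ it is essentially the paper's argument made explicit, while for $\Fdiff f\_{\Vect u}$ it follows a genuinely different route. The paper handles the two alternating sums by noting that each translate $x\mapsto f(x+\sum_{i\in I}u_i)$ is pre-stable from $\Crel P{\Vect u}$ to $Q$ and that pre-stable maps are closed under addition (Lemma~\ref{lemma:inf-pre-stable-sums}); your swap of the order of summation followed by termwise application of pre-stability of $f$ at the translated base points is the same computation inlined. For $\Fdiff f\_{\Vect u}$ the paper simply invokes Theorem~\ref{th:non-decreasing-class-equiv} and the inductive definition of $n$-non-decreasing (your ``alternative route'', which tacitly needs the identification $\Crel{(\Crel Pa)}{b}=\Crel P{a+b}$), whereas you reduce directly to Definition~\ref{def:n-pre-stable}: via the parity bijection $(I,J)\mapsto I\cup\{n+j\St j\in J\}$ --- the very bijection the paper later uses inside the proof of Theorem~\ref{th:stab-CCC} --- you show that, once the subtractions are known to be legitimate (they are, since $\Fdiffp-fy{\Vect u}\leq\Fdiffp+fy{\Vect u}$, as recorded by Lemma~\ref{lemma:fdiff-iter}), the pre-stability inequality for $\Fdiff f\_{\Vect u}$ at a $k$-tuple $\Vect v$ is, after a valid cone rearrangement, literally the pre-stability inequality for $f$ at the concatenated $(n+k)$-tuple $(\Vect u,\Vect v)$; the side conditions are correctly discharged by Lemma~\ref{lemma:local-cone} together with downward closure of balls (the norm-monotonicity axiom). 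The trade-off: the paper's proof is two lines because it reuses the equivalence theorem wholesale, at the cost of hiding the bookkeeping in iterated local cones; yours is longer but self-contained at the level of the combinatorial definition and isolates exactly which instance of pre-stability of $f$ is consumed, with Theorem~\ref{th:non-decreasing-class-equiv} entering only through Lemma~\ref{lemma:fdiff-iter}, which is unavoidable since that lemma is what defines $\Fdiff f\_{\Vect u}$ as a difference of the two alternating sums.
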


\begin{lemma}\label{lemma:fdiff-sommes1}
  Let $f:\Cuball P\to\Ccarrier Q$ be a pre-stable function from $P$ to $Q$. Let
  $n\in\Nat$, $x,u,v\in\Cuball P$ and $\Vect u\in\Cuball P^n$, and
  assume that $x+u+v+\sum_{i=1}^nu_i\in\Cuball P$. Then
\begin{align*}
  \Fdiff f{x+u}{\Vect u}&=\Fdiff fx{\Vect u}+\Fdiff fx{u,\Vect u},&
  \Fdiff fx{u+v,\Vect u}&=\Fdiff fx{u,\Vect u}+\Fdiff f{x+u}{v,\Vect u}\,.
\end{align*}
\end{lemma}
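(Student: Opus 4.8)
The plan is to reduce both identities to the one-direction ($n=0$) case by "peeling off" the extra directions $\Vect u$ one at a time, and then to telescope. I will lean on four facts already available: the recursion defining iterated differences in Lemma~\ref{lemma:fdiff-iter}, so that $\Fdiff fx{\List u1k}=\Fdiff{g}{x}{u_k}$ for $g=\Fdiff f\_{\List u1{k-1}}$; the symmetry of $\Fdiff fx{\Vect u}$ in $\List u1n$ (the corollary stated right after Lemma~\ref{lemma:fdiff-iter}); the fact that $\Fdiff f\_{\Vect u}$ is again pre-stable, now from $\Crel P{\Vect u}$ to $Q$ (Lemma~\ref{lemma:fdiff-iter-pre-stable}), so that one more difference may be taken; and the description $\Cuball{\Crel P{\Vect u}}=\{y\in\Ccarrier P\St y+\sum_{i=1}^n u_i\in\Cuball P\}$ of the unit ball of a local cone (Lemma~\ref{lemma:local-cone}).

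First I would establish the key \emph{peel-off} identity: writing $g:=\Fdiff f\_{\Vect u}$, for any $w$ with $w,\ w+\sum_i u_i$ and $x+w+\sum_i u_i$ in $\Cuball P$ one has $\Fdiff fx{w,\Vect u}=\Fdiff g x w=g(x+w)-g(x)$. Indeed, by symmetry $\Fdiff fx{w,\Vect u}=\Fdiff fx{\Vect u,w}$, and then the recursion of Lemma~\ref{lemma:fdiff-iter} applied to the sequence $(\List u1n,w)$, peeling the last direction, gives exactly $\Fdiff g x w$; the subtraction $g(x+w)-g(x)$ is legitimate because $g$, being pre-stable, is $0$-pre-stable, hence non-decreasing, and both $x$ and $x+w$ lie in $\Cuball{\Crel P{\Vect u}}$ by Lemma~\ref{lemma:local-cone}. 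Under the hypothesis $x+u+v+\sum_i u_i\in\Cuball P$, monotonicity of the norm (a cone axiom) together with downward-closure of balls under $\leq_P$ show that $x$, $x+u$, $x+v$, $x+u+v$ all lie in $\Cuball{\Crel P{\Vect u}}$ and that $u,v,u+v\in\Cuball P$, so every instance of peel-off I need is admissible.

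Granting this, the first identity is immediate with $w=u$: peel-off reads $\Fdiff fx{u,\Vect u}=g(x+u)-g(x)=\Fdiff f{x+u}{\Vect u}-\Fdiff fx{\Vect u}$, and rearranging (valid since $g$ is non-decreasing) gives $\Fdiff f{x+u}{\Vect u}=\Fdiff fx{\Vect u}+\Fdiff fx{u,\Vect u}$. For the second identity, peel-off with $w=u+v$ at base $x$ gives $\Fdiff fx{u+v,\Vect u}=g(x+u+v)-g(x)$; since $g$ is non-decreasing we have $g(x)\leq_P g(x+u)\leq_P g(x+u+v)$, so the "calculus with subtraction" law recalled after Definition~\ref{def:cone_order} (namely $x''-z=(x''-y)+(y-z)$ when $z\leq_P y\leq_P x''$) yields
\[
  g(x+u+v)-g(x)=\bigl(g(x+u+v)-g(x+u)\bigr)+\bigl(g(x+u)-g(x)\bigr).
\]
Applying peel-off again identifies $g(x+u)-g(x)=\Fdiff g x u=\Fdiff fx{u,\Vect u}$ and $g(x+u+v)-g(x+u)=\Fdiff g{x+u}{v}=\Fdiff f{x+u}{v,\Vect u}$, which is the claimed $\Fdiff fx{u+v,\Vect u}=\Fdiff fx{u,\Vect u}+\Fdiff f{x+u}{v,\Vect u}$.

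The mathematical content is thus a one-line telescoping; the only delicate point, and the part I expect to be most error-prone, is the uniform bookkeeping of well-definedness: checking that every sum of directions appearing lies in $\Cuball P$, that each individual direction lies in $\Cuball P$, and that every subtraction respects $\leq_P$. All of this is mechanical, following from monotonicity of the norm, non-decreasingness of pre-stable functions, and Lemmas~\ref{lemma:fdiff-iter}, \ref{lemma:fdiff-iter-pre-stable} and~\ref{lemma:local-cone}.
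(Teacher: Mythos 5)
Your proof is correct and follows essentially the paper's own route: both arguments reduce the statement to the trivial $n=0$ telescoping identities by viewing the iterated difference in the directions $\Vect u$ as a single pre-stable function (Lemma~\ref{lemma:fdiff-iter-pre-stable}) and then differencing once more in $u$ (resp.\ in $u$ and then $v$). The only minor variation is that the paper applies the $n=0$ identities to $\Fdiffp+f\_{\Vect u}$ and $\Fdiffp-f\_{\Vect u}$ separately and recombines them via $\Fdiff f\_{\Vect u}=\Fdiffp+f\_{\Vect u}-\Fdiffp-f\_{\Vect u}$ (Lemma~\ref{lemma:fdiff-iter}), so that only sums are manipulated, whereas you work directly with $g=\Fdiff f\_{\Vect u}$ using the symmetry corollary and the cone's subtraction calculus; with the well-definedness bookkeeping you supply, this is equally valid.
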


\begin{lemma}\label{lemma:fdiff-sommes}
  Let $f:\Cuball P\to\Ccarrier Q$ be a function which is pre-stable from
  $P$ to $Q$. Let $n\in\Nat$, $x,u\in\Cuball P$ and
  $\Vect u,\Vect v\in\Cuball P^n$, and assume that
  $x+u+\sum_{i=1}^n(u_i+v_i)\in\Cuball P$. Then
  \begin{align*}
    \Fdiff f{x+u}{\Vect u+\Vect v}
    &= \Fdiff fx{\Vect u}+\Fdiff fx{u,\Vect u+\Vect v}
    +\Fdiff f{x+u_1}{v_1,u_2+v_2,\dots,u_n+v_n}\\
    &+\Fdiff f{x+u_2}{u_1,v_2,u_3+v_3,\dots,u_n+v_n}
    +\cdots+\Fdiff f{x+u_n}{u_1,\dots,u_{n-1},v_n}\,.
  \end{align*}
\end{lemma}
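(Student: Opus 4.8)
The plan is to peel off the outer shift $u$ first, and then prove the remaining identity by induction on $n$, leaning throughout on Lemma~\ref{lemma:fdiff-sommes1}. Write $w_i=u_i+v_i$, so that the left-hand side is $\Fdiff f{x+u}{w_1,\dots,w_n}$. Note first that $u$, each $v_i$ and each $w_i$ lie in $\Cuball P$, and that \emph{every} partial sum built from $x$, $u$ and some of the $w_i$'s lies in $\Cuball P$ as well, being $\leq_P x+u+\sum_{i=1}^n w_i\in\Cuball P$ and balls being downward closed by monotonicity of the norm; together with Lemmas~\ref{lemma:fdiff-iter} and \ref{lemma:fdiff-iter-pre-stable} this makes all the occurrences of $\Delta$ below well defined, a point I shall not repeat. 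Applying the first identity of Lemma~\ref{lemma:fdiff-sommes1}, with its vector argument instantiated to $\Vect u+\Vect v=(w_1,\dots,w_n)$, gives
\[
  \Fdiff f{x+u}{\Vect u+\Vect v}=\Fdiff fx{\Vect u+\Vect v}+\Fdiff fx{u,\Vect u+\Vect v},
\]
whose second summand is already the second term of the claimed right-hand side. So it suffices to establish the identity
\[
  (\star)\qquad \Fdiff fx{\Vect u+\Vect v}=\Fdiff fx{\Vect u}+\sum_{i=1}^n\Fdiff f{x+u_i}{u_1,\dots,u_{i-1},v_i,u_{i+1}+v_{i+1},\dots,u_n+v_n}.
\]

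I would prove $(\star)$ by induction on $n$. For $n=1$ it is precisely the second identity of Lemma~\ref{lemma:fdiff-sommes1} with an empty tail (its $n$ equal to $0$), namely $\Fdiff fx{u_1+v_1}=\Fdiff fx{u_1}+\Fdiff f{x+u_1}{v_1}$. For $n\geq 2$, apply that same second identity to split the leading entry $w_1=u_1+v_1$, with tail $(w_2,\dots,w_n)$:
\[
  \Fdiff fx{w_1,\dots,w_n}=\Fdiff fx{u_1,w_2,\dots,w_n}+\Fdiff f{x+u_1}{v_1,w_2,\dots,w_n},
\]
and the second summand is exactly the $i=1$ term of $(\star)$. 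For the first summand I would factor out the leading argument $u_1$: put $g=\Fdiff f\_{u_1}$, which by Lemma~\ref{lemma:fdiff-iter-pre-stable} is pre-stable from $\Crel P{u_1}$ to $Q$; the definition of the iterated difference in Lemma~\ref{lemma:fdiff-iter} then yields $\Fdiff fy{u_1,a_1,\dots,a_k}=\Fdiff gy{a_1,\dots,a_k}$ for every admissible base point $y$ and vectors $a_j$ (the boundedness conditions inside $\Crel P{u_1}$ follow from the global hypothesis via Lemma~\ref{lemma:local-cone}). Applying the induction hypothesis to $g$ with the $n-1$ pairs $(u_2,v_2),\dots,(u_n,v_n)$ and re-inserting $u_1$ at the head of every argument list turns $\Fdiff fx{u_1,w_2,\dots,w_n}$ into
\[
  \Fdiff fx{\Vect u}+\sum_{i=2}^n\Fdiff f{x+u_i}{u_1,\dots,u_{i-1},v_i,u_{i+1}+v_{i+1},\dots,u_n+v_n},
\]
i.e.\ the leading term of $(\star)$ together with its terms for $i=2,\dots,n$. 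Combining with the $i=1$ term obtained above gives $(\star)$, and hence the lemma.

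The main difficulty here is organizational rather than conceptual: one must keep the index shuffling under control across the induction, and — more delicately — make sure every instance of $\Delta$ is legitimate, i.e.\ that all the relevant sums of vectors stay inside $\Cuball P$ and inside the unit balls of the local cones $\Crel P{u_1}$. As noted at the outset, these verifications all reduce mechanically to the single hypothesis $x+u+\sum_{i=1}^n(u_i+v_i)\in\Cuball P$, downward-closure of balls, Lemma~\ref{lemma:local-cone}, and pre-stability (Lemmas~\ref{lemma:fdiff-iter} and \ref{lemma:fdiff-iter-pre-stable}), and need not be belaboured in the write-up beyond recording why they hold.
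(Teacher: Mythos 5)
Your proof is correct and follows the same route as the paper, which disposes of this lemma as a ``simple computation using Lemma~\ref{lemma:fdiff-sommes1}'': you first peel off the shift by $u$ with the first identity of that lemma and then iterate its second identity, organizing the iteration as an induction on $n$ via the local cone $\Crel P{u_1}$ and the identification $\Fdiff fy{u_1,\vec a}=\Fdiff{(\Fdiff f\_{u_1})}{y}{\vec a}$, with all well-definedness checks correctly reduced to the single hypothesis $x+u+\sum_{i=1}^n(u_i+v_i)\in\Cuball P$ and downward closure of balls. Nothing further is needed.
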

\begin{proof}
Simple computation using Lemma~\ref{lemma:fdiff-sommes1}.
\end{proof}

Let $P$ be a cone and let $p\in\Nat$. Let
$\Scone pP=(\Ccarrier P^{p+1},\Cnorm{\Scone pP}\_)$ where
$\Cnorm{\Scone pP}{(x,u_1,\dots,u_{p})}=\Cnorm P{x+\sum_{i=1}^pu_i}$. Then,
with algebraic laws defined componentwise, it is easy to check that $\Scone pP$
is a cone which is complete if $P$ is.

\begin{lemma}\label{lemma:fdiff-pre-stable-gen}
  Let $f:\Cuball P\to\Ccarrier Q$ be a pre-stable function from $P$ to
  $Q$. Then the map $g:\Cuball{\Scone pP}\to\Ccarrier Q$ defined by
  $g(x,\Vect u)=\Fdiff fx{\Vect u}$ is non-decreasing, for all
  $p\in\Natnz$.
\end{lemma}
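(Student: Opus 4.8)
The plan is to obtain the inequality directly from Lemma~\ref{lemma:fdiff-sommes}. First I would unfold what the order on $\Scone pP$ means: since the algebraic operations of $\Scone pP$ are componentwise, $(x,\Vect u)\leq_{\Scone pP}(x',\Vect u')$ just says $x\leq_P x'$ and $u_i\leq_P u_i'$ for every $i$, so I set $w=x'-x$ and $w_i=u_i'-u_i$, giving $x'=x+w$ and $u_i'=u_i+w_i$. Note that the norm of $\Scone pP$ is tailored so that membership of $(x,\Vect u)$ in $\Cuball{\Scone pP}$ is precisely the condition $x+\sum_i u_i\in\Cuball P$ that makes $\Fdiff fx{\Vect u}=g(x,\Vect u)$ well-defined (by Lemma~\ref{lemma:local-cone} this is $x\in\Cuballr P{\Vect u}$, and it also forces $\sum_i u_i\in\Cuball P$ and each $u_i\in\Cuball P$ by monotonicity of the norm); similarly for $(x',\Vect u')$.

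Then I would check the hypotheses of Lemma~\ref{lemma:fdiff-sommes} with data $x$, $u:=w$, $\Vect u$, $\Vect v:=\Vect w$ and $n:=p$: all of $x,w,u_i,w_i$ belong to $\Cuball P$, and $x+w+\sum_i(u_i+w_i)=x'+\sum_i u_i'\in\Cuball P$. Each of these is immediate from the monotonicity of the norm (Definition~\ref{def:cone}), since every one of these vectors is $\leq_P$ some vector lying in $\Cuball P$ built from the two given points; this is the only bookkeeping-heavy step, but it is routine. Applying Lemma~\ref{lemma:fdiff-sommes} and substituting $x+w=x'$, $u_i+w_i=u_i'$ then rewrites $\Fdiff f{x'}{\Vect u'}$ as $\Fdiff fx{\Vect u}$ plus the finite family of iterated differences $\Fdiff fx{w,\Vect u'}$, $\Fdiff f{x+u_1}{w_1,u_2',\dots,u_p'}$, $\dots$, $\Fdiff f{x+u_p}{u_1,\dots,u_{p-1},w_p}$, each of which carries at least one increment argument (this is where $p\geq1$ enters).

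Finally I would observe that each such iterated difference is a nonnegative element of $Q$: by Lemma~\ref{lemma:fdiff-iter} it equals $\Fdiffp+ fy{\Vect z}-\Fdiffp- fy{\Vect z}$, and pre-stability of $f$ (Definition~\ref{def:n-pre-stable}) gives $\Fdiffp- fy{\Vect z}\leq_Q\Fdiffp+ fy{\Vect z}$, so $\Fdiff fy{\Vect z}\in Q$ and $0\leq_Q\Fdiff fy{\Vect z}$. Hence $\Fdiff f{x'}{\Vect u'}=\Fdiff fx{\Vect u}+(\text{an element of }Q)$, i.e.\ $g(x,\Vect u)\leq_Q g(x',\Vect u')$, which is the desired monotonicity. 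I do not expect a real obstacle here: the combinatorial content sits entirely in Lemma~\ref{lemma:fdiff-sommes}, which was proved precisely for this purpose, and the only points requiring care are that $\Cuball{\Scone pP}$ is exactly the locus where all these differences are defined, and that the order on $\Scone pP$ decomposes componentwise. (A more hands-on alternative would increase $x$ first and then the $u_i$'s one at a time, using the two identities of Lemma~\ref{lemma:fdiff-sommes1} together with the symmetry of $\Fdiff fx{\cdot}$ in its arguments, checking along the way that every intermediate tuple stays in $\Cuball{\Scone pP}$; but the one-shot application of Lemma~\ref{lemma:fdiff-sommes} is cleaner.)
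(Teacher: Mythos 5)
Your proof is correct, but it takes a different route from the paper's. The paper argues \emph{separate} monotonicity in each coordinate of $\Scone pP$: monotonicity in $x$ comes straight from Lemma~\ref{lemma:fdiff-iter-pre-stable}, and monotonicity in $u_i$ from writing $g(x,\Vect u)=\Fdiff f{x+u_i}{\Vect v}-\Fdiff fx{\Vect v}$ with $\Vect v$ omitting $u_i$ and again invoking Lemma~\ref{lemma:fdiff-iter-pre-stable}; joint monotonicity is then implicit (one raises the coordinates one at a time, which requires exactly the bookkeeping you relegate to your parenthetical alternative, namely that the intermediate tuples stay in $\Cuball{\Scone pP}$ by monotonicity of the norm). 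You instead treat the simultaneous increment $(w,\Vect w)$ in one shot via Lemma~\ref{lemma:fdiff-sommes}, which expresses $\Fdiff f{x'}{\Vect u'}$ as $\Fdiff fx{\Vect u}$ plus finitely many iterated differences, each a well-defined element of $Q$ by Lemma~\ref{lemma:fdiff-iter} and pre-stability; since in a cone $a\leq_Q a+b$ always, the conclusion is immediate, and no passage from separate to joint monotonicity is needed. The hypotheses of Lemma~\ref{lemma:fdiff-sommes} are verified exactly as you say, by norm monotonicity, and there is no circularity since that lemma only depends on Lemmas~\ref{lemma:fdiff-sommes1} and~\ref{lemma:fdiff-iter-pre-stable}. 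Your approach buys a cleaner, self-contained argument at the cost of invoking the heavier summation identity (which the paper mainly uses later, for Lemma~\ref{lemma:fdiff-comp}); the paper's is shorter on paper but leaves the domain bookkeeping for the coordinatewise path implicit. One cosmetic remark: the aside that each residual term ``carries at least one increment argument (this is where $p\geq1$ enters)'' is not actually needed — every element of $Q$ is $\geq_Q 0$, so only well-definedness of the terms matters, and nothing in your argument genuinely uses $p\geq 1$.
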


\begin{lemma}\label{lemma:inf-pre-stable-sums}
  Let $f,g:\Cuball P\to\Ccarrier Q$ be two pre-stable functions from $P$ to $Q$.
  The function $f+g$ (sum defined pointwise) is pre-stable.
\end{lemma}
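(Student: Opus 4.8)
The plan is to reduce the statement to two elementary facts: the finite-difference operators $\Fdiffp\epsilon{-}x{\Vect u}$ are additive in their functional argument (a special case of the functional linearity already noted after Lemma~\ref{lemma:fdiff-iter}), and the cone order on $Q$ is compatible with addition. First I would unwind Definition~\ref{def:n-pre-stable}: to prove $f+g$ pre-stable it suffices to fix $k\in\Natnz$, a tuple $\Vect u\in\Ccarrier P^k$ with $\sum_{i=1}^ku_i\in\Cuball P$, and a point $x\in\Cuballr P{\Vect u}$, and to check that $\Fdiffp-{(f+g)}x{\Vect u}\leq\Fdiffp+{(f+g)}x{\Vect u}$. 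Note that $x+\sum_{i\in I}u_i\leq_P x+\sum_{i=1}^ku_i\in\Cuball P$ for every $I\subseteq\{1,\dots,k\}$, so by monotonicity of the norm each such point lies in $\Cuball P$; hence $f$, $g$ and their pointwise sum $f+g$ are defined at every argument occurring in the displayed inequality.

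Next I would record, straight from $\Fdiffp\epsilon hx{\Vect u}=\sum_{I\in\Cocard\epsilon k}h(x+\sum_{i\in I}u_i)$ and the pointwise definition of $f+g$, the identity $\Fdiffp\epsilon{(f+g)}x{\Vect u}=\Fdiffp\epsilon fx{\Vect u}+\Fdiffp\epsilon gx{\Vect u}$ for $\epsilon\in\{+,-\}$. Since $f$ and $g$ are pre-stable, the hypotheses on $k$, $\Vect u$ and $x$ give $\Fdiffp-fx{\Vect u}\leq\Fdiffp+fx{\Vect u}$ and $\Fdiffp-gx{\Vect u}\leq\Fdiffp+gx{\Vect u}$ in $\Ccarrier Q$; writing these as $\Fdiffp+fx{\Vect u}=\Fdiffp-fx{\Vect u}+y$ and $\Fdiffp+gx{\Vect u}=\Fdiffp-gx{\Vect u}+z$ with $y,z\in\Ccarrier Q$ (Definition~\ref{def:cone_order}) and adding, I obtain $\Fdiffp+{(f+g)}x{\Vect u}=\Fdiffp-{(f+g)}x{\Vect u}+(y+z)$, that is $\Fdiffp-{(f+g)}x{\Vect u}\leq\Fdiffp+{(f+g)}x{\Vect u}$. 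As $k$, $\Vect u$ and $x$ were arbitrary, $f+g$ is $n$-pre-stable for all $n$, hence pre-stable.

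I do not expect a genuine obstacle here; the only points deserving a word are that $f+g$ is well defined on all the relevant arguments (handled above via downward closure of $\Cuball P$ under $\leq_P$) and the trivial fact that $a\leq_Q b$ and $c\leq_Q d$ imply $a+c\leq_Q b+d$, immediate from Definition~\ref{def:cone_order}. An equivalent route would be to invoke Theorem~\ref{th:non-decreasing-class-equiv} and argue by induction on $n$ that a sum of two $n$-non-decreasing functions is $n$-non-decreasing, using $\Fdiff{(f+g)}\_u=\Fdiff f\_u+\Fdiff g\_u$ and that $\Crel Pu$ does not depend on the function; but the finite-difference formulation above is the most direct.
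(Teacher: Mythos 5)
Your proof is correct: the additivity of $\Fdiffp\epsilon{\,\cdot\,}x{\Vect u}$ in the functional argument together with compatibility of the cone order with addition gives exactly the inequality required by Definition~\ref{def:n-pre-stable}, and your well-definedness remark (downward closure of $\Cuball P$ via monotonicity of the norm) covers the only delicate point. The paper omits the proof of this lemma as routine, and your argument is precisely the expected one, so there is nothing to add.
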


\begin{lemma}\label{lemma:fdiff-comp}
  Let $p\in\Nat$, $f,\List h1p:\Cuball P\to\Ccarrier Q$ be
  pre-stable functions from $P$ to $Q$ and $g:\Cuball Q\to\Ccarrier R$ be
  pre-stable functions from $Q$ to $R$ such that
  $\forall x\in\Cuball P\ f(x)+\sum_{i=1}^ph_i(x)\in\Cuball Q$. Then the
  function $k:\Cuball P\to\Ccarrier R$ defined by
$
    k(x)=\Fdiff g{f(x)}{h_1(x),\dots,h_p(x)}
 $
  is pre-stable from $P$ to $R$.
\end{lemma}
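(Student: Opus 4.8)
The plan is to prove, by induction on $n\in\Nat$, the strengthened statement: \emph{for all cones $P,Q,R$, all $p\in\Nat$, all pre-stable $f,h_1,\dots,h_p\colon\Cuball P\to\Ccarrier Q$ and pre-stable $g\colon\Cuball Q\to\Ccarrier R$ such that $f(x)+\sum_{i=1}^ph_i(x)\in\Cuball Q$ for all $x\in\Cuball P$, the map $k(x)=\Fdiff g{f(x)}{h_1(x),\dots,h_p(x)}$ is $n$-non-decreasing from $P$ to $R$}. By Theorem~\ref{th:non-decreasing-class-equiv} this gives pre-stability of $k$. (Note first that $k$ is well defined, since $f(x)+\sum_{i\in I}h_i(x)\in\Cuball Q$ for every $I\subseteq\{1,\dots,p\}$ by monotonicity of the norm, so that $\Fdiff g{f(x)}{h_1(x),\dots,h_p(x)}$ makes sense by Lemma~\ref{lemma:fdiff-iter}.) It is crucial to quantify over $p$, because the inductive step will produce a term with $p+1$ differentiation directions.

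For the base case $n=0$, I would observe that $k$ is a composite of non-decreasing maps: $x\mapsto(f(x),h_1(x),\dots,h_p(x))$ is non-decreasing from $\Cuball P$ into $\Cuball{\Scone pQ}$ (the order on $\Scone pQ$ being componentwise), and $(y,\Vect v)\mapsto\Fdiff gy{\Vect v}$ is non-decreasing from $\Scone pQ$ to $R$ by Lemma~\ref{lemma:fdiff-pre-stable-gen} applied to $g$; when $p=0$ this is simply $g\circ f$.

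For the inductive step, fix $n\ge1$ and $u\in\Cuball P$; by Lemma~\ref{lemma:local-cone}, $\Crel Pu$ is a cone. Writing $y=f(x)$, $v_i=h_i(x)$, $u'=\Fdiff fxu$, $w_i=\Fdiff{h_i}xu$ (so $f(x+u)=y+u'$ and $h_i(x+u)=v_i+w_i$), I would expand $k(x+u)=\Fdiff g{y+u'}{\Vect v+\Vect w}$ by Lemma~\ref{lemma:fdiff-sommes} (whose hypothesis $y+u'+\sum_i(v_i+w_i)=f(x+u)+\sum_i h_i(x+u)\in\Cuball Q$ holds) and subtract $k(x)=\Fdiff gy{\Vect v}$, obtaining
$$\Fdiff k x u=\Fdiff gy{u',\Vect v+\Vect w}+\sum_{j=1}^p\Fdiff g{y+v_j}{v_1,\dots,v_{j-1},w_j,v_{j+1}+w_{j+1},\dots,v_p+w_p}.$$
The key point is that every summand, read as a function of $x\in\Cuball{\Crel Pu}$, is again of the shape $\Fdiff g{F(x)}{H_1(x),\dots,H_q(x)}$ with $g$ pre-stable and $F,H_1,\dots,H_q$ pre-stable from $\Crel Pu$ to $Q$, where $q=p$ for the summands indexed by $j\ge1$ (base point $f(\cdot)+h_j(\cdot)$) and $q=p+1$ for the leading summand (base point $f(\cdot)$). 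Indeed the functions $x\mapsto f(x)$, $x\mapsto f(x+u)$, $x\mapsto\Fdiff fxu$ are respectively $\Fdiffp-f\_u$, $\Fdiffp+f\_u$, $\Fdiff f\_u$, all pre-stable from $\Crel Pu$ to $Q$ by Lemma~\ref{lemma:fdiff-iter-pre-stable} (and likewise for the $h_i$), and sums like $f(\cdot)+h_j(\cdot)$ or $h_i(\cdot)+\Fdiff{h_i}\cdot u$ stay pre-stable by Lemma~\ref{lemma:inf-pre-stable-sums}. In each case the boundedness condition ``$F(x)+\sum_\ell H_\ell(x)\in\Cuball Q$'' follows from $f(x+u)+\sum_ih_i(x+u)\in\Cuball Q$, since the relevant sum is $\leq_Q f(x+u)+\sum_ih_i(x+u)$ and the norm is monotone. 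Thus the induction hypothesis at $n-1$ (used with $p$ for the summands $j\ge1$, and with $p+1$ for the leading one) shows each summand is $(n-1)$-non-decreasing from $\Crel Pu$ to $R$, and a sum of $(n-1)$-non-decreasing functions is $(n-1)$-non-decreasing, by Theorem~\ref{th:non-decreasing-class-equiv} together with the additivity $\Fdiffp\epsilon{(F+G)}x{\Vect u}=\Fdiffp\epsilon Fx{\Vect u}+\Fdiffp\epsilon Gx{\Vect u}$. Hence $\Fdiff k\_u$ is $(n-1)$-non-decreasing from $\Crel Pu$ to $R$; since $k$ is also non-decreasing (base case), $k$ is $n$-non-decreasing, closing the induction.

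The main obstacle is the inductive step, on two counts. First, one must set up the quantification so that the induction hypothesis still applies to the leading term of $\Fdiff k\_u$, whose number of differentiation directions has grown from $p$ to $p+1$; this is why the statement is proved uniformly in $p$. Second, one must carefully track, over the \emph{local} cone $\Crel Pu$, which auxiliary functions — $f$, its translate $f(\cdot+u)$, its difference $\Fdiff f\_u$, and sums thereof — are pre-stable (this is precisely what Lemmas~\ref{lemma:fdiff-iter-pre-stable} and~\ref{lemma:inf-pre-stable-sums} provide) and verify the norm side-conditions that keep all the iterated differences in play well defined and valued in the unit ball.
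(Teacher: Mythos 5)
Your proof is correct and follows essentially the same route as the paper: induction on $n$ quantified uniformly over $p$, expansion of $\Fdiff k\_u$ via Lemma~\ref{lemma:fdiff-sommes}, pre-stability of the auxiliary functions over the local cone $\Crel Pu$ via Lemmas~\ref{lemma:fdiff-iter-pre-stable} and~\ref{lemma:inf-pre-stable-sums}, and the base case from Lemma~\ref{lemma:fdiff-pre-stable-gen}. The only difference is cosmetic: you phrase the induction in terms of $n$-non-decreasing maps and convert through Theorem~\ref{th:non-decreasing-class-equiv}, whereas the paper argues directly with $n$-pre-stability.
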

\begin{proof}
Observe that our hypotheses imply that, for all $x\in\Cuball P$, one has
$f(x)\in\Cuball{\Crel Q{h_1(x),\dots,h_n(x)}}$

With the notations and conventions of the statement, we prove by induction on
$n$ that, for all $n\in\Nat$, for all $p\in\Nat$, for all $f,\List h1p,g$
which are pre-stable and satisfy
$\forall x\in\Cuball P\ f(x)+\sum_{i=1}^ph_i(x)\in\Cuball Q$, the function $k$
is $n$-pre-stable.

For $n=0$, the property results from Lemma~\ref{lemma:fdiff-pre-stable-gen}.

We assume the property for $n$ and prove it for $n+1$. Let $u\in\Cuball P$ we
have to prove that the function $\Fdiff k\_u$ is $n$-pre-stable from $\Crel Pu$
to $R$. Let $x\in\Cuballr Pu$, we have
\begin{align*}
  \Fdiff kxu&=\Fdiff g{f(x+u)}{h_1(x+u),\dots,h_p(x+u)}-
              \Fdiff g{f(x)}{h_1(x),\dots,h_p(x)}\\
            &=\Fdiff g{f(x)+\Fdiff fxu}{h_1(x)+\Fdiff{h_1}{x}{u},
              \dots,h_p(x)+\Fdiff{h_1}{x}{u}}\\
            &\quad-\Fdiff g{f(x)}{h_1(x),\dots,h_p(x)}\\
            &=\Fdiff g{f(x)}{\Fdiff fxu,h_1(x)+\Fdiff{h_1}{x}{u},
              \dots,h_p(x)+\Fdiff{h_p}{x}{u}}\\
            &\quad+\Fdiff g{f(x)+h_1(x)}{\Fdiff{h_1}{x}{u},
              h_2(x)+\Fdiff{h_2}xu,\dots,h_p(x)+\Fdiff{h_p}xu}\\
            &\quad+\Fdiff g{f(x)+h_2(x)}{h_1(x),\Fdiff{h_2}xu,
              h_3(x)+\Fdiff{h_3}xu,\dots,h_p(x)+\Fdiff{h_p}xu}+\cdots\\
            &\quad+\Fdiff g{f(x)+h_p(x)}{h_1(x),
              \dots,h_{p-1}(x),\Fdiff{h_p}{x}{u}}
\end{align*}
by Lemma~\ref{lemma:fdiff-sommes}. We can apply the inductive hypothesis to
each of the terms of this sum. Let us consider for instance the first
of these expressions. We know that the functions
$h'_1,\dots,h'_{p+1}$ defined by $h'_1(x)=\Fdiff fxu$,
$h'_2(x)=h_1(x)+\Fdiff{h_1}{x}{u}=h_1(x+u)$,\dots,
$h'_{p+1}(x)=h_p(x)+\Fdiff{h_p}{x}{u}=h_p(x+u)$ are pre-stable from
$\Crel Pu$ to $Q$: this results from
Lemmas~\ref{lemma:inf-pre-stable-sums}
and~\ref{lemma:fdiff-iter-pre-stable}. Moreover we have
$\forall x\in\Cuball P\
f(x)+\sum_{i=1}^{p+1}h'_i(x)=f(x+u)+\sum_{i=1}^ph_i(x+u)\in\Cuball Q$.
Therefore the inductive hypothesis applies and we know that the function
$x\mapsto\Fdiff g{f(x)}{\Fdiff fxu,h_1(x)+\Fdiff{h_1}{x}{u},
  \dots,h_p(x)+\Fdiff{h_p}{x}{u}}$
is $n$-pre-stable. The same reasoning applies to all terms and, by
Lemma~\ref{lemma:inf-pre-stable-sums}, we know that the function $\Fdiff k\_u$ is
$n$-pre-stable from $\Crel Pu$ to $Q$.
\end{proof}

\begin{theorem}\label{th:pre-stable_compose}
  Let $f$ be a pre-stable function from $P$ to $Q$ and $g$ be a pre-stable
  function from $Q$ to $R$. If $f(\Cuball P)\subseteq\Cuball Q$ then
  $g\Comp f$ is a pre-stable function from $P$ to $R$.
\end{theorem}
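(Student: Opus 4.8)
The plan is to read off the statement as the special case $p=0$ of Lemma~\ref{lemma:fdiff-comp}, so the whole proof is essentially a bookkeeping check that the hypotheses and conclusion of that lemma specialize correctly. First I would note that when $p=0$ the family $\List h1p$ is empty, so the side condition $\forall x\in\Cuball P\ f(x)+\sum_{i=1}^p h_i(x)\in\Cuball Q$ of Lemma~\ref{lemma:fdiff-comp} collapses to exactly the assumption $f(\Cuball P)\subseteq\Cuball Q$ of the present theorem. Second, I would unfold the function $k$ that the lemma produces: with no increments $h_i$ it is $k(x)=\Fdiff g{f(x)}{}$, and by the convention pinning down $\Delta$ on an empty list of increments (the degenerate case $n=0$ of Lemma~\ref{lemma:fdiff-iter}, where $f_0=f$, so that $\Fdiff g y{}=g(y)$) this is just $k(x)=g(f(x))=(g\Comp f)(x)$. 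Lemma~\ref{lemma:fdiff-comp} then states precisely that $k=g\Comp f$ is pre-stable from $P$ to $R$, which is the claim.

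For completeness I would also point out that the same conclusion can be obtained by a direct induction on $n$ showing $g\Comp f$ is $n$-pre-stable: in the inductive step one writes $\Fdiff{(g\Comp f)}xu=\Fdiff g{f(x)+\Fdiff fxu}{}-\Fdiff g{f(x)}{}$ and rewrites it, via the one-increment instance of Lemma~\ref{lemma:fdiff-sommes1}, as $\Fdiff g{f(x)}{\Fdiff fxu}$; then one checks that $x\mapsto\Fdiff fxu$ is pre-stable from $\Crel Pu$ to $Q$ (Lemma~\ref{lemma:fdiff-iter-pre-stable}), and applies the inductive hypothesis, using also closure of pre-stability under pointwise sums (Lemma~\ref{lemma:inf-pre-stable-sums}). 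This is exactly the argument already internalized, in greater generality, inside the proof of Lemma~\ref{lemma:fdiff-comp}, so invoking the lemma is the cleaner route and I would not redo it by hand.

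The step carrying the real weight is therefore not in this theorem but upstream in Lemma~\ref{lemma:fdiff-comp}, whose inductive proof rests on the multilinear expansion of $\Fdiff f{x+u}{\Vect u+\Vect v}$ supplied by Lemma~\ref{lemma:fdiff-sommes} --- the cone-theoretic analogue of the Leibniz/chain rule --- together with the facts that $\Fdiffp- f\_{\Vect u}$, $\Fdiffp+ f\_{\Vect u}$ and $\Fdiff f\_{\Vect u}$ remain pre-stable on the local cone $\Crel P{\Vect u}$ (Lemma~\ref{lemma:fdiff-iter-pre-stable}) and that pre-stable functions are closed under addition (Lemma~\ref{lemma:inf-pre-stable-sums}). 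Once those ingredients are granted, composition of pre-stable maps is immediate, which is why taking $p=0$ in Lemma~\ref{lemma:fdiff-comp} is the natural and shortest way to conclude.
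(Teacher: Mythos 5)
Your proof is correct and is exactly the paper's own argument: the paper proves Theorem~\ref{th:pre-stable_compose} precisely as the case $p=0$ of Lemma~\ref{lemma:fdiff-comp}, and your bookkeeping (the empty family of $h_i$'s reducing the side condition to $f(\Cuball P)\subseteq\Cuball Q$, and $\Fdiff g{f(x)}{}=g(f(x))$ by the $n=0$ convention) is the right way to spell that specialization out.
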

\begin{proof}
This is the case $p=0$ of Lemma~\ref{lemma:fdiff-comp}.  
\end{proof}

\begin{definition}\label{def:stable}
  A \emph{stable} function from $P$ to $Q$ is a pre-stable (or,
  equivalently, an absolutely monotonic) function from $P$ to $Q$
  which is Scott continuous.  We use $\INFSCOTT$ for the category of
  complete cones and stable functions. More explicitly,
  $\INFSCOTT(P,Q)$ is the set of all functions
  $f:\Cuball P\to\Ccarrier Q$ which are pre-stable from $P$ to $Q$,
  Scott-continuous and satisfy $f(\Cuball P)\subseteq\Cuball Q$.
\end{definition}

\section{Measurability}\label{sect:measurability}
The cone $\Bmes \Real$ of $\Realp$-valued
measures on $\Real$ (Example~\ref{ex:measures-cone}) is the natural candidate to model the ground type $\treal$. 
In particular, a real numeral will be interpreted as the Dirac measure $\Dirac r$. Consider now a closed term 
$\letterm x \terma  \termb$ of type $\treal$, so that $\vdash\terma:\treal$ and $x:\treal\vdash\termb:\treal$. The term $\terma$ 
will be associated with a measure $\mu$ in $\Cuball(\Bmes \Real)$, while $\termb$ will be a stable 
function $f$ from the whole $\Cuball(\Bmes \Real)$ to $\Cuball(\Bmes \Real)$. However, according to the operational semantics (Figure~\ref{fig:beta}), $\termb$ is supposed to
get a real number $r$ for $x$, and not a generic measure. This means that one has to compose $f$ with a map $\Diracf:\Real\to\Cuball(\Bmes\Realp)$ defined by
$\Diracf(r)=\Dirac r$, so that $f\Comp \Diracf:\Real\to\Cuball(\Bmes \Real)$. Now, the natural way to pass $\mu$ to $f\Comp \Diracf$ is then by the integral $\int_\Real (f\Comp \Diracf)(r)\mu(dr)$. However, this would be meaningful only in case $f\Comp \Diracf$ is measurable, and this is not the true of all stable functions $f$.\footnote{Indeed, by Lebesgue decomposition theorem
we can write $\Bmes \Real=\cM_0\oplus\cM_1$, a co-product of
cones (it is easily checked that the category of complete
   cones of linear and Scott-continuous functions has
  co-products), where the elements of $\cM_0$ are the discrete
measures, that is, the countable linear combinations of Dirac measures
$\sum_{i=1}^\infty \alpha_n\Dirac{r_n}$ with
$\forall n\ \alpha_n\in\Realp$ and $\sum_n\alpha_n<\infty$, and
$\cM_1$ is the cone of measures $\mu$ such that $\mu(\{r\})=0$ for all
$r\in\Real$. Let $U\subseteq\Real$ be a non-measurable set and let
$f:\cM\to\Realp$ be the linear (and hence pre-stable) and Scott-continuous
function defined on this co-product, by: $f(\mu)=0$ if $\mu\in\cM_1$
and $f(\Dirac r)=\Mchar U(r)$. Then $f\Comp\Diracf=\Mchar U$ is
not measurable. We thank Jean-Louis Krivine for this example.
}

We have then to slightly refine our model, endowing our cones
with a structure allowing to formulate a convenient measurability
property for our morphisms. This is the goal of this section.

\subsection{Measurability Tests}
If $P$ is a cone, we use $\Dual P$ for the topological dual of $P$,
which is the set of all functions $l:P\to\Realp$ which are linear
(that is, commute with linear combinations) and Scott-continuous. Such
a function, when restricted to $\Cuball P$, clearly defines a
stable function from $P$ to $\Realp$.

\begin{definition}\label{def:meas-tests}
We consider cones $P$ equipped with a collection $(\Mtest nP)_{n\in\Nat}$ where
$\Mtest nP\subseteq(\Dual P)^{\Real^n}$ satisfies the following properties.
\begin{itemize}
\item $0\in\Mtest nP$
\item if $l\in\Mtest nP$ and $h:\Real^p\to\Real^n$ is measurable then $l\Comp
  h\in\Mtest pP$
\item and for any $l\in\Mtest nP$ any $x\in P$, the function
  $\Real^n\to\Realp$ which maps $\Vect r$ to $l(\Vect r)(x)$ is in
  $\Mfun{n}$, \IE~is a measurable map $\Real^{n}\to\Realp$.

\end{itemize}  
A cone $P$ equipped with a family $(\Mtest nP)_{n\in\Nat}$ satisfying
the above conditions will be called a \emph{measurable cone}.
The elements of the sets $\Mtest nP$ will be called the \emph{measurablility
  tests} of $P$. 
\end{definition}
Measurability tests have parameters in $\Real^n$ and
are not simply Scott-continuous linear forms for making it possible to prove
that the evaluation function of the cartesian closed structure is well
behaved. This will be explained in Remark~\ref{rk:answ-mtest-real-param}.

\begin{definition}\label{def:measurable_path}
  Let $P$ be a cone and let $n\in\Nat$. A \emph{measurable path} of arity $n$ in
  $P$ is a function $\gamma:\Real^n\to P$ such that
  \begin{itemize}
  \item $\gamma(\Real^n)$ is bounded in $P$
  \item and, for all $k\in\Nat$ and all $l\in\Mtest kP$, the function
    $\Mtestapp l\gamma:\Real^{k+n}\to\Realp$ defined by
    $(\Mtestapp l\gamma)(\Vect r,\Vect s)=l(\Vect r)(\gamma(\Vect s))$ is in $\Mfun{k+n}$, \IE~is a measurable map $\Real^{k+n}\to\Realp$.
  \end{itemize}
  We use $\Mpath nP$ for the set of measurable paths of $P$ and
  $\Mpathu nP$ for the set of measurable paths which take their values in
  $\Cuball P$. 
\end{definition}

\begin{lemma}\label{lemma:mes-path-cst}
  For any $x\in P$ and $n\in\Nat$, the function $\gamma:\Real^n\to P$
  defined by $\gamma(\Vect r)=x$ belongs to $\Mpath nP$.  If
  $\gamma\in\Mpath nP$ and $h:\Real^p\to\Real^n$ is measurable then
  $\gamma\circ h\in\Mpath pP$.
\end{lemma}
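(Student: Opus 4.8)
The plan is to check, in each case, the two defining conditions of Definition~\ref{def:measurable_path} (boundedness of the image and measurability of the maps $\Mtestapp l\gamma$), reducing everything to the third clause of Definition~\ref{def:meas-tests} together with the elementary facts recalled in Section~\ref{section:compendium}: projections are measurable, pairings of measurable maps are measurable, and composites of measurable maps are measurable.

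For the first statement, let $\gamma(\Vect r)=x$ be the constant path of arity $n$. Then $\gamma(\Real^n)=\{x\}$, which is bounded since $\{x\}\subseteq\Cuball P(\Cnorm Px)$. For the measurability clause, fix $k\in\Nat$ and $l\in\Mtest kP$. The map $\Mtestapp l\gamma:\Real^{k+n}\to\Realp$ satisfies $(\Mtestapp l\gamma)(\Vect r,\Vect s)=l(\Vect r)(x)$, which does not depend on $\Vect s$; hence it factors as $g\circ\pi$, where $\pi:\Real^{k+n}\to\Real^k$ is the projection onto the first $k$ coordinates and $g:\Real^k\to\Realp$ is $g(\Vect r)=l(\Vect r)(x)$. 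Here $\pi$ is measurable and $g\in\Mfun k$ by the third clause of Definition~\ref{def:meas-tests}, so $\Mtestapp l\gamma\in\Mfun{k+n}$. Thus $\gamma\in\Mpath nP$.

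For the second statement, let $\gamma\in\Mpath nP$ and $h:\Real^p\to\Real^n$ be measurable. Boundedness is immediate, since $(\gamma\circ h)(\Real^p)=\gamma(h(\Real^p))\subseteq\gamma(\Real^n)$ is bounded. For the measurability clause, fix $k\in\Nat$ and $l\in\Mtest kP$, and consider the map $\phi:\Real^{k+p}\to\Real^{k+n}$ with $\phi(\Vect r,\Vect t)=(\Vect r,h(\Vect t))$. Its two components are the projection $\Real^{k+p}\to\Real^k$ and the composite $\Real^{k+p}\to\Real^p\xrightarrow{h}\Real^n$, both measurable, so $\phi$ is measurable. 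Since $(\Mtestapp l{(\gamma\circ h)})(\Vect r,\Vect t)=l(\Vect r)(\gamma(h(\Vect t)))=(\Mtestapp l\gamma)(\phi(\Vect r,\Vect t))$, we have $\Mtestapp l{(\gamma\circ h)}=(\Mtestapp l\gamma)\circ\phi$, which is a composite of measurable maps (the map $\Mtestapp l\gamma\in\Mfun{k+n}$ because $\gamma\in\Mpath nP$). Hence $\Mtestapp l{(\gamma\circ h)}\in\Mfun{k+p}$ and $\gamma\circ h\in\Mpath pP$.

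This lemma is routine; the only point requiring any care — and the closest thing to an obstacle — is the implicit identification of the measurable space $\Real^{k+n}$ with the product $\Real^k\times\Real^n$ in $\Meas$, so that $\pi$ and $\phi$ above are genuinely measurable. This is standard for the Borel/Lebesgue $\sigma$-algebras on Euclidean spaces and is used throughout Section~\ref{section:compendium}, so I would simply invoke it.
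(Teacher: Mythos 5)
Your proof is correct and follows essentially the same route as the paper's: the constant case reduces to the third clause of Definition~\ref{def:meas-tests}, and the precomposition case uses the identity $\Mtestapp l{(\gamma\circ h)}=(\Mtestapp l\gamma)\circ(\Id\times h)$ (your $\phi$ is exactly $\Id\times h$). The extra details you supply (boundedness, factoring through the projection) are fine but routine.
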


\begin{example}\label{ex:meas_cone_with_meas}
  Let $X$ be a measurable space. We equip the cone $\Bmes X$ with the
  following notion of measurability tests. For each $n\in\Nat$, we
  set $\Mtest n{\Bmes X}=\{\Mapp U\St U\in\Tribu X\}$ where
  $\Mapp U(\Vect r)(\mu)=\mu(U)$. Observe that indeed $\Mapp U$ is
  linear and Scott-continuous, see Example~\ref{ex:measures-cone} and
  the observation that, in the complete cone $\Bmes X$, lubs are
  computed pointwise: given a non-decreasing and bounded sequence
  $\mu_n$ of elements of $\Bmes X$, one has
  $(\sup_{n\in\Nat}\mu_n)(U)=\sup_{n\in\Nat}\mu_n(U)$. Therefore an
  element of $\Mpathu n{\Bmes X}$ is a stochastic 
  kernel from
  $\Real^n$ to $X$. This 
  example justifies our terminology of
  ``measurable cone'' because, in $\Bmes X$, the measurable
  tests coincide with the measurable sets of $X$.
\end{example}

\begin{definition}\label{def:mes-pre-stable-function}
  Let $P$ and $Q$ be measurable complete cones.  A stable function
  from $P$ to $Q$ (remember that then $f$ is actually a function
  $\Cuball P\to Q$) is \emph{measurable} if, for all $n\in\Nat$ and
  all $\gamma\in\Mpathu nP$, one has $f\Comp\gamma\in\Mpath nQ$.  We
  use $\INFSCOTTM$ for the subcategory of $\INFSCOTT$ whose morphisms
  are measurable.
\end{definition}
This definition makes sense because if $f:\Cuball P\to\Cuball Q$ and
$g:\Cuball Q\to\Cuball R$ are stable and measurable, then $g\Comp f$
has the same properties. 

\section{The cartesian closed structure of {$\INFSCOTTM$}}\label{sect:ccc}
\begin{figure}
\centering
\begin{subfigure}{\linewidth}
\begin{align*}
\prod_{i\in I} P_i&=\{(x_i)_{i\in I}\text{ s.t. } \forall i\in I, x_i\in P_i\},
&
\Cnorm{\prod_{i\in I} P_i}{(x_i)_{i\in I}}&=\sup_{i\in I}\Cnorm{P_i}{x_i}
\\
\Mtest n{\prod_{i\in I}P_i}&=\{\bigoplus_{i\in I}l_i\St\forall i\in I\ l_i\in\Mtest n{P_i}\},
&
\hspace{-.6cm}\text{with } \bigoplus_{i\in I}l_i(\vec r)(x_i)_{n\in I}&=\sum_{i\in I}l_i(\vec r)(x_i)&
\end{align*}
\caption{finite cartesian product ($I$ finite set). We can simply write $P_1\times P_2$ for the binary product.}\label{fig:cartesian_product}
\end{subfigure}

\begin{subfigure}{\linewidth}
{
\begin{align*}
\Implm PQ&=\{ f: \Cuball P\to Q \St \exists\epsilon>0, \epsilon f\in \INFSCOTTM(P,Q)\},
&
\Cnorm{\Implm PQ}{f}&=\sup_{x\in\Cuball P}\Cnorm{Q}{f(x)}
\\
\Mtest n{\Implm PQ}&=\{\Mtestfun{\gamma}{m}\St \gamma\in\Mpath n{P}, m\in \Mtest n Q\},
&
\hspace{-1.5cm}\text{with } \Mtestfun{\gamma}{m}(\vec r)(f)& = m(\vec r)(f(\gamma(\vec r)))
\end{align*}
}
\caption{object of morphisms}\label{fig:exponential_object}
\end{subfigure}
\caption{The CCC structure of $\INFSCOTTM$. The projections, pairing and the evaluation are defined as standard.}\label{fig:ccc}
\end{figure}

\subsection{Cartesian Product}\label{sec:pre-stable-prduct}

The \emph{cartesian product $P=\prod_{i\in I}(P_i)$} of a finite\footnote{We could easily define countable products, this will be done in an extended version of this paper.} family of cones
  $(P_i)_{i\in I}$ is given in Figure~\ref{fig:cartesian_product}, where
  addition and  scalar multiplication are defined componentwise. It is clear that we have defined in that way a complete cone and that
$\Cuball P=\prod_{i\in I}\Cuball{P_i}$. Given a non-decreasing sequence $(x(p))_{p\in\Nat}$ in $\Cuball P$, then $x=\sup_{p\in\Nat}x(p)$ is characterized by $x_i=\sup_{p\in\Nat} x(p)_i$ (this lub being taken in $\Cuball{P_i}$).


The projections $\Proj i:\Cuball P\to\Cuball{P_i}$ are easily seen to
be linear and Scott-continuous and hence stable $P\to P_i$.
Let $f_i\in\INFSCOTTM(Q,P_i)$ for each
$i\in I$. We define a function $f:\Cuball Q\to\Cuball P$ by
$f(y)=(f_i(y))_{i\in I}$. 
It is straightforward to check that this function is stable: 
pre-stability follows from $\Fdiff fyv=(\Fdiff{f_i}yv)_{i\in I}$.

\begin{lemma}\label{lemma:sep-lin-cont}
  Let $f:\Ccarrierb P\times\Cuball Q\to \Ccarrierb R$ be a function such that
  \begin{itemize}
  \item for each $y\in\Cuball Q$, the function
    $f^{(1)}_y:\Ccarrierb P\to\Ccarrierb R$
    defined by $f^{(1)}_y(x)=f(x,y)$ is
    linear (resp.~linear and Scott-continuous)
  \item and for each $x\in\Ccarrierb P$, the function
    $f^{(2)}_x:\Cuball Q\to\Ccarrierb R$ defined by $f^{(2)}_x(y)=f(x,y)$ is
    pre-stable (resp.~pre-stable and Scott-continuous).
  \end{itemize}
  Then the restriction $f:\Cuball P\times\Cuball Q\to\Ccarrierb R$ is
  pre-stable (resp.~pre-stable and Scott-continuous, that is, stable)
  from $P\times Q$ to $R$.
\end{lemma}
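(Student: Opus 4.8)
The plan is to reduce the statement about $f$ on the product cone $P\times Q$ to the two separate hypotheses by analysing what the iterated difference $\Fdiff f{(x,y)}{\Vect w}$ looks like when the increments $\Vect w=(w_1,\dots,w_n)$ are split into their $P$- and $Q$-components, $w_i=(v_i,u_i)$. First I would record the basic observation that, since $f^{(1)}_y$ is linear for every $y$, the function $f$ is \emph{affine-linear in its first coordinate in the appropriate sense}: for fixed $y$ and increments $v,v'$ in $P$ we have $f(x+v+v',y)-f(x+v,y)-f(x+v',y)+f(x,y)=0$, because $f(x+v+v',y)+f(x,y)=f(x+v,y)+f(x+v',y)$ by additivity of $f^{(1)}_y$. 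Consequently any iterated $P$-difference of order $\geq 2$ vanishes, and an iterated $P$-difference of order exactly $1$ in direction $v$ equals $f^{(1)}_y(v)$ independently of the base point $x$ in the first coordinate.

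Using Lemma~\ref{lemma:fdiff-iter} (which expresses $\Fdiff f{(x,y)}{\Vect w}$ as $f_n(x,y)$, built by iterating $\Delta$ one increment at a time, and which guarantees symmetry in the increments), I would then compute $\Fdiff f{(x,y)}{\Vect w}$ by peeling off the $P$-components first. By symmetry I may assume the first $j$ increments are ``pure in $P$'' of the form $(v_i,0)$ and the remaining $n-j$ are ``pure in $Q$'' of the form $(0,u_i)$ — the general case reduces to this because $(v_i,u_i)=(v_i,0)+(0,u_i)$ and $\Delta$ is additive in each increment by Lemma~\ref{lemma:fdiff-sommes1}, so expanding multilinearly yields a sum of such split terms. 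For such a split tuple, taking the first-coordinate differences and using the vanishing of second- and higher-order $P$-differences, one gets: the whole expression is $0$ unless $j\leq 1$; if $j=0$ it is $\Fdiff{f^{(2)}_x}{y}{u_1,\dots,u_n}$, which is $\geq 0$ (indeed in $\Cuball R$) by pre-stability of $f^{(2)}_x$ and Lemma~\ref{lemma:fdiff-iter}; and if $j=1$ with the pure-$P$ direction being $v$, it equals $\Fdiff{(f^{(2)}_{?})}{y}{u_2,\dots,u_n}$ applied to the linear form $x'\mapsto f(x',y)$ differentiated once, i.e.\ $\Delta$ in $y$ of the function $y\mapsto f^{(1)}_y(v)$, which is again pre-stable in $y$ and hence gives a value in $\Cuball R$. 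In all cases $\Fdiff f{(x,y)}{\Vect w}\in\Cuball R$, which by Theorem~\ref{th:non-decreasing-class-equiv} (equivalently, by the characterisation via $\Fdiffp-\leq\Fdiffp+$) establishes that $f$ is $n$-pre-stable for every $n$, hence pre-stable. The hint in the paper, ``pre-stability follows from $\Fdiff fyv=(\Fdiff{f_i}yv)_{i\in I}$'' for the pairing case, is the analogue of exactly this kind of componentwise computation.

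For the Scott-continuity half (the ``resp.'' clause): once pre-stability is in hand, I would check that the restricted $f:\Cuball P\times\Cuball Q\to\Ccarrier R$ is bounded (immediate, since $f^{(2)}_x$ bounded for each $x$ and linearity in the first coordinate controls the dependence on $x$ over the unit ball), non-decreasing (from $0$-pre-stability, which is the case $n=0$ of what we just proved), and commutes with lubs of non-decreasing sequences $(x(p),y(p))_p$ in $\Cuball P\times\Cuball Q$. For the latter I would split $f(x(p),y(p))$ using linearity and continuity in each variable separately: write $f(x(p),y(p))$, pass to the limit in the first coordinate using Scott-continuity of $f^{(1)}_{y(p)}$ — more precisely using that lubs in the product are computed componentwise (stated just before Lemma~\ref{lemma:sep-lin-cont}) together with a standard ``diagonal'' argument for joint continuity of a separately continuous monotone map on a product cpo — and then in the second using Scott-continuity of $f^{(2)}_x$.

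\textbf{Main obstacle.} The delicate point is the bookkeeping in the iterated-difference computation: correctly expanding $\Fdiff f{(x,y)}{\Vect w}$ over the $P$/$Q$ splitting of the increments and verifying that every term with two or more $P$-increments cancels, so that what survives is manifestly a difference-expression of a pre-stable function (either $f^{(2)}_x$ directly, or the once-$P$-differentiated family $y\mapsto f^{(1)}_y(v)$, which one must separately argue is pre-stable in $y$ — this uses linearity of $\Delta$ in the function, Lemma~\ref{lemma:fdiff-iter}, applied to the hypothesis that each $f^{(2)}_x$ is pre-stable, noting $y\mapsto f^{(1)}_y(v)=f^{(2)}_{?}(y)$ is a pointwise limit/difference of the $f^{(2)}$'s). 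Making the multilinear expansion rigorous via repeated application of Lemma~\ref{lemma:fdiff-sommes1} while staying inside the domains of definition of all the subtractions is the part that requires genuine care; everything else is routine.
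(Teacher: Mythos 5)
Your underlying idea is the right one and is in fact the same as the paper's: exploit linearity of $f$ in its first argument so that differences computed in $P\times Q$ collapse to differences of the slice functions $f^{(2)}$, with at most one ``$P$-increment'' surviving in any term. (Two small remarks: the function $y\mapsto f^{(1)}_y(v)$ that you flag as needing a separate argument is \emph{literally} $f^{(2)}_v$, since both are $y\mapsto f(v,y)$, so no limit or difference of the $f^{(2)}$'s is needed; and the surviving terms are exactly $Q$-differences of $f^{(2)}_x$ and of the $f^{(2)}_{u_i}$.) The genuine gap is in how you justify the computation. You carry it out in the iterated-difference formalism, appealing to Lemma~\ref{lemma:fdiff-iter} (to identify $\Fdiff f{(x,y)}{\Vect w}$ and to use symmetry in the increments) and to Lemma~\ref{lemma:fdiff-sommes1} (for the multilinear expansion over the splitting of each increment into its $P$- and $Q$-parts). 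Both of these lemmas are stated and proved only for functions that are already absolutely monotonic/pre-stable: without that hypothesis the iterated difference $\Fdiff f{(x,y)}{\Vect w}$ need not even be well defined (each step is a partial subtraction in the cone, available only once the lower-order non-negativity is known), and the additivity and symmetry identities you invoke are themselves derived from the representation of $\Fdiff$ via the subtraction-free sums $\Fdiffp+fx{\Vect u}$ and $\Fdiffp-fx{\Vect u}$, which presupposes absolute monotonicity. As written the argument therefore assumes what it sets out to prove; repairing it would require a careful induction on the order in which only lower-order facts about $f$ are used at each stage.

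The paper sidesteps this entirely by never subtracting: for $\Vect w=((u_1,v_1),\dots,(u_n,v_n))$ with $u_i\in\Cuball P$, $v_i\in\Cuball Q$, additivity of $f^{(1)}_y$ gives, for $\epsilon\in\{+,-\}$,
\begin{align*}
\Fdiffp\epsilon f{(x,y)}{\Vect w}=\Fdiffp\epsilon{f^{(2)}_x}y{\Vect v}+\sum_{i=1}^n\Fdiffp\epsilon{f^{(2)}_{u_i}}{y+v_i}{\Vect{v^{(i)}}}\,,
\end{align*}
where $\Vect{v^{(i)}}$ omits $v_i$; everything here is a finite sum of values of $f$, hence always defined, and $\Fdiffp-f{(x,y)}{\Vect w}\leq\Fdiffp+f{(x,y)}{\Vect w}$ follows by summing the corresponding inequalities for the pre-stable functions $f^{(2)}_x,f^{(2)}_{u_1},\dots,f^{(2)}_{u_n}$. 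Your treatment of the Scott-continuity clause (monotonicity plus separate continuity yields joint continuity by the diagonal argument) matches the paper and is fine.
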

In the second line of Figure~\ref{fig:cartesian_product} we endow the cone $\prod_{i\in I}P_i$
with a notion of measurability tests $\Mtest n{\prod_{i\in I}P_i}$ for any $n\in\Nat$. It is
obvious that this notion satisfies the conditions of Definition~\ref{def:meas-tests}. The fact that $(\bigoplus_{i\in I}l_i)(\Vect r)$ is Scott-continuous results from the Scott-continuity of addition (Lemma~\ref{lemma:add-Scott}).  Moreover, for $x\in\prod_{i\in I} P_i$, the map
$\Vect r\mapsto\sum_{i\in I}l_i(\Vect r)(x_i)$ is measurable as a sum of
measurable functions. Given $h:\Real^p\to\Real^n$ measurable, we have
$(\bigoplus_{i\in I}l_i)\Comp h=\bigoplus_{i\in I}(l_i\Comp h)$ and hence the
measurability tests of $\prod_{i\in I}P_i$ are closed under precomposition with
measurable maps.

\begin{lemma}\label{lemma:mes-path-tuple}
  For any $n\in\Nat$ we have
$
    \Mpath n{\prod_{i\in I}P_i}=
    \{\Tuple{\gamma_i}_{i\in I}\St\forall i\in I\ \gamma_i\in\Mpath n{P_i}\}. 
$
\end{lemma}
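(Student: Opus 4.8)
The plan is to prove the stated equality by establishing the two inclusions separately, in each case simply unwinding Definition~\ref{def:measurable_path} together with the formulas for the norm and the measurability tests of $\prod_{i\in I}P_i$ given in Figure~\ref{fig:cartesian_product}. Throughout, for a function $\gamma:\Real^n\to\prod_{i\in I}P_i$ I write $\gamma_i=\Proj i\Comp\gamma$ for its $i$-th component, so that $\gamma=\Tuple{\gamma_i}_{i\in I}$. The single computation that drives both directions is the following: for any $k$ and any test $l=\bigoplus_{i\in I}l_i\in\Mtest k{\prod_{i\in I}P_i}$, the definition of $\bigoplus$ gives
\[
  (\Mtestapp l\gamma)(\Vect r,\Vect s)=l(\Vect r)(\gamma(\Vect s))=\sum_{i\in I}l_i(\Vect r)(\gamma_i(\Vect s))=\sum_{i\in I}(\Mtestapp{l_i}{\gamma_i})(\Vect r,\Vect s),
\]
so $\Mtestapp l\gamma$ is the finite pointwise sum of the maps $\Mtestapp{l_i}{\gamma_i}:\Real^{k+n}\to\Realp$.

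For the inclusion $\supseteq$ I would take each $\gamma_i\in\Mpath n{P_i}$. Since $I$ is finite and each $\gamma_i(\Real^n)$ is bounded in $P_i$, say $\gamma_i(\Real^n)\subseteq\Cuball{P_i}(\alpha_i)$, the identity $\Cnorm{\prod_{i\in I}P_i}{(x_i)_{i\in I}}=\sup_{i\in I}\Cnorm{P_i}{x_i}$ shows that $\gamma(\Real^n)\subseteq\Cuball{\prod_{i\in I}P_i}(\max_{i\in I}\alpha_i)$, hence $\gamma(\Real^n)$ is bounded. For any $k$ and any $l=\bigoplus_{i\in I}l_i\in\Mtest k{\prod_{i\in I}P_i}$, the displayed identity exhibits $\Mtestapp l\gamma$ as a finite sum of elements of $\Mfun{k+n}$, hence $\Mtestapp l\gamma\in\Mfun{k+n}$. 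Thus $\gamma\in\Mpath n{\prod_{i\in I}P_i}$.

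For the inclusion $\subseteq$ I would take $\gamma\in\Mpath n{\prod_{i\in I}P_i}$ and fix $j\in I$. Boundedness of $\gamma_j(\Real^n)$ is immediate from $\Cnorm{P_j}{\gamma_j(\Vect s)}\le\Cnorm{\prod_{i\in I}P_i}{\gamma(\Vect s)}$ and boundedness of $\gamma(\Real^n)$. For the measurability clause, given $k\in\Nat$ and $l_j\in\Mtest k{P_j}$, I would form the test $l=\bigoplus_{i\in I}l'_i$ with $l'_j=l_j$ and $l'_i=0$ for $i\ne j$; this lies in $\Mtest k{\prod_{i\in I}P_i}$ because $0\in\Mtest k{P_i}$ for every $i$ by Definition~\ref{def:meas-tests}. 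The displayed identity then collapses to $\Mtestapp l\gamma=\Mtestapp{l_j}{\gamma_j}$, which is measurable since $\gamma$ is a measurable path; hence $\gamma_j\in\Mpath n{P_j}$. There is no genuine obstacle in this argument; the only two points needing a word of care are the appeal to the finiteness of $I$ in the $\supseteq$ direction (to obtain a uniform bound $\max_{i\in I}\alpha_i$ and to stay within finite sums of measurable functions) and the use of the closure property $0\in\Mtest k{P_i}$ in the $\subseteq$ direction in order to single out one factor.
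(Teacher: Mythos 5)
Your proof is correct and follows exactly the route the paper leaves implicit (the lemma is stated without proof as a direct unwinding of Definition~\ref{def:measurable_path} and Figure~\ref{fig:cartesian_product}): the identity $\Mtestapp{(\bigoplus_{i\in I}l_i)}{\gamma}=\sum_{i\in I}\Mtestapp{l_i}{\gamma_i}$ handles both inclusions, with the zero tests isolating a single factor for $\subseteq$ and finiteness of $I$ giving boundedness and measurability of the finite sum for $\supseteq$. Nothing is missing.
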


\begin{theorem}
  The category $\INFSCOTTM$ is cartesian with projections
  $(\Proj i)_{i\in I}$ and tupling defined as in the category of sets
  and functions. The terminal object is the cone $\{0\}$ with
  measurability tests equal to $0$.
\end{theorem}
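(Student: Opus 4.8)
The plan is to assemble facts already recorded about the construction $\prod_{i\in I}P_i$ of Figure~\ref{fig:cartesian_product} and to verify the universal property at the level of underlying set-theoretic functions, where it is immediate. First I would note that $\prod_{i\in I}P_i$ is a genuine object of $\INFSCOTTM$: it is a complete cone with $\Cuball{\prod_{i\in I}P_i}=\prod_{i\in I}\Cuball{P_i}$ and with componentwise lubs of non-decreasing sequences (as observed just after Figure~\ref{fig:cartesian_product}), and the family $(\Mtest n{\prod_{i\in I}P_i})_{n\in\Nat}$ satisfies the three clauses of Definition~\ref{def:meas-tests} (as observed before Lemma~\ref{lemma:mes-path-tuple}).

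Next I would check that projections and tuplings are morphisms of $\INFSCOTTM$. Each $\Proj i$ is linear and Scott-continuous, hence stable; and for $\gamma\in\Mpathu n{\prod_{i\in I}P_i}$ Lemma~\ref{lemma:mes-path-tuple} gives $\gamma=\Tuple{\gamma_j}_{j\in I}$ with $\gamma_j\in\Mpath n{P_j}$, so $\Proj i\Comp\gamma=\gamma_i\in\Mpath n{P_i}$, proving $\Proj i$ measurable. Dually, given $f_i\in\INFSCOTTM(Q,P_i)$ for $i\in I$ and $f(y)=(f_i(y))_{i\in I}$, one has $f(\Cuball Q)\subseteq\prod_{i\in I}\Cuball{P_i}=\Cuball{\prod_{i\in I}P_i}$ because each $f_i(\Cuball Q)\subseteq\Cuball{P_i}$; $f$ is stable since $\Fdiff fyv=(\Fdiff{f_i}yv)_{i\in I}$ yields pre-stability while componentwise lubs yield Scott-continuity; and $f$ is measurable since for $\gamma\in\Mpathu nQ$ we get $f\Comp\gamma=\Tuple{f_i\Comp\gamma}_{i\in I}$ with each $f_i\Comp\gamma\in\Mpath n{P_i}$ by measurability of $f_i$, hence $f\Comp\gamma\in\Mpath n{\prod_{i\in I}P_i}$ by Lemma~\ref{lemma:mes-path-tuple} again.

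The universal property is then purely set-theoretic: $\Proj i\Comp\Tuple{f_j}_{j\in I}=f_i$ holds by construction, and any morphism $g$ with $\Proj i\Comp g=f_i$ for all $i$ satisfies $g(y)=(f_i(y))_{i\in I}$ for every $y\in\Cuball Q$, hence $g=\Tuple{f_j}_{j\in I}$. Finally, for the terminal object I would verify that the one-point cone $\{0\}$ with $\Mtest n{\{0\}}=\{0\}$ is a measurable complete cone (the clauses of Definition~\ref{def:meas-tests} being trivially met, the only test being the zero form), that the constant map $Q\to\{0\}$ is linear, Scott-continuous (hence stable) and measurable (its composite with any $\gamma\in\Mpathu nQ$ is the constant path $\Vect r\mapsto 0$, which lies in $\Mpath n{\{0\}}$ by Lemma~\ref{lemma:mes-path-cst}), and that it is the unique morphism $Q\to\{0\}$. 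Since binary products together with a terminal object suffice, this shows that $\INFSCOTTM$ is cartesian. There is no genuine obstacle in this argument; the only non-bookkeeping ingredient is the decomposition of measurable paths into a product supplied by Lemma~\ref{lemma:mes-path-tuple}.
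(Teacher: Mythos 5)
Your proposal is correct and follows essentially the same route as the paper: the paper reduces the theorem to checking that the projections and the tupling of measurable stable functions are measurable (the stability and cone-theoretic facts having been established just before, together with Lemma~\ref{lemma:mes-path-tuple}), and your verification fills in exactly these checks using the same ingredients. The additional details you supply (the componentwise formula $\Fdiff fyv=(\Fdiff{f_i}yv)_{i\in I}$, the use of Lemma~\ref{lemma:mes-path-cst} for the terminal object, the set-theoretic universal property) are precisely what the paper declares straightforward.
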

It suffices to prove that the usual projections are measurable and
that the tupling of measurable stable functions is
measurable, this is straightforward. 

\subsection{Function Space}\label{sec:pre-stable-fun-space}
Let $P$ and $Q$ be two measurable complete cones. We define
$\Ccarrierb{\Implm PQ}$ in Figure~\ref{fig:exponential_object} as the
set of all measurable stable functions from $P$ to $Q$, that is, the
set of all functions $f:\Cuball P\to\Ccarrierb Q$ such that there
exists $\epsilon>0$ such that $\epsilon f\in\INFSCOTTM(P,Q)$. It is
clear that $\Ccarrierb{\Implm PQ}$ is closed under pointwise addition
of functions and pointwise scalar multiplication: this results from
the fact that measurability tests are (parameterized) linear functions
and that measurable functions $\Real^n\to\Realp$ have the same closure
properties. We still have to check that, with the norm
$\Cnorm{\Implm PQ}\_$, $\Implm PQ$ is a complete cone
(Lemma~\ref{lemma:function_cone_complete}).



For this purpose, the next lemma which provides a characterization of the
order relation in the function space similar to Berry's stable
order~\cite{stability} will be essential.

\begin{lemma}\label{lemma:fun-order}
  Let $f,g\in\Ccarrierb{\Implm PQ}$. We have $f\leq g$ in $\Implm PQ$
  iff the following condition holds
  \begin{align*}
    \forall n\in\Nat, \forall\Vect u\in P^n\
    \sum_{i=1}^nu_i\in\Cuball P, \forall
    x\in\Cuballr P{\Vect u}, \Fdiffp+fx{\Vect u}+\Fdiffp-gx{\Vect
    u}\leq\Fdiffp+gx{\Vect u}+\Fdiffp-fx{\Vect u}\,.
  \end{align*}
\end{lemma}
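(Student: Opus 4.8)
The plan is to unfold the definition of the cone order on $\Implm PQ$ and reduce the statement to a claim about the pointwise difference $h := g-f$. By Definition~\ref{def:cone_order}, $f\leq g$ in $\Implm PQ$ means that there is $h\in\Ccarrierb{\Implm PQ}$ with $g=f+h$; since the algebraic operations of $\Implm PQ$ are computed pointwise (Figure~\ref{fig:exponential_object}), this amounts to asking that $f(x)\leq_Q g(x)$ for every $x\in\Cuball P$ and that the pointwise map $h:x\mapsto g(x)-f(x)$ belongs to $\Ccarrierb{\Implm PQ}$, i.e.\ is bounded, pre-stable, Scott-continuous and measurable (possibly after rescaling). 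So it suffices to establish two things: (i) the displayed inequality is equivalent to the conjunction ``$f(x)\leq_Q g(x)$ for all $x$, and $h$ is pre-stable''; and (ii) under the standing hypothesis that $f$ and $g$ are stable and measurable, these two conditions already force $h$ to be Scott-continuous and measurable.

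For (i), the key observation is that once $f(x)\leq_Q g(x)$ holds for every $x$, the map $h$ is well defined and, for every valid $\Vect u$ and $x$, one has $\Fdiffp\epsilon hx{\Vect u}=\Fdiffp\epsilon gx{\Vect u}-\Fdiffp\epsilon fx{\Vect u}$ for $\epsilon\in\{+,-\}$: each summand $h(x+\sum_{i\in I}u_i)=g(x+\sum_{i\in I}u_i)-f(x+\sum_{i\in I}u_i)$ is a legitimate element of $Q$, and a finite sum of well-defined cone differences behaves as expected, this being an instance of the ``calculus with subtraction'' recorded after Definition~\ref{def:cone_order}. Now the $n=0$ instance of the displayed condition reads $f(x)\leq_Q g(x)$; and for $n\geq1$, substituting $\Fdiffp-gx{\Vect u}=\Fdiffp-fx{\Vect u}+\Fdiffp-hx{\Vect u}$ and $\Fdiffp+gx{\Vect u}=\Fdiffp+fx{\Vect u}+\Fdiffp+hx{\Vect u}$, the displayed inequality becomes $\Fdiffp-hx{\Vect u}\leq_Q\Fdiffp+hx{\Vect u}$ after cancelling the common summand $\Fdiffp+fx{\Vect u}+\Fdiffp-fx{\Vect u}$ (the cancellation axiom of cones). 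Ranging over all $n\geq1$ this is exactly the defining condition of pre-stability of $h$ from $P$ to $Q$ (Definition~\ref{def:n-pre-stable}), so the displayed condition is equivalent to ``$f(x)\leq_Q g(x)$ for all $x$ and $h$ pre-stable''.

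For (ii), note first that $h$ is automatically bounded, since $\Cnorm Q{h(x)}\leq\Cnorm Q{g(x)}\leq\Fnorm g$ by monotonicity of the norm. For Scott-continuity, take a non-decreasing sequence $(x_m)_{m\in\Nat}$ in $\Cuball P$ with lub $x$; then $(h(x_m))_m$ is non-decreasing (as $h$, being $0$-pre-stable, is non-decreasing) and bounded, hence has a lub $h^\ast$ in $Q$. Applying Scott-continuity of addition (Lemma~\ref{lemma:add-Scott}) to the non-decreasing bounded sequence $(f(x_m),h(x_m))_m$ in $Q\times Q$, together with Scott-continuity of $f$ and $g$, yields $g(x)=\sup_m g(x_m)=\sup_m\bigl(f(x_m)+h(x_m)\bigr)=f(x)+h^\ast$; since also $g(x)=f(x)+h(x)$, cancellation in $Q$ gives $h(x)=h^\ast$, i.e.\ $h$ commutes with the lub. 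For measurability, fix $n\in\Nat$ and $\gamma\in\Mpathu nP$; then $h\Comp\gamma$ has bounded image, and for any $k$ and $l\in\Mtest kQ$ the linearity of $l(\Vect r)$ gives $l(\Vect r)\bigl((h\Comp\gamma)(\Vect s)\bigr)=l(\Vect r)\bigl((g\Comp\gamma)(\Vect s)\bigr)-l(\Vect r)\bigl((f\Comp\gamma)(\Vect s)\bigr)$; since $f$ and $g$ are measurable we have $f\Comp\gamma,g\Comp\gamma\in\Mpath nQ$, so both terms on the right are measurable maps $\Real^{k+n}\to\Realp$, and their difference is a measurable map, landing in $\Realp$ because $l(\Vect r)$ is monotone for $\leq_Q$. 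Hence $h\Comp\gamma\in\Mpath nQ$ and $h$ is measurable. Thus $h$, after rescaling by some $\epsilon\in(0,1]$ so that $\epsilon h$ maps $\Cuball P$ into $\Cuball Q$, lies in $\INFSCOTTM(P,Q)$, so $h\in\Ccarrierb{\Implm PQ}$ and the equation $g=f+h$ witnesses $f\leq g$ in $\Implm PQ$.

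The main obstacle is step (ii): the displayed condition only records absolute monotonicity (pre-stability) of the pointwise difference, so one genuinely has to \emph{derive} the Scott-continuity and the measurability of $g-f$ from those of $f$ and $g$ — for the first using the cancellation law together with Scott-continuity of addition, and for the second using that measurability tests are (parametrized) linear forms, so that they commute with the difference and reduce everything to the elementary fact that a difference of measurable $\Realp$-valued functions is measurable. Everything else is bookkeeping with the partially defined subtraction on cones, the only real care being to check that each difference written down is legitimate, which always follows from $f\leq g$ pointwise and the monotonicity of the maps involved.
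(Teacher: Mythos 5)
Your proposal is correct and follows essentially the same route as the paper's proof: unfold the cone order to the pointwise difference $h=g-f$, identify the displayed condition with ``$f\leq g$ pointwise and $h$ pre-stable'', and then show that such an $h$ is automatically Scott-continuous (the paper does this by a direct order argument, you via Scott-continuity of addition plus cancellation — an inessential variation) and measurable via linearity of the measurability tests. No gaps.
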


\begin{lemma}\label{lemma:function_cone_complete}
  The cone $\Implm PQ$ is complete and the lubs in
  $\Cuballp{\Implm PQ}$ are computed pointwise.
\end{lemma}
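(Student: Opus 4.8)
The plan is to build the least upper bound of a non-decreasing sequence $(f_n)_{n\in\Nat}$ in $\Cuballp{\Implm PQ}$ pointwise, and then check, in order, that it lies in $\INFSCOTTM(P,Q)$ and that it is the supremum for the \emph{stable} order of Lemma~\ref{lemma:fun-order}; the purely algebraic cone axioms for $\Implm PQ$ are inherited from the pointwise operations and the sup-norm exactly as for $\ell_\infty$, so I treat them as routine, and I will use the easily checked identity $\Cuballp{\Implm PQ}=\INFSCOTTM(P,Q)$. First I would note that the $n=0$ instance of Lemma~\ref{lemma:fun-order} turns $f_n\leq f_{n+1}$ in $\Implm PQ$ into $f_n(x)\leq f_{n+1}(x)$ in $Q$ for every $x\in\Cuball P$, while $\Cnorm Q{f_n(x)}\leq\Cnorm{\Implm PQ}{f_n}\leq 1$; since $Q$ is complete I can define $f(x)=\sup_n f_n(x)\in\Cuball Q$. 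This already yields $\Cnorm{\Implm PQ}f=\sup_{x\in\Cuball P}\Cnorm Q{f(x)}\leq 1$ together with the pointwise-lub formula, so only membership in $\INFSCOTTM(P,Q)$ and leastness remain.

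The recurring mechanism is that every construction in sight commutes with the supremum over $n$. For pre-stability and Scott-continuity: since addition is Scott-continuous (Lemma~\ref{lemma:add-Scott}), iterating it shows that the finite sums $\Fdiffp\epsilon fx{\Vect u}$ (for $\epsilon\in\{+,-\}$) satisfy $\Fdiffp\epsilon fx{\Vect u}=\sup_n\Fdiffp\epsilon{f_n}x{\Vect u}$ for all admissible $x,\Vect u$, each term being of the form $f(x+\sum_{i\in I}u_i)=\sup_n f_n(x+\sum_{i\in I}u_i)$ with $(f_n(x+\sum_{i\in I}u_i))_n$ non-decreasing and bounded in the complete cone $Q$ (here one uses that $\Cuball P$ is downward closed, so the points $x+\sum_{i\in I}u_i$ stay in $\Cuball P$); pre-stability of $f$ then follows from $\Fdiffp- fx{\Vect u}=\sup_n\Fdiffp-{f_n}x{\Vect u}\leq\sup_n\Fdiffp+{f_n}x{\Vect u}=\Fdiffp+ fx{\Vect u}$ using pre-stability of each $f_n$. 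Monotonicity and boundedness of $f$ are clear, and continuity at a non-decreasing $(x_p)_p$ with $\sup_p x_p=x$ reduces to the elementary exchange $\sup_n\sup_p f_n(x_p)=\sup_p\sup_n f_n(x_p)$ for a doubly non-decreasing bounded family in a complete cone (both sides equal $\sup_k f_k(x_k)$). For measurability: given $\gamma\in\Mpathu rP$ and $l\in\Mtest kQ$, by Scott-continuity of the linear form $l(\Vect s)$ the map $(\Vect s,\Vect t)\mapsto l(\Vect s)(f(\gamma(\Vect t)))$ is the pointwise supremum of the non-decreasing sequence $(\Vect s,\Vect t)\mapsto l(\Vect s)(f_n(\gamma(\Vect t)))$ of measurable maps $\Real^{k+r}\to\Realp$ (each measurable since $f_n\Comp\gamma\in\Mpath rQ$), hence measurable; together with $f(\Cuball P)\subseteq\Cuball Q$ this gives $f\Comp\gamma\in\Mpath rQ$. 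Thus $f\in\INFSCOTTM(P,Q)=\Cuballp{\Implm PQ}$.

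It remains to verify the order assertions, and this is where the only genuine subtlety lies: the order on $\Implm PQ$ is Berry's stable order~\cite{stability} (Lemma~\ref{lemma:fun-order}), strictly finer than the pointwise order, so the fact that $f$ is a pointwise supremum does not by itself make it a supremum in $\Implm PQ$. To get $f_m\leq f$ I would compute, for admissible $x,\Vect u$, $\Fdiffp+{f_m}x{\Vect u}+\Fdiffp- fx{\Vect u}=\sup_{n\geq m}\bigl(\Fdiffp+{f_m}x{\Vect u}+\Fdiffp-{f_n}x{\Vect u}\bigr)\leq\sup_{n\geq m}\bigl(\Fdiffp+{f_n}x{\Vect u}+\Fdiffp-{f_m}x{\Vect u}\bigr)=\Fdiffp+ fx{\Vect u}+\Fdiffp-{f_m}x{\Vect u}$, the middle step being Lemma~\ref{lemma:fun-order} applied to $f_m\leq f_n$; hence $f_m\leq f$ by Lemma~\ref{lemma:fun-order}. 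For leastness, given any upper bound $g\in\Implm PQ$ of $(f_n)$, the same pattern with $f_m\leq g$ in place of $f_m\leq f_n$, together with $\Fdiffp-{f_m}x{\Vect u}\leq\Fdiffp- fx{\Vect u}$, yields $\Fdiffp+ fx{\Vect u}+\Fdiffp- gx{\Vect u}\leq\Fdiffp+ gx{\Vect u}+\Fdiffp- fx{\Vect u}$, i.e.\ $f\leq g$. So $f=\sup_n f_n$ in $\Cuballp{\Implm PQ}$ and the supremum is computed pointwise, as claimed. I expect the main nuisance to be purely bookkeeping: keeping the suprema honest (always taken over non-decreasing bounded sequences in a complete cone) and checking that the points $x+\sum_{i\in I}u_i$ at which $f$ and the $f_n$ are evaluated always remain in $\Cuball P$.
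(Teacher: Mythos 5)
Your proof is correct and follows essentially the same route as the paper's: define the candidate lub pointwise, check it lies in $\Cuballp{\Implm PQ}$ by commuting the sums $\Fdiffp\epsilon{}{}{}$ and the measurability tests with the supremum (Scott-continuity of addition, monotone limits of measurable functions), and then obtain both $f_m\leq f$ and leastness in the stable order from Lemma~\ref{lemma:fun-order} together with the same sup-exchange computation the paper uses. Nothing essential differs; your write-up merely spells out the Scott-continuity exchange of suprema and the leastness step that the paper leaves as "straightforward".
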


The second line of Figure~\ref{fig:exponential_object} defines a family
$(\Mtest n{\Implm PQ})_{n\in\Nat}$ of sets of measurability tests. Also in this case the conditions of Definition~\ref{def:meas-tests} are respected. The fact that indeed $(\Mtestfun\gamma m)(\Vect r)\in\Dual{(\Implm PQ)}$ clearly follows from the definition of the cone $\Implm PQ$ and from the fact that lubs in that cone are computed pointwise. Since $\Mpathu nP$ is non-empty (it contains at least the $0$-valued constant path $\zeta$) and $0\in\Mtest nQ$, we have $0=\Mtestfun\zeta 0\in\Mtest n{\Implm PQ}$. Let $\gamma\in\Mpathu nP$,
$m\in\Mtest nQ$ and let $h:\Real^p\to\Real^n$ be measurable. We have $(\Mtestfun\gamma m)\Comp h=\Mtestfun{(\gamma\Comp h)}{(m\Comp h)}$ and we know that $\gamma\Comp h\in\Mpathu pP$ (by Lemma~\ref{lemma:mes-path-cst}) and $m\Comp h\in\Mtest nQ$ (by assumption about $Q$) so $(\Mtestfun\gamma m)\Comp h\in\Mtest p{\Implm PQ}$. Last, with the same notations, let $f\in\Implm PQ$. Then the map $\Vect r\mapsto(\Mtestfun\gamma m)(\Vect r)(f)$ is measurable by definition of stable measurable functions. So
we have equipped $\Implm PQ$ with a collection of measurability tests.

Given
$(f,x)\in\Cuballp{(\Implm PQ)\times P} =\Cuballp{\Implm PQ}\times\Cuball
P$
we set $\Ev(f,x)=f(x)\in\Cuball Q$. It is clear that this function is
non-decreasing and Scott-continuous (because lubs of non-decreasing sequences of
functions are computed pointwise).

\begin{lemma}\label{lemma:stab-eval}
  The evaluation map $\Ev$ is stable and measurable, i.e.~ $\Ev\in\INFSCOTTM((\Implm PQ)\times P,Q)$.
\end{lemma}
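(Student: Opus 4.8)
The plan is to check the three requirements for $\Ev\in\INFSCOTTM((\Implm PQ)\times P,Q)$ in turn: that $\Ev$ sends the unit ball into $\Cuball Q$, that it is stable (pre-stable and Scott-continuous), and that it is measurable; the first two are routine and the third is the real content. Boundedness is immediate: for $f\in\Cuball{\Implm PQ}$ and $x\in\Cuball P$ we have $\Cnorm Q{f(x)}\le\sup_{x'\in\Cuball P}\Cnorm Q{f(x')}=\Cnorm{\Implm PQ}{f}\le 1$, so $\Ev$ maps $\Cuball{(\Implm PQ)\times P}=\Cuball{\Implm PQ}\times\Cuball P$ into $\Cuball Q$.

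For stability I would apply Lemma~\ref{lemma:sep-lin-cont} with the roles $(P,Q,R)$ there played here by $(\Implm PQ,\,P,\,Q)$. For fixed $x\in\Cuball P$, the map $f\mapsto f(x)$ is linear on the whole carrier $\Ccarrier{\Implm PQ}$ (the algebraic operations of $\Implm PQ$ are defined pointwise) and Scott-continuous on $\Cuball{\Implm PQ}$ (lubs in $\Implm PQ$ are computed pointwise, Lemma~\ref{lemma:function_cone_complete}). For fixed $f\in\Ccarrier{\Implm PQ}$, writing $f=\tfrac1\epsilon(\epsilon f)$ with $\epsilon f\in\INFSCOTTM(P,Q)$, the map $x\mapsto f(x)$ is a non-negative scalar multiple of a stable function, hence pre-stable (the operators $\Delta^{+}$, $\Delta^{-}$ of Definition~\ref{def:n-pre-stable} are homogeneous in the function, so scaling by an element of $\Realp$ preserves the defining inequality) and Scott-continuous. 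Lemma~\ref{lemma:sep-lin-cont} then gives that $\Ev$ is pre-stable and Scott-continuous from $(\Implm PQ)\times P$ to $Q$, and together with the boundedness above this yields $\Ev\in\INFSCOTT((\Implm PQ)\times P,Q)$.

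The core step is measurability. Fix $n\in\Nat$ and $\delta\in\Mpathu n{(\Implm PQ)\times P}$. Since the unit ball of a product is the product of the unit balls, Lemma~\ref{lemma:mes-path-tuple} lets us write $\delta=(\gamma_1,\gamma_2)$ with $\gamma_1\in\Mpathu n{\Implm PQ}$ and $\gamma_2\in\Mpathu nP$, so that $(\Ev\Comp\delta)(\Vect s)=\gamma_1(\Vect s)(\gamma_2(\Vect s))$, which is bounded because it lands in $\Cuball Q$. It remains to show that for every $k\in\Nat$ and every $m\in\Mtest kQ$ the map $(\Vect r,\Vect s)\mapsto m(\Vect r)\bigl((\Ev\Comp\delta)(\Vect s)\bigr)$ lies in $\Mfun{k+n}$. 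Here I would diagonalise: let $\gamma'\colon\Real^{k+n}\to\Cuball P$ be $\gamma'(\Vect r,\Vect s)=\gamma_2(\Vect s)$ and $m'\colon\Real^{k+n}\to\Dual Q$ be $m'(\Vect r,\Vect s)=m(\Vect r)$; both arise by precomposing $\gamma_2$, resp.\ $m$, with a coordinate projection, so $\gamma'\in\Mpathu{k+n}P$ by Lemma~\ref{lemma:mes-path-cst} and $m'\in\Mtest{k+n}Q$ by the closure clause of Definition~\ref{def:meas-tests}. Hence $l:=\Mtestfun{\gamma'}{m'}\in\Mtest{k+n}{\Implm PQ}$ and $l(\Vect r,\Vect s)(f)=m(\Vect r)\bigl(f(\gamma_2(\Vect s))\bigr)$. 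Applying Definition~\ref{def:measurable_path} to $\gamma_1$ with this test, the map $(\Vect r,\Vect s,\Vect t)\mapsto l(\Vect r,\Vect s)(\gamma_1(\Vect t))=m(\Vect r)\bigl(\gamma_1(\Vect t)(\gamma_2(\Vect s))\bigr)$ is in $\Mfun{k+2n}$; precomposing with the measurable map $h(\Vect r,\Vect s)=(\Vect r,\Vect s,\Vect s)$ shows $(\Vect r,\Vect s)\mapsto m(\Vect r)\bigl(\gamma_1(\Vect s)(\gamma_2(\Vect s))\bigr)=m(\Vect r)\bigl((\Ev\Comp\delta)(\Vect s)\bigr)$ is in $\Mfun{k+n}$. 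Thus $\Ev\Comp\delta\in\Mpath nQ$, and $\Ev$ is measurable.

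I expect the diagonalisation to be the only delicate point. To test $m$ against the ``applied'' value $\gamma_1(\Vect s)(\gamma_2(\Vect s))$ one must first reinterpret $\gamma_2(\Vect s)$ as the argument supplied to the higher-order path $\gamma_1$, which is exactly what a measurability test $\Mtestfun{\gamma'}{m'}$ of $\Implm PQ$ does, and only afterwards collapse the two independent parameter blocks into a single one via $h$. This is precisely why measurability tests are allowed to carry $\Real^n$-parameters rather than being plain Scott-continuous linear forms, as anticipated in Remark~\ref{rk:answ-mtest-real-param}; with $0$-ary tests one could not feed the path $\gamma_2$ into a test for the function space.
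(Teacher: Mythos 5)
Your proof is correct and follows essentially the same route as the paper's: stability via Lemma~\ref{lemma:sep-lin-cont} using linearity and Scott-continuity of $\Ev$ in its functional argument, and measurability by packaging the path $\gamma_2$ and the test $m$ (precomposed with the coordinate projections) into a function-space test $\Mtestfun{\gamma'}{m'}$, applying Definition~\ref{def:measurable_path} to $\gamma_1$, and then precomposing with a measurable duplication map — the paper does exactly this, only with the parameter blocks ordered as $((\Vect s,\Vect r),\Vect s)$ instead of your $(\Vect r,\Vect s,\Vect s)$. Your extra checks (boundedness, and pre-stability of $x\mapsto f(x)$ for $f$ in the carrier via scaling) are fine and merely make explicit what the paper leaves implicit.
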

\begin{proof}[Proof (Sketch)]
  Stability results from
  Lemma~\ref{lemma:sep-lin-cont}, observing that $\Ev$ is linear and
  Scott-continuous in its first argument.
  The proof that $\Ev$ is measurable follows from a simple computation.
\end{proof}

\begin{remark}\label{rk:answ-mtest-real-param}
  The proof of Lemma~\ref{lemma:stab-eval} strongly uses the fact that
  our measurability tests have parameters in $\Real^n$, see
  Definition~\ref{def:meas-tests}.  If the measurability tests were
  just Scott-continuous linear forms (without real parameters), we
  would define\footnote{This is certainly the most natural definition
    in this simplified setting.} the measurability tests of
  $\Implm PQ$ as the $\Mtestfun xm\in\Dual{(\Implm PQ)}$ where
  $x\in\Cuball P$ and $m\in\Mtest{}Q$, defined by
  $(\Mtestfun xm)(f)=m(f(x))$. So, in the proof above we would only
  know that, for all $m\in\Mtest{}Q$ and $x\in\Cuball P$, the function
  $\Real^n\to\Realp$ which maps $\Vect r$ to $m(\phi(\Vect r)(x))$ is
  measurable. But we would have to prove that, for all
  $m\in\Mtest{}Q$, the function
  $f=m\Comp\Ev\Comp\Pair\phi\gamma:\Real^n\to\Realp$ is measurable. We
  have $f(\Vect r)=m(\phi(\Vect r)(\gamma(\Vect r)))$ and the
  measurability of this function does not result from what we know
  about $\phi$.
\end{remark}

\begin{theorem}\label{th:stab-CCC}
  The category $\INFSCOTTM$ is cartesian closed. The pair $(\Implm PQ,\Ev)$ is
  the object of morphisms from $P$ to $Q$ in $\INFSCOTTM$.
\end{theorem}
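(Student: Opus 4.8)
The plan is to verify the universal property of the exponential object: for measurable complete cones $P,Q,R$, I must exhibit a natural bijection between $\INFSCOTTM(R\times P,Q)$ and $\INFSCOTTM(R,\Implm PQ)$ realized by currying and uncurrying, compatible with the evaluation map $\Ev$ of Figure~\ref{fig:exponential_object}. Since $\INFSCOTTM$ is already known to be cartesian (previous theorem) and the underlying sets and functions are those of $\Set$, the bijection at the level of raw functions is the standard one; the entire content is that currying and uncurrying preserve the three properties defining morphisms of $\INFSCOTTM$: boundedness into the unit ball, pre-stability together with Scott-continuity (i.e.\ stability), and measurability.

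First I would treat uncurrying. Given $g\in\INFSCOTTM(R,\Implm PQ)$, set $h(y,x)=g(y)(x)$; this is $\Ev\Comp(g\times\Id_P)$, hence stable and measurable by Lemma~\ref{lemma:stab-eval} together with the fact that $\INFSCOTTM$ is cartesian and that stable measurable maps compose (Theorem~\ref{th:pre-stable_compose} and the remark after Definition~\ref{def:mes-pre-stable-function}). The norm bound is immediate since $\Fnorm{g}\le 1$ forces $g(y)\in\Cuball{\Implm PQ}$, so $\Cnorm Q{g(y)(x)}\le 1$ for all $x\in\Cuball P$. Second, and this is the real work, I would treat currying: given $h\in\INFSCOTTM(R\times P,Q)$, define $\Cur h:\Cuball R\to(\Cuball P\to Q)$ by $\Cur h(y)(x)=h(y,x)$. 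For fixed $y$, the map $x\mapsto h(y,x)$ is stable and measurable (restriction of $h$ along the stable measurable inclusion $x\mapsto(y,x)$, using Lemma~\ref{lemma:mes-path-cst} for the constant-path component), so $\Cur h(y)\in\Cuball{\Implm PQ}$; the norm bound on $\Cur h$ itself again follows from $\Fnorm h\le 1$. Pre-stability of $\Cur h$ as a map $R\to\Implm PQ$ is where Lemma~\ref{lemma:fun-order} (the stable-order characterization) becomes essential: the inequality $\Fdiffp-{\Cur h}y{\Vect v}\le\Fdiffp+{\Cur h}y{\Vect v}$ in the cone $\Implm PQ$ unfolds, via that lemma, into a pointwise inequality over all $x$ and all test-tuples $\Vect u$ on the $P$-side, and that inequality is exactly an instance of the pre-stability of $h$ on the product cone $R\times P$ evaluated at the combined tuple $\langle(\List v1k,0,\dots,0),(0,\dots,0,\List u1m)\rangle$, using that $\Delta$ is symmetric and linear in the displacement arguments (Lemma~\ref{lemma:fdiff-iter}). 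Scott-continuity of $\Cur h$ follows from Lemma~\ref{lemma:function_cone_complete}, since lubs in $\Implm PQ$ are computed pointwise and $h$ is Scott-continuous in its first argument by Lemma~\ref{lemma:sep-lin-cont}.

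The remaining point is measurability of $\Cur h$: for $\gamma\in\Mpathu nR$ I must show $\Cur h\Comp\gamma\in\Mpath n{\Implm PQ}$, i.e.\ that for every $k$ and every test $\Mtestfun\delta m\in\Mtest k{\Implm PQ}$ (with $\delta\in\Mpath k P$, $m\in\Mtest kQ$) the map $(\Vect r,\Vect s)\mapsto m(\Vect r)\bigl(\Cur h(\gamma(\Vect s))(\delta(\Vect r))\bigr)$ lies in $\Mfun{k+n}$. This is precisely $m(\Vect r)\bigl(h(\gamma(\Vect s),\delta(\Vect r))\bigr)$, and since $(\Vect r,\Vect s)\mapsto(\gamma(\Vect s),\delta(\Vect r))$ is a measurable path of arity $k+n$ in $R\times P$ — boundedness componentwise, and the test condition by Lemma~\ref{lemma:mes-path-tuple} together with the product measurability tests of Figure~\ref{fig:cartesian_product} — the measurability of $h$ (applied to this path and to the test $m$ precomposed with the projection $\Real^{k+n}\to\Real^k$) delivers exactly what is needed. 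I would then check naturality in $R$, which is routine since everything is defined by evident substitutions, and verify the triangle identity $\Ev\Comp(\Cur h\times\Id_P)=h$, which holds on underlying functions and hence in $\INFSCOTTM$. The main obstacle I anticipate is the pre-stability-of-currying step: getting the bookkeeping right so that the stable-order inequality of Lemma~\ref{lemma:fun-order} for $\Cur h$ matches an instance of pre-stability of $h$ on $R\times P$ — in particular correctly interleaving the displacement tuples on the two factors and invoking symmetry/linearity of $\Delta$ — is the delicate combinatorial part; everything else reduces to composition of known morphisms and the fact that lubs and tests in products and function spaces are computed pointwise.
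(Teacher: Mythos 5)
Your plan follows essentially the same route as the paper's proof: curry a morphism $h\in\INFSCOTTM(R\times P,Q)$ pointwise, check each section $h(y,\_)$ lands in $\Cuballp{\Implm PQ}$ via the constant-path/product argument, obtain pre-stability of the curried map by unfolding the order of $\Implm PQ$ through Lemma~\ref{lemma:fun-order} and matching the resulting inequality with pre-stability of $h$ at the interleaved tuple of displacements $(w_j,0)$ and $(0,u_i)$ (the parity bijection the paper writes out explicitly), get Scott-continuity from pointwise lubs (Lemma~\ref{lemma:function_cone_complete}), and measurability from the product-path computation with Lemma~\ref{lemma:mes-path-tuple} and precomposition with a measurable repetition map, with $\Ev$ handled by Lemma~\ref{lemma:stab-eval}. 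The only differences are cosmetic: you additionally spell out uncurrying, naturality and the triangle identity (which the paper leaves implicit), and your citation of Lemma~\ref{lemma:sep-lin-cont} for Scott-continuity in the first argument is unnecessary (it is just a restriction of the joint Scott-continuity of $h$), but these do not affect correctness.
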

\begin{proof}
  Let $f\in\INFSCOTTM(R\times P,Q)$. Let $z\in\Cuball R$, consider the
  function $f_z:\Cuball P\to\Cuball Q$ defined by
  $f_z(x)=f(z,x)$. This function is clearly non-decreasing and
  Scott-continuous. Let us prove that it is pre-stable. Let
  $\Vect u\in\Cuball P^n$ be such that $\sum_{i=1}^nu_i\in\Cuball P$
  and let $x\in\Cuballr P{\Vect u}$. We have:
  $\Fdiffp\epsilon{f_z}{x}{\Vect u}= \sum_{I\in\Cocard\epsilon
    n}f_z(x+\sum_{i\in I}u_i) =\sum_{I\in\Cocard\epsilon
    n}f(z,x+\sum_{i\in I}u_i) =\sum_{I\in\Cocard\epsilon
    n}f((z,x)+\sum_{i\in I}(0,u_i)) =\Fdiffp\epsilon
  f{(z,x)}{(0,u_1),\dots,(0,u_n)} $.

So we have $\Fdiffp-{f_z}{x}{\Vect u}\leq\Fdiffp+{f_z}{x}{\Vect u}$ by
the assumption that $f$ is pre-stable. We prove now that $f_z$ is
measurable. Let $n\in\Nat$ and $\gamma\in\Mpathu nP$, we must
prove that $f_z\Comp\gamma\in\Mpath nQ$. So let $k\in\Nat$ and let $m\in\Mtest
kQ$, we must prove that $\Mtestapp m{(f_z\Comp\gamma)}\in\Mfun{k+n}$. Let $\Vect
r\in\Real^k$ and $\Vect s\in\Real^n$, we have
$
  \Mtestapp m{(f_z\Comp\gamma)}(\Vect r,\Vect s)
  = m(\Vect r)(f(z,\gamma(\Vect s)))
  = \Mtestapp m{(f\Comp\Pair\zeta\gamma)}(\Vect r,\Vect s)
$,
where $\zeta\in\Mpathu nR$ is the measurable path defined by
$\zeta(\Vect s)=z$ (using Lemma~\ref{lemma:mes-path-cst}).  We know that
$f\Comp\Pair\zeta\gamma\in\Mpath nQ$ because $f$ is measurable and
$\Pair\zeta\gamma\in\Mpath{n}{P\times Q}$ by Lemma~\ref{lemma:mes-path-tuple},
hence $\Mtestapp m{(f\Comp\Pair\zeta\gamma)}\in\Mfun{k+n}$.  So
$f_z\in\Cuballp{\Implm PQ}$.

Let $g:\Cuball R\to\Cuballp{\Implm PQ}$ be the function defined by
$g(z)=f_z$. We prove that $g$ is pre-stable. Let
$\Vect w\in\Cuball R^p$ be such that $\sum_{j\in J}w_j\in\Cuball R$ and let
$z\in\Cuballr R{\Vect w}$. We have to prove that
\begin{align*}
  h^-=\Fdiffp-gz{\Vect w}\leq\Fdiffp+gz{\Vect w}=h^+
\end{align*}
in $\Cuballp{\Implm PQ}$; we apply Lemma~\ref{lemma:fun-order}. So let
$\Vect u\in\Cuball P^n$ and $x\in\Cuballr P{\Vect u}$, we must prove that
\begin{align}\label{eq:cur-inf-pre-stable}
  \Fdiffp+{h^-}{x}{\Vect u}+\Fdiffp-{h^+}{x}{\Vect u}\leq
  \Fdiffp+{h^+}{x}{\Vect u}+\Fdiffp-{h^-}{x}{\Vect u}
\end{align}
For $\epsilon,\epsilon'\in\{+,-\}$, we have
$
  \Fdiffp\epsilon{h^{\epsilon'}}x{\Vect u}
  =\sum_{I\in\Cocard\epsilon n}h^{\epsilon'}(x+\sum_{i\in I}u_i)
  =\sum_{I\in\Cocard\epsilon n}(\Fdiffp{\epsilon'} gz{\Vect w})(x+\sum_{i\in
    I}u_i)
 =\sum_{I\in\Cocard\epsilon n}
    \sum_{J\in\Cocard{\epsilon'}p}g(z+\sum_{j\in J}w_j)(x+\sum_{i\in I}u_i)
  =\sum_{I\in\Cocard\epsilon n}
    \sum_{J\in\Cocard{\epsilon'}p}f(z+\sum_{j\in J}w_j,x+\sum_{i\in I}u_i)
  =\sum_{I\in\Cocard\epsilon n}
    \sum_{J\in\Cocard{\epsilon'}p}f((z,x)+\sum_{q\in J\cup(p+I)}t_q)
   $, 
where $\Vect t\in\Cuballp{R\times P}^{p+n}$ is defined by
\begin{align*}
  t_q=
  \begin{cases}
    (w_q,0) & \text{ if } q\leq p\\
    (0,u_{q-p}) & \text{ if } p+1\leq q\leq  p+n
  \end{cases}
\end{align*}
and $p+I=\{p+i\St i\in I\}$.  Now observe that the map
$(J,I)\mapsto J\cup(p+I)$ defines a bijection
\begin{itemize}
\item between $(\Cocard+p\times\Cocard-n)\cup(\Cocard-p\times\Cocard+n)$ and
  $\Cocard-{p+n}$
\item and between $(\Cocard+p\times\Cocard+n)\cup(\Cocard-p\times\Cocard-n)$ and
  $\Cocard+{p+n}$.
\end{itemize}
It follows that we have
\begin{align*}
  \Fdiffp+{h^-}{x}{\Vect u}+\Fdiffp-{h^+}{x}{\Vect u}
  &=\Fdiffp-f{(z,x)}{\Vect t},&
  \Fdiffp+{h^+}{x}{\Vect u}+\Fdiffp-{h^-}{x}{\Vect u}
  &=\Fdiffp+f{(z,x)}{\Vect t}
\end{align*}
and hence~\Eqref{eq:cur-inf-pre-stable} holds because $f$ is
pre-stable. It also follows that $g$ is pre-stable, and Scott-continuity
is proven straightforwardly. 

Now we prove that $g$ is measurable. Let $\eta\in\Mpathu nR$, we must prove that
$g\Comp\eta\in\Mpath n{\Implm PQ}$. So let $k\in\Nat$, $\gamma\in\Mpathu kP$
and $m\in\Mtest kQ$, we must prove that
$\Mtestapp{(\Mtestfun\gamma m)}{(g\Comp\eta)}\in\Mfun{k+n}$.

Let $\Vect r\in\Real^k$ and $\Vect s\in\Real^n$, we have
  $(\Mtestapp{(\Mtestfun\gamma m)}{(g\Comp\eta)})(\Vect r,\Vect s)
  = (\Mtestfun\gamma m)(\Vect r)(g(\eta(\Vect s)))
  = m(\Vect r)(g(\eta(\Vect s)(\gamma(\Vect r)))$. By defintion of $g$, this is equal to $m(\Vect r)(f(\eta(\Vect s),\gamma(\Vect r))= \Mtestapp m{(f\Comp(\eta\times\gamma))}(\Vect r,\Vect s,\Vect r)$.

We know that $\eta\times\gamma\in\Mpath{n+k}P$ because $\eta$ and $\gamma$ are
measurable paths (and hence $\eta\Comp\pi_1$ and $\gamma\Comp\pi_2$ are
measurable paths, for $\pi_1$ and $\pi_2$ the projections
$\Real^{n+k}\to\Real^n$ and $\Real^{n+k}\to\Real^k$, and we have
$\eta\times\gamma=\Pair{\eta\Comp\pi_1}{\gamma\Comp\pi_2}$).
Hence $\Mtestapp m{(f\Comp(\eta\times\gamma))}\in\Mfun{k+n+k}$ and therefore
$\Mtestapp{(\Mtestfun\gamma m)}{(g\Comp\eta)}\in\Mfun{k+n}$.

So we have proven that $g\in\INFSCOTTM(R,\Implm PQ)$, which ends the proof of
the Theorem.
\end{proof}

\subsection{Integrating measurable paths}\label{subsect:integr_meas_paths}

The map $\Diracf:\Real\to\Bmes\Real$ such that $\delta(r)=\Dirac r$ belongs to
$\Mpathu 1{\Bmes\Real}$ because, given $n\in\Nat$, $U\in\Tribu\Real$,
$\Vect r\in\Real^n$ and $r\in\Real$ we have
$\Mapp U(\Vect r)(\delta(r))=\Mchar U(r)$ (where $\Mchar U$ is the
characteristic function of $U$). Recall from Example~\ref{ex:meas_cone_with_meas} that the elements of
$\Mtest n{\Bmes\Real}$ are precisely these functions $\Mapp U$.

Therefore, given $f\in\INFSCOTTM(\Bmes\Real,P)$, the function $f\Comp\Diracf$
is a measurable path from $\Real$ to $P$.

We have now to check that such paths are sufficiently regular for being
``integrated''. More precisely, given $\gamma\in\Mpath 1P$, we would like to
define a linear bounded map $\Mext\gamma:\Bmes\Real\to P$ by integrating in $P$
(using its algebraic structure and completeness):
\begin{align*}
  \Mext\gamma(\mu)=\int\gamma(r)\mu(dr)
\end{align*}
but we do not know  yet how to do that in general. 

We will focus instead on the (not so) particular case where
$P=(\Implm Q{\Bmes X})$ for a cone $Q$ and a measurable space $X$, which will be
sufficient for our purpose in this paper (every cone which is the denotation of a \ppcf{} type is isomorphic to a cone of the form $\Implm Q{\Bmes \real}$ for some $Q$).

\begin{lemma}\label{lemma:int-lin-cont-mes}
  Let $X$ be a measurable space and let $f:X\to\Realp$ be a measurable and
  bounded function. Then the function $F:\Bmes X\to\Realp$ defined by
  $F(\mu)=\int f(x)\mu(dx)$
  is linear and Scott-continuous.
\end{lemma}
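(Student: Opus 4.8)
The plan is to check, one by one, the defining conditions: that $F$ is a well-defined linear map $\Bmes X\to\Realp$, and that its restriction to $\Cuball{\Bmes X}$ is a bounded, non-decreasing map commuting with lubs of non-decreasing sequences, as required by Definition~\ref{def:linear_scott} (both $\Bmes X$ and $\Realp$ being complete cones by the examples above). Throughout, set $c=\sup_{x\in X}f(x)$, which is finite since $f$ is bounded.

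First I would dispose of the routine parts. For $\mu\in\Bmes X$ one has $F(\mu)=\int f\,d\mu\leq c\,\mu(X)<\infty$, so $F$ is well defined, and if $\mu\in\Cuball{\Bmes X}$ then $\mu(X)\leq 1$ gives $F(\mu)\leq c$, so $F$ is bounded with $\Fnorm F\leq c$. Linearity of $F$ is just linearity of the Lebesgue integral in the measure (immediate for simple functions, then for bounded measurable $f$ by monotone convergence). Monotonicity is then automatic: if $\mu\leq_{\Bmes X}\nu$ then $\nu-\mu$ is again a bounded measure (Example~\ref{ex:measures-cone}), so $F(\nu)=F(\mu)+\int f\,d(\nu-\mu)\geq F(\mu)$ because $f\geq 0$.

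The one step needing care is commutation with lubs. Let $(\mu_n)_{n\in\Nat}$ be non-decreasing in $\Cuball{\Bmes X}$ with $\mu=\sup_n\mu_n$, and recall that in $\Bmes X$ this lub is computed pointwise, $\mu(U)=\sup_n\mu_n(U)$ for every $U\in\Tribu X$ (Example~\ref{ex:meas_cone_with_meas}). I would first handle a simple function $s=\sum_{i=1}^k a_i\Mchar{U_i}$ with $a_i\in\Realp$: using linearity of $F$, the pointwise formula for $\mu$, and the fact that a \emph{finite} sum of non-decreasing sequences of reals has supremum equal to the supremum of the sums,
\[
  \int s\,d\mu=\sum_{i=1}^k a_i\,\mu(U_i)=\sum_{i=1}^k a_i\sup_n\mu_n(U_i)
  =\sup_n\sum_{i=1}^k a_i\,\mu_n(U_i)=\sup_n\int s\,d\mu_n.
\]
For general bounded measurable $f$, pick the usual dyadic simple approximations $s_\ell$, which are non-decreasing, satisfy $0\leq s_\ell\leq f\leq c$, and converge pointwise up to $f$. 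The ordinary monotone convergence theorem (non-decreasing functions against a fixed measure) gives $\int f\,d\nu=\sup_\ell\int s_\ell\,d\nu$ for every bounded measure $\nu$, whence
\[
  F(\mu)=\int f\,d\mu=\sup_\ell\int s_\ell\,d\mu=\sup_\ell\sup_n\int s_\ell\,d\mu_n
  =\sup_n\sup_\ell\int s_\ell\,d\mu_n=\sup_n\int f\,d\mu_n=\sup_n F(\mu_n),
\]
where the third equality is the simple-function case, the fourth is the interchange of two suprema (valid for any doubly indexed family in $[0,\infty]$), and the last is monotone convergence again for each $\mu_n$. All the terms are $\leq c$, hence lie in $\Realp$.

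The main obstacle — a mild one — is exactly this last computation: the supremum is taken over a sequence of \emph{measures}, not of functions, so the monotone convergence theorem does not apply directly. The detour through simple functions, where the pointwise computation of lubs in $\Bmes X$ is available, combined with the harmless interchange of the two suprema, is what makes the argument go through.
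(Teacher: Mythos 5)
Your proof is correct and follows essentially the same route as the paper: establish the statement for simple functions, then approximate $f$ from below by a non-decreasing sequence of simple functions and pass to the limit by the monotone convergence theorem, the only extra content being that you spell out the interchange of the two suprema that the paper's one-line conclusion leaves implicit. No gaps.
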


\begin{lemma}\label{lemma:inf-cont-to-mes}
  Let $Q$ be a cone and $X$ be a measurable space. A function
  $f:\Cuball Q\to\Bmes X$ is stable iff for all $U\in\Tribu X$, the
  function $f_U:\Cuball Q\to\Realp$ defined by $f_U(y)=f(y)(U)$ is stable.
\end{lemma}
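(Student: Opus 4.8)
The plan is to reduce everything to the fact that all of the structure of $\Bmes X$ entering the definition of stability — the cone order, finite sums, and least upper bounds of non-decreasing sequences — is computed pointwise on measurable sets. So I would first recall the three relevant facts, all already available: $\mu\leq_{\Bmes X}\nu$ iff $\mu(U)\leq\nu(U)$ for every $U\in\Tribu X$ (the observation after Example~\ref{ex:measures-cone}); addition and scalar multiplication in $\Bmes X$ are defined pointwise (Example~\ref{ex:measures-cone}); and the lub of a non-decreasing bounded sequence in $\Bmes X$ is computed pointwise (Example~\ref{ex:meas_cone_with_meas}). I would also use $\Cnorm{\Bmes X}\mu=\mu(X)$ together with monotonicity of measures, $\mu(U)\leq\mu(X)$.

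The single computation underlying both implications is that, for every $n$, every $\Vect u$ with $\sum_iu_i\in\Cuball Q$, every $x\in\Cuballr Q{\Vect u}$, every $U\in\Tribu X$, and $\epsilon\in\{+,-\}$,
\[
  \Fdiffp\epsilon fx{\Vect u}(U)=\sum_{I\in\Cocard\epsilon n}f\Bigl(x+\sum_{i\in I}u_i\Bigr)(U)=\Fdiffp\epsilon{f_U}x{\Vect u}\,,
\]
by pointwiseness of finite sums. With this in hand the proof is a checklist over the constituents of stability (pre-stability, plus Scott-continuity $=$ monotonicity $+$ boundedness $+$ preservation of lubs of non-decreasing sequences): (i) $f$ is non-decreasing iff each $f_U$ is, by the order fact; (ii) if $f$ is bounded then each $f_U$ is, since $f(y)(U)\leq f(y)(X)$, and conversely if each $f_U$ is bounded then in particular $f_X$ is, whence $\Cnorm{\Bmes X}{f(y)}=f_X(y)$ shows $f$ is bounded by $\Fnorm{f_X}$; (iii) assuming $f$ (equivalently all $f_U$) non-decreasing and bounded, for a non-decreasing $(y_m)$ with lub $y$ in $\Cuball Q$ the measure $f(y)$ is the lub of $(f(y_m))$ in $\Bmes X$ iff $f_U(y)=\sup_mf_U(y_m)$ for every $U$ — the forward direction because lubs in $\Bmes X$ are pointwise, the converse because $f(y)$ is an upper bound by monotonicity and, for any other upper bound $\nu$, $\nu(U)\geq\sup_mf_U(y_m)=f_U(y)$ when $f_U$ is Scott-continuous; (iv) $f$ is $n$-pre-stable iff each $f_U$ is, by the displayed identity and the order fact, noting that the quantifications over $n$, $\Vect u$, and $x$ depend only on the source cone $Q$ and so are the same for $f$ and for each $f_U$. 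Assembling (i)–(iv) gives the equivalence.

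There is no genuine obstacle; the only points to keep straight are (a) the norm bookkeeping in (ii), where boundedness of $f$ must be routed through $f_X$ using $\mu(U)\leq\mu(X)$, and (b) checking that the index set over which $n$-pre-stability quantifies ($\Vect u\in Q^n$ with $\sum_iu_i\in\Cuball Q$ and $x\in\Cuballr Q{\Vect u}$) is visibly identical for $f$ and for each $f_U$, so that the pointwise reduction of $\Fdiffp-fx{\Vect u}\leq\Fdiffp+fx{\Vect u}$ is legitimate. For the forward direction alone there is a shortcut, namely $f_U=\Mapp U\Comp f$ with $\Mapp U\colon\mu\mapsto\mu(U)$ linear and Scott-continuous, hence pre-stable, so Theorem~\ref{th:pre-stable_compose} applies after rescaling $f$ into the unit ball; but since the converse still needs the pointwise reconstruction above, I would present both directions uniformly in the pointwise style.
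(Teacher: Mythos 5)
Your proof is correct and follows essentially the same route as the paper's: the identity $\Fdiffp\epsilon fx{\Vect u}(U)=\Fdiffp\epsilon{f_U}x{\Vect u}$ combined with the pointwise description of the order and of lubs in $\Bmes X$ is exactly the paper's argument for the converse, and your composition shortcut $f_U=\Mapp U\Comp f$ with $\Mapp U$ linear and Scott-continuous is precisely how the paper handles necessity. Your additional bookkeeping (boundedness routed through $f_X$) merely spells out a point the paper leaves implicit.
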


\begin{theorem}\label{th:linear-extension-path}
  Let $Q$ be a cone and $X$ be a measurable space.  For any
  $\gamma\in\Mpathu 1{\Implm Q{\Bmes X}}$, there is a measurable
  stable (actually linear) function
  $\Mext\gamma:\Bmes\Real\to(\Implm Q{\Bmes X})$ such that
  $\Mext\gamma\Comp\Diracf=\gamma$. This function is given by
  \begin{align*}
    \Mext\gamma(\mu)(y)(U)=\int\gamma(r)(y)(U)\mu(dr)
  \end{align*}
  for each $\mu\in\Bmes\Real$, $y\in Q$ and $U\in\Tribu X$.
\end{theorem}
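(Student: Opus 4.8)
The plan is to define $\Mext\gamma$ by the displayed formula and then verify, in order: (1) that the formula makes sense, i.e.~for each $\mu\in\Bmes\Real$ the assignment $(y,U)\mapsto\int\gamma(r)(y)(U)\,\mu(dr)$ really is an element of $\Implm Q{\Bmes X}$; (2) that $\Mext\gamma$ so defined is linear and bounded; (3) that it is stable and measurable; and (4) that $\Mext\gamma\Comp\Diracf=\gamma$, which is immediate from $\Diracf(r)=\Dirac r$ since $\int\gamma(r')(y)(U)\,\Dirac r(dr')=\gamma(r)(y)(U)$. The main work is in step~(1) and the measurability part of step~(3); linearity and the identity~(4) are routine.

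For step~(1): fix $\mu\in\Cuball{\Bmes\Real}$. For fixed $y\in Q$, the map $U\mapsto\int\gamma(r)(y)(U)\,\mu(dr)$ is a bounded measure on $X$ (countable additivity follows from the monotone convergence theorem applied to the pointwise-computed sums in $\Bmes X$, using that $\gamma(r)(y)\in\Bmes X$; boundedness uses that $\gamma(\Real)$ is bounded in $\Implm Q{\Bmes X}$). For the map $y\mapsto\Mext\gamma(\mu)(y)$ to lie in $\INFSCOTTM(Q,\Bmes X)$ up to a scalar, I invoke Lemma~\ref{lemma:inf-cont-to-mes}: it suffices to check that for each $U\in\Tribu X$ the map $y\mapsto\int\gamma(r)(y)(U)\,\mu(dr)$ is stable. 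Now $y\mapsto\gamma(r)(y)(U)$ is stable for each $r$ (it is $(\gamma(r))_U$ in the notation of Lemma~\ref{lemma:inf-cont-to-mes}, and $\gamma(r)\in\Implm Q{\Bmes X}$), and integrating a family of stable functions against $\mu$ preserves stability: non-decreasingness and pre-stability (the inequalities $\Fdiffp-\cdot x{\Vect u}\leq\Fdiffp+\cdot x{\Vect u}$ of Definition~\ref{def:n-pre-stable}) are preserved because integration against a non-negative measure is linear and monotone, and Scott-continuity is preserved by the monotone convergence theorem — this is exactly the reason, flagged in the paper's footnote about Selinger, that only \emph{countable} sups are required. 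Measurability of $y\mapsto\int\gamma(r)(y)(U)\,\mu(dr)$ as a function that sends measurable paths to measurable paths then follows from the measurable-path hypothesis on $\gamma$ together with Fubini–Tonelli (swapping the $\mu(dr)$ integral with the measurability tests).

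For step~(3) at the level of $\Mext\gamma$ itself: linearity of $\mu\mapsto\Mext\gamma(\mu)$ is clear since the integral is linear in $\mu$, and a linear Scott-continuous map is stable (Scott-continuity in $\mu$ is again monotone convergence, using that lubs in $\Bmes X$ and in the function cone are pointwise). It remains to check $\Mext\gamma\in\INFSCOTTM(\Bmes\Real,\Implm Q{\Bmes X})$, i.e.~that $\Mext\gamma$ sends measurable paths in $\Bmes\Real$ to measurable paths in $\Implm Q{\Bmes X}$. Take $\kappa\in\Mpathu n{\Bmes\Real}$; by Example~\ref{ex:meas_cone_with_meas} this is a stochastic kernel from $\Real^n$ to $\Real$. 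We must show that for each $k$, each $m=\Mtestfun{\gamma'}{m'}\in\Mtest k{\Implm Q{\Bmes X}}$ (with $\gamma'\in\Mpath kQ$, $m'=\Mapp{U}\in\Mtest k{\Bmes X}$), the function $(\Vect r,\Vect s)\mapsto m'(\Vect r)\bigl(\Mext\gamma(\kappa(\Vect s))(\gamma'(\Vect r))\bigr)=\int\gamma(t)(\gamma'(\Vect r))(U)\,\kappa(\Vect s)(dt)$ is measurable $\Real^{k+n}\to\Realp$. The integrand $(\Vect r,t)\mapsto\gamma(t)(\gamma'(\Vect r))(U)=(\Mtestapp{\Mapp U}{\gamma})(\Vect r,t,\ldots)$ is measurable since $\gamma$ is a measurable path and $\gamma'$ a measurable path in $Q$; and then integration of a measurable non-negative integrand against the stochastic kernel $\kappa$ yields a measurable function by the standard measurability-of-integrals lemma.

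The hard part will be keeping the bookkeeping of the various $\Real^n$-parameters straight while applying Fubini and the "measurability of $x\mapsto\int K(x,dy)g(x,y)$" lemma — essentially the same packaging difficulty that Remark~\ref{rk:answ-mtest-real-param} is warning about — but no new idea is needed beyond monotone convergence (for the domain-theoretic completeness conditions) and the elementary measure-theoretic facts recalled in Section~\ref{section:compendium}.
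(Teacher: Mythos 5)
Your proposal is correct, but it packages the argument differently from the paper. The paper does not work with the curried map directly: it first forms the uncurried function $\phi'(\mu,y)(U)=\int\gamma(r)(y)(U)\mu(dr)$, proves it is linear and Scott-continuous in $\mu$ (via Lemma~\ref{lemma:int-lin-cont-mes}) and stable in $y$ (the same linearity-of-integration plus monotone-convergence computation you give), glues these by Lemma~\ref{lemma:sep-lin-cont} to get $\phi'\in\INFSCOTTM(\Bmes\Real\times Q,\Bmes X)$ after checking its measurability against pairs $(\theta,\eta)$ of paths, and only then obtains $\Mext\gamma$ as the currying of $\phi'$, letting Theorem~\ref{th:stab-CCC} do all the bookkeeping about membership of each $\Mext\gamma(\mu)$ in $\Implm Q{\Bmes X}$ and about measurability with respect to the function-space tests $\Mtestfun{\gamma'}{m'}$. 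You instead verify the curried map by hand: for each $\mu$ you show $\Mext\gamma(\mu)\in\Cuball{\Implm Q{\Bmes X}}$ (stability via Lemma~\ref{lemma:inf-cont-to-mes}, path-measurability via Tonelli), and then you exploit the observation that $\mu\mapsto\Mext\gamma(\mu)$ is linear, so its pre-stability is automatic and you never need Lemma~\ref{lemma:sep-lin-cont} nor the combinatorial part of Theorem~\ref{th:stab-CCC}; the only genuinely non-trivial step left is the measurability of the curried map, which you reduce, exactly as the paper does for $\phi'$, to joint measurability of $(\Vect r,t)\mapsto\gamma(t)(\gamma'(\Vect r))(U)$ plus the standard fact that integrating a jointly measurable non-negative integrand against a stochastic kernel yields a measurable function. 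The trade-off: the paper reuses its CCC machinery and so writes less measure-theoretic bookkeeping; your route is more self-contained at the function-space level and makes the linearity shortcut explicit. Two small points you should tighten: Scott-continuity in $\mu$ is not literally the monotone convergence theorem (which concerns increasing sequences of integrands, not of measures) but is exactly Lemma~\ref{lemma:int-lin-cont-mes}, proved by simple-function approximation; and when you integrate the iterated differences $\Fdiffp\pm{f_r}y{\Vect u}$ and apply monotone convergence in the stability check, you should note that measurability of $r\mapsto\gamma(r)(y')(U)$ for each fixed $y'$ comes from applying the hypothesis on $\gamma$ with constant (arity-$0$) paths, as the paper does when it extracts its three conditions on $\gamma_0$.
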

\begin{proof}
Let $\gamma\in\Mpathu 1{\Implm Q{\Bmes X}}$, that we prefer to consider 
as a map
$\gamma_0:\Real \times \Cuball Q \times\Tribu X\to\Izu$
with $\gamma(r)(y)(U)=\gamma_0(r,y,U)$.
By Lemma~\ref{lemma:inf-cont-to-mes}, the fact that $\gamma$ is a measurable
path means that the following properties hold.
\begin{itemize}
\item For any $r\in\Real$ and $y\in\Cuball Q$, the map
  $U\mapsto\gamma_0(r,y,U)$ from $\Tribu X$ to $\Izu$ is a sub-probability
  measure;
\item for any $n\in\Nat$, $\eta\in\Mpathu nQ$ and $U\in\Tribu X$, the map
  $(r,\Vect r)\mapsto\gamma_0(r,\eta(\Vect r),U)$ from $\Real^{1+n}$ to
  $\Izu$ belongs to $\Mfun{1+n}$ (that is, is measurable);
\item for any $r\in\Real$ and $U\in\Tribu X$, the map $y\mapsto\gamma_0(r,y,U)$
  from $\Cuball Q$ to $\Realp$ is stable.
\end{itemize}

Therefore (by applying the second condition to $n=0$ and $\eta$ mapping the
empty sequence to $y$) we can define a function
$\phi:{\Bmes\Real}\times\Cuball Q\times\Tribu X\to\Realp$ by
$\phi(\mu,y,U)=\int\gamma_0(r,y,U)\mu(dr)$.

Let $\mu\in\Cuball{\Bmes\Real}$ and $y\in\Cuball Q$. The function
$\Tribu X\to\Izu$ which maps $U$ to $\phi(\mu,y,U)$ is $\sigma$-additive by
linearity and continuity of integration and defines therefore an element of
$\Cuball{\Bmes X}$. We denote by $\phi'$ the function
${\Bmes\Real}\times\Cuball Q\to\Bmes X$ defined by
$\phi'(\mu,y)(U)=\phi(\mu,y,U)\in\Realp$. This function is linear and
Scott-continuous in its first argument by Lemma~\ref{lemma:int-lin-cont-mes}.


Let $\mu\in\Cuball{\Bmes\Real}$ and $U\in\Tribu X$. We prove that the
map $f:\Cuball Q\to\Realp$ defined by $f(y)=\phi(\mu,y,U)$ is
stable. For any $\epsilon\in\{+,-\}$,
$n\in\Nat$, $\Vect u\in \Cuball Q^n$ such that $\sum_{i=1}^nu_i\in\Cuball Q$
and $y\in\Cuballr Q{\Vect u}$ one
has
$
  \Fdiffp\epsilon fy{\Vect u}=\int \Fdiffp\epsilon{f_r}{y}{\Vect u}\mu(dr)
$, 
where $f_r(y)=\gamma_0(r,y,U)$, by linearity of integration. Since the function
$f_r$ is pre-stable, we have
$\Fdiffp-{f_r}y{\Vect u}\leq\Fdiffp+{f_r}y{\Vect u}$ for each $r\in\Real$ and
hence $\Fdiffp-{f}y{\Vect u}\leq\Fdiffp+{f}y{\Vect u}$ as required. Given a
non-decreasing sequence $(y_n)_{n\in\Nat}$ in $\Cuball Q$, we must prove that
$f(\sup_n{y_n})=\sup_nf(y_n)$.  The sequence of measurable functions
$g_n:r\mapsto f_r(y_n)$ from $\Real$ to $\Izu$ is non-decreasing (for the pointwise
order) and satisfies $\sup_{n\in\Nat}g_n(r)=f_r(\sup_ny_n)$ by the last
condition on $\gamma_0$, and therefore, by
the monotone convergence theorem, we have
$f(\sup_{n\in\Nat}y_n)=\sup_{n\in\Nat}f(y_n)$.

So $\phi'$ is stable in its second argument (using the fact that the
order relation on measures in the cone $\Bmes X$ coincides with the
``pointwise order'': $\mu\leq\nu$ iff
$\forall U\in\Tribu X\ \mu(U)\leq\nu(U)$), and linear and
Scott-continuous in its first argument. Therefore, considered as a
function $\Cuball{\Bmes X}\times\Cuball Q\to\Bmes\Real$, $\phi'$ is
stable by Lemma~\ref{lemma:sep-lin-cont}. Now we must prove that this
function is measurable in the sense of
Definition~\ref{def:mes-pre-stable-function}.

So let $U\in\Tribu X$. Let $n\in\Nat$, $\theta\in\Mpathu n{\Bmes\Real}$ and
$\eta\in\Mpathu nQ$. The map $\rho:\Real^n\to\Izu$ defined by
$
  \rho(\Vect r)=\int\gamma_0(r,\eta(\Vect r),U)\theta(\Vect r,dr)
$
is measurable since $\theta$ is a stochastic 
kernel and
$g:\Real^{1+n}\to\Izu$ defined by $g(r,\Vect r)=\gamma_0(r,\eta(\Vect r),U)$ is
measurable by our assumptions about $\gamma$. Therefore
$\phi'\in\INFSCOTTM(\Bmes\Real\times Q,\Bmes X)$.

Let $\Mext\gamma\in\INFSCOTTM(\Bmes\Real,\Implm Q{\Bmes X})$ be the currying of
$\phi'$, that is $\Mext\gamma(\mu)(y)=\phi'(\mu,y)$. By
Theorem~\ref{th:stab-CCC}, $\Mext\gamma$ is stable and
measurable. Observe that this function is actually linear. 
\end{proof}

\section{Soundness and adequacy}\label{sect:sound_adeq}

%

\subsection{The Interpretation of $\ppcf$ into $\INFSCOTTM$}\label{sect:interp}
\begin{figure}
\begin{align*}
\semterm{x}(\vec g, a)&=a
&
\semterm{\num r}\vec g&=\Dirac r\qquad\qquad \semterm{\oracle}\vec g=\lambda_{[0,1]}
\\[4pt]
\semterm{\lambda x^\typea.\terma}\vec g&=a\mapsto\semterm{\terma}(\vec g,a)
&
\semterm{\num{f}(\terma_1,\dots,\terma_n)}\vec g&=U\mapsto (\semterm{\terma_1}\vec g\otimes\dots\otimes\semterm{\terma_n}\vec g)(f^{-1}(U))
\\[4pt]
\semterm{\terma\termb}\vec g&=\semterm{\terma}\vec g(\semterm{\termb}\vec g)
&
\semterm{\ifz \termc  \terma  \termb}\vec g&=(\semterm{\termc}\vec g\{0\})\,\semterm{\terma}\vec g+(\semterm{\termc}\vec g(\real\!\setminus\!\{0\}))\,\semterm{\termb}\vec g
\\[4pt]
\semterm{\fix\terma}\vec g&=\sup_n(({\semterm[]{\terma}}\vec g)^n0)
&
\semterm{\letterm x \terma  \termb}\vec g&=U\mapsto \int_\real \semterm{\termb}(\vec g,\Dirac r)(U) \, \semterm{\terma}\vec g(d r)\\[-30pt]
\end{align*}
\todoc[inline]{dans le point fixe, qui est $0$, vu son type, ce devrait être une mesure sur $\real$, est-ce que c'est la mesure nulle ? $0(U)=0$}
\caption{Interpretation of \ppcf{} in $\INFSCOTTM$. The terms are supposed typed as in Figure~\ref{fig_types}, and $\vec g\in\semtype{\Gamma}$, $a\in\semtype\typea$.  }\label{fig:interpretation}
\end{figure}
The interpretation of $\ppcf$ in $\INFSCOTTM$ extends the standard model of PCF
in a cpo-enriched category.  The ground type $\treal$ is denoted as the
cone $\Bmes\Real$ of  bounded measures over $\real$, the arrow $\typea\rightarrow\typeb$ by the object of morphisms $\Implm{\semtype{\typea}}{\semtype{\typeb}}$ and a sequence $\typea_1,\dots,\typea_n$ by the cartesian product $\prod_{i=1}^n\semtype{\typea_i}$ (recall Figure~\ref{fig:ccc}).
 The denotation of a judgement $\Gamma\vdash\terma:\typea$ is a morphism
$\semterm[\Gamma\vdash\typea]{\terma}\in\INFSCOTTM(\semtype{\Gamma},\semtype\typea)$,
given in Figure~\ref{fig:interpretation} by structural induction on
$\terma$. We omit the type exponent when clear from the
context. Notice that if $\Gamma\vdash\terma:\treal$, then for $\vec g\in\semtype\Gamma$, $\semterm \terma\vec g$ is a measure on $\real$. 

The fact that the definitions of Figure~\ref{fig:interpretation} lead
to morphisms in the category $\INFSCOTTM$ results easily from the
cartesian closeness of this category and from the algebraic and order
theoretic properties of its objects. The only construction which
deserves further comments is the $\texttt{let}$ construction. We use
the notations of Figure~\ref{fig:interpretation}, the typing context
is $\Gamma=(x_1:\typec_1,\dots,x_n:\typec_n)$.
Let
$Q=\semtype\Gamma=\semtype{\typec_1}\times\cdots\times\semtype{\typec_n}$. By
inductive hypothesis we have $\semterm M\in\INFSCOTTM(Q,\Bmes\Real)$ and
$\semterm{N'}\in\INFSCOTTM(\Bmes\Real,\Impl Q{\Bmes\Real})$ where
$N'=\lambda x_1^{C_1}\dots\lambda x_n^{C_n} N$ (up to trivial isos resulting
from the cartesian closeness of $\INFSCOTTM$). Then
$\semterm{N'}\Comp\Diracf\in\Mpathu 1{\Impl Q{\Bmes\Real}}$ because
$\Diracf\in\Mpathu 1{\Bmes\Real}$ (see
Section~\ref{subsect:integr_meas_paths}). Hence we define
$\Mext{(\semterm{N'}\Comp\Diracf)}\in\INFSCOTTM(\Bmes\Real,\Impl
Q{\Bmes\Real})$
by setting
$\Mext{(\semterm{N'}\Comp\Diracf)}(\mu)(\vec g)(U)=\int\semterm{N'}(\Dirac
r)(\vec g)(U)\mu(dr)=\int\semterm N(\vec g,\Dirac r)(U)\mu(dr)$
for $\mu\in\Bmes\Real$, $\vec g\in Q=\semtype\Gamma$ and $U\in\Tribu\Real$, by
Theorem~\ref{th:linear-extension-path} (remember that
$\Tseq{\Gamma,x:\treal}{N}{\treal}$). By cartesian closeness, we define 
 $f$ in $\INFSCOTTM(Q,\Bmes\Real)$ by 
$f(\vec g)(U)=\Mext{(\semterm{N'}\Comp\Diracf)}(\semterm M(\vec g))(\vec
g)(U)=\int\semterm N(\vec g,\Dirac r)(U)\semterm M(\vec g)(dr)$.
Hence, $\semterm{\letterm x \terma \termb}$ belongs to
$\INFSCOTTM(Q,\Bmes\Real)$ as required. Observe moreover that, for $r\in\Real$,
we have
$\semterm{\letterm x {\num r} \termb}\vec g=\semterm\termb(\vec g,\Dirac r)$.

\begin{example}\label{ex:sem_simple_examples}
Numerals are associated with Dirac measures and a functional constant $\num f$
yields the pushforward measure of the product of the measures denoting the arguments of $\num f$. For example, we have: $\semterm[\vdash \treal]{\num +(\num 3, \num 2)}=U\mapsto \Dirac{3}\otimes\Dirac{2}(\{(r_1,r_2)\text{ s.t. } r_1+r_2\in U\})=\Dirac{5}$. 

The construct $\mathtt{ifz}$ sums up the denotation of the two branches according to the probability that the first term evaluates to $\num 0$ or not. Given a measurable set $U\subseteq \real$,  a closed term $\termc$ of ground type and two closed terms $\terma,\termb$ of a type $\typea$, we have that, recalling the notation of Example~\ref{ex:if}, $\semterm[\vdash\typea]{\ifterm \termc U\terma \termb}
=(\semterm[\vdash\treal]{\realfun{\chi_U}(\termc)}(\real\setminus\{0\}))\, \semterm[\vdash\typea]\terma
+(\semterm[\vdash\treal]{\realfun{\chi_U}(\termc)}(\{0\}))\,\semterm[\vdash\typea]\termb
=(\semterm[\vdash\treal]\termc(U))\,\semterm\terma+(\semterm[\vdash\treal]\termc(\real\setminus U))\,\semterm\termb$.
\end{example}

\begin{example}
The two terms implementing the diagonal in Example~\ref{rk:letVSabs} have different semantics: for any measurable $U$ of $\Real$, for any $r,s\in U$, $r=s$ has value $0$ or $1$. Besides, the diagonal   $\{(r,s)\text{ s.t. } r=s\in \{1\}\}$ in $[0,1]^2$ has measure $0$, and its complementary $\{(r,s)\text{ s.t. } r=s\in \{0\}\}$ has measure $1$. Thus,
\begin{align*}
  \semterm[\vdash\treal]{(\lambda x.(x=x))\oracle}(U)
    &= (\lambda_{[0,1]}\otimes \lambda_{[0,1]})\{(r,s)\text{ s.t. } r=s\in U\}=\Dirac 0(U).
\end{align*}
On the contrary,   $\semterm[\vdash\treal]{\letterm x{\oracle}{x=x}}(U)=\Dirac 1(U)$. Indeed, $\semterm[\vdash\treal]{\letterm x{\oracle}{x=x}}(U)$ is
\[
\int_\Real (\Dirac r\otimes\Dirac r)\{(x,y)\text{ s.t. } x=y\in U\}\lambda_{[0,1]}(dr)
  =\int_\Real \Dirac 1(U)\lambda_{[0,1]}(dr)= \Dirac 1(U).
\]
\end{example}

\begin{example}\label{ex:sem_distributions}
Let us compute the semantics of the encodings of the distributions in Example~\ref{ex:gaussian}. Let $p\in[0,1]$, then $\semterm[\vdash\treal]{\bernoulli\, \num p}=p\Dirac 1+(1-p)\Dirac 0$ is given by,  for $U$ measurable: $\semterm[\vdash\treal]{\bernoulli\,\num p}(U)= \int_\Real \Dirac r\otimes\Dirac p(\{(x,y)\text{ s.t. } x\le y \in U\})\lambda_{[0,1]}(dr)$, this latter being equal to $\lambda_{[0,1]}([0,p])\Dirac 1(U)+\lambda_{[0,1]}((p,1])\Dirac 0(U)$.


The exponential distribution $\exp$ computes the probability that an exponential random variable belongs to $U$:
   $\semterm[\vdash\treal]{\exp}(U)=\semterm[\vdash\treal]{\letterm x\oracle{\realfun{-\log}(x)}}(U)= \int_\Real\Dirac r(\{x \text{ s.t. }-\log x\in U\}) \lambda_{[0,1]}(dr)=\int_\Real \chi_U(-\log r)\lambda_{[0,1]}(dr)$, which is equal to $\int_{\preal}\chi_U(s)\mathrm{e}^{-s}\lambda(ds)$ by substitution $r=\mathrm{e}^{-s}$.
We compute the semantics of $\normal$ and check that we get a normal
distribution: 
\begin{align*}
  \semterm[\vdash\treal]{\normal}(U)&=\semterm[\vdash\treal]{\letterm x{\oracle}{\letterm y{\oracle}{\realfun{(-2\log} (x)\realfun{)^{\tfrac 12}\, \cos(2\pi} y\realfun)}}}(U)\\
&= \int_{\Real^2}\chi_U(\sqrt{-2\log u}\, \cos(2\pi v))\lambda_{[0,1]}(du)\lambda_{[0,1]}(dv).
\end{align*}
 By polar substitution with 
$x= \sqrt{-2\log u} \cos(2\pi v),\ y= \sqrt{-2\log u} \cos(2\pi v)$, we then have: $ \semterm[\vdash\treal]{\normal}(U)=\tfrac 1{2\pi}\int_{\real^2}\chi_U(x)\mathrm{e}^{-\tfrac{(x^2+y^2)}{2}}\lambda(dx)\lambda(dy) =\tfrac 1{\sqrt{2\pi}} \int_U\mathrm{e}^{-\tfrac{x^2}2}\lambda(dx)$, which  is what we wanted.
\\
Similarly, $\semterm[\vdash\treal]{\gaussian\ \num r\,\num\sigma}(U)=\tfrac 1{\sqrt{2\pi}} \int_\real \chi_U(\sigma\,y+r)\mathrm e^{-\tfrac{y^2}2}\lambda(dy)=\tfrac 1{\sigma\sqrt{2\pi}} \int_U\mathrm e^{-(\tfrac{z-r}{\sigma})^2}\lambda(dz)$.
\end{example}

\begin{example}\label{ex:sem_expect}
Recall Example~\ref{ex:expectation}, let $\num f\in\Const$ and $M$ be a term of type $\treal$. We want to check that $\mathtt{expectation}_n \num f\, M$ corresponds to the $n$-th estimate of the expectation of $f$ with respect to the measure $\semterm[\vdash\treal]\terma$, meaning that $\semterm[\vdash\treal]{\mathtt{expectation}_n  \num f\, M}$ has the same measure as $\tfrac{f(\mathbf x_1)+\dots+f(\mathbf x_n)}{n}$ where $\mathbf x_i$'s 
are iid random variables of measure $\semterm[\vdash\treal]\terma$. For all $U\subseteq \real$ measurable, $\semterm[\vdash\treal]{\mathtt{expectation}_n  \num f\, M}(U)=\underbrace{\semterm[\vdash\treal]\terma\otimes\,\cdots\,\otimes \semterm[\vdash\treal]\terma}_n(\{(x_1,\cdots,x_n)\text{ s.t. }\tfrac{f(x_1)+\dots+f(x_n)}{n}\in U\})$ which is what we wanted.
\end{example}

The following two lemmas are standard and proven by structural induction.
\begin{lemma}[Substitution property]\label{lemma:substitution}
Given $y:\typeb,\Gamma\vdash\terma:\typea$ and $\Gamma\vdash \termb:\typeb$ we have, for every $\vec g\in\semtype\Gamma$, that $\semterm[y:\typeb,\Gamma\vdash\typea]{\terma}(\semterm[\Gamma\vdash\typeb]\termb\vec g)\vec g=\semterm[\Gamma\vdash\typea]{\terma\{\termb/y\}}\vec g$.
\end{lemma}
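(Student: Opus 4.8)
The plan is to proceed by structural induction on the term $\terma$, following the standard substitution lemma of PCF in a cpo-enriched cartesian closed category; since all the interpretation clauses of Figure~\ref{fig:interpretation} are defined compositionally/pointwise, the only arguably new cases, $\num f(\terma_1,\dots,\terma_n)$ and $\letterm x{\terma_1}{\terma_2}$, behave just like the others. Before starting I would fix, by $\alpha$-conversion, all bound variables of $\terma$ outside $\{y\}\cup\Fvar{\termb}$, which legitimates $(\lambda x^\typec.\terma')\{\termb/y\}=\lambda x^\typec.(\terma'\{\termb/y\})$ and $(\letterm x{\terma_1}{\terma_2})\{\termb/y\}=\letterm x{\terma_1\{\termb/y\}}{\terma_2\{\termb/y\}}$. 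I would also use, as an auxiliary fact proven by the same kind of routine induction (in practice proven simultaneously), the \emph{weakening property}: adding to the context a variable not occurring free in a term does not change its denotation, so $\semterm[\Gamma,x:\typec\vdash\typeb]{\termb}(\vec g,a)=\semterm[\Gamma\vdash\typeb]{\termb}\vec g$. Reorderings of typing contexts are handled silently via the symmetry isomorphisms of the cartesian product.

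For the base cases: if $\terma=y$, both sides reduce to $\semterm[\Gamma\vdash\typeb]{\termb}\vec g$ by the clause $\semterm{x}(\vec g,a)=a$; if $\terma$ is a variable of $\Gamma$, a numeral $\num r$, or $\oracle$, then $y\notin\Fvar{\terma}$, hence $\terma\{\termb/y\}=\terma$ and both sides agree by the relevant clause. For the compound cases $\terma=(\terma_1\terma_2)$, $\terma=\num f(\terma_1,\dots,\terma_n)$, $\terma=\ifz{\termc}{\terma_1}{\terma_2}$ and $\terma=\fix\terma_1$, I would unfold the clause, push the substitution into the immediate subterms, invoke the induction hypothesis on each of them, and fold the clause back; for $\fix$ the point is that the induction hypothesis gives the \emph{same} stable function on both sides, so the two least upper bounds of iterates coincide. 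For $\terma=\lambda x^\typec.\terma'$, I would compute $\semterm{\lambda x^\typec.\terma'}(\semterm{\termb}\vec g,\vec g)=a\mapsto\semterm{\terma'}(\semterm{\termb}\vec g,\vec g,a)$, rewrite $\semterm{\termb}\vec g$ as $\semterm[\Gamma,x:\typec\vdash\typeb]{\termb}(\vec g,a)$ using weakening, apply the induction hypothesis to $\terma'$ in the context $y:\typeb,\Gamma,x:\typec$, and recognise the outcome as $a\mapsto\semterm{\terma'\{\termb/y\}}(\vec g,a)$, that is $\semterm{(\lambda x^\typec.\terma')\{\termb/y\}}\vec g$.

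The representative new case is $\terma=\letterm x{\terma_1}{\terma_2}$ (so $\typea=\treal$). Writing $b=\semterm[\Gamma\vdash\typeb]{\termb}\vec g$, the defining clause gives $\semterm{\letterm x{\terma_1}{\terma_2}}(b,\vec g)=\bigl(U\mapsto\int_\real\semterm{\terma_2}(b,\vec g,\Dirac r)(U)\,\semterm{\terma_1}(b,\vec g)(dr)\bigr)$. By the induction hypothesis on $\terma_1$ we get $\semterm{\terma_1}(b,\vec g)=\semterm{\terma_1\{\termb/y\}}\vec g$, and by the induction hypothesis on $\terma_2$ in the context $y:\typeb,\Gamma,x:\treal$ — after rewriting $b=\semterm[\Gamma,x:\treal\vdash\typeb]{\termb}(\vec g,\Dirac r)$ by weakening — we get $\semterm{\terma_2}(b,\vec g,\Dirac r)=\semterm{\terma_2\{\termb/y\}}(\vec g,\Dirac r)$ for every $r\in\real$. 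Plugging these in makes the integrating measure and the integrand coincide, pointwise in $r$ and $U$, with those appearing in $\semterm{\letterm x{\terma_1\{\termb/y\}}{\terma_2\{\termb/y\}}}\vec g=\semterm{(\letterm x{\terma_1}{\terma_2})\{\termb/y\}}\vec g$; there is no measurability side condition to discharge, since both sides are already known to be well-defined morphisms of $\INFSCOTTM$ by the remarks following Figure~\ref{fig:interpretation}. The only genuinely delicate point of the whole proof is the bookkeeping with context products and the repeated appeals to weakening in the binder cases (abstraction and the \texttt{let} construct); once the weakening property is available, no real obstacle remains.
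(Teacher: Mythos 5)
Your proof is correct and follows exactly the route the paper intends: the paper gives no details, declaring the lemma ``standard and proven by structural induction,'' and your structural induction — with $\alpha$-conversion, the auxiliary weakening property for the binder cases, and the pointwise/integral computations for the $\num f$ and $\mathtt{let}$ clauses — is precisely that standard argument carried out in full. Nothing is missing; in particular your handling of the $\mathtt{let}$ case (identifying both the integrand and the integrating measure via the induction hypothesis and weakening) is the right way to treat the only clause not already covered by the usual PCF substitution lemma.
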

%
\begin{lemma}[Linearity evaluation context]\label{lemma:linearity_eval}
Let $y:\typeb,\Gamma\vdash E[y]:\typea$ for $E[\;]$ an evaluation context and $y$ a fresh variable. Then $\semterm[y:\typeb,\Gamma\vdash\typea]{E[y]}\in\INFSCOTTM(\semtype\typeb\times\semtype\Gamma,\semtype\typea)$ is a linear function in its first argument $\semtype\typeb$. 
\end{lemma}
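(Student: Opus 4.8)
The plan is to proceed by structural induction on the evaluation context $E[\;]$, following the grammar of Figure~\ref{subfig:evaluation_context}. For each shape of $E[\;]$ one writes $\semterm[y:\typeb,\Gamma\vdash\typea]{E[y]}$ using the interpretation clauses of Figure~\ref{fig:interpretation}, applies the induction hypothesis to the immediate subcontext, and checks that the extra operation wrapped around it preserves linearity in the distinguished argument $\semtype\typeb$. The base case $E[\;]=[\;]$ is immediate: $\semterm{y}(\vec g,a)=a$, which is the identity in $a$, hence linear.

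For the inductive cases I would argue as follows. If $E[\;]=E'[\;]\,\terma$, then $\semterm{E[y]}(a,\vec g)=\semterm{E'[y]}(a,\vec g)\bigl(\semterm{\terma}\vec g\bigr)$; by the induction hypothesis $a\mapsto\semterm{E'[y]}(a,\vec g)$ is linear into the function cone $\Implm{\semtype\typec}{\semtype\typea}$, and applying such a linear family of functions to the fixed argument $\semterm{\terma}\vec g$ is again linear, since addition and scalar multiplication in $\Implm{\semtype\typec}{\semtype\typea}$ are pointwise (Figure~\ref{fig:exponential_object}), i.e.\ $(f+g)(x)=f(x)+g(x)$ and $(\alpha f)(x)=\alpha f(x)$. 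If $E[\;]=\ifz{E'[\;]}{\terma}{\termb}$, then $\semterm{E[y]}(a,\vec g)=\bigl(\semterm{E'[y]}(a,\vec g)\{0\}\bigr)\semterm{\terma}\vec g+\bigl(\semterm{E'[y]}(a,\vec g)(\real\setminus\{0\})\bigr)\semterm{\termb}\vec g$; by IH $a\mapsto\semterm{E'[y]}(a,\vec g)$ is linear into $\Bmes\Real$, and $\mu\mapsto\mu(\{0\})$ and $\mu\mapsto\mu(\real\setminus\{0\})$ are linear scalar maps (Lemma~\ref{lemma:add-Scott} and the pointwise description of the cone order on measures), so the whole expression is a linear combination of the two fixed vectors $\semterm{\terma}\vec g,\semterm{\termb}\vec g$ with coefficients linear in $a$, hence linear. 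The case $E[\;]=\letterm{x}{E'[\;]}{\termb}$ is handled by the integral clause: $\semterm{E[y]}(a,\vec g)(U)=\int_\real\semterm{\termb}(\vec g,\Dirac r)(U)\,\semterm{E'[y]}(a,\vec g)(dr)$, and linearity of the Lebesgue integral in the measure (Lemma~\ref{lemma:int-lin-cont-mes}), composed with the IH giving linearity of $a\mapsto\semterm{E'[y]}(a,\vec g)$ in $\Bmes\Real$, yields linearity in $a$. Finally $E[\;]=\realfun{f}(\num r_1,\dots,E'[\;],\dots,\terma_n)$: here $\semterm{E[y]}(a,\vec g)=U\mapsto\bigl(\Dirac{r_1}\otimes\cdots\otimes\semterm{E'[y]}(a,\vec g)\otimes\cdots\otimes\semterm{\terma_n}\vec g\bigr)(f^{-1}(U))$, and since the tensor product of measures is linear in each argument separately and $U\mapsto\nu(f^{-1}(U))$ (pushforward) is linear in $\nu$, composing with the IH gives linearity in $a$. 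In every case the fact that the resulting function lies in $\INFSCOTTM(\semtype\typeb\times\semtype\Gamma,\semtype\typea)$ follows from subject reduction, the typing of $E[y]$, and the general fact (established in Sections~\ref{sect:ccc}--\ref{sect:interp}) that the interpretation of a typed term is a morphism of $\INFSCOTTM$; so the only genuinely new content is the linearity claim.

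The main obstacle is essentially bookkeeping rather than conceptual: one must be careful that ``linear in the first argument'' is read correctly through the currying isomorphisms of the cartesian closed structure, so that in the case $E[\;]=E'[\;]\,\terma$ the induction hypothesis really delivers a linear map into the \emph{cone} $\Implm{\semtype\typec}{\semtype\typea}$ (not merely a function that is linear for each fixed input), and then invoke the pointwise definition of the algebraic operations on that function cone to push linearity through evaluation at $\semterm{\terma}\vec g$. The $\mathtt{let}$ case is the only one using a genuinely analytic fact, namely linearity of integration in the integrating measure, which is exactly Lemma~\ref{lemma:int-lin-cont-mes}; all other cases reduce to linearity of the finitely many algebraic primitives (sum, scalar multiplication, tensor, pushforward, evaluation of measures on measurable sets) appearing in Figure~\ref{fig:interpretation}.
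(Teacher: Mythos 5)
Your proof is correct and matches the paper's intent: the paper dismisses this lemma as ``standard and proven by structural induction,'' and your case analysis on the evaluation-context grammar, using the interpretation clauses of Figure~\ref{fig:interpretation} together with pointwise operations on the function cone, linearity of $\mu\mapsto\mu(U)$, of the tensor/pushforward, and of integration in the measure (Lemma~\ref{lemma:int-lin-cont-mes}), is exactly that standard induction spelled out. No gaps to report.
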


\subsection{Soundness}

The soundness property states that the interpretation is invariant under reduction. In a non-deterministic case, this means that the semantics of a term is the sum of the semantics of all its possible one-step reducts, see e.g.~\cite{LairdMMP13}. In our setting, the reduction is a stochastic kernel, so this sum becomes an integral, i.e.~for all $\terma\in\Terms[\Gamma\vdash\typea]$,
\begin{equation}\label{eq:soundness_explicit}
	\semterm[\Gamma\vdash\typea]{\terma} = \int_{\Terms[\Gamma\vdash\typea]}\semterm[\Gamma\vdash\typea]{t}\Red(M,dt)
\end{equation}
The following lemma actually proves   that the above integral is a meaningful notation for the function mapping $\vec g\in\semtype\Gamma$,  and, supposing $\typea = \typeb_1\rightarrow\dots\rightarrow \typeb_k\rightarrow\treal$, $b_1\in\semtype{\typeb_1}$,\dots, $b_k\in\semtype{\typeb_k}$ and $U\in\Sigma_\real$, to $\int_{\Terms[\Gamma\vdash\typea]}\semterm[\Gamma\vdash\typea]{t}\vec
gb_1\dots b_k(U)\Red(M,dt)$, this latter being well-defined because $\semterm[\Gamma\vdash\typea]{t}\vec
gb_1\dots b_k(U)$ is measurable (Lemma~\ref{lemma:interp_kernel}) and $\Red(M,\_)$ is a measure (Proposition~\ref{prop:red_kernel}).
\begin{lemma}\label{lemma:interp_kernel}
Let $\Gamma\vdash\terma:\typea$, with $\typea=\typeb_1\rightarrow\dots\typeb_k\rightarrow\treal$. For all $i\leq k$, let $b_i\in\semtype{\typeb_i}$ and $\vec g\in\semtype{\Gamma}$, then the map $\terma\mapsto\semterm[\Gamma\vdash\typea]{\terma}\vec gb_1\dots b_k$ is a stochastic kernel from $\Terms[\Gamma\vdash\typea]$ to $\real$.
\end{lemma}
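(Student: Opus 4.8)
The plan is to verify the two defining properties of a stochastic kernel directly for the map $K(\terma,U)=\semterm[\Gamma\vdash\typea]{\terma}\vec gb_1\dots b_k(U)$, writing $\typea=\typeb_1\to\dots\to\typeb_k\to\treal$ and recalling that $\semterm{\terma}$ is a function $\Cuball{\semtype\Gamma}\to\Cuball{\semtype\typea}$, so that $\vec g$ and the $b_i$ are taken in the respective unit balls. That $K(\terma,\_)$ is a subprobability measure on $\real$ is immediate: iterating the fact that morphisms of $\INFSCOTTM$ and the elements of the function cones send unit balls into unit balls yields $\semterm{\terma}\vec gb_1\dots b_k\in\Cuball{\Bmes\real}$, and $\Cuball{\Bmes\real}$ is precisely the set of subprobability measures on $\real$. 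The real content of the lemma is therefore to show that, for each fixed measurable $U\subseteq\real$, the function $\terma\mapsto K(\terma,U)$ is measurable from $\Terms[\Gamma\vdash\typea]$ to $[0,1]$.

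For this I would first invoke the coproduct presentation of $\Sigma_{\Terms[\Gamma\vdash\typea]}$ in Equation~\eqref{eq:sigma_algebra_terms}, which reduces the claim to proving that, for each $n\in\Nat$ and each real-numeral-free $\fterma\in\TermsFree[\Gamma\vdash\typea]{n}$, the map $\real^n\to[0,1]$ sending $\Vect r$ to $\semterm{\fterma\Vect r}\vec gb_1\dots b_k(U)$ is measurable. Since $\fterma\Vect r=\fterma\{\num{r_1}/\varreal_1,\dots,\num{r_n}/\varreal_n\}$ and $\semterm{\num{r_i}}=\Dirac{r_i}$, iterating the Substitution Lemma~\ref{lemma:substitution} rewrites this as $\Vect r\mapsto\semterm{\fterma}(\Dirac{r_1},\dots,\Dirac{r_n},\vec g)b_1\dots b_k(U)$, where now $\semterm{\fterma}\in\INFSCOTTM((\Bmes\real)^n\times\semtype\Gamma,\semtype\typea)$. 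The decisive observation is that substituting numerals has thereby become \emph{composition with the canonical measurable path} $\Diracf\colon r\mapsto\Dirac r$, which lies in $\Mpathu1{\Bmes\real}$ (Section~\ref{subsect:integr_meas_paths}).

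From there the proof is assembly. For $1\le i\le n$ the map $\Vect r\mapsto\Dirac{r_i}$ is $\Diracf\Comp\Proj i$, hence belongs to $\Mpathu n{\Bmes\real}$ by Lemma~\ref{lemma:mes-path-cst}; the constant maps $\Vect r\mapsto\vec g$ and $\Vect r\mapsto b_i$ are measurable paths by the same lemma; and by Lemma~\ref{lemma:mes-path-tuple} the tuple $\gamma(\Vect r)=(\Dirac{r_1},\dots,\Dirac{r_n},\vec g,b_1,\dots,b_k)$ is a measurable path, i.e.\ lies in $\Mpathu n{((\Bmes\real)^n\times\semtype\Gamma)\times\prod_{i=1}^k\semtype{\typeb_i}}$. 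Let $\Theta$ be the morphism $(a,b_1,\dots,b_k)\mapsto\semterm{\fterma}(a)\,b_1\dots b_k$ of $\INFSCOTTM$ from that product to $\Bmes\real$, obtained from $\semterm{\fterma}$, the projections and iterated evaluation $\Ev$ (Lemma~\ref{lemma:stab-eval}) via the cartesian closed structure (Theorem~\ref{th:stab-CCC}). Since a measurable morphism preserves measurable paths (Definition~\ref{def:mes-pre-stable-function}), $\Theta\Comp\gamma\in\Mpath n{\Bmes\real}$, and $(\Theta\Comp\gamma)(\Vect r)=\semterm{\fterma\Vect r}\vec gb_1\dots b_k$. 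Finally, $\Mapp U\in\Mtest n{\Bmes\real}$ by Example~\ref{ex:meas_cone_with_meas}, so the defining clause of a measurable path (Definition~\ref{def:measurable_path}) gives that $\Vect r\mapsto\Mapp U(\Vect r)((\Theta\Comp\gamma)(\Vect r))=\semterm{\fterma\Vect r}\vec gb_1\dots b_k(U)$ is in $\Mfun n$, i.e.\ measurable, which is what we wanted.

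I do not expect a genuine obstacle: this is essentially a gluing lemma, and once the Substitution Lemma turns the numerals into Dirac measures everything reduces to facts already established --- that $\Diracf$ and constant maps are measurable paths, that measurable morphisms preserve measurable paths, and that the measurability tests of $\Bmes\real$ are exactly the $\Mapp U$. The only points demanding care are the bookkeeping of which unit balls $\vec g,b_1,\dots,b_k$ live in, and the passage through the countable coproduct decomposition of $\Terms[\Gamma\vdash\typea]$.
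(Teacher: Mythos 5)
Your proof is correct and follows essentially the same route as the paper's: reduce along the coproduct decomposition~\eqref{eq:sigma_algebra_terms}, use the Substitution Lemma~\ref{lemma:substitution} to turn the numerals into Dirac measures, observe that $\Diracf$ and constant maps are measurable paths so that $\semterm{\fterma}$ composed with them is one too, and conclude with the tests $\Mapp U$ of Example~\ref{ex:meas_cone_with_meas}. The only difference is that you spell out, via tupling (Lemma~\ref{lemma:mes-path-tuple}) and the measurable evaluation morphism (Lemma~\ref{lemma:stab-eval}), the passage from a measurable path in $\semtype\typea$ to one in $\Bmes\real$ after applying $b_1,\dots,b_k$ --- a step the paper's proof states without detailed argument --- so your write-up is, if anything, slightly more complete.
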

\begin{proof}[Proof (Sketch).]
By \eqref{eq:sigma_algebra_terms}, it is enough to prove that, for any $\fterma\in\TermsFree[\Gamma\vdash\typea]n$, the restriction ${\semterm{\_}}_\fterma\vec g\vec b$
 of ${\semterm{\_}}\vec g\vec b$ to $\TermsFree[\Gamma\vdash\typea]\fterma$ is a kernel. This is done by using the crucial fact that the map $h=\semterm{\fterma}\circ(\Diracf^n\times\vec g)$ is a measurable path in $\Mpathu n{\semtype\typea}$. This implies that $\vec r\mapsto h(\vec r)\vec b$ is in $\Mpathu n{\Bmes\real}$, so it is a stochastic 
kernel from $\real^n$ to $\real$ (Example~\ref{ex:meas_cone_with_meas}). We are done, since $\Real^n$ and $\TermsFree[\Gamma\vdash\typea]\fterma$ are isomorphic.
 \end{proof}
\begin{proposition}[Soundness]\label{prop:soundmess}
Let $\typea=\typeb_1\rightarrow\dots\typeb_k\rightarrow\treal$, for all $i\leq k$, $b_i\in\semtype{\typeb_i}$, and let $\vec g\in\semtype{\Gamma}$, then $(\semterm[\Gamma\vdash\typea]{\_}\vec gb_1\dots b_k) \circ \Red=\semterm[\Gamma\vdash\typea]{\_}\vec gb_1\dots b_k$, i.e.~Equation~\eqref{eq:soundness_explicit} holds for any $\terma\in\Terms[\Gamma\vdash\typea]$.
\end{proposition}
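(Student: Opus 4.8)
The plan is to fix an environment $\vec g\in\semtype\Gamma$, arguments $b_1\in\semtype{\typeb_1},\dots,b_k\in\semtype{\typeb_k}$ and a measurable set $U\subseteq\Real$, and to prove, for every $\terma\in\Terms[\Gamma\vdash\typea]$, the scalar equality
\[
  \int_{\Terms[\Gamma\vdash\typea]}\semterm{t}\vec gb_1\dots b_k(U)\,\Red(\terma,dt)=\semterm{\terma}\vec gb_1\dots b_k(U).
\]
By Lemma~\ref{lemma:interp_kernel} and Proposition~\ref{prop:red_kernel} both sides are well defined, and this is exactly the pointwise reading of Equation~\eqref{eq:soundness_explicit}. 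First I would split into cases on $\terma$ using Lemma~\ref{lemma:evaluation_decomposition}, which lines up with the three clauses defining $\Red$ in Equation~\eqref{eq:red}.

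If $\terma$ is a normal form, $\Red(\terma,\_)=\Dirac\terma$ and the integral collapses to $\semterm\terma\vec gb_1\dots b_k(U)$. If $\terma=E[R]$ with $R\to R'$ and $R$ not an instance of $\oracle$, then $\Red(\terma,\_)=\Dirac{E[R']}$, so the left-hand side is $\semterm{E[R']}\vec gb_1\dots b_k(U)$ and it suffices to show $\semterm{E[R]}=\semterm{E[R']}$. By compositionality of the interpretation (a routine induction on the evaluation context $E$) this reduces to $\semterm R=\semterm{R'}$, which I would check redex by redex: if $R=(\lambda x.\termb)\termc$ and $R'=\termb\{\termc/x\}$, this is the Substitution Lemma~\ref{lemma:substitution}; if $R=\letterm x{\num r}\termb$ and $R'=\termb\{\num r/x\}$, it is again Lemma~\ref{lemma:substitution} together with $\semterm{\num r}=\Dirac r$ and the observation that the $\mathtt{let}$-clause of Figure~\ref{fig:interpretation} integrated against a Dirac measure is an evaluation; if $R=\realfun f(\num{r_1},\dots,\num{r_n})$ and $R'=\num{f(r_1,\dots,r_n)}$, it is a direct computation using that $\Dirac{r_1}\otimes\dots\otimes\Dirac{r_n}$ pushes forward to $\Dirac{f(r_1,\dots,r_n)}$ along $f$; if $R=\ifz{\num r}\termb\termc$, it follows by inspecting the subcases $r=0$ and $r\neq 0$ in the $\mathtt{ifz}$-clause of Figure~\ref{fig:interpretation} (i.e.\ $\Dirac 0(\{0\})=1$, $\Dirac 0(\Real\setminus\{0\})=0$); and if $R=\fix\termb$ with $R'=\termb(\fix\termb)$, it is the usual fixpoint computation $\semterm{\fix\termb}\vec g=\sup_n(\semterm\termb\vec g)^n0=\semterm\termb\vec g(\sup_n(\semterm\termb\vec g)^n0)=\semterm{\termb(\fix\termb)}\vec g$, using Scott-continuity of $\semterm\termb\vec g$ and of application.

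The one delicate clause is $\terma=E[\oracle]$. Then $\Red(\terma,\_)$ is the pushforward of $\lambda_{[0,1]}$ along $r\mapsto E[\num r]$, so the left-hand side equals $\int_{[0,1]}\semterm{E[\num r]}\vec gb_1\dots b_k(U)\,\lambda(dr)$, whereas compositionality together with $\semterm{\oracle}=\lambda_{[0,1]}$ gives $\semterm{E[\oracle]}\vec gb_1\dots b_k(U)=\semterm{E[y]}(\lambda_{[0,1]},\vec g)b_1\dots b_k(U)$, where $E[y]$ is $E$ with its ground-type hole filled by a fresh variable $y:\treal$. By Lemma~\ref{lemma:linearity_eval} the map $\mu\mapsto\semterm{E[y]}(\mu,\vec g)b_1\dots b_k(U)$ is linear, Scott-continuous and measurable on $\Bmes\Real$, and by Lemma~\ref{lemma:substitution} it sends $\Dirac r$ to $\semterm{E[\num r]}\vec gb_1\dots b_k(U)$; moreover $r\mapsto\semterm{E[\num r]}\vec g$ is a measurable path, being $\semterm{E[y]}(\_,\vec g)\Comp\Diracf$ with $\Diracf\in\Mpathu1{\Bmes\Real}$. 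Identifying $\semtype\typea$ with a cone of the form $\Implm Q{\Bmes\Real}$ (every \ppcf{} type denotation being of this shape), Theorem~\ref{th:linear-extension-path} applies to this path and produces its linear measurable extension by integration; matching that extension with $\semterm{E[y]}(\_,\vec g)$ gives
\[
  \semterm{E[y]}(\lambda_{[0,1]},\vec g)b_1\dots b_k(U)=\int_{[0,1]}\semterm{E[\num r]}\vec gb_1\dots b_k(U)\,\lambda(dr),
\]
which is precisely the desired equality.

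The hard part will be this last identification: showing that the linear, Scott-continuous, measurable map $\semterm{E[y]}(\_,\vec g)$ out of $\Bmes\Real$ is the one recovered from its Dirac values by Lebesgue integration. I expect to settle it either via a uniqueness clause for the extension in Theorem~\ref{th:linear-extension-path}, or, failing that, by a direct induction on $E$ checking that each building block of an evaluation context — a tensor slot in some $\realfun f(\dots)$, a measurability test $\Mapp V$ in an $\ifz$, the $\Mext$-integration in the $\mathtt{let}$-clause, or application at a fixed argument — commutes with integration against $\lambda_{[0,1]}$, the base observation being $\Mext\Diracf(\lambda_{[0,1]})=\lambda_{[0,1]}$. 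All the remaining steps are bookkeeping with the reduction rules of Figure~\ref{fig:beta} and standard structural inductions.
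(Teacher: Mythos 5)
Your overall route is the paper's: split on normal forms, non-$\oracle$ redexes and the $\oracle$ case, reduce $E[R]$ versus $E[R']$ to $\semterm R=\semterm{R'}$ via the substitution lemma, and check the redexes one by one ($\beta$, $\fix$, $\mathtt{ifz}$, $\realfun f$, $\mathtt{let}$ with a numeral); all of that is correct and coincides with the paper's proof. The gap is in the only case where sampling actually occurs, and you flag it yourself: after rewriting the left-hand side as $\int_{[0,1]}\semterm{E[\num r]}\vec g\,b_1\dots b_k(U)\,\lambda(dr)$ and the right-hand side as $\semterm{E[y]}(\lambda_{[0,1]},\vec g)\,b_1\dots b_k(U)$, you invoke Theorem~\ref{th:linear-extension-path} to obtain \emph{some} linear, stable, measurable map defined by integration which agrees with $\semterm{E[y]}(\_,\vec g)$ on Dirac measures, and the identification of the two maps is left as ``the hard part'', with two candidate strategies rather than a proof. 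Note that the first strategy (a uniqueness clause for the extension) is not available as stated: the paper asserts no such clause, and agreement on Diracs together with linearity and Scott-continuity cannot force equality, since $\lambda_{[0,1]}$ is not a lub of finite sums of Dirac measures in $\Bmes\Real$; the footnote in Section~\ref{sect:measurability} even exhibits a linear Scott-continuous functional that takes the value $\Mchar U(r)$ on each $\Dirac r$ yet vanishes on all continuous measures, hence agrees with $\mu\mapsto\mu(U)$ on every Dirac while disagreeing at $\lambda_{[0,1]}$.

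For comparison, the paper does not route this step through Theorem~\ref{th:linear-extension-path} at all: it closes the case with the chain $\int_\Real\semterm{E[\num r]}\vec g\,b_1\dots b_k\,\lambda_{[0,1]}(dr)=\int_\Real\semterm{E[y]}(\semterm{\num r}\vec g)\vec g\,b_1\dots b_k\,\lambda_{[0,1]}(dr)=\semterm{E[y]}\bigl(\int_\Real\semterm{\num r}\vec g\,\lambda_{[0,1]}(dr)\bigr)\vec g\,b_1\dots b_k=\semterm{E[y]}(\lambda_{[0,1]})\vec g\,b_1\dots b_k$, justifying the middle commutation by Lemma~\ref{lemma:linearity_eval} together with Scott-continuity. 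Your instinct that this commutation is exactly the point needing an argument is sound, and your fallback plan --- an induction on evaluation contexts showing that each clause of Figure~\ref{fig:interpretation} that can occur in $E$ (application at a fixed argument, the two tests in $\mathtt{ifz}$, the pushforward in $\realfun f$, and the integral in the $\mathtt{let}$ clause) commutes with integration of the measurable path $r\mapsto\semterm{E[\num r]}\vec g$ --- is the right way to make it precise. But as submitted the proposal only announces this lemma; until it (or the commutation step the paper performs) is actually proved, the $\oracle$ case, and with it the proposition, is not established.
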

\begin{proof}[Proof (Sketch)]
If $\terma$ is a normal form, then the statement is trivial. Otherwise, let $\terma=E[R]$ with $R$ a redex  (Lemma~\ref{lemma:evaluation_decomposition}). 
If $R\neq\text{\oracle}$, let $R\rightarrow \termb$. By the substitution property (Lemma~\ref{lemma:substitution})) it is sufficient to prove $\semterm R=\semterm N$ to conclude. This is done by cases, depending on the type of $R$. 
The last case is $\terma=E[\oracle]$. This is obtained by using the linearity of the evaluation context $E[\,]$ (Lemma~\ref{lemma:linearity_eval}) and the substitution property (Lemma~\ref{lemma:substitution}).
\end{proof}

\begin{example}\label{ex:semantics_observe}
Suppose $\terma$ a closed term of type $\treal$ and consider $\vdash \observe U\terma:\treal$ introduced in Example~\ref{ex:observe} as an encoding of the conditioning. We compute its semantics by using soundness. 
Since 
$\observe UM \to^* \letterm x\terma  {\ifterm xUx{\observe U\terma}}$, we get by soundness 
that for all $V\subseteq\Real$ measurable,
$
\semterm{\observe U\terma}(V)= \int_\Real \semterm[x:\treal\vdash\treal]{\ifterm xUx{\observe U\terma}}(\Dirac r)(V)\, \semterm\terma(dr)
= \int_\Real(\Dirac r(U)\,\Dirac r(V)+(\Dirac r(\real \setminus U))\,(\semterm{\observe U\terma}(V)))\, \semterm\terma(dr)$. Since $\semterm{\observe U\terma}$ does not depend on $r$, the latter integral can be rewritten to: $\semterm{\observe U\terma}(V)=\int_\Real(\chi_U(r)\,\chi_V(r))\, \semterm\terma(dr)+(\semterm{\observe U\terma}(V))\,\int_\Real\chi_{\real\setminus U}(r)\,\semterm\terma(dr)$. 

Whenever $\terma$ represents a probability distribution, so that $\semterm\terma(U)=1-\semterm\terma(\real\setminus U)$ and if moreover $\semterm\terma(U)\neq 0$, this equation gives the conditional probability:
\begin{equation*}
  \semterm{\observe U\terma}(V) 
  = \frac{\int_\Real(\chi_U(r)\,\chi_V(r))\, \semterm\terma(dr)}{1-\int_\Real\chi_{\real\setminus U}(r)\, \semterm\terma(dr)}
  =\frac{\semterm\terma(V\cap U)}{\semterm\terma(U)}
\end{equation*}
If $\semterm\terma (U)=0$, then as $(\semterm{\lambda y\,\letterm x\terma{\ifterm x U x y}})^n0=0$, the denotation of the fixpoint is $\semterm{\observe U\terma}=0$. By adequacy, the program then loops with probability $1$ when $\semterm\terma (U)=0$.

Now, consider the term $O = \lambda m.\fix(\lambda y.\ifterm mUmy)$
presented in Example~\ref{ex:observe} as a wrong implementation of $\observe
U$. Since $O\terma\to^* \ifterm \terma U\terma{O\terma}$, assuming that $\semterm\terma$ is a
probability distribution and $V$ a measurable set, one gets with a similar
reasoning that, in case $\semterm\terma U\neq 0$,
$\semterm{O\terma}(V)=(\semterm\terma(V)\,\semterm\terma(U))/\semterm\terma(U)=\semterm\terma(V)$. 
As before, if $\semterm\terma (U)=0$, then $\semterm{O\terma}=0$.

\end{example}

\todom{here commented metropolis-hasting:semantics}
\subsection{Adequacy}

Let $\terma$ be a closed term of ground type of \ppcf. Both the operational and the denotational semantics associate with $\terma$ a distribution over $\Real$ --- the adequacy property states that these two distributions are actually the same (Theorem~\ref{th:adequacy}). The proof is standard: the soundness property gives as a corollary that the ``operational'' distribution is bounded by the ``denotational'' one. The converse is obtained by using a suitable logical relation (Definition~\ref{def:logical relation}, Lemma~\ref{lemma:key-lemma}).

\begin{definition}\label{def:logical relation}
By induction on a type $\typea$, we define a relation $\prec^\typea\subseteq\semtype\typea\times\Terms[\vdash\typea]$ as follows:
\begin{align*}
\mu \prec^\treal \terma&\quad\text{ iff }\quad \forall U\in\Sigma_\Real, \mu(U)\leq\Red^{\infty}(\terma,\num U),\\
f \prec^{\typea\rightarrow\typeb} \terma&\quad\text{ iff }\quad \forall u\prec^\typea\termb,  f(u)\prec^\typeb\terma\termb.
\end{align*}
\end{definition}

\begin{lemma}\label{lemma:key-lemma}
Let $x_1:\typeb_1,\dots,x_n:\typeb_n\vdash\terma:\typea$ and $\forall i\leq n, u_i\prec^{\typeb_i}\termb_i$, then: $\semterm\terma\vec u\prec^{\typea}\terma\{{\vec\termb/\vec x}\}$.
\end{lemma}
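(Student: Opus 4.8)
The plan is to prove Lemma~\ref{lemma:key-lemma} by induction on the typing derivation of $\terma$, following the standard logical-relations pattern. First I would isolate three auxiliary facts, each proved by an easy induction on the type $\typea$. (i) For every closed $\terma$ of type $\typea$ one has $0\prec^\typea\terma$: at ground type $0(U)=0\le\Red^\infty(\terma,\num U)$, and at an arrow type this reduces to the ground case through the definition of $\prec$. (ii) Each $\prec^\typea$ is \emph{admissible}: if $(a_k)_{k\in\Nat}$ is a non-decreasing sequence in $\Cuball{\semtype\typea}$ with $a_k\prec^\typea\terma$ for all $k$, then $\sup_k a_k\prec^\typea\terma$; at ground type this uses that lubs in $\Bmes\real$ are computed pointwise (Example~\ref{ex:meas_cone_with_meas}), at an arrow type that lubs in the function cone are computed pointwise (Lemma~\ref{lemma:function_cone_complete}). (iii) $\prec$ is closed under deterministic anti-reduction: if $\terma\to\terma'$ is a step whose fired redex is not $\oracle$ and $a\prec^\typea\terma'$, then $a\prec^\typea\terma$; at ground type this holds because $\Red(\terma,\_)=\Dirac{\terma'}$, so $\Red^\infty(\terma,\num U)=\Red^\infty(\terma',\num U)$, and at an arrow type one post-composes with an applicative evaluation context and uses the induction hypothesis (Lemma~\ref{lemma:evaluation_decomposition} guarantees the reduced term is again not of the form $E[\oracle]$).

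With these in hand, the easy cases of the induction are routine; write $\sigma$ for the substitution $\{\vec\termb/\vec x\}$. For a variable $x_i$, $\semterm{x_i}\vec u=u_i\prec^{\typeb_i}\termb_i$ by hypothesis. For an application $\terma'\termb'$ (with $\terma':\typea\rightarrow\typeb$) the induction hypotheses give $\semterm{\terma'}\vec u\prec^{\typea\rightarrow\typeb}\terma'\sigma$ and $\semterm{\termb'}\vec u\prec^\typea\termb'\sigma$, and the clause of $\prec$ at arrow type yields $\semterm{\terma'}\vec u(\semterm{\termb'}\vec u)\prec^\typeb(\terma'\termb')\sigma$ directly, without any anti-reduction. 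For an abstraction $\lambda x^\typea.\terma'$ one applies the induction hypothesis to $\terma'$ in the context extended by $x:\typea$ and then uses (iii), since $((\lambda x.\terma')\sigma)\termc\to\terma'\sigma\{\termc/x\}$. For $\fix\terma'$ one uses (ii) to reduce to proving $(\semterm{\terma'}\vec u)^k\,0\prec^\typea\fix(\terma'\sigma)$ for every $k$, by an inner induction on $k$ whose base case is (i) and whose step combines the induction hypothesis on $\terma'$ with (iii) via $\fix(\terma'\sigma)\to(\terma'\sigma)(\fix(\terma'\sigma))$. For a numeral, $\semterm{\num r}\vec u=\Dirac r$ and $\num r$ is normal, so $\Red^\infty(\num r,\num U)=\Dirac r(U)$. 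For $\oracle$, $\semterm{\oracle}\vec u=\lambda_{[0,1]}$ while $\Red(\oracle,\num U)=\lambda_{[0,1]}(U)$, so $\lambda_{[0,1]}(U)\le\Red^\infty(\oracle,\num U)$.

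The substantive cases are the ground-type constructors $\realfun f(\terma_1,\dots,\terma_n)$, $\ifz{\termc'}{\terma'}{\termb'}$ and $\letterm x{\terma'}{\termb'}$, which all reduce operationally by first evaluating their subterms inside an evaluation context down to numerals and then firing a redex. I would establish the operational lemma that, for an evaluation context $E[\;]$ with hole of type $\treal$ and a closed term $M$ of type $\treal$ with $E[M]$ well typed, writing $\nu_M$ for the measure $V\mapsto\Red^\infty(M,\num V)$ on $\real$, one has for all measurable $U$
\begin{equation*}
  \int_\real\Red^\infty(E[\num r],\num U)\;\nu_M(dr)\ \le\ \Red^\infty(E[M],\num U),
\end{equation*}
proved by unwinding how $\Red$ transports reduction under evaluation contexts (induction on $n$ in $\Red^n$, using that normal forms are absorbing and Lemma~\ref{lemma:evaluation_decomposition}). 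Granting it: for $\letterm x{\terma'}{\termb'}$, the induction hypothesis on $\termb'$ applied with the instance $\Dirac r\prec^\treal\num r$ gives $\semterm{\termb'}(\vec u,\Dirac r)(U)\le\Red^\infty((\termb'\sigma)\{\num r/x\},\num U)$, the induction hypothesis on $\terma'$ gives $\semterm{\terma'}\vec u\le\nu_{\terma'\sigma}$ as measures, and combining these with monotonicity of the integral, with the identity $\Red^\infty(\letterm x{\num r}{\termb'\sigma},\num U)=\Red^\infty((\termb'\sigma)\{\num r/x\},\num U)$, and with the operational lemma for $E[\;]=\letterm x{[\;]}{\termb'\sigma}$ yields exactly $\semterm{\letterm x{\terma'}{\termb'}}\vec u(U)\le\Red^\infty((\letterm x{\terma'}{\termb'})\sigma,\num U)$; the measurability that makes all these integrals legitimate comes from Lemma~\ref{lemma:interp_kernel}. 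For $\realfun f(\terma_1,\dots,\terma_n)$ one iterates the operational lemma over the $n$ argument positions to bound $\Red^\infty$ of the resulting application below by the pushforward along $f$ of $\nu_{\terma_1\sigma}\otimes\cdots\otimes\nu_{\terma_n\sigma}$, and then uses monotonicity of the tensor product of measures together with the induction hypotheses $\semterm{\terma_i}\vec u\le\nu_{\terma_i\sigma}$. The $\ifz{}{}{}$ case is the same, splitting $\real$ into $\{0\}$ and $\real\setminus\{0\}$.

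The main obstacle is precisely this operational lemma together with the ground-type bookkeeping around it: one must match, on the nose, the integral in the denotation of $\texttt{let}$ (and the product measure in that of $\realfun f$) with the corresponding decomposition of $\Red^\infty$ along evaluation contexts, while keeping track of measurability so that every integral in sight is well defined. Everything else --- the variable, application, abstraction, $\fix$, numeral and $\oracle$ cases, and the admissibility and anti-reduction lemmas --- is the usual PCF logical-relations argument transported to the (sub)probabilistic setting.
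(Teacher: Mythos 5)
Your proposal is correct and follows the same overall strategy as the paper: the logical relation of Definition~\ref{def:logical relation}, a structural induction on $\terma$, and exactly the paper's auxiliary facts about zero, admissibility under lubs (Lemma~\ref{lemma:fix-point}) and closure under non-$\oracle$ anti-reduction (Lemma~\ref{lemma:anti-reduction}), with the variable, application, abstraction, $\fix$, numeral and $\oracle$ cases handled identically. The one genuine difference is how you treat the ground-type constructors: the paper proves three separate operational lemmas, one per construct (Lemma~\ref{lemma:red_f} for $\realfun f$, Lemma~\ref{lemma:red_if} for $\mathtt{ifz}$, Lemma~\ref{lemma:let} for $\mathtt{let}$), each by its own induction on the number of reduction steps, whereas you prove a single evaluation-context lemma, namely $\int_\real\Red^\infty(E[\num r],\num U)\,\nu_M(dr)\le\Red^\infty(E[M],\num U)$ for a ground-type hole, and instantiate it (iterating over argument positions and using Fubini for the product measure in the $\realfun f$ case, and splitting on $\{0\}$ for $\mathtt{ifz}$). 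This unified lemma is provable by essentially the same step-indexed induction the paper uses three times, exploiting that evaluation contexts compose and that the one-step kernel is a Dirac except at $\oracle$; it buys uniformity and makes the three semantic clauses fall out of one statement, at the price of an extra Fubini/monotone-convergence manipulation for the $n$-ary function case that the paper's specialized Lemma~\ref{lemma:red_f} builds in directly. One small correction: the measurability of the integrands such as $r\mapsto\Red^\infty((\termb\sigma)\{\num r/x\},\num U)$ does not come from Lemma~\ref{lemma:interp_kernel} (which concerns the denotational side); as in the paper's proof of Lemma~\ref{lemma:let}, it comes from $\Red^\infty$ being a pointwise supremum of the kernels $\Red^n$ composed with the measurable substitution map of Lemma~\ref{lemma:subst_measurable}, while the measurability of $r\mapsto\semterm{\termb}(\vec u,\Dirac r)(U)$ is part of the well-definedness of the $\mathtt{let}$ interpretation via Theorem~\ref{th:linear-extension-path}.
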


\begin{theorem}[adequacy]\label{th:adequacy}
Let $\vdash\terma:\treal$, then for every measurable set $U\subseteq\Real$, we have:
\[
	\semterm[\vdash \treal]\terma(U) = \Red^{\infty}(\terma,\num U)
\]
where $\num U$ is the set of numerals corresponding to the real numbers in $U$.
\end{theorem}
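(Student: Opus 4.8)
The plan is to establish the equality by proving the two inequalities $\Red^\infty(\terma,\num U)\le\semterm[\vdash\treal]\terma(U)$ and $\semterm[\vdash\treal]\terma(U)\le\Red^\infty(\terma,\num U)$ separately. For the first, I would use soundness: iterating Proposition~\ref{prop:soundmess} along the kernel composition of Equation~\eqref{eq:kern_composition} (the bookkeeping being exactly what Lemma~\ref{lemma:interp_kernel} provides) gives $\semterm\terma=\int_{\Terms[\vdash\treal]}\semterm t\,\Red^n(\terma,dt)$ for every $n$. Since the closed normal forms of ground type are precisely the numerals and $\semterm{\num r}=\Dirac r$, we have $\semterm t(U)=1$ for $t=\num r$ with $r\in U$; as all integrands are non-negative, restricting the domain of integration to the measurable set $\num U$ yields $\semterm\terma(U)\ge\Red^n(\terma,\num U)$. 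Taking the supremum over $n$ and recalling Equation~\eqref{eq:red_infty} gives $\semterm\terma(U)\ge\Red^\infty(\terma,\num U)$.

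The converse inequality is exactly the ground-type instance of the logical relation obtained by applying the key Lemma~\ref{lemma:key-lemma} to $\terma$ in the empty context: it gives $\semterm\terma\prec^\treal\terma$, and unfolding Definition~\ref{def:logical relation} at ground type is precisely $\semterm\terma(U)\le\Red^\infty(\terma,\num U)$ for all measurable $U$. Combining the two inequalities proves the theorem.

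Thus the only real work is in Lemma~\ref{lemma:key-lemma}, which I would prove by induction on the typing derivation of $\terma$, generalising over the simultaneous substitution as in its statement. The routine cases (variable, abstraction, application) follow directly from the clauses of $\prec$ at arrow types; the constant $\realfun f(\terma_1,\dots,\terma_n)$ case uses that $\Red^\infty$ of such a term computes the pushforward along $f$ of the product of the argument distributions, matching Figure~\ref{fig:interpretation}; and the $\oracle$ case amounts to checking $\lambda_{[0,1]}\prec^\treal\oracle$, which is immediate from the $E[\oracle]$ clause of Equation~\eqref{eq:red}. The two delicate cases are $\fix\terma$ and $\letterm x\terma\termb$. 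For $\fix$ one needs $\prec^\treal$ to be admissible, i.e.\ closed under suprema of non-decreasing sequences of measures in $\Cuball(\Bmes\real)$, so that $\semterm{\fix\terma}=\sup_n\bigl((\semterm\terma)^n 0\bigr)$ remains related to $\fix\terma$; this exploits that $\Red^\infty$ is itself a supremum and that $\fix\terma\to\terma(\fix\terma)$. For $\letterm x\terma\termb$ one combines the explicit form $\semterm{\letterm x\terma\termb}\vec g(U)=\int\semterm\termb(\vec g,\Dirac r)(U)\,\semterm\terma\vec g(d r)$ with the reduction $\letterm x{\num r}\termb\to\termb\{\num r/x\}$, applying the inductive hypothesis to $\termb$ with the related pair $\Dirac r\prec^\treal\num r$ and then to $\terma$; making this integral estimate line up with $\Red^\infty$ is, I expect, the main technical obstacle.
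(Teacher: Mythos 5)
Your proposal is correct and follows essentially the same route as the paper: iterated soundness plus restriction of the integral to (numeral) normal forms gives $\Red^\infty(\terma,\num U)\le\semterm\terma(U)$, and the converse is exactly the ground-type instance of Lemma~\ref{lemma:key-lemma} unfolded via Definition~\ref{def:logical relation}. Your sketch of the key lemma's proof also matches the paper's actual structure (separate estimates for $\num f$, $\mathtt{ifz}$, $\mathtt{let}$, closure under expansion, and admissibility of $\prec$ for the fixpoint case), so there is nothing substantive to add.
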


\section{Related work and conclusion}\label{sect:related_work}

%
The first denotational models for higher-order probabilistic programming were based on
probabilistic power
domains~\citep{Saheb-Djahromi80,JonesPlotkin89}. This setting follows a
monadic approach, considering a program as a function from inputs to
the probabilistic power domain of its outputs. The major issue here is
to find a cartesian closed category which is also closed under the
probabilistic power domain monad~\citep{JungT98}. Some advances have
been obtained by~\citet{phdBarker}, using a monad based on
random variables inspired by~\citet{goubaultvarraca11}. Besides, \citet{Mislove16} has
  introduced a domain theory of
random variables.
Another approach is based on game semantics, designing models of probabilistic languages with references~\citep{probagames} or concurrent features~\citep{Winskel14}.

The notions of \emph{d-cones}~\citep{TixKP09a} and
\emph{Kegelspitzen}~\citep{KeimelP16} 
are promising 
for getting a
family of models different from 
ours. \citet{Rennela16}
has recently used this approach for studying a probabilistic extension
of FPC. A Kegelspitzen is a convex set of a positive
cone equipped with an order compatible with the algebraic structure of
the cone. Notice that this notion differs from ours because the order
of a Kegelspitzen might be independent from the one induced by its
algebraic structure. It is likely that the two approaches live in two
different but related frameworks as the continuous and the stable
semantics of standard PCF.

The denotational semantics approach to probabilistic programming has been recently relaunched by the increasing importance of continuous distributions and sampling primitives. Indeed, this raises the question of the \emph{measurability of a morphism} as the interpretation of the sampling primitives requires integration. This question has not been investigated yet in the domain theoretic approach and forces to introduce a new line of works which puts the focus on measurability.

The challenge is to define a cartesian closed category in which base types such as reals would be interpreted as measurable spaces. As mentioned in the Introduction, the category $\Meas{}$ of  measurable spaces and functions is cartesian but not closed. To overcome this problem, \citet{StatonYWHK16} embed $\Meas$  in a functor category which is cartesian closed although not well-pointed. Then, to get a more concrete and a well-pointed category, they introduce the category of \emph{quasi-borel spaces}~\cite{StatonYHK17} which are sets endowed with a set of random variables. 
Notice that both categories 
miss the order completeness, and thus the possibility of interpreting higher-order recursion. This is a big difference with our model $\INFSCOTTM$ which is order complete. 

Let us also cite the ongoing efforts presented last year at the
workshop PPS by \citet{HuangM17}, aiming to give a model
based on computable distributions, and by \citet{FaissoleS17}, working on a Coq formalization of a
semantics built on top of the constructions detailed
in~\citep{StatonYWHK16}.

\medskip
In this paper, we have presented $\INFSCOTTM$, a new model of higher-order probabilistic computations with
full recursion, as a cartesian closed category enriched over posets which are
complete for non-decreasing sequences. The objects of $\INFSCOTTM$ are cones
equipped with a notion of measurability tests and morphisms are functions
which are measurable in the sense that they behave well wrt.~this notion of
measurability tests. These functions are also Scott-continuous, but this
is not sufficient for guaranteeing cartesian closeness: they must satisfy an
hereditary monotonicity condition that we call \emph{stability} because, when
adapted to coherence spaces, it coincides with Berry-Girard stability. The
introduction of this notion of ``probabilistic stability'' is a relevant byproduct of our approach.

A typical example of such a cone is the set of $\Realp$-valued measures on the
real line that we use to interpret the type of real numbers, the unique ground
type of \ppcf{}, a probabilistic version of PCF. This language also features a
${\tt sample}$ primitive allowing to sample a real number according to a
prescribed probability measure on the reals (intuitively, a closed \ppcf{} term
of ground type represents a sub-probability measure on the real line). We have
presented the semantics of \ppcf{} in $\INFSCOTTM$ and proven adequacy for a call-by-name operational semantics.

There are many research directions suggested by these new constructions, namely to study the category $\mathbf{Clin}_{\mathsf m}$ of linear and measurable Scott continuous maps mentioned in the Introduction and prove the conjectures sketched in Figure~\ref{fig:the_picture}. Also, full-abstraction will be addressed, following \citep{EhrPagTas14}.

\begin{acks}                            
  
  We would like to thank Ugo Dal Lago and Jean-Louis Krivine for useful discussions. This material is based upon work supported by the ANR grant Elica (No.~\grantnum{ANR-14-CE25-0005}{ANR-14-CE25-0005}).
    
\end{acks}

\newpage

\bibliography{biblio}

\newpage
\appendix
\section{Appendix}\label{appendix:technical}
%
%

\subsection{Proofs of Section~\ref{sect:ppcf}}

\paragraph{Lemma~\ref{lemma:subst_measurable}.} Given $\Gamma,x:\typeb\vdash\terma:\typea$ the function $\subst{x,\terma}$ mapping $\termb\in\Terms[\Gamma\vdash\typeb]$ to $\terma\{\termb/x\}\in\Terms[\Gamma\vdash\typea]$ is measurable. 
\begin{proof}
Since $\Terms[\Gamma\vdash A]$ can be written as the
coproduct~\eqref{eq:sigma_algebra_terms}, it is sufficient to prove that for
any $n$ and $\ftermb\in\TermsFree[\Gamma\vdash\typea]n$,
$\subst{x,\terma}:\Terms[\Gamma\vdash B]_\ftermb\rightarrow \Terms[\Gamma\vdash
A]$ is measurable. Let $\fterma$ and $\vec r\in\real^m$ be such that $\terma=\fterma\vec r$ and let $U\subseteq\Terms[\Gamma\vdash\typea]$. We prove that $\subst{x,M}^{-1}(U)=\{\vec {r'}\in\real^n\text{ s.t. } \fterma\vec r\{\ftermb\vec{r'}\slash x\}\in U\}$ is measurable. Let $k$ be the number of occurrences of $x$ in $\terma$ and let us enumerate these occurrences as $x_1,\dots, x_k$. Then there are $i_1,\dots,i_k$, such that $0\le i_1\le\dots\le i_k\leq m$ such that:
\begin{equation*}
\fterma\vec r\{\ftermb \vec{r'}\slash x\}=\fterma\{\ftermb\slash x_1,\dots,\ftermb\slash x_k\}
	r_1\dots r_{i_1}\vec r'r_{i_1+1}
	\dots
	r_{i_{k-1}}\vec r'r_{i_{k-1}+1}\dots r_{i_{k}}\vec r'
	r_{i_{k}+1}\dots r_{m}
\end{equation*}
with $\fterma\{\ftermb\slash x_1,\dots,\ftermb\slash x_k\}$ a real-free
 term. The decomposition of $\vec r$ into the $k+1$ sections above, depends on the positions of the various occurrences of $x$ in $\fterma$. 
Using~\eqref{eq:measurable_set_terms}, it is sufficient to remark that  $\vec{r'}\mapsto 	
r_1\dots r_{i_1}\vec r'r_{i_1+1}
	\dots
	r_{i_{k-1}}\vec r'r_{i_{k-1}+1}\dots r_{i_{k}}\vec r'
	r_{i_{k}+1}\dots r_{m}
$ is a measurable function $\real^n\to\real^{m+kn}$.
\end{proof}

\paragraph{Proposition~\ref{prop:red_kernel}.} For any sequent $\Gamma\vdash A$, the map $\Red$ is a stochastic kernel from $\Terms[\Gamma\vdash A]$ to $\Terms[\Gamma\vdash A]$.
\begin{proof}
Let $\terma$ be a term. The fact that $\Red(\terma ,\_)$ is a measure from $\Terms[\Gamma\vdash A]$ to $[0,1]$ is an immediate consequence of the definition of $\Red$ and the fact that any evaluation context $E[\;]$ defines a measurable map $\subst{x,E[x]}:\terma  \rightarrow E[\terma ]$ from $\Terms[\Gamma\vdash A]$ to $\Terms[\Gamma'\vdash A']$ (Lemma~\ref{lemma:subst_measurable}).

Given a measurable set $U\subseteq\Terms[\Gamma\vdash A]$, we must prove that $\Red(\_,U)$ is a measurable function from $\Terms[\Gamma\vdash A]$ to $[0,1]$. 
Since $\Terms[\Gamma\vdash A]$ can be written as the coproduct in Equation~\eqref{eq:sigma_algebra_terms}, it is sufficient to prove that for any $n$ and $\fterma\in\TermsFree[\Gamma\vdash\typea]n$, $\Red_\fterma(\_,U):\Terms[\Gamma\vdash A]_\fterma\rightarrow[0,1]$ is a measurable function.  

We reason by case study on the shape of $\fterma$. Notice that by using
Lemma~\ref{lemma:evaluation_decomposition} and the definition of a redex we
have that: either (i) for all $\vec r$, $\fterma\vec r$ is a normal form, or
(ii) $\fterma=E[\ftermb]$ such that for all $\vec r$, $\ftermb\vec r$ is a
redex. In case (i), $\Red_\fterma(\_,U)=\chi_U$ and we are done. Otherwise, we first tackle the case where $\ftermb=\oracle$. Notice that $\Terms[\Gamma\vdash A]_{E[\oracle]}=\{E[\oracle]\}$, so that the constant map $\Red_\fterma(\_,U)=\lambda\{r\in[0,1]\text{ s.t. } E[\num r]\in U\}$ is measurable.\todoc{with the case sample, ok ?}

Now, we focus on the tricky case where $\ftermb\neq \oracle$.   
Notice that it is sufficient to prove that  $\Red_{\fterma}(\_,U)^{-1}(\{1\})=\{E[\ftermb]\vec r\text{ s.t. } \ftermb\vec r\to\termb\text{ and }\ E[\termb]\in U\}$ is measurable, then $\Red_{\fterma}(\_,U)^{-1}(\{0\})=\{E[\ftermb]\vec r\text{ s.t. } \ftermb\vec r\to\termb\text{ and }\ E[\termb]\notin U\}$ is also measurable as the complementary of a measurable set in $\Terms[\Gamma\vdash A]_\fterma$ and finally, $\Red_{\fterma}(\_,U)^{-1}(]0,1[)=\emptyset$ is also measurable. We reason again by case study on the shape of the redex $\ftermb$. If $\ftermb=(\lambda x.\ftermc_0)\ftermc_1$ then $\Red_{\fterma}(\_,U)^{-1}(\{1\})=\{E[\ftermb]\vec r\text{ s.t. } E[\ftermc_0\{\ftermc_1\slash x\}]\vec r\in U\}$ which is measurable thanks to~\eqref{eq:measurable_set_terms} and Lemma~\ref{lemma:subst_measurable}. If $\ftermb=\ifz{z}{\ftermc_0}{\ftermc_1}$, then $\Red_{\fterma}(\_,U)^{-1}(\{1\})=\{E[\ftermb](\vec r,0)\text{ s.t. } E[\ftermc_0]\in U\}\cup \{E[\ftermb](\vec r,r)\text{ s.t. } E[\ftermc_1]\in U,\text{ and }r \in ]0,1]\}$  which is measurable thanks to~\eqref{eq:measurable_set_terms}.
\end{proof}

\subsection{Proofs of Section~\ref{sec:cones}}

\paragraph{Lemma~\ref{lemma:local-cone}}
  For any cone $P$ and any $u\in\Cuball P$, $\Crel Pu$ is a
  cone. Moreover
  $\Cuball{\Crel Pu}=\{x\in\Ccarrier P\St x+u\in\Cuball P\}$ and, for
  any $x\in\Ccarrier{\Crel Pu}$, one has
  $\Cnorm Px\leq\Cnorm{\Crel Pu}x$. If $P$ is complete then
  $\Crel Pu$ is complete.

\begin{proof}
Observe first that $0\in\Ccarrier{\Crel Pu}$ because $u\in\Cuball P$.  Let us
check that $\Ccarrier{\Crel Pu}$ is closed under addition. Let
$x,x'\in\Ccarrier{\Crel Pu}$ and let $\epsilon,\epsilon'$ be such that
$u+\epsilon x,u+\epsilon'x'\in\Cuball P$. Without loss of generality we can
assume that $\epsilon\leq\epsilon'$ and hence we have
$u+\epsilon x,u+\epsilon x'\in\Cuball P$ and therefore
$u+\frac\epsilon 2(x+x')\in \Cuball P$ because $\Cuball P$ is convex. It follows
that $x+x'\in\Ccarrier{\Crel Pu}$. Let $x\in\Ccarrier{\Crel Pu}$, we have
$0x=0\in\Ccarrier{\Crel Pu}$. Let now $\alpha>0$. Let $\epsilon>0$ be such
that $\epsilon x+u\in\Cuball P$. We have therefore
$\frac\epsilon\alpha(\alpha x)+u\in\Cuball P$ and hence $\alpha
x\in\Ccarrier{\Crel Pu}$.

We prove now that $\Cnorm{\Crel Pu}{\_}$ is a norm. The fact that
$\Cnorm{\Crel Pu}{0}=0$ is clear. Let $x\in\Ccarrier{\Crel Pu}\setminus\{0\}$.
Let $\alpha >\Cnorm Px^{-1}$, we have $\alpha x\notin\Cuball P$ and hence
$\alpha x+u\notin\Cuball P$ and therefore
$\Cnorm{\Crel Pu}x\geq\frac 1\alpha$. We have proven that
$\Cnorm{\Crel Pu}x=0\Implies x=0$. Let $x,x'\in\Ccarrier{\Crel Pu}$, we prove
that $\Cnorm{\Crel Pu}{x+x'}\leq\Cnorm{\Crel Pu}{x}+\Cnorm{\Crel Pu}{x'}$. Let
$\alpha>0$. By definition of $\Cnorm{\Crel Pu}{x}$ we can find $\epsilon>0$
such that $\Cnorm P{\epsilon x+u}\leq 1$ and
$\Cnorm{\Crel Pu}{x}\geq\frac 1\epsilon-\alpha$. Similarly we can find
$\epsilon'>0$ such that $\Cnorm P{\epsilon' x'+u}\leq 1$ and
$\Cnorm{\Crel Pu}{x}\geq\frac 1{\epsilon'}-\alpha$. We have
\begin{align*}
  \Cnorm P{x+x'+(\frac 1\epsilon+\frac 1{\epsilon'})u}\leq \frac
  1\epsilon+\frac 1{\epsilon'}
\end{align*}
and hence
$\Cnorm{\Crel Pu}{x+x'}\leq\frac 1\epsilon+\frac
1{\epsilon'}\leq\Cnorm{\Crel Pu}{x}+\Cnorm{\Crel Pu}{x'}+2\alpha$.
Since this holds for all $\alpha>0$, we have
$\Cnorm{\Crel Pu}{x+x'}\leq\Cnorm{\Crel Pu}{x}+\Cnorm{\Crel Pu}{x'}$.
It is straightfoward that
$\Cnorm{\Crel Pu}{x}\leq\Cnorm{\Crel Pu}{x+x'}$ (because
$\Cnorm P{\epsilon x+u}\leq\Cnorm P{\epsilon(x+x')+u}$). A similar
reasoning allows to prove that
$\Cnorm{\Crel Pu}{\alpha x}=\alpha\Cnorm{\Crel Pu}x$ for all
$x\in\Ccarrier{\Crel Pu}$ and $\alpha\in\Realp$ (one has actually to
distinguish two cases: $\alpha=0$ and $\alpha>0$; the first case has
already been dealt with).

Now we prove that $\Cuball{\Crel Pu}=\{x\in\Ccarrier P\St x+u\in\Cuball
P\}$. Let $x\in\Cuball{\Crel Pu}$. There exists a non-decreasing sequence
$(\epsilon_n)_{n\in\Nat}$ such that $\epsilon_n>0$ and $\epsilon_nx+u\in\Cuball
P$ for all $n$, and moreover $\sup_{n\in\Nat}\epsilon_n=1$. Then by closeness
of $P$ we have $x+u\in\Cuball P$. The converse inclusion is obvious.

Let $x\in\Ccarrier{\Crel Pu}$, and let $\alpha > \Cnorm{\Crel Pu}x$. We have
$\Cnorm P{\frac 1\alpha x+u}\leq 1$ and hence $\Cnorm P{\frac 1\alpha x}\leq
1$, that is $\Cnorm Px\leq\alpha$, so that $\Cnorm Px\leq\Cnorm{\Crel Pu}x$.

Last assume that $P$ is complete, let $(x_n)_{n\in\Nat}$ be a
non-decreasing sequence in $\Cuball{\Crel Pu}$ and let $x$ be its lub
(in $\Ccarrier P$, which exists since
$\Cnorm P{x_n}\leq\Cnorm{\Crel Pu}{x_n}\leq 1$ for each $n$). We have
that $x_n+u\in\Cuball P$ for all $n$ and hence $x+u\in\Cuball P$ by
continuity of $+$ and closeness of $P$. It is clear that $x$ is also
the lub of the $x_n$'s in $\Crel Pu$.
\end{proof}

\paragraph{Theorem~\ref{th:non-decreasing-class-equiv}.}
  A function $f:\Cuball P\to\Ccarrier Q$ is $n$-non-decreasing iff it is
  $n$-pre-stable. 
\begin{proof}
Let us first prove the left to right implication, by induction on $n$.

For $n=0$, both notions coincide with the fact of being non-decreasing.

Let now $n$ be $>0$. Let $f:\Cuball P\to\Ccarrier Q$ be
$n$-non-decreasing from $P$ to $Q$ and let us prove that $f$ is
$n$-pre-stable. Due to our inductive hypothesis, we just have to prove
that, for all $\Vect u\in P^n$ such that $\sum_{i=1}^nu_i\in\Cuball
P$ and all $x\in\Cuballr P{\Vect u}$, we have
$\Fdiffp-{f}x{\Vect u}\leq\Fdiffp+{f}x{\Vect u}$. Let $u=u_n$ and let
$\Vect v=(\List u1{n-1})$.

We know that $f$ is non-decreasing and that the function $\Fdiff f\_u$
is $n-1$-non-decreasing from $\Crel Pu$ to $Q$. Therefore, by inductive
hypothesis, we know that this function is $n-1$-pre-stable. This means in
particular that
\begin{align*}
  \Fdiffp-{(\Fdiff f\_u)}x{\Vect v}\leq \Fdiffp+{(\Fdiff f\_u)}x{\Vect v}
\end{align*}
that is
\begin{align*}
  \sum_{I\in\Cocard-{n-1}}\bigg(f(x+u+\sum_{i\in I}v_i)
  &-f(x+\sum_{i\in I}v_i)\bigg)\\
  &\leq
  \sum_{I\in\Cocard+{n-1}}\bigg(f(x+u+\sum_{i\in I}v_i)
  -f(x+\sum_{i\in I}v_i)\bigg)  
\end{align*}
and hence
\begin{align*}
  \sum_{I\in\Cocard-{n-1}}f(x+u+\sum_{i\in I}v_i)
  &+\sum_{I\in\Cocard+{n-1}}f(x+\sum_{i\in I}v_i)\\
  &\leq\sum_{I\in\Cocard+{n-1}}f(x+u+\sum_{i\in I}v_i)
    +\sum_{I\in\Cocard-{n-1}}f(x+\sum_{i\in I}v_i)
\end{align*}
Observe that the left hand expression is equal to
\begin{align*}
  \sum_{\Biind{J\in\Cocard-{n}}{n\in J}}f(x+\sum_{j\in J}u_j)
  +\sum_{\Biind{J\in\Cocard-{n}}{n\notin J}}f(x+\sum_{j\in
  J}u_j)=\Fdiffp-fx{\Vect u}
\end{align*}
and similarly the right hand expression is equal to $\Fdiffp+fx{\Vect u}$, so
we have $\Fdiffp-fx{\Vect u}\leq\Fdiffp+fx{\Vect u}$ as contended.

We prove now the right to left implication, by induction on $n$. For
$n=0$, this is obvious. So assume that $f$ is $n$-pre-stable and let us
prove that it is $n$-non-decreasing. First, $f$ is non-decreasing because
it is $0$-pre-stable. Let $u\in\Cuball P$ and let us prove that the
function $\Fdiff f\_u$ is  $n-1$-non-decreasing. To this end, by
inductive hypothesis, it suffices to prove that this function is
$n-1$-pre-stable. Let $x\in\Cuball P$ and $\Vect u\in\Cuball P^{n-1}$ be
such that $x+u+\sum_{i=1}^{n-1}u_i\in\Cuball P$, we must prove that
\begin{align*}
  \Fdiffp-{(\Fdiff f\_u)}{x}{\Vect u}\leq \Fdiffp+{(\Fdiff f\_u)}{x}{\Vect u}
\end{align*}
which by the same calculation as above amounts to showing that
$\Fdiffp-fx{\Vect u,u}\leq\Fdiffp+fx{\Vect u,u}$, and we know that this latter
holds by our assumption that $f$ is $n$-pre-stable.
\end{proof}

\paragraph{Lemma~\ref{lemma:fdiff-iter}}   
  Let $f$ be an absolutely monotonic function from $P$ to $Q$ (so
  that $f:\Cuball P\to\Ccarrier Q$). Let $n\in\Nat$,
  $\Vect u\in\Cuball P^n$
  with $\sum_{i=1}^nu_i\in\Cuball P $and $x\in\Cuballr P{\Vect
    u}$.
  Let $\List f0n$ be the functions defined by $f_0(x)=f(x)$ and
  $f_{i+1}(x)=\Fdiff{f_i}{x}{u_{i+1}}$. Then
  \begin{align*}
    f_n(x)=\Fdiffp+fx{\Vect u}-\Fdiffp-fx{\Vect u}\,.
  \end{align*}
  We set $\Fdiff fx{\Vect u}=f_n(x)$. The operation $\Delta$ is linear
  in the function:
  $\Fdiff{(\sum_{j=1}^p\alpha_jg_j)}{x}{\Vect
    u}=\sum_{j=1}^p\alpha_j\Fdiff{g_j}{x}{\Vect u}$
  for $\List g1p$ absolutely monotonic from $P$ to  $Q$.

For proving this lemma we need the following auxiliary result:

\begin{lemma}\label{lemma:fdiffp-induction}
  Let $f:\Cuball P\to\Ccarrier Q$, $x,u\in\Cuball P$ and
  $\Vect u\in\Cuball P^n$ be such that
  $u+\sum_{i=1}^nu_i\in\Cuball P$,
  $x\in\Cuballr P{\Vect u,u}$.
  We have
  \begin{align*}
    \Fdiffp+fx{\Vect u,u} &= \Fdiffp+f{x+u}{\Vect u}+\Fdiffp-fx{\Vect u}\\
    \Fdiffp-fx{\Vect u,u} &= \Fdiffp-f{x+u}{\Vect u}+\Fdiffp+fx{\Vect u}\,.
  \end{align*}
\end{lemma}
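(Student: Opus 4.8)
The plan is to prove both identities by the same direct combinatorial argument, splitting the index set according to whether the fresh displacement $u$ is used or not. Throughout, I regard the $(n{+}1)$-tuple appearing on the left-hand sides as $(u_1,\dots,u_n,u)$, so that $u$ carries the index $n{+}1$; by definition $\Fdiffp\epsilon fx{\Vect u,u}=\sum_{I}f(x+\sum_{i\in I}u_i)$, the sum ranging over all $I\subseteq\{1,\dots,n{+}1\}$ with $(n{+}1)-\Card I$ even when $\epsilon=+$ and odd when $\epsilon=-$.

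First I would check that every term occurring in the statement is legitimate. From $u+\sum_{i=1}^n u_i\in\Cuball P$ and $x\in\Cuballr P{\Vect u,u}$ --- which by Lemma~\ref{lemma:local-cone} means $x+u+\sum_{i=1}^n u_i\in\Cuball P$ --- monotonicity of the norm gives $x+\sum_{i\in I}u_i\in\Cuball P$ for every $I\subseteq\{1,\dots,n{+}1\}$, since $x+\sum_{i\in I}u_i\leq_P x+u+\sum_{i=1}^n u_i$. The same estimates yield $\sum_{i=1}^n u_i\in\Cuball P$, $x\in\Cuballr P{\Vect u}$ and $x+u\in\Cuballr P{\Vect u}$, so the quantities $\Fdiffp+ fx{\Vect u}$, $\Fdiffp- fx{\Vect u}$, $\Fdiffp+ f{x+u}{\Vect u}$ and $\Fdiffp- f{x+u}{\Vect u}$ are well-formed instances of the definition of $\Delta^{\pm}$.

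The core step is to partition the set of all $I\subseteq\{1,\dots,n{+}1\}$ into those not containing $n{+}1$ and those containing it. If $n{+}1\notin I$ then $I\subseteq\{1,\dots,n\}$ and $x+\sum_{i\in I}u_i$ does not involve $u$; since $(n{+}1)-\Card I$ and $n-\Card I$ have opposite parities, the subsets of this kind contributing to $\Fdiffp+{f}{x}{\Vect u,u}$ are precisely the $I\in\Cocard-n$, contributing $\Fdiffp-fx{\Vect u}$, while those contributing to $\Fdiffp-{f}{x}{\Vect u,u}$ are the $I\in\Cocard+n$, contributing $\Fdiffp+fx{\Vect u}$. If $n{+}1\in I$, write $I=I'\cup\{n{+}1\}$ with $I'\subseteq\{1,\dots,n\}$; then $x+\sum_{i\in I}u_i=(x+u)+\sum_{i\in I'}u_i$ and $(n{+}1)-\Card I=n-\Card{I'}$, so the parity constraint is preserved, and these subsets contribute $\Fdiffp+f{x+u}{\Vect u}$ to $\Fdiffp+{f}{x}{\Vect u,u}$ and $\Fdiffp-f{x+u}{\Vect u}$ to $\Fdiffp-{f}{x}{\Vect u,u}$. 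Summing the two contributions in each case gives exactly the two displayed identities. I expect no real obstacle here; the only point requiring care is the parity bookkeeping, i.e.\ checking that the parity of the complement flips in the first case and is preserved in the second --- this is the key observation on which the whole argument rests.
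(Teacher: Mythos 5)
Your proof is correct and follows essentially the same route as the paper's: split the subsets of $\{1,\dots,n{+}1\}$ according to whether they contain the last index, note that the parity constraint is preserved when $u$ is used (yielding $\Fdiffp{\epsilon}f{x+u}{\Vect u}$) and flipped when it is not (yielding $\Fdiffp{-\epsilon}fx{\Vect u}$). The explicit well-formedness check at the start is a harmless addition the paper leaves implicit.
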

\begin{proof}
Let $\Vect v=(\Vect u,u)\in\Cuball P^{n+1}$. For $\epsilon\in\{+,-\}$, we have
  \begin{align*}
    \Fdiffp\epsilon f{x}{\Vect u,u}
    &=\sum_{I\in\Cocard\epsilon{n+1}}f(x+\sum_{i\in I}v_i)\\
    &=\sum_{\Biind{I\in\Cocard\epsilon{n+1}}{n+1\in I}}f(x+\sum_{i\in I}v_i)
      +\sum_{\Biind{I\in\Cocard\epsilon{n+1}}{n+1\notin I}}f(x+\sum_{i\in
      I}v_i)\\ 
    &=\sum_{J\in\Cocard\epsilon{n}}f(x+u+\sum_{j\in J}u_j)
      +\sum_{J\in\Cocard{-\epsilon}{n}}f(x+\sum_{j\in J}u_j)\\
    &=\Fdiffp\epsilon f{x+u}{\Vect u}+\Fdiffp{-\epsilon}f{x}{\Vect u}\,.
  \end{align*}
  where $-\epsilon$ is the sign opposite to $\epsilon$. 
\end{proof}

\begin{proof}[Proof of Lemma~\ref{lemma:fdiff-iter}]
  The proof is by induction on $n\in\Nat$. For $n=0$ the equation
  holds trivially. Assume that the property holds for $n$ and let us
  prove it for $n+1$. Let $\Vect v=(\List u1{n})$ and $u=u_{n+1}$. we
  have
\begin{align*}
  f_{n+1}(x) &= f_n(x+u)-f_n(x)\\
             &= \Fdiffp+f{x+u}{\Vect v}-\Fdiffp-f{x+u}{\Vect v}
               - (\Fdiffp+f{x}{\Vect v}-\Fdiffp-f{x}{\Vect v})\\
             &\hspace{2cm}\text{by inductive hypothesis}\\
             &=\Fdiffp+f{x}{\Vect v,u}-\Fdiffp-f{x}{\Vect v}
               -\Fdiffp-fx{\Vect v,u}+\Fdiffp+fx{\Vect v}\\
             &\quad-\Fdiffp+f{x}{\Vect v}+\Fdiffp-f{x}{\Vect v}
               \text{ by Lemma~\ref{lemma:fdiffp-induction}}\\
             &=\Fdiffp+f{x}{\Vect v,u}-\Fdiffp-fx{\Vect v,u}
\end{align*}
as contended. The linearity statment is an easy consequence.
\end{proof}

\paragraph{Lemma~\ref{lemma:fdiff-iter-pre-stable}.}
  Let $f:\Cuball P\to\Ccarrier Q$ be a pre-stable function from $P$ to $Q$.
  For all $\Vect u\in\Cuball P^n$, the functions
  $\Fdiffp- f\_{\Vect u}$, $\Fdiffp+ f\_{\Vect u}$ and
  $\Fdiff f\_{\Vect u}$ are pre-stable from $\Crel P{\Vect u}$ to $Q$.
\begin{proof}
  For $\Fdiff f\_{\Vect u}$, this is an immediate consequence of
  Theorem~\ref{th:non-decreasing-class-equiv} and of the definition of
  an $n$-non-decreasing function. For $\Fdiffp\epsilon f\_{\Vect u}$,
  it results from the fact that pre-stable functions are closed under
  addition and from the fact that, for all $u\in\Cuball P$, the
  function $x\mapsto f(x+u)$ is pre-stable from $\Crel Pu$ to $Q$.
\end{proof}




\paragraph{Lemma~\ref{lemma:fdiff-sommes1}.}
  Let $f:\Cuball P\to\Ccarrier Q$ be a pre-stable function from $P$ to $Q$. Let
  $n\in\Nat$, $x,u,v\in\Cuball P$ and $\Vect u\in\Cuball P^n$, and
  assume that $x+u+v+\sum_{i=1}^nu_i\in\Cuball P$. Then
\begin{align*}
  \Fdiff f{x+u}{\Vect u}&=\Fdiff fx{\Vect u}+\Fdiff fx{u,\Vect u}\\
  \Fdiff fx{u+v,\Vect u}&=\Fdiff fx{u,\Vect u}+\Fdiff f{x+u}{v,\Vect u}\,.
\end{align*}
\begin{proof}
  The equations clearly hold for $n=0$: $f(x+u)=f(x)+\Fdiff fxu$ and
  $\Fdiff fx{u+v}=\Fdiff fxu+\Fdiff f{x+u}v$. The general case follows
  by applying these two latter equations to the function
  $g_\epsilon=\Fdiffp\epsilon f{\_}{\Vect u}$ for $\epsilon\in\{+,-\}$
  as we show now (the function $g_\epsilon$ is pre-stable by
  Lemma~\ref{lemma:fdiff-iter-pre-stable}).  For the first
  equation
   we have
  $g_\epsilon(x+u)=g_\epsilon(x)+\Fdiff{g_\epsilon}xu$ and remember that $\Fdiff f{\_}{\Vect u}=g_+-g_-$. Therefore we have 
\begin{align*}
  \Fdiff fx{\Vect u}+\Fdiff fx{u,\Vect u}
  &=(g_+-g_-)(x)+\Fdiff{(g_+-g_-)}xu\\
    &=g_+(x)-g_-(x)+\Fdiff{g_+}xu-\Fdiff{g_-}xu\\
    &=g_+(x+u)-g_-(x+u)\\
&=\Fdiff f{x+u}{\Vect u}
\end{align*}
using Lemma~\ref{lemma:fdiff-iter}. For the second equation we have similarly
\begin{align*}
  \Fdiff{f}{x}{u,\Vect u}+\Fdiff{f}{x+u}{v,\Vect u}
  &= \Fdiff{(g_+-g_-)}{x}{u}+\Fdiff{(g_+-g_-)}{x+u}{v}\\
  &= \Fdiff{g_+}{x}{u}+\Fdiff{g_+}{x+u}{v}
    -\Fdiff{g_-}{x}{u}-\Fdiff{g_-}{x+u}{v}\\
  &= \Fdiff{g_+}x{u+v}-\Fdiff{g_-}x{u+v}\\
  &= \Fdiff{(g_+-g_-)}{x}{u+v}\\
  &= \Fdiff fx{u+v,\Vect u}
\end{align*}
\end{proof}

\paragraph{Lemma~\ref{lemma:fdiff-pre-stable-gen}.}   Let $f:\Cuball P\to\Ccarrier Q$ be a pre-stable function from $P$ to
  $Q$. Then the map $g:\Cuball{\Scone pP}\to\Ccarrier Q$ defined by
  $g(x,\Vect u)=\Fdiff fx{\Vect u}$ is non-decreasing, for all
  $p\in\Natnz$.
\begin{proof}

  We have $g(x,\Vect u)=\Fdiff fx{\List u1{p}}$. It suffices to prove
  that this function is non-decreasing wrt.~all parameters
  separately. Wrt.~$x$, it results from
  Lemma~\ref{lemma:fdiff-iter-pre-stable}. Wrt.~$u_i$, it results from the
  fact that
\begin{align*}
  g(x,\Vect u)=\Fdiff f{x+u_i}{\Vect v}-\Fdiff f{x}{\Vect v}
\end{align*}
where $\Vect v=(u_1,\dots,u_{i-1},u_{i+1},\dots,u_p)$ and from the
fact that $\Fdiff fy{\Vect v}$ is non-decreasing wrt.~$y$,
which results from Lemma~\ref{lemma:fdiff-iter-pre-stable}.
\end{proof}

\subsection{Proofs of Section~\ref{sect:measurability}}

\paragraph{Lemma~\ref{lemma:mes-path-cst}.}   For any $x\in P$ and $n\in\Nat$, the function $\gamma:\Real^n\to P$
  defined by $\gamma(\Vect r)=x$ belongs to $\Mpath nP$.  If
  $\gamma\in\Mpath nP$ and $h:\Real^p\to\Real^n$ is measurable then
  $\gamma\circ h\in\Mpath pP$.
\begin{proof}
  The fact that all constant functions are measurable paths results
  from the last condition on measurability tests.  For closure under
  precomposition with measurable functions, observe that
  $\Mtestapp l{(\gamma\Comp h)}=(\Mtestapp l\gamma)\Comp(\Id\times h)$
  is measurable because $\Mtestapp l\gamma$ is.
  \todott{Ok. J'utilise $l$ pour les tests de $P$ et $m$ pour les tests
  de $Q$. Ok?}
\end{proof}

\subsection{Proofs of Section~\ref{sect:ccc}}

\paragraph{Lemma~\ref{lemma:sep-lin-cont}.}   Let $f:\Ccarrierb P\times\Cuball Q\to \Ccarrierb R$ be a function such that
  \begin{itemize}
  \item for each $y\in\Cuball Q$, the function
    $f^{(1)}_y:\Ccarrierb P\to\Ccarrierb R$
    defined by $f^{(1)}_y(x)=f(x,y)$ is
    linear (resp.~linear and Scott-continuous)
  \item and for each $x\in\Ccarrierb P$, the function
    $f^{(2)}_x:\Cuball Q\to\Ccarrierb R$ defined by $f^{(2)}_x(y)=f(x,y)$ is
    pre-stable (resp.~pre-stable and Scott-continuous).
  \end{itemize}
  Then the restriction $f:\Cuball P\times\Cuball Q\to\Ccarrierb R$ is
  pre-stable (resp.~pre-stable and Scott-continuous, that is, stable)
  from $P\times Q$ to $R$.
\begin{proof}
Let $n\in\Nat$, $\Vect u\in\Cuball P^n$, $\Vect v\in\Cuball Q^n$. We define
$\Vect w=((u_1,v_1),\dots,(u_n,v_n))\in\Cuball{(P\times Q)}^n$. Let
$(x,y)\in\Cuball{\Crel{(P\times Q)}{\Vect w}}=\Cuball{\Crel P{\Vect
    u}}\times\Cuball{\Crel Q{\Vect v}}$. We must prove that
\begin{align*}
  \Fdiffp-f{(x,y)}{\Vect w}\leq\Fdiffp+f{(x,y)}{\Vect w}\,.
\end{align*}
For $\epsilon\in\{+,-\}$, we have
\begin{align*}
  \Fdiffp\epsilon f{(x,y)}{\Vect w}
  &=\sum_{I\in\Cocard\epsilon n}f(x+\sum_{i\in I}u_i,y+\sum_{i\in I}v_i)\\
  &=\sum_{I\in\Cocard\epsilon n}f(x,y+\sum_{i\in I}v_i)
    +\sum_{i=1}^n\sum_{\Biind{I\in\Cocard\epsilon n}{i\in I}}f(u_i,y+\sum_{j\in
    I}v_j)\quad\text{by linearity on the left}\\
  &=\sum_{I\in\Cocard\epsilon n}f(x,y+\sum_{i\in I}v_i)
    +\sum_{i=1}^n\sum_{J\in\Cocard\epsilon{n-1}}
    f(u_i,y+v_i+\sum_{j\in J}v{(i)}_j)
\end{align*}
where $\Vect{v{(i)}}\in\Cuball Q^{n-1}$ is defined by $v{(i)}_j=
\begin{cases}
  v_j&\text{if }j<i\\
  v_{j+1}&\text{if }j\geq i
\end{cases}
$.

So we have proven that
\begin{align*}
  \Fdiffp\epsilon f{(x,y)}{\Vect w}=
  \Fdiffp\epsilon{f^{(2)}_x}y{\Vect v}+\sum_{i=1}^n\Fdiffp\epsilon{f^{(2)}_{u_i}}{y+v_i}{\Vect{v^{(i)}}}
\end{align*}
and we obtain the required inequation because $f^{(2)}_x$ as well as each of
the function $f^{(2)}_{u_1}$,\dots, $f^{(2)}_{u_n}$ is pre-stable.

For the ``continuity'' part of the statement, let $(w_n)_{n\in\Nat}$ be an
non-decreasing sequence in $\Cuball P\times\Cuball Q$, with $w_n=(u_n,v_n)$ where
$(u_n)_{n\in\Nat}$ and $(v_n)_{n\in\Nat}$ are non-decreasing sequences in
$\Cuball P$ and $\Cuball Q$ respectively. Then
$f(\sup_{i\in\Nat}u_i,\sup_{i\in\Nat}v_i)=\sup_{i,j\in\Nat}f(u_i,v_j)$ by
separate Scott-continuity of $f$. By montonicity of $f$ we get
$f(\sup_{i\in\Nat}u_i,\sup_{i\in\Nat}v_i)=\sup_{i\in \Nat}f(u_i,v_i)$.
\end{proof}

\paragraph{Lemma~\ref{lemma:fun-order}.}   Let $f,g\in\Ccarrierb{\Implm PQ}$. We have $f\leq g$ in $\Implm PQ$
  iff the following condition holds
  \begin{align*}
    \forall n\in\Nat\,\forall\Vect u\in P^n\
    \sum_{i=1}^nu_i\in\Cuball P\Implies\forall
    x\in\Cuballr P{\Vect u}\ \Fdiffp+fx{\Vect u}+\Fdiffp-gx{\Vect
    u}\leq\Fdiffp+gx{\Vect u}+\Fdiffp-fx{\Vect u}\,.
  \end{align*}
\begin{proof}
  Indeed, $f\leq g$ means that there is $h\in\Ccarrierb{\Implm PQ}$ such that
  $g=f+h$, but then we must have $f(x)\leq g(x)$ for all $x$ (which is just the
  condition above for $n=0$) and $h$ is given pointwise by
  $h(x)=g(x)-f(x)$. The condition above 
  coincides with pre-stability of $h$. One concludes the proof by observing
  that when $h$ so defined is non-decreasing, it is automatically
  Scott-continuous and measurable. The second property readily follows from the
  linearity of measurability tests (linear maps commute with subtraction) and
  from the closure properties of measurable functions so let us check the first
  one. Let $(x_n)_{n\in\Nat}$ be a non-decreasing sequence in $\Cuball P$ and
  let $x$ be its lub. Because $h$ is non-decreasing, it is sufficient to prove
  that $h(x)\leq\sup_{n\in\Nat}h(x_n)$, that is
  $g(x)\leq f(x)+\sup_{n\in\Nat}h(x_n)$.  By Scott-continuity of $g$ and by the
  fact that $f$ is non-decreasing, it suffices to prove that, for each
  $k\in\Nat$, $g(x_k)\leq f(x_k)+\sup_{n\in\Nat}h(x_n)$ which is clear since
  $g(x_k)=f(x_k)+h(x_k)\leq f(x_k)+\sup_{n\in\Nat}h(x_n)$.
\end{proof}

\paragraph{Lemma~\ref{lemma:function_cone_complete}.}   The cone $\Implm PQ$ is complete and the lubs in
  $\Cuballp{\Implm PQ}$ are computed pointwise.

\begin{proof}
Let $(f_n)_{n\in\Nat}$ be a non-decreasing sequence in
$\Cuballp{\Implm PQ}$. For any $x\in\Cuball P$ the sequence
$(f_n(x))_{n\in\Nat}$ is non-decreasing in $\Cuball Q$ and we set
$f(x)=\sup_{n\in\Nat}f_n(x)$. Since each $f_n$ is non-decreasing and
Scott-continuous, $f$ has the same properties. To prove that $f$ is
pre-stable, observe that
\(
  \Fdiffp\epsilon fx{\Vect u}=\sup_{n\in\Nat}\Fdiffp\epsilon{f_n}{x}{\Vect u}
\)
by Scott-continuity of $+$ in $Q$. So far we have proven that
$f$ is Scott-continuous and pre-stable. Let us check that $f$ is measurable: let
$\gamma\in\Mpathu nP$, we must prove that $f\Comp\gamma\in\Mpath nQ$. Let
$m\in\Mtest kQ$, we must prove that the function
$h=\Mtestapp m{(f\Comp\gamma)}:\Real^{k+n}\to\Realp$ is measurable. By Scott
continuity of the linear function $m(\Vect r)$, we have $h(\Vect r,\Vect
s)=\sup_{n\in\Nat}h_n(\Vect r,\Vect s)$ where $h_n=\Mtestapp
m{(f_n\Comp\gamma)}$ and conclude that $h$ is measurable by the monotone
convergence theorem. So we have proven that $f\in\Cuballp{\Implm PQ}$. 

Let $n\in\Nat$ and let us prove that $f_n\leq f$. By
Lemma~\ref{lemma:fun-order} it suffices to prove (with the usual
assumptions) that
\begin{align*}
  \Fdiffp+{f_n}x{\Vect u}+\Fdiffp-fx{\Vect u}
  \leq\Fdiffp+{f}x{\Vect u}+\Fdiffp-{f_n}x{\Vect u}\
\end{align*}
which results from the Scott-continuity of $+$, from the fact that
$\Fdiffp\epsilon fx{\Vect u}=\sup_{k\geq
  n}\Fdiffp\epsilon{f_k}{x}{\Vect u}$
and from the fact that the sequence $(f_k)_{k\geq n}$ is
non-decreasing. Last let $g\in\Cuball{(\Implm PQ)}$ be such that
$f_n\leq g$ for all $n$, we must prove that $f\leq g$. Again we apply
straightforwardly Lemma~\ref{lemma:fun-order} and the Scott-continuity
of $+$.
\end{proof}

\paragraph{Lemma~\ref{lemma:stab-eval}.}   The evaluation function $\Ev$ is stable and measurable, that is
  $\Ev\in\INFSCOTTM((\Implm PQ)\times P,Q)$.
\begin{proof}
  Stability results from
  Lemma~\ref{lemma:sep-lin-cont}, observing that $\Ev$ is linear and
  Scott-continuous in its first argument.
  We must prove now that $\Ev$ is measurable. Let $n\in\Nat$,
  $\phi\in\Mpathu n{\Implm PQ}$ and $\gamma\in\Mpathu nP$. We must prove that
  $\Ev\Comp\Pair{\phi}{\gamma}\in\Mpath nQ$. So let $q\in\Nat$ and
  $m\in\Mtest qQ$. Given $(\Vect r,\Vect s)\in\Real^{q+n}$, we have
\begin{align*}
  \Mtestapp{m}{(\Ev\Comp\Pair\phi\gamma)}(\Vect r,\Vect s)
  &= m(\Vect r)(\phi(\Vect s)(\gamma(\Vect s)))\\
  &= (\Mtestapp{(\Mtestfun{\gamma\Comp\pi_1}{m\Comp\pi_2})}{\phi})
    ((\Vect s,\Vect r),\Vect s)
\end{align*}
where $\pi_1:\Real^{n+q}\to\Real^n$ and $\pi_2:\Real^{n+q}\to\Real^q$ are the
projections, which are measurable functions. We know that
$\gamma\Comp\pi_1\in\Mpath{n+q}P$ and $m\Comp\pi_2\in\Mtest{n+q}{\Implm PQ}$
hence $\Mtestfun{\gamma\Comp\pi_1}{m\Comp\pi_2}\in\Mtest{n+q}{\Implm PQ}$ and
therefore
$\Mtestapp{(\Mtestfun{\gamma\Comp\pi_1}{m\Comp\pi_2})}{\phi}\in\Mfun{n+q+n}$
because we know that $\phi\in\Mpath n{\Implm PQ}$. It follows that
$\Mtestapp{m}{(\Ev\Comp\Pair\phi\gamma)}\in\Mfun{q+n}$ because the function
$\Real^{q+n}\to\Real^{n+q+n}$ defined by $(\Vect r,\Vect s)\to((\Vect s,\Vect
r),\Vect s))$ is measurable.
\end{proof}

\paragraph{Lemma~\ref{lemma:int-lin-cont-mes}.}   Let $X$ be a measurable space and let $f:X\to\Realp$ be a measurable and
  bounded function. Then the function $F:\Bmes X\to\Realp$ defined by
  \begin{align*}
    F(\mu)=\int f(x)\mu(dx)
  \end{align*}
  is linear and Scott-continuous.
\begin{proof}
  The proof is straightforward when $f$ is simple. Then one chooses a
  non-decreasing sequence of simple measurable functions
  $f_n:X\to\Realp$ which converges simply to $f$, that is
  $f(x)=\sup_{n\in\Nat}f_n(x)$. We have
\begin{align*}
  F(\mu)=\sup_{n\in\Nat}\int f_n(x)\mu(dx)
\end{align*}
from which the statement follows.
\end{proof}

\paragraph{Lemma~\ref{lemma:inf-cont-to-mes}.}   Let $Q$ be a cone and $X$ be a measurable space. A function
  $f:\Cuball Q\to\Bmes X$ is stable iff for all $U\in\Tribu X$, the
  function $f_U:\Cuball Q\to\Realp$ defined by $f_U(y)=f(y)(U)$ is stable.
\begin{proof}
  The condition is necessary because, for each $U\in\Tribu X$, the
  function $e_U:\mu\mapsto\mu(U)$ is linear and Scott-continuous (and
  hence stable) from $\Bmes X$ to $\Realp$. Conversely let us assume
  that $f_U$ is stable for each $U\in\Tribu X$. We prove that $f$ is
  pre-stable. Let $\Vect v\in\Cuball Q^n$ be such that
  $\sum_{i=1}^nv_i\in\Cuball Q$ and let $y\in\Cuballr Q{\Vect v}$. We
  must prove that $\Fdiffp-fy{\Vect v}\leq\Fdiffp+fy{\Vect v}$ in
  $\Bmes X$, that is, we must prove that
  $e_U(\Fdiffp-fy{\Vect v})\leq e_U(\Fdiffp+fy{\Vect v})$ in $\Realp$,
  for each $U\in\Tribu X$. This results from our assumption and from
  the fact that
  $e_U(\Fdiffp\epsilon fy{\Vect v})=\Fdiffp\epsilon{f_U}y{\Vect
    v}$. Scott-continuity of $f$ is proven similarly.
\end{proof}

\subsection{Proofs of Section~\ref{sect:sound_adeq}}

\paragraph{Lemma~\ref{lemma:interp_kernel}.}
Let $\Gamma\vdash\terma:\typea$, with $\typea=\typeb_1\rightarrow\dots\typeb_k\rightarrow\treal$. For all $i\leq k$, let $b_i\in\semtype{\typeb_i}$ and $\vec g\in\semtype{\Gamma}$, then the map $\terma\mapsto\semterm[\Gamma\vdash\typea]{\terma}\vec gb_1\dots b_k$ is a stochastic kernel from $\Terms[\Gamma\vdash\typea]$ to $\real$.
\todom{I changed the proof of this lemma (old proofs with thomas's issues commented), is it ok?}

\begin{proof}
Let us write $\semterm{\terma}\vec g\vec b$ for $\semterm[\Gamma\vdash\typea]{\terma}\vec gb_1\dots b_k$.
 Since $\Terms[\Gamma\vdash\typea]$ is the coproduct \eqref{eq:sigma_algebra_terms}, it is enough to prove that, for any $n$ and any $\fterma\in\TermsFree[\Gamma\vdash\typea]n$, the restriction ${\semterm{\_}}_\fterma\vec g\vec b$
 of ${\semterm{\_}}\vec g\vec b$ to $\TermsFree[\Gamma\vdash\typea]\fterma$ is a kernel. 
 
 For every $\terma\in\TermsFree[\Gamma\vdash\typea]\fterma$,
 ${\semterm{\terma}}_\fterma\vec g\vec b={\semterm{\fterma\vec r}}_\fterma\vec
 g\vec b$,
 for a suitable $\vec r\in\Real^n$. By the substitution property
 (Lemma~\ref{lemma:substitution}), we have:
 ${\semterm{\fterma\vec r}}_\fterma\vec g\vec
 b=\semterm[\varreal_1:\treal,\dots,\varreal_n:\treal,\Gamma\vdash\typea]{\fterma}(\Dirac{r_1},\dots,\Dirac{r_n},\vec
 g)b_1\dots b_k$. 
 This latter being equal to $h(\vec r)\vec b$, for 
 $h=\semterm{\fterma}\circ(\Diracf^n\times\vec g)$ a map from $\real^n$ to $\semtype\typea$.  Notice that $h\in\Mpathu n{\semtype\typea}$, since $\Diracf^n\times\vec g\in\Mpathu n{\Bmes\real^n\times\semtype\Gamma}$ by Lemma~\ref{lemma:mes-path-cst} and the fact that $\Diracf\in \Mpathu 1{\Bmes\real}$. This implies that the map $\vec r\mapsto h(\vec r)\vec b$ is in $\Mpathu n{\Bmes\real}$\todom{should I argue more here?}, so it is a stochastic 
kernel from $\real^n$ to $\real$ (Example~\ref{ex:meas_cone_with_meas}). We have then the statement because $\Real^n$ and $\TermsFree[\Gamma\vdash\typea]\fterma$  are isomorphic as measurable spaces.
\end{proof}

\paragraph{Proposition~\ref{prop:soundmess}.}
Let $\typea=\typeb_1\rightarrow\dots\typeb_k\rightarrow\treal$, for all $i\leq k$, $b_i\in\semtype{\typeb_i}$, and let $\vec g\in\semtype{\Gamma}$, then $(\semterm[\Gamma\vdash\typea]{\_}\vec gb_1\dots b_k) \circ \Red=\semterm[\Gamma\vdash\typea]{\_}\vec gb_1\dots b_k$, i.e.~Equation~\eqref{eq:soundness_explicit} holds for any $\terma\in\Terms[\Gamma\vdash\typea]$.
\begin{proof}
If $\terma$ is a normal form, then the statement is trivial, as $\Red(\terma,\_)=\Dirac\terma$ which is the identity in $\Kern$. Otherwise, let $\terma=E[R]$ with $R$ a redex  (Lemma~\ref{lemma:evaluation_decomposition}). 

If $R\neq\text{\oracle}$, let $R\rightarrow \termb$, so $\Red(E[R],\_)=\Dirac{E[\termb]}$, and $\left((\semterm[\Gamma\vdash\typea]{\_}\vec gb_1\dots b_k) \circ \Red\right)(E[R])=\int_{\Terms[\Gamma\vdash\typea]}\semterm[\Gamma\vdash\typea]{t}\Dirac{E[\termb]}(dt)=\semterm[\Gamma\vdash\typea]{E[\termb]}\vec gb_1\dots b_k$. By the substitution property (Lemma~\ref{lemma:substitution})) it is sufficient to prove $\semterm R=\semterm N$ to conclude. This is done by cases, depending on the type of the redex. The cases $R$ is a $\beta$ or $\fix$ redex follow the standard reasoning proving the soundness of  a cpo-enriched cartesian closed category. 

In case $R=\ifz{\num 0}\termc\termb$ then, by applying the definition in Figure~\ref{fig:interpretation}, $\semterm{R}\vec g=(\semterm{\num 0}\vec g\{0\})\semterm{\termb}\vec g+(\semterm{\num 0}\vec g(\real\!\setminus\!\{0\}))\semterm{\termc}\vec g=\semterm{\termb}\vec g$. The case for  a numeral different from $\num 0$ is analogous.

In case $R=\num f(\num{r_1},\dots,\num{r_n})$ and so $\termb=\num{f(r_1,\dots,r_n)}$ we can conclude since $\semterm[\Gamma\vdash\treal]R\vec g= (\Dirac{r_1}\otimes\dots\otimes\Dirac{r_n})\circ f^{-1}= \Dirac{f(r_1,\dots,r_n)}=\semterm[\Gamma\vdash\treal]\termb\vec g$. 

In case $R=\letterm{x}{\num r}{\termc}$ and so $\termb=\termc\{\num r/x\}$, we have: $\semterm[\Gamma\vdash\treal]R\vec g=\int_\Real(\semterm[\Gamma,x:\treal\vdash\treal]\termc\vec g\circ\Diracf)(p)(\Dirac r(dp))=\semterm[\Gamma,x:\treal\vdash\treal]\termc\vec g \Dirac r$. This latter is equal to $\semterm[\Gamma\vdash\treal]\termb\vec g$ by the substitution property. 

The last case is the sampling redex: $\terma=E[\oracle]$. Then: 
\begin{align*}
&\left((\semterm[\Gamma\vdash\typea]{\_}\vec gb_1\dots b_k) \circ \Red\right)(\terma)\\
&=\int_{\Terms[\Gamma\vdash\typea]}\semterm[\Gamma\vdash\typea]{t}\vec gb_1\dots b_k\Red(E[\oracle],dt)\\
&=\int_{\Real}\semterm[\Gamma\vdash\typea]{E[\num r]}\vec gb_1\dots b_k\lambda_{[0,1]}(dr)&\text{By definition of $\Red$}\\
&=\int_{\Real}\semterm[y:\treal,\Gamma\vdash\typea]{E[y]}(\semterm{\num r}\vec g)\vec gb_1\dots b_k\lambda_{[0,1]}(dr)&\text{By substitution (Lemma~\ref{lemma:substitution}), with $y$ fresh}\\
&=\semterm[y:\treal,\Gamma\vdash\typea]{E[y]}\left(\int_{\Real}\semterm{\num r}\vec g\lambda_{[0,1]}(dr)\right)\vec gb_1\dots b_k&\text{By linearity (Lemma~\ref{lemma:linearity_eval}) and Scott-continuity}\\
&=\semterm[y:\treal,\Gamma\vdash\typea]{E[y]}\vec g\lambda_{[0,1]}b_1\dots b_k\\
&=\semterm[\Gamma\vdash\typea]{\terma}\vec gb_1\dots b_k&\text{By substitution (Lemma~\ref{lemma:substitution})}
\end{align*}
\todot{C'est quoi ce $y$ qui apparaît à la 4ème ligne?}
\todom{explique'. Ok?}
\end{proof}

\paragraph{Lemma~\ref{lemma:key-lemma}.} Let $x_1:\typeb_1,\dots,x_n:\typeb_n\vdash\terma:\typea$ and $\forall i\leq n, u_i\prec^{\typeb_i}\termb_i$, then: $\semterm\terma\vec u\prec^{\typea}\terma\{{\vec\termb/\vec x}\}$.

The proof of this lemma uses some two auxiliary lemmata. 

\begin{lemma}\label{lemma:red_f}
Given a $k$-ary functional identifier $\num f\in\Const$ and  $\List \terma1k$ such that $\Tseq{}{M_i}{\treal}$ for
  each $i\leq k$, then we have: $(\Red^{\infty}(\terma_1,\_)\otimes\dots\otimes\Red^{\infty}(\terma_k,\_))(\num{f^{-1}(U)})\leq\Red^{\infty}(\num f(\terma_1,\dots,\terma_k),\num U)$, for every measurable subset $U$ of $\Real$. 
\end{lemma}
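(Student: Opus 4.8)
The plan is to deduce the lemma from a single \emph{context factorization property} of the operational semantics, applied $k$ times (once for each argument of $\realfun f$), and then to reassemble the resulting iterated integral via Tonelli's theorem. Concretely, I would first establish the following.

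\textbf{Context sub-lemma.} Let $E[\;]$ be an evaluation context and $\Tseq{}{M}{\treal}$ a closed term such that $E[\num r]$ is a closed term for every numeral $\num r$. Then for every measurable set $V$ of normal forms,
\[
  \Red^\infty(E[M],V)\ \ge\ \int_\Real \Red^\infty(E[\num r],V)\,\Red^\infty(M,dr).
\]
The proof of this sub-lemma is a routine induction on the number of reduction steps. Writing $j_E\colon\Terms\to\Terms$ for the measurable map $N\mapsto E[N]$ (measurable by Lemma~\ref{lemma:subst_measurable}), the key point is that, by Lemma~\ref{lemma:evaluation_decomposition} and Equation~\eqref{eq:red}, as long as the term in the hole is not a numeral the unique redex of $E[N']$ is that of $N'$, so $\Red(E[N'],\_)=(j_E)_\ast\Red(N',\_)$ (this also covers the $\oracle$ case); once a numeral $\num r$ has been reached, reduction proceeds inside $E[\num r]$. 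Splitting $\Red^n(E[M],V)$ according to the first step at which the hole becomes a numeral and using the kernel composition $\Red^{n+1}=\Red\Comp\Red^n$ of Equation~\eqref{eq:kern_composition}, one gets a lower bound of the stated shape with $\Red^\infty$ replaced by finite approximants and a measure $\nu_n\le\Red^n(M,\_)$ carried by numerals; letting $n\to\infty$ and applying the monotone convergence theorem (the approximants being non-decreasing, since normal forms are accumulation points of $\Red$) yields the inequality. The integrand $\num r\mapsto\Red^\infty(E[\num r],V)$ is measurable because $\Red^\infty(\_,V)=\sup_n\Red^n(\_,V)$ is a pointwise sup of measurable functions (each $\Red^n$ is a stochastic kernel by Proposition~\ref{prop:red_kernel}), precomposed with a measurable map.

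Granting the sub-lemma, the proof of Lemma~\ref{lemma:red_f} runs as follows. Put $\mu_i=\Red^\infty(M_i,\_)$, a subprobability measure on $\Real$, and for $0\le i\le k$ and $\vec r=(r_1,\dots,r_i)\in\Real^i$ set $T_i(\vec r)=\realfun f(\num{r_1},\dots,\num{r_i},M_{i+1},\dots,M_k)$, a closed term of type $\treal$. For $0\le i<k$ we have $T_i(\vec r)=E_i[M_{i+1}]$ where $E_i=\realfun f(\num{r_1},\dots,\num{r_i},\HOLE,M_{i+2},\dots,M_k)$ is a valid evaluation context (all positions before the hole are numerals), so the sub-lemma gives
\[
  \Red^\infty(T_i(\vec r),\num U)\ \ge\ \int_\Real \Red^\infty\big(T_{i+1}(\vec r,r_{i+1}),\num U\big)\,\mu_{i+1}(dr_{i+1})\qquad (0\le i<k).
\]
For $i=k$, the term $T_k(\vec r)=\realfun f(\num{r_1},\dots,\num{r_k})$ is a redex that reduces in one step to the normal form $\num{f(r_1,\dots,r_k)}$, whence by Equations~\eqref{eq:red} and~\eqref{eq:red_infty} one has $\Red^\infty(T_k(\vec r),\num U)=\Dirac{\num{f(\vec r)}}(\num U)=\Mchar{f^{-1}(U)}(\vec r)$, since $\num{f(\vec r)}\in\num U$ iff $\vec r\in f^{-1}(U)$. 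Starting from $T_0()=\realfun f(M_1,\dots,M_k)$, chaining the above $k$ inequalities (using monotonicity of the Lebesgue integral, all integrands being measurable by the remark above together with the fact that each $T_i$ lands in a single coproduct component of $\Terms[\vdash\treal]$, cf.~Equations~\eqref{eq:sigma_algebra_terms}–\eqref{eq:measurable_set_terms}) and substituting the value at $i=k$ yields
\[
  \Red^\infty(\realfun f(M_1,\dots,M_k),\num U)\ \ge\ \int_\Real\!\cdots\!\int_\Real \Mchar{f^{-1}(U)}(r_1,\dots,r_k)\,\mu_k(dr_k)\cdots\mu_1(dr_1).
\]
By Tonelli's theorem (the integrand is non-negative, $f^{-1}(U)$ is measurable because $\realfun f$ is a measurable map, and the $\mu_i$ are finite measures) the right-hand side equals $(\mu_1\otimes\cdots\otimes\mu_k)(\num{f^{-1}(U)})$ in the sense of Equation~\eqref{eq:tensor_kernel}, which is exactly the left-hand side of the claim.

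The main obstacle is the context sub-lemma, and specifically the bookkeeping in its inductive proof: along the reduction of $E[M]$ one must keep track of how much probability mass is still "inside" $M$ versus has already "exited" into $E$, correctly account for the way the step count is shared between the two phases, and carry out all of this measurably, using the coproduct description of $\Terms$ from Equation~\eqref{eq:sigma_algebra_terms}. Once the sub-lemma is in place, the iteration over the $k$ arguments and the final appeal to Tonelli are entirely routine.
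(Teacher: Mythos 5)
Your proposal is correct, but it takes a genuinely different route from the paper. The paper proves the lemma directly on finite iterates: it shows that for all $n_1,\dots,n_k$ there exists $m$ with $(\Red^{n_1}(\terma_1,\_)\otimes\dots\otimes\Red^{n_k}(\terma_k,\_))(\num{f^{-1}(U)})\leq\Red^{m}(\num f(\terma_1,\dots,\terma_k),\num U)$, by induction on $\sum_i n_i$, peeling one reduction step off the leftmost reducible argument, pushing the resulting integral through the tensor, and invoking the inductive hypothesis under the integral; the conclusion then follows by taking suprema. You instead factor out a single reusable \emph{evaluation-context} inequality, $\Red^\infty(E[\terma],V)\geq\int_\Real\Red^\infty(E[\num r],V)\,\Red^\infty(\terma,dr)$, apply it $k$ times with the contexts $\realfun f(\num{r_1},\dots,\num{r_i},[\;],\terma_{i+2},\dots,\terma_k)$ (which are indeed legal evaluation contexts, since all positions left of the hole are numerals), and reassemble the iterated integral into the product measure by Tonelli. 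Both arguments rest on the same ingredients (Lemma~\ref{lemma:evaluation_decomposition}, measurability via Lemma~\ref{lemma:subst_measurable}, monotone convergence), and your terminal case $\Red^\infty(\realfun f(\num{r_1},\dots,\num{r_k}),\num U)=\Dirac{f(\vec r)}(U)=\chi_{f^{-1}(U)}(\vec r)$ is right. What your route buys is modularity: the context sub-lemma is exactly a generalization of the paper's Lemma~\ref{lemma:let}, and it would also yield Lemma~\ref{lemma:red_if} with little extra work, so one lemma replaces three ad hoc inductions. What it costs is that all the real work is concentrated in that sub-lemma, which you only sketch: the pushforward identity $\Red(E[N'],\_)=(j_E)_\ast\Red(N',\_)$ for non-numeral $N'$ (including the $\oracle$ case), the first-passage bookkeeping, and the passage to the limit via a non-decreasing sequence of measures all need to be written out — the cleanest way being precisely the paper's pattern for Lemma~\ref{lemma:let}, i.e.\ proving for all $n_1,n_2$ the existence of $m$ with $\int_{\num\Real}\Red^{n_2}(E[\num r],V)\,\Red^{n_1}(\terma,dt)\leq\Red^{m}(E[\terma],V)$ by induction on $n_1$, rather than decomposing $\Red^n(E[\terma],V)$ by first-passage times. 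Also note that identifying the iterated integral with $(\mu_1\otimes\cdots\otimes\mu_k)(\num{f^{-1}(U)})$ uses the standard uniqueness of the product of finite measures beyond the rectangle formula of Equation~\eqref{eq:tensor_kernel}; this is harmless but worth saying.
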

\begin{proof}
We prove that for all $\num f$ of arity $k$, for all  $\terma_1$,\dots,$\terma_k$ closed terms of type $\treal$, for all 
$U\in\Sigma_{\Real}$, for all $n_1,\dots,n_k\in\mathbb N$, there exists $m\in\mathbb
N$ such that:
\begin{itemize}
	\item[($\star$)] $\Red^{n_1}(\terma_1,\_)\otimes\dots\otimes\Red^{n_k}(\terma_k,\_))(\num{f^{-1}(U)})\leq \Red^{m}(\num f(\terma_1,\dots,\terma_k),\num U)$.
\end{itemize}
The statement follows by the definition of $\Red^\infty$ as a lub. The proof is by induction on $\sum_in_i$.

If $\terma_i$ is a numeral $\num{r_i}$ for every $i\leq k$, then:
$\Red^{n_1}(\terma_1,\_)\otimes\dots\otimes\Red^{n_k}(\terma_k,\_))(\num{f^{-1}(U)})=\Dirac{\num{f(r_1,\dots,r_n)}}(\num
U)=\Red(\num f(\num{r_1},\dots,\num{r_k}),\num U)$ and we are done. Otherwise
there must be one $\terma_i$ which is reducible (notice that the term $\num
f(\terma_1,\dots,\terma_k)$ is closed by hypothesis, so no $\terma_i$ can be a
variable). So let us prove ($\star$) supposing that $i$ is minimal such that
  $M_i$ is reducible.
%
If $n_i=0$, then $\Red^{n_i}(\terma_i,\_)=\Dirac{\terma_i}$ and since
$\num{f^{-1}(U)}\subseteq\num{\Real^k}$, we have that the left-hand side
expression of ($\star$) vanishes
and the equality trivially holds for any $m$. Otherwise, writing
$\mu_{n_j,\terma_j}$ for the measure $\Red^{n_j}(\terma_j,\_)$, we have:
\begin{align*}
&\mu_{n_1,\num{r_1}}\otimes\dots\otimes\mu_{n_{i-1},\num{r_{i-1}}}\otimes\mu_{n_i,\terma_i}\otimes\mu_{n_{i+1},\terma_{i+1}}\otimes\dots\otimes\mu_{n_{k},\terma_{k}}(\num{f^{-1}(U)})\\
&=\Dirac{\num{r_1}}\otimes\dots\otimes\Dirac{\num{r_{(i-1)}}}\otimes
	\left(
		\int_{\Terms[\vdash\treal]}\mu_{n_i-1,t}\Red(\terma_i,dt)
	\right)
\otimes\mu_{n_{i+1},\terma_{i+1}}\otimes\dots\otimes\mu_{n_{k},\terma_{k}}(\num{f^{-1}(U)})\\
&=\int_{\Terms[\vdash\treal]}
			\left(
			\Dirac{\num{r_1}}\otimes\dots\otimes\Dirac{\num{r_{(i-1)}}}
			\otimes\mu_{n_i-1,t}\otimes\mu_{n_{i+1},\terma_{i+1}}\otimes\dots\otimes\mu_{n_{k},\terma_{k}}(\num{f^{-1}(U)})
	\right)\Red(\terma_i,dt)\\
&\leq\int_{\Terms[\vdash\treal]}
	\Red^m(\num f(\num{r_1},\dots,\num{r_{i-1}},t,\terma_{i+1},\dots,\terma_{k}), \num U)\Red(\terma_i,dt)\\
&=\Red^{m+1}(\num f(\num{r_1},\dots,\num{r_{i-1}},\terma_i,\terma_{i+1},\dots,\terma_{k}), \num U)
\end{align*}
\noindent where the inequality between the third and fourth lines is an application of the induction hypothesis. 
\end{proof}
\begin{lemma}\label{lemma:red_if}
Let $\termc$, $\terma'$ and $\terma''$ be closed terms of type $\treal$, then:
\[
	(\Red^{\infty}(\termc,\{\num 0\}))\Red^{\infty}(\terma',\_)+(\Red^{\infty}(\termc,\num{\Real\setminus\{0\}}))\Red^{\infty}(\terma'', \_)\leq\Red^{\infty}(\ifz\termc{\terma'}{\terma''},\_).
\]
\end{lemma}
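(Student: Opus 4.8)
The plan is to follow the strategy of Lemma~\ref{lemma:red_f}: reduce the statement, which involves the limit kernels $\Red^\infty$, to a uniform family of inequalities between the finite iterates $\Red^n$, and prove these by induction on the number of reduction steps. First I would observe that, since $\Red^\infty(\terma,\cdot)$ is by definition the supremum of the non-decreasing sequence $(\Red^n(\terma,\cdot))_n$ on sets of normal forms, and since addition and multiplication of non-negative reals commute with such suprema, it is enough to establish the following claim: \emph{for all closed terms $\termc,\terma',\terma''$ of type $\treal$, all $n,n',n''\in\Nat$, and all measurable sets $V$ of normal forms,}
\begin{equation*}
  \Red^{n}(\termc,\{\num 0\})\,\Red^{n'}(\terma',V)+\Red^{n}(\termc,\num{\Real\setminus\{0\}})\,\Red^{n''}(\terma'',V)\ \leq\ \Red^{m}(\ifz\termc{\terma'}{\terma''},V),
\end{equation*}
\emph{with $m=n+\max(n',n'')+1$.} It is crucial that $m$ be fixed as an explicit function of the approximation levels, \emph{not} depending on the terms, so that the bound can be carried through the Lebesgue integral appearing in the $\oracle$ case. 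Letting $n,n',n''\to\infty$ and using $\Red^m(\cdot,V)\leq\Red^\infty(\cdot,V)$ on normal-form sets then yields the lemma.

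The claim is proved by induction on $n$. If $\termc$ is a numeral $\num r$, then $\Red^n(\termc,\cdot)=\Dirac\termc$ (numerals are normal forms), so the left-hand side equals $\Red^{n'}(\terma',V)$ when $r=0$ and $\Red^{n''}(\terma'',V)$ otherwise; in either case $\ifz\termc{\terma'}{\terma''}$ is a redex reducing in one step to $\terma'$, resp.\ $\terma''$, hence $\Red^m(\ifz\termc{\terma'}{\terma''},\cdot)=\Red^{m-1}(\terma',\cdot)$, resp.\ $\Red^{m-1}(\terma'',\cdot)$, and we conclude since $m-1\geq\max(n',n'')$ and the sequence $(\Red^k(\terma,V))_k$ is non-decreasing for any term $\terma$ and any set $V$ of normal forms. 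If $\termc$ is not a numeral, then either $n=0$, in which case $\Red^0(\termc,\cdot)=\Dirac\termc$ assigns mass $0$ to $\{\num 0\}$ and to $\num{\Real\setminus\{0\}}$ so the left-hand side vanishes; or $n>0$, and then $\termc$ is a closed term of ground type not in normal form, so by Lemma~\ref{lemma:evaluation_decomposition} it decomposes uniquely as $\termc=E[R]$. The key observation is that $\ifz{E[\;]}{\terma'}{\terma''}$ is itself an evaluation context (it is one of the clauses of the grammar in Figure~\ref{subfig:evaluation_context}), so $\ifz\termc{\terma'}{\terma''}$ is this context filled with the same redex $R$.

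There are then two subcases. If $R\neq\oracle$, say $R\to R'$, then Equation~\eqref{eq:red} together with associativity of kernel composition gives $\Red^n(\termc,\cdot)=\Red^{n-1}(E[R'],\cdot)$ and $\Red^m(\ifz\termc{\terma'}{\terma''},\cdot)=\Red^{m-1}(\ifz{E[R']}{\terma'}{\terma''},\cdot)$; since $m-1=(n-1)+\max(n',n'')+1$, the claim for $(\termc,n)$ follows from the induction hypothesis applied to $(E[R'],n-1)$. If $R=\oracle$, then Equation~\eqref{eq:red} gives $\Red^n(\termc,V)=\int_{[0,1]}\Red^{n-1}(E[\num r],V)\,\lambda(dr)$ and, likewise, $\Red^m(\ifz\termc{\terma'}{\terma''},V)=\int_{[0,1]}\Red^{m-1}(\ifz{E[\num r]}{\terma'}{\terma''},V)\,\lambda(dr)$; distributing the two products over these integrals, the left-hand side of the claim for $(\termc,n)$ is exactly the integral over $r\in[0,1]$ of the left-hand side of the claim for $(E[\num r],n-1)$, which by the induction hypothesis — applied pointwise in $r$ but with the \emph{same} exponent $m-1$ — is bounded by $\int_{[0,1]}\Red^{m-1}(\ifz{E[\num r]}{\terma'}{\terma''},V)\,\lambda(dr)=\Red^m(\ifz\termc{\terma'}{\terma''},V)$.

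The one genuinely delicate point, and the reason for committing to the explicit value of $m$ in advance, is precisely this $\oracle$ subcase: a bound depending on $r$ could not be pulled out of the integral. Everything else is routine bookkeeping with the definition of $\Red$ in Equation~\eqref{eq:red}, the decomposition Lemma~\ref{lemma:evaluation_decomposition}, and the monotonicity of the iterates of $\Red$ on sets of normal forms recalled before Equation~\eqref{eq:red_infty}.
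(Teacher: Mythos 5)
Your proof is correct and follows essentially the same route as the paper's: reduce the statement to an inequality between finite iterates $\Red^{n}$, induct on the number of steps performed on $\termc$, distribute the two products over the kernel composition $\Red^{n}=\Red^{n-1}\Comp\Red$, and apply the induction hypothesis under the integral before recombining into $\Red^{m}(\ifz\termc{\terma'}{\terma''},\_)$. The only difference is that you fix $m=n+\max(n',n'')+1$ in advance, which makes explicit the uniformity in the integration variable (essential in the $\oracle$ subcase, where the support is a continuum) that the paper's existentially quantified $m$ in its claim $(\star)$ leaves implicit --- a useful clarification rather than a different argument.
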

\begin{proof}
Similar to the proof of Lemma~\ref{lemma:red_f}. We prove that for every $\termc$, $\terma'$, $\terma''$ closed terms of type $\treal$, for every $U\in\Sigma_{\Real}$, for every $n_1,n_2,n_3\in\mathbb N$, there is $m\in\mathbb N$ such that:
\begin{itemize}
\item[($\star$)] $(\Red^{n_1}(\termc,\{\num 0\}))\Red^{n_2}(\terma',U)+(\Red^{n_1}(\termc,\num\Real\setminus\{\num 0\}))\Red^{n_3}(\terma'',U)\leq\Red^{m}(\ifz\termc{\terma'}{\terma''},U)$.
\end{itemize}
The proof is by induction on $n_1$.\todot[inline]{Bizarre
  que tu dises que c'est par induction sur $n_1$
  et que ça ne commence pas pas ``if $n_1=0$''\dots}\todom{c'est parce que dans cette facon j'evite de repeter le cas L un numeral\dots}
If $\termc=\num
0$, then the left-hand side expression in
($\star$)
is equal to
$\Red^{n_2}(\terma',U)=\Red^{n_2+1}(\ifz\termc{\terma'}{\terma''},U)$
and we are done. The case $\termc$
is a numeral different from $\num 0$ is symmetric.

Let us then suppose $\termc$ reducible. If $n_1=0$ then the inequality trivially holds because the left-hand side expression in ($\star$) is zero. If $n_1>0$, then 
$\Red^{n_1}(\termc,\_)=\int_{\Terms[\vdash\treal]}\Red^{n_1-1}(t,\_)\Red(\termc,dt)$, hence the left-hand side expression in ($\star$) is equal to:
\begin{align*}
&\int_{\Terms[\vdash\treal]}\Red^{n_1-1}(t,\{\num 0\})\Red^{n_2}(\terma',U)\Red(\termc,dt)+\int_{\Terms[\vdash\treal]}\Red^{n_1-1}(t,\num\Real\setminus\{\num 0\})\Red^{n_2}(\terma'',U)\Red(\termc,dt)\\
&=\int_{{\Terms[\vdash\treal]}}\left(\Red^{n_1-1}(t,\{\num 0\})\Red^{n_2}(\terma',U)+\Red^{n_1-1}(t,\num\Real\setminus\{\num 0\})\Red^{n_2}(\terma'',U)\right)\Red(\termc,dt)\\
&\leq\int_{{\Terms[\vdash\treal]}}\Red^{m}(\ifz t{\terma'}{\terma''}\Red(\termc,dt)=\Red^{m+1}(\ifz \termc{\terma'}{\terma''}
\end{align*}
\end{proof}

\begin{lemma}\label{lemma:let}
Given $\vdash\terma':\treal$ and $x:\treal\vdash\terma'':\treal$, we have:
\[
	\int_{\num\Real}\Red^{\infty}(\terma''\{t/x\},\_)\Red^{\infty}(\terma',dt)\leq\Red^{\infty}(\letterm x{\terma'}{\terma''},\_),
\]
\noindent where recall that $\num\Real$ is the set of all numerals, which is a sub-measurable space of $\Terms[\vdash\treal]$ isomorphic to $\Real$.
\end{lemma}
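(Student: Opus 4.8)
The plan is to prove Lemma~\ref{lemma:let} in the same style as Lemmas~\ref{lemma:red_f} and~\ref{lemma:red_if}: show that each finite iterate of the left-hand side is bounded by some finite iterate of the right-hand side, and then pass to the supremum. Concretely, I would prove that for all $\vdash\terma':\treal$, all $x:\treal\vdash\terma'':\treal$, all measurable $U\subseteq\Real$, and all $n_1,n_2\in\Nat$, there exists $m\in\Nat$ such that
\begin{align*}
  \int_{\num\Real}\Red^{n_2}(\terma''\{t/x\},\num U)\,\Red^{n_1}(\terma',dt)\leq\Red^{m}(\letterm x{\terma'}{\terma''},\num U)\,.
\end{align*}
Since $\Red^\infty=\sup_k\Red^k$ and the sequences $\Red^k(\terma',\_)$ and $\Red^k(\terma''\{t/x\},\_)$ are monotone non-decreasing (on measurable sets of normal forms), monotone convergence lets us push the suprema through the integral on the left, so the statement of the lemma follows.

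First I would do the induction on $n_1$. If $\terma'$ is a numeral $\num r$, then $\Red^{n_1}(\terma',\_)$ is (for $n_1\ge 0$) already the Dirac mass $\Dirac{\num r}$, and $\letterm x{\num r}{\terma''}\to\terma''\{\num r/x\}$ is a one-step reduction, so $\int_{\num\Real}\Red^{n_2}(\terma''\{t/x\},\num U)\Dirac{\num r}(dt)=\Red^{n_2}(\terma''\{\num r/x\},\num U)=\Red^{n_2+1}(\letterm x{\num r}{\terma''},\num U)$ and we take $m=n_2+1$. If $\terma'$ is reducible (it cannot be a variable, being closed), then for $n_1=0$ the left-hand side vanishes because $\Red^0(\terma',\_)=\Dirac{\terma'}$ is concentrated on the non-numeral $\terma'$, so the integral over $\num\Real$ is zero and the inequality is trivial for any $m$. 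For $n_1>0$ we unfold $\Red^{n_1}(\terma',\_)=\int_{\Terms[\vdash\treal]}\Red^{n_1-1}(t,\_)\,\Red(\terma',dt)$; using the fact that the map $t\mapsto\letterm x{t}{\terma''}$ is induced by an evaluation context so that $\Red(\letterm x{\terma'}{\terma''},\_)$ is (up to the one reduction step on $\terma'$) the pushforward of $\Red(\terma',\_)$ along $t\mapsto\letterm x t{\terma''}$, and applying Fubini to exchange the two integrations, we reduce to
\begin{align*}
  \int_{\Terms[\vdash\treal]}\left(\int_{\num\Real}\Red^{n_2}(\terma''\{t'/x\},\num U)\,\Red^{n_1-1}(t,dt')\right)\Red(\terma',dt)\,,
\end{align*}
whose inner integral, by the induction hypothesis applied to the reduct $t$ in place of $\terma'$, is bounded by $\Red^{m_t}(\letterm x t{\terma''},\num U)$ for some $m_t$; taking $m$ to be one more than a uniform such bound (the reducts $t$ in the support of $\Red(\terma',\_)$ range over finitely many $\fterma\vec r$-families, so a single $m$ works) and integrating against $\Red(\terma',\_)$ gives $\Red^{m+1}(\letterm x{\terma'}{\terma''},\num U)$, as wanted.

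The main obstacle is bookkeeping around the measurability and the uniformity of $m$: one must be careful that the induction hypothesis is applied with a uniformly chosen iteration count over the (measurable family of) one-step reducts $t$ of $\terma'$, and that all the integrals in sight are well defined — which follows from Proposition~\ref{prop:red_kernel} (so that $\Red$ and its iterates are stochastic kernels) and from Lemma~\ref{lemma:subst_measurable} (so that $t\mapsto\terma''\{t/x\}$ and $t\mapsto\letterm x t{\terma''}$ are measurable). The exchange of integrals is a routine application of Fubini's theorem for stochastic kernels, already invoked for associativity of kernel composition in Section~\ref{section:compendium}; and the final passage to the supremum uses the monotone convergence theorem exactly as in the proof of Proposition~\ref{prop:soundmess}.
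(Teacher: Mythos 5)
Your proposal is correct and follows essentially the same route as the paper: prove by induction on $n_1$ the finite-iterate inequality $\int_{\num\Real}\Red^{n_2}(\terma''\{t/x\},\num U)\,\Red^{n_1}(\terma',dt)\leq\Red^{m}(\letterm x{\terma'}{\terma''},\num U)$ (numeral and $n_1=0$ base cases, then unfold $\Red^{n_1}$, exchange the integrals via Fubini/associativity of kernel composition, apply the induction hypothesis under the integral, and use that $\letterm x{[\ ]}{\terma''}$ is an evaluation context), and then pass to the supremum. Your explicit remarks on the uniformity of $m$ over the reducts and on monotone convergence only make precise points the paper leaves implicit.
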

\begin{proof}
First of all, notice that the integral is meaningful because $\Red^{\infty}(\terma''\{t/x\},\_)$ is the stochastic kernel resulting from the composition of $\Red^{\infty}$ and $\subst{x,\terma''}$, this latter being a measurable function by Lemma~\ref{lemma:subst_measurable}\todom{here we are using that composing kernel with measurable functions gives kernel (to check and state in the appendix)}. Then the proof follows the reasoning of the proof of Lemma~\ref{lemma:red_f}. 

We prove that for every $\vdash \terma':\treal$ and $x:\treal\vdash \terma'':\treal$, for every $U\in\Sigma_{\Real}$, $n_1,n_2\in\mathbb N$, there exists $m\in\mathbb N$ such that:
\begin{itemize}
\item[($\star$)] $\int_{\num\Real}\Red^{n_2}(\terma''\{t/x\},\num U)\Red^{n_1}(\terma',dt)\leq\Red^{m}(\letterm x{\terma'}{\terma''},\num U)$.
\end{itemize}
The proof is by induction on $n_1$. If $\terma'$  is a numeral $\num r$, then the left-hand side expression in ($\star$) is equal to $\Red^{n_2}(\terma'\{\num r/x\},U)=\Red^{n_2+1}(\letterm x{\terma'}{\terma''},U)$ and we are done. So let us suppose that $\terma'$ is reducible. Under this hypothesis, if $n_1=0$, the left-hand side expression of ($\star$) is zero and so the inequality holds. Otherwise:
\begin{align*}
&\int_{\num\Real}\Red^{n_2}(\terma''\{t/x\},\num U)\Red^{n_1}(\terma',dt)\\
&=\int_{\num\Real}\Red^{n_2}(\terma''\{t/x\},\num U)\left(\int_{\Terms[\vdash\treal]}\Red^{n_1-1}(u,dt)\Red(\terma',du)\right)&\text{by def $\Red^{n_1}$}\\
&=\int_{\Terms[\vdash\treal]}\left(\int_{\num\Real}\Red^{n_2}(\terma''\{t/x\},\num U)\Red^{n_1-1}(u,dt)\right)\Red(\terma',du)&\text{by assoc. $\Kern$ composition}\\
&\leq\int_{\Terms[\vdash\treal]}\Red^m(\letterm xu{\terma''},\num U)\Red(\terma',du)&\text{by induction hypothesis}\\
&=\Red^{m+1}(\letterm x{\terma'}{\terma''},\num U)
\end{align*}
\todot[inline]{Je ne comprends pas la 2ème formule avec $dt$ et $du$ dans la
  même intégrale.}
 \todom[inline]{C'est une notation que j'ai retrouvé dans Panangaden. Precisement il faudrait dire:
 $
 \int_{\num\Real}\Red^{n_2}(\terma''\{t/x\},\num U)\left[\left(W\mapsto\int_{\Terms[\vdash\treal]}\Red^{n_1-1}(u,W)\Red(\terma',du)\right)(dt)\right]
 $. Dois je changer?
 }
\end{proof}

\begin{lemma}\label{lemma:anti-reduction}
Let $\vdash E[R]:\typea$ with $R\rightarrow \termb$ and $R\neq\oracle$. Then $f\prec^\typea E[N]$ implies $f\prec^\typea E[R]$.
\end{lemma}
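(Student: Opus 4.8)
The plan is to prove the statement by induction on the type $\typea$, with the evaluation context $E[\;]$, the redex $R$ and its contractum $\termb$ universally quantified. First I would isolate the ground-type ingredient that makes anti-reduction work: since $R\to\termb$ with $R\neq\oracle$, the first clause of Equation~\eqref{eq:red} gives $\Red(E[R],\_)=\Dirac{E[\termb]}$, so by associativity of kernel composition ($\Red^{n+1}=\Red^n\Comp\Red$) we get, for every $n$ and every measurable $U\subseteq\Terms$,
\[
\Red^{n+1}(E[R],U)=\int_{\Terms}\Red^n(t,U)\,\Red(E[R],dt)=\int_{\Terms}\Red^n(t,U)\,\Dirac{E[\termb]}(dt)=\Red^n(E[\termb],U)\,.
\]
Taking suprema over $n$ and recalling Equation~\eqref{eq:red_infty}, this yields $\Red^\infty(E[\termb],U)\le\Red^\infty(E[R],U)$ for every measurable set $U$ of normal forms (in fact an equality, since $E[R]$ is not a normal form); this is exactly the inequality $\prec^\treal$ is built to propagate.

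For the base case $\typea=\treal$, the object $f$ is a measure $\mu\in\Cuball(\Bmes\Real)$, and $\mu\prec^\treal E[\termb]$ means $\mu(V)\le\Red^\infty(E[\termb],\num V)$ for all $V\in\Sigma_\Real$. Since $\num V$ is a measurable set of numerals, hence of normal forms, the displayed inequality gives $\mu(V)\le\Red^\infty(E[R],\num V)$, i.e.~$\mu\prec^\treal E[R]$. For the inductive step $\typea=\typeb\to\typec$, I would take $u\prec^\typeb\termb'$ with $\termb'$ a closed term of type $\typeb$; the hypothesis $f\prec^\typea E[\termb]$ gives $f(u)\prec^\typec E[\termb]\termb'$. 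Now $E[\;]\termb'$ is again an evaluation context (second production of the grammar of Figure~\ref{subfig:evaluation_context}), filling its hole with $R$, resp.~$\termb$, produces $E[R]\termb'$, resp.~$E[\termb]\termb'$, these are closed terms of type $\typec$ (by subject reduction and the application rule) related by the one-step reduction induced by the same $R\to\termb$ with $R\neq\oracle$. Applying the induction hypothesis at the strictly smaller type $\typec$ to the context $E[\;]\termb'$ turns $f(u)\prec^\typec E[\termb]\termb'$ into $f(u)\prec^\typec E[R]\termb'$; since $u\prec^\typeb\termb'$ was arbitrary, $f\prec^\typea E[R]$.

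There is no genuinely hard step here: once the ground-type computation above is established, the rest is a direct unfolding of Definition~\ref{def:logical relation}. The only point that needs care is the bookkeeping in the inductive step — checking that $E[\;]\termb'$ is a legitimate evaluation context with its hole in the same position, so that $E[R]\termb'=(E[\;]\termb')[R]$ and the induction hypothesis applies verbatim — together with keeping track of which auxiliary facts (subject reduction, context closure of $\to$, the application typing rule) are needed so that all terms involved stay closed and well typed.
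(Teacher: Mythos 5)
Your proof is correct and is essentially the paper's argument: both rest on the same two facts, namely that $\Red(E[R],\_)=\Dirac{E[\termb]}$ for a non-$\oracle$ redex (so $\Red^\infty$ is invariant under this deterministic step) and that appending closed arguments to an evaluation context yields an evaluation context. The only cosmetic difference is that you package this as an induction on the type, adding one argument $\termb'$ at a time, whereas the paper unfolds $\typea=\typeb_1\to\dots\to\typeb_n\to\treal$ all at once and performs the single kernel computation for the extended context $E[\;]\vec\termc$ at ground type.
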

\todot[inline]{Pourquoi on peut écarter le cas de sample?}
\todom[inline]{c'est juste que pour la preuve du key-lemma dans le cas des abstractions et de point-fixes, on utilise la cloture pour la beta-expansion et la expansion du point-fixe. Tous les autres cas n'utilisent pas cette cloture.}
\begin{proof}
Let $\typea=\typeb_1\rightarrow\dots\rightarrow\typeb_n\rightarrow\treal$, for
all $i\leq n$, take $u_i\prec^{\typeb_i}\termc_i$ and a measurable $U$:
we should prove $f\vec u U\leq \Red^{\infty}(E[R]\vec\termc,\num U)$. From the
hypothesis we get $f\vec u U\leq \Red^{\infty}(E[\termb]\vec\termc,\num U)$ and
we are done since $\Red^{\infty}(E[R]\vec\termc,\num U)=\int\Red^{\infty}(t,\num U)\Red(E[R]\vec\termc,dt)=\Red^{\infty}(E[N]\vec\termc,\num U)$.
\end{proof}

\begin{lemma}\label{lemma:fix-point}
For any $\vdash \terma:\typea$, we have: (i) $0\prec^\typea\terma$, and (ii) for any increasing family $(f_n)\subseteq\Cuball{\semtype\typea}$, $\sup_n f_n\prec^\typea\terma$, whenever $f_n\prec^\typea\terma$ for every $n$. 
\end{lemma}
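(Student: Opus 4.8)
The plan is to prove (i) and (ii) together by induction on the type $\typea$, in each case simply unfolding Definition~\ref{def:logical relation}.

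For the base case $\typea=\treal$, the term $\terma$ is closed of ground type and the elements of $\semtype\treal=\Bmes\Real$ are measures. For (i), the zero measure trivially satisfies $0(U)=0\leq\Red^\infty(\terma,\num U)$ for every $U\in\Sigma_\Real$, so $0\prec^\treal\terma$. For (ii), recall from Example~\ref{ex:meas_cone_with_meas} that lubs in $\Bmes\Real$ are computed pointwise, hence $(\sup_n f_n)(U)=\sup_n f_n(U)$; since each hypothesis $f_n\prec^\treal\terma$ gives $f_n(U)\leq\Red^\infty(\terma,\num U)$, we conclude $(\sup_n f_n)(U)\leq\Red^\infty(\terma,\num U)$ and therefore $\sup_n f_n\prec^\treal\terma$.

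For the inductive step $\typea=\typeb\rightarrow\typec$, fix an arbitrary pair $u\prec^\typeb N$; then $N$ is a closed term of type $\typeb$ and $\terma N$ a closed term of type $\typec$, so the induction hypothesis is available at type $\typec$ for $\terma N$. For (i): the zero element of $\semtype{\typeb\rightarrow\typec}$ is the constant function $x\mapsto 0$, so $0(u)=0\in\semtype\typec$, and $0\prec^\typec\terma N$ by part (i) of the induction hypothesis; since $u$ was arbitrary, $0\prec^{\typeb\rightarrow\typec}\terma$. For (ii): by Lemma~\ref{lemma:function_cone_complete} the lubs in $\Cuball{\Implm{\semtype\typeb}{\semtype\typec}}$ are computed pointwise, so $(\sup_n f_n)(u)=\sup_n\bigl(f_n(u)\bigr)$; moreover, since addition in the function cone is pointwise, $f_n\leq f_{n+1}$ entails $f_n(u)\leq f_{n+1}(u)$, so $(f_n(u))_{n\in\Nat}$ is a non-decreasing bounded sequence in $\Cuball{\semtype\typec}$ whose least upper bound exists by completeness. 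Each $f_n\prec^{\typeb\rightarrow\typec}\terma$ gives $f_n(u)\prec^\typec\terma N$, and part (ii) of the induction hypothesis at type $\typec$ yields $\sup_n\bigl(f_n(u)\bigr)\prec^\typec\terma N$. As $u\prec^\typeb N$ was arbitrary, this proves $\sup_n f_n\prec^{\typeb\rightarrow\typec}\terma$.

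The argument is essentially bookkeeping, and the only slightly delicate point is the arrow case of (ii): one must know that the supremum commutes with evaluation --- which is precisely the pointwise computation of lubs in the function cone (Lemma~\ref{lemma:function_cone_complete}) --- and that the evaluated sequence $(f_n(u))_n$ remains inside the unit ball $\Cuball{\semtype\typec}$, so that its lub exists by completeness and the induction hypothesis applies to it. Everything else is immediate from the definition of $\prec$.
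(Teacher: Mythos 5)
Your proof is correct and takes essentially the same route as the paper's: both reduce the claim to the facts that the zero element and least upper bounds of the function cones are computed pointwise (Lemma~\ref{lemma:function_cone_complete}, and Example~\ref{ex:meas_cone_with_meas} at ground type), so that they commute with application, after which only the trivial inequalities $0\leq\Red^{\infty}(\cdot,\num U)$ and $\sup_n f_n(U)\leq\Red^{\infty}(\cdot,\num U)$ remain. The difference is purely presentational: you organize the argument as an induction on the type, whereas the paper unfolds $\typea$ as $\typeb_1\rightarrow\dots\rightarrow\typeb_k\rightarrow\treal$ and applies all arguments $u_i\prec^{\typeb_i}\termc_i$ in one step.
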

\begin{proof}
Let $\typea=\typeb_1\rightarrow\dots\rightarrow\typeb_k\rightarrow\treal$, for all $i\leq n$, take $u_i\prec^{\typeb_i}\termc_i$. We clearly have $0\vec uU=0\leq\Red^\infty(\terma\vec\termc,\num U)$, so (i). Item (ii) follows from the fact that $(\sup_nf_n)\vec u=\sup_n(f_n\vec u)$ and the hypothesis that $f_n\vec uU\leq\Red^{\infty}(\terma,\num U)$ for every $n$. 
\end{proof}

\begin{proof}[Proof of Lemma~\ref{lemma:key-lemma}]
By structural induction on $\terma$. Variables are immediate from the hypothesis. The case of a constant of type $\treal$ is trivial because $\semterm{\num r}U=\Dirac{r}(U)=\Red^{\infty}(\num r,\num U)$, as well as $\semterm{\oracle}U=\lambda_{[0,1]}U=\Red^{\infty}(\oracle,\num U)$. Let $\terma=\num f(\terma_1,\dots, \terma_k)$, by induction hypothesis we have that, for every $i\leq k$, $\semterm{\terma_i}\vec u\prec^{\treal}\terma_i\{\vec\termb/\vec x\}$. We then have, for every measurable $U\subseteq\Real$: 
$\semterm\terma\vec u U
=(\semterm{\terma_1}\vec u\otimes\dots\otimes\semterm{\terma_k}\vec u)(f^{-1}(U))
\leq(\Red^{\infty}(\terma_1\vec{\{\termb/x\}},\_)\otimes\dots\otimes\Red^{\infty}(\terma_k\vec{\{\termb/x\}},\_))(\num{f^{-1}(U)})
\leq\Red^{\infty}(\terma\vec{\{\termb/x\}},U)$, where the latter inequality follows from Lemma~\ref{lemma:red_f} .

In case $\terma=\ifz\termc{\terma'}{\terma''}$, we have to prove that $\semterm{\ifz\termc{\terma'}{\terma''}}\vec u\prec^{\treal}\ifz{\overline\termc}{\overline\terma'}{\overline\terma''}$, with the overline denoting the result of applying the substitution $\{\vec\termb/\vec x\}$ to the corresponding term. Take a measurable $U$, by using the induction hypothesis on $\termc,\terma',\terma''$, we have:
$\semterm\terma\vec u U
=(\semterm\termc\vec u\{\num0\})\semterm{\terma'}\vec u U+(\semterm\termc\vec u(\num\real\setminus\{\num0\})\semterm{\terma''}\vec u U
\leq(\Red^{\infty}(\overline\termc,\{\num0\}))\Red^{\infty}(\overline\terma', U)+(\Red^{\infty}(\overline\termc,\num\real\setminus\{\num0\}))\Red^{\infty}(\overline\terma'', U)
\leq\Red^{\infty}(\overline\terma,U)
$, where the latter inequality follows from Lemma~\ref{lemma:red_if}.

In case  $\terma=\letterm{x}{\terma'}{\terma''}$, then, take a measurable $U$:
$\semterm\terma\vec u U
=\int_\Real\semterm{\terma''}\vec u\Dirac rU\semterm{\terma'}\vec u(dr)
\leq\int_{\num\Real}\Red^{\infty}(\overline\terma''\{t/x\},\num U)\Red^{\infty}(\overline\terma',dt)
\leq\Red^{\infty}(\terma,\num U)$, where the last inequality is Lemma~\ref{lemma:let}.

The other cases are standard. If $\terma=\lambda x^\typec.\terma'$, with $\typea=\typec\rightarrow\typec'$, then we have to prove for every $w\prec^{\typec}\termc$ that 
$\semterm{\terma}w\vec u\prec^{\typec'}\overline\terma\termc$. By IH we have $\semterm{\terma'}w\vec u\prec^{\typec'}\overline\terma'\{\termc/x\}$ and we conclude by Lemma~\ref{lemma:anti-reduction}. 
If $\terma=\terma'\terma''$ for  $x_1:\typeb_1,\dots,x_n:\typeb_n\vdash\terma':\typec\rightarrow\typea$ and $x_1:\typeb_1,\dots,x_n:\typeb_n\vdash\terma'':\typec$, we can immediate conclude by induction hypothesis on $\terma'$ and $\terma''$. Finally, if $\terma=\fix\termc$, then by hypothesis we have $\semterm\termc\vec u\prec^{\typea\rightarrow\typea}\overline\termc$. Then by Lemma~\ref{lemma:fix-point}.(i) $0\prec^{\typea}\fix{\overline\termc}$, hence $(\semterm\termc\vec u)0\prec^{\typea}\overline\termc(\fix{\overline\termc})$. By Lemma~\ref{lemma:anti-reduction}, we have: $(\semterm\termc\vec u)0\prec^{\typea}\fix{\overline\termc}$. By iterating the same reasoning, we get: $(\semterm\termc\vec u)^n0\prec^{\typea}\fix{\overline\termc}$ for any $n$, so that by Lemma~\ref{lemma:fix-point}.(ii) $\semterm{\terma}\vec u=\sup_n((\semterm\termc\vec u)^n0)\prec^{\typea}\fix{\overline\termc}$.
\end{proof}

\paragraph{Theorem~\ref{th:adequacy}.} Let $\vdash\terma:\treal$, then for every measurable set $U\subseteq\Real$, we have:
\[
	\semterm[\vdash \treal]\terma(U) = \Red^{\infty}(\terma,\num U)
\]
where $\num U$ is the set of numerals corresponding to the real numbers in $U$.
\begin{proof}
By iterating the soundness property (Proposition~\ref{prop:soundmess}), we have that $(\semterm[\vdash \treal]{\_}\circ\Red^n)\terma=\semterm[\vdash \treal]\terma$ for every $n$. Hence, taking a measurable $U\subseteq\Real$:
\begin{align*}
\semterm\terma U 
&= \int_{\Terms[\vdash\treal]}\semterm[\vdash \treal]{t}U\Red^n(\terma,dt)\\
&\geq\int_{\{\num r\text{ s.t. } r\in\Real\}}\semterm[\vdash \treal]{t}U\Red^n(\terma,dt)&\text{because $\{\num r\text{ s.t. } r\in\Real\}\subset\Terms[\vdash\treal]$}\\
&=\int_{\Real}\chi_U(r)\Red^n(\terma,d\num r)=\Red^n(\terma,\num U)
\end{align*} 
We conclude $\Red^{\infty}(\terma,\num U) = \sup_n\Red^{n}(\terma,\num U) \leq \semterm[\vdash \treal]\terma(U)$. 
The other inequality is a consequence of Lemma~\ref{lemma:key-lemma} and Definition~\ref{def:logical relation}.
\end{proof}

\end{document}